\numberwithin{equation}{section}
\theoremstyle{plain}
\newtheorem{thma}{Theorem}
\newtheorem{assum}{Assumption}%
\newtheorem{property}{Property}%
\newtheorem{lemma}{Lemma}[section]
\newtheorem{thmas}[lemma]{Theorem}
\newtheorem{cor}[lemma]{Corollary}
\newtheorem{prop}[lemma]{Proposition}
\newtheorem{defin}[lemma]{Definition} %
\theoremstyle{remark}
\newtheorem{rem}{Remark}[section] %
\newcommand{\nc}{\newcommand}
\nc{\Aa}{{\CMcal{A}}}
\nc{\Bb}{{\CMcal{B}}}
\nc{\Cc}{{\CMcal{C}}}
\nc{\Dd}{{\CMcal{D}}}
\nc{\Ee}{{\CMcal{E}}}
\nc{\Ff}{{\CMcal{F}}}
\nc{\Gg}{{\CMcal{G}}}
\nc{\Hh}{{\CMcal{H}}}
\nc{\Ii}{{\CMcal{I}}}
\nc{\Jj}{{\CMcal{J}}}
\nc{\Kk}{{\CMcal{K}}}
\nc{\Ll}{{\CMcal{L}}}
\nc{\Mm}{{\CMcal{M}}}
\nc{\Nn}{{\CMcal{N}}}
\nc{\Oo}{{\CMcal{O}}}
\nc{\Pp}{{\CMcal{P}}}
\nc{\Qq}{{\CMcal{Q}}}
\nc{\Rr}{{\CMcal{R}}}
\nc{\Ss}{{\CMcal{S}}}
\nc{\Tt}{{\CMcal{T}}}
\nc{\Uu}{{\CMcal{U}}}
\nc{\Vv}{{\CMcal{V}}}
\nc{\Ww}{{\CMcal{W}}}
\nc{\Xx}{{\CMcal{X}}}
\nc{\Yy}{{\CMcal{Y}}}
\nc{\Zz}{{\CMcal{Z}}}
\nc{\mA}{{\mathrm{A}}}
\nc{\mB}{{\mathrm{B}}}
\nc{\mC}{{\mathrm{C}}}
\nc{\mD}{{\mathrm{D}}}
\nc{\mE}{{\mathrm{E}}}
\nc{\mF}{{\mathrm{F}}}
\nc{\mG}{{\mathrm{G}}}
\nc{\mH}{{\mathrm{H}}}
\nc{\mI}{{\mathrm{I}}}
\nc{\mJ}{{\mathrm{J}}}
\nc{\mK}{{\mathrm{K}}}
\nc{\mL}{{\mathrm{L}}}
\nc{\mM}{{\mathrm{M}}}
\nc{\mN}{{\mathrm{N}}}
\nc{\mO}{{\mathrm{O}}}
\nc{\mP}{{\mathrm{P}}}
\nc{\mQ}{{\mathrm{Q}}}
\nc{\mR}{{\mathrm{R}}}
\nc{\mS}{{\mathrm{S}}}
\nc{\mT}{{\mathrm{T}}}
\nc{\mU}{{\mathrm{U}}}
\nc{\mV}{{\mathrm{V}}}
\nc{\mW}{{\mathrm{W}}}
\nc{\mX}{{\mathrm{X}}}
\nc{\mY}{{\mathrm{Y}}}
\nc{\mZ}{{\mathrm{Z}}}
\nc{\BB}{{\mathbb{B}}}
\nc{\CC}{{\mathbb{C}}}
\nc{\DD}{{\mathbb{D}}}
\DeclareMathOperator{\EE}{{\mathbb{E}}}
\nc{\FF}{{\mathbb{F}}}
\nc{\GG}{{\mathbb{G}}}
\nc{\HH}{{\mathbb{H}}}
\nc{\II}{{\mathbb{I}}}
\nc{\JJ}{{\mathbb{J}}}
\nc{\KK}{{\mathbb{K}}}
\nc{\LL}{{\mathbb{L}}}
\nc{\MM}{{\mathbb{M}}}
\nc{\NN}{{\mathbb{N}}}
\nc{\OO}{{\mathbb{O}}}
\nc{\PP}{{\mathbb{P}}}
\nc{\QQ}{{\mathbb{Q}}}
\nc{\RR}{{\mathbb{R}}}
\nc{\TT}{{\mathbb{T}}}
\nc{\UU}{{\mathbb{U}}}
\nc{\VV}{{\mathbb{V}}}
\nc{\WW}{{\mathbb{W}}}
\nc{\XX}{{\mathbb{X}}}
\nc{\YY}{{\mathbb{Y}}}
\nc{\ZZ}{{\mathbb{Z}}}
\DeclareMathOperator{\Id}{\mathrm{Id}}
\nc{\Fun}{\mathrm{Fun}}
\def\dist{\mathrm{dist}}
\def\Area{\mathrm{Area}}
\def\diam{\mathrm{diam}}
\DeclareMathOperator{\supp}{\mathrm{supp}}
\let\Re\relax
\let\Im\relax
\DeclareMathOperator{\Re}{\mathrm{Re}}
\DeclareMathOperator{\Im}{\mathrm{Im}}
\nc\chr[2]{\begin{bmatrix}#1 \\ #2\end{bmatrix}}
\def\eps{\varepsilon}   
\def\vphi{\varphi}
\def\vtheta{\vartheta}
\def\cst{\mathrm{cst}}
\DeclareMathOperator{\osc}{\mathrm{osc}}
\let\div\relax
\DeclareMathOperator{\div}{\mathrm{div}}
\DeclareMathOperator{\Jac}{\mathrm{Jac}}
\def\loc{\mathrm{loc}}
\nc{\indic}{1\!\!1}
\DeclareMathOperator{\Var}{\mathrm{Var}}
\def\EL{\mathrm{EL}}
\def\smm{\smallsetminus}
\def\out{\mathrm{out}}
\def\diag{\mathrm{diag}}
\def\wtd{\widetilde}
\def\Lip{\mathrm{Lip}}
\def\ExpFat{\mathrm{ExpFat}}
\def\spl{\mathrm{spl}}
\def\Wws{W^\circ_\spl}
\def\Bbs{B^\bullet_\spl}
\def\genbox#1#2#3#4#5#6{%
    \leavevmode\raise#4bp\hbox to#5bp{\vrule height#5bp depth0bp width0bp
    \pdfliteral{q .5 w \csname #2COLOR\endcsname\space RG
                       \csname #3PDF\endcsname{#5}{#6} S Q
             \ifx1#1 q \csname #2COLOR\endcsname\space rg 
                       \csname #3PDF\endcsname{#5}{#6} f Q\fi}\hss}}
\begin{document}

\title[Harmonic functions on Tutte embeddings]
{Harmonic functions on Tutte embeddings\\ and linearized Monge--Amp\`ere equation}

\author[M. Basok]{Mikhail Basok$^\textsc{a}$}
\author[D. Chelkak]{Dmitry Chelkak$^\textsc{b}$}
\author[B. Laslier]{Beno\^it Laslier$^\textsc{c}$}
\author[M. Russkikh]{Marianna Russkikh$^\textsc{d}$}

\thanks{\textsc{${}^\mathrm{A}$ University of Helsinki, Department of Mathematics and Statistics, 00500 Helsinki, Finland}}
\thanks{\textsc{${}^\mathrm{B}$ University of Michigan, Department of Mathematics, Ann Arbor, MI 48109-1043, USA}}
\thanks{\textsc{${}^\mathrm{C}$ Universit\'e Paris Cit\'e and Sorbonne Universit\'e, CNRS, Laboratoire de
Probabilit\'es, Statistique et Mod\'elisation, F-75013 Paris, France and
D\'epartement de Math\'ematiques et Applications, \'Ecole normale sup\'erieure, Universit\'e PSL, CNRS, 75005 Paris, France}}
\thanks{\textsc{${}^\mathrm{D}$ University of Notre Dame, Department of Mathematics, 255 Hurley Bldg, Notre Dame, IN
46556, United States of America}}
\thanks{\textit{E-mail:} \texttt{m.k.basok@gmail.com}, \texttt{dchelkak@umich.edu}, \texttt{laslier@lpsm.paris}, \texttt{mrusskik@nd.edu}}

\keywords{discrete harmonic functions, discrete holomorphic functions, linearized Monge--Amp\`ere equation, conjugate Beltrami equation, conformal invariance, t-embeddings and t-surfaces}

\subjclass[2020]{Primary 60G50; Secondary 30G25, 35J96}

\maketitle

\begin{abstract} 
  We prove convergence of solutions of Dirichlet problems and Green's functions on Tutte harmonic embeddings to those of the linearized Monge--Amp\`ere equation~$\Ll_\varphi h=0$. More precisely, we assume that piecewise linear Maxwell--Cremona potentials associated with the embeddings converge to a continuous potential $\varphi$ and the only assumption that we use is the uniform convexity of~$\varphi$ or, equivalently, the uniform ellipticity of the operator~$\Ll_\varphi$. Even if~$\varphi$ is quadratic, this setup significantly generalizes known results for discrete harmonic functions on orthodiagonal tilings. Motivated by potential applications to the analysis of 2d lattice models on irregular graphs, we also study the situation in which the limits are harmonic in a different complex structure.
\end{abstract}

\renewcommand\vphi{\varphi}

\newcommand\vpsi{\psi}
\newcommand\vPhi{\Phi}
\newcommand\vPsi{\Psi}

\section{Introduction and main results}
\label{sec:Introduction}

\subsection{Motivation and informal overview of the main results} 
Convergence of discrete harmonic functions in the small mesh size limit is a classical topic dating back to the first works on numerical methods~\cite{Lusternik1926,courant-friedrichs-lewy-1928}. In two dimensions, this question is intrinsically related to the notion of discrete holomorphic functions and their convergence, with important contributions made in the middle of the last century by Ferrand~\cite[Chapter~V]{lelong-ferrand-book} who greatly developed this theory on the square grid and Duffin~\cite{duffin-rhombic} who, in particular, suggested a generalization to the so-called rhombic tilings of the complex plane. From the combinatorial perspective, two related classical setups for which the notions of discrete holomorphic and harmonic functions make perfect sense are Brooks--Smith--Stone--Tutte's square (or, more generally, rectangular) tilings~\cite{brooks-smith-stone-tutte-1940} and Tutte's barycentric, or harmonic, embeddings~\cite{tutte1963draw}. One of the main sources of the modern interest to discrete complex analysis is the statistical physics in two dimensions; e.g., see~\cite{smirnov-icm2010}. In particular, the framework of rhombic lattices was important for the study of scaling limits of free fermionic models with Baxter's Z-invariant weights; e.g., see~\cite{mercat-cmp2001,kenyon-isorad,chelkak-smirnov-universality,boutillier-detiliere-rashel-massiveLaplacian}. 

Recent advances of the probabilistic approach to the Liouville Quantum Gravity (LQG) in~2d (e.g., see~\cite{gwynne-holden-sun-matingLQG-survey} and references therein) motivate a development of discrete complex analysis techniques for possibly very irregular planar grids that appear as embeddings of given (random) planar graphs into the plane. In particular, applications of Tutte embeddings and square tilings in the LQG context recently appeared in~\cite{gwynne-miller-sheffield-Tutte-mated-CRT,bertacco-gwynne-sheffield-smith-embeddings}. Going in this direction, the last three authors of this paper suggested a general framework of the so-called t-embeddings of graphs equipped with a bipartite dimer model~\cite{CLR1}, also known as Coulomb gauges~\cite{KLRR}, which simultaneously generalizes Tutte embeddings and the so-called s-embeddings of planar graphs carrying the Ising model. Shortly afterwards, it was understood that t-embeddings should be viewed as piecewise linear surfaces in the Minkowski space~$\RR^{2,2}$, called \emph{t-surfaces} below, which provide a correct description of the complex structure arising in the small mesh size limit~\cite{chelkak-ramassamy-aztec,chelkak-semb,CLR2,berggren-nicoletti-russkikh-doubleaztec}.

In this paper we apply the aforementioned framework to study discrete harmonic functions on Tutte embeddings. To the best of our knowledge, recent literature on the subject was mostly focused on the very special case -- so-called \emph{orthodiagonal tilings} -- which appear when the Tutte embeddings of a graph and of its dual form non-self-intersecting quads. In this setup, the convergence to usual harmonic functions (that is, solutions of the Laplace equation $\Delta h = 0$) under no regularity assumptions on the tilings was proven in~\cite{gurel-gurevich-jerison-nachmias}. A similar result in higher dimensions was obtained in~\cite{bou2024random} under a mild regularity assumption; see also references therein and~\cite{binder2024orthodiagonal}. 
While it is sometimes argued that orthodiagonal tilings may be the right setup to study the universality of lattice models via discrete complex analysis tools, the passage from Tutte embeddings to orthodiagonal tilings is rather restrictive: each weighted planar graph admits the former but not necessarily the latter. 
This calls for a theory that allows to prove similar convergence results on generic Tutte embeddings.

Using the classical Maxwell--Cremona correspondence, with each Tutte embedding~$\Hh$ one can associate a piecewise linear convex function~$\vPhi$; we call this function a \emph{Maxwell--Cremona potential;} see Section~\ref{subsec:intro-gamma-phi-def} below for more details. 
Given a sequence $\Hh^{(n)}$ of Tutte embeddings such that the corresponding potentials $\vPhi^{(n)}$ converge to a uniformly convex (see~\eqref{eq:unif_conv}) function $\varphi$ as $n\to\infty$, we prove that discrete harmonic functions on $\Hh^{(n)}$ converge to solutions of the \emph{linearized Monge--Amp\`ere equation} with potential $\vphi$; see Theorem~\ref{thma:harmonic_functions_convergence} and Theorem~\ref{thma:Green_fct_convergence} for precise statements. Let us emphasize that the uniform convexity of $\vphi$ is the only assumption that we use besides the convergence of $\vPhi^{(n)}$ to $\varphi$; the latter is clearly necessary to describe the limit. In particular, similarly to~\cite{gurel-gurevich-jerison-nachmias}, we do not impose any additional assumptions on the local structure of the embeddings $\Hh^{(n)}$. 

In the situation when Tutte embeddings~$\Hh^{(n)}$ appear from orthodiagonal tilings, one can easily see that the potentials~$\vPhi^{(n)}$ approximate the quadratic potential~$\vphi(w)=\tfrac12|w|^2$ as $n\to\infty$. Let us emphasize that the converse is not true: $\Hh^{(n)}$ do not necessarily come from orthodiagonal tilings if 
the limits of discrete harmonic functions on $\Hh^{(n)}$ are described by the usual Laplacian; see Remark~\ref{rem:intro_potential_isorad_ortho} below. This makes our results more general than those of~\cite{gurel-gurevich-jerison-nachmias} even in this situation. In general, the small mesh size limits of discrete harmonic functions may not be harmonic in the usual sense even after a non-trivial change of coordinates. 
We provide several equivalent conditions characterizing the existence of such a coordinate change in Theorem~\ref{thma:when_harmonic}.

The uniform convexity assumption~\eqref{eq:unif_conv} is necessary for the linearized Monge--Amp\`ere equation that appears in the limit to be uniformly elliptic. Moreover, in Theorem~\ref{thma:LIP<=>RW} we push this to the discrete level \emph{before} passing to the limit. Namely, we prove that the Maxwell--Cremona potential~$\vPhi$ of a given embedding~$\Hh$ is uniformly convex above a certain scale~$\delta$ if and only if the random walk on $\Hh$ satisfies standard uniform crossing estimates above scale $\delta$ and its natural invariant measure is uniformly comparable to the Euclidean one above the same scale; see property~\ref{prty:RW} below. 

Also, it is worth noting that the uniform convexity of~$\vphi$ is necessary to define harmonic conjugates of solutions of the linearized Monge--Amp\`ere equation~\eqref{eq:def_of_Ll} that are usually referred as $A$-harmonic conjugates in the literature; e.g., see~\cite[Chapter~16]{AstalaBook}. This notion has a discrete counterpart: classically, each discrete harmonic function on a Tutte embedding has a discrete harmonic conjugate, which is defined on the dual embedding. In fact, in Theorem~\ref{thma:harmonic_functions_convergence} we prove the simultaneous convergence of discrete harmonic functions and their harmonic conjugates to a solution of a linearized Monge--Amp\`ere equation and its $A$-conjugate, respectively. In particular, the uniform convexity of $\vphi$ is the minimal assumption to be imposed if one is interested in such a simultaneous convergence of discrete harmonic functions and their harmonic conjugates.

\subsection{Maxwell--Cremona correspondence}
\label{subsec:intro-gamma-phi-def}

Let $\Gamma$ be a planar graph drawn in the complex plane so that its edges are non-intersecting straight segments. We denote by $v$ the vertices of this graph and by $\Hh(v)\in\CC$ the coordinates of these vertices in the complex plane. Each edge $(vv')$ has a pre-assigned weight, or \emph{conductance,} $c_{vv'}>0$. The embedding~$\Hh$ is called \emph{Tutte,} \emph{barycentric,} or \emph{harmonic} if the identity $\sum_{v'\sim v} c_{vv'}(\Hh(v')-\Hh(v))=0$ holds for each inner vertex of the graph.

Given a harmonic embedding $\Hh$, 
one can define a function $\Hh^\ast$ on the dual graph $\Gamma^\ast$ by setting
\begin{equation}
\label{eq:intro_H*def}
\Hh^\ast(v^\ast_2)-\Hh^\ast(v^\ast_1)=ic_{v_1v_2}(\Hh(v_2)-\Hh(v_1))
\end{equation}
if $(v^\ast_1v^\ast_2)$ is the dual edge to $(v_1v_2)$ and $v^\ast_1$ is to the right from $(v_1v_2)$. The harmonicity of $\Hh$ implies that the sum of these increments around each inner vertex of~$\Gamma$ vanishes and hence $\Hh^\ast$ -- %
the \emph{dual} Tutte embedding -- is well defined up to a global additive constant. Let us emphasize that this dual embedding can have overlaps if the boundary of $\Hh(\Gamma)$ is not convex; see Section~\ref{sec:T-embedding of corner graph} for more details.

A crucial role for our results is played by the \emph{Maxwell--Cremona lift} of $\Hh$; see~\cite{pseudo-triang-survey, schulze2017rigidity}. 
Denote by $U$ the union of faces of $\Gamma$ and define a \emph{piecewise constant map} $\vPsi:U\to\CC$ by setting
\begin{equation}
\label{eq:intro-Psi-def}
\vPsi(w):=\Hh^\ast(v^\ast)\quad \text{\makecell[l]{if $w\in\CC$ belongs to the face of $\Gamma$ \\ corresponding to the dual vertex $v^\ast$.}}
\end{equation}
(If $w\in\CC$ is a vertex of $\Gamma$ or lies on an edge, we choose one of the incident faces arbitrarily.) We then define a piecewise linear \emph{continuous} function $\vPhi:U\to\RR$ by setting
\[
\vPsi=2\partial_{\bar{w}}\vPhi=\partial_x\vPhi+i\partial_y\vPhi\quad \text{for all $w=x+iy$ lying on faces of $\Gamma$}\,.
\]
Equivalently, we require that the real-valued function~$\Phi$ is linear on faces of $\Hh(\Gamma)$ and
\begin{equation}%
  \label{eq:intro-Phi-def}
  \vPhi(\Hh(v_2))-\vPhi(\Hh(v_1))=\Re\big[\,\overline{\Hh^\ast(v^\ast_1)}(\Hh(v_2)-\Hh(v_1))\big]%
  =\Re\big[\,\overline{\Hh^\ast(v^\ast_2)}(\Hh(v_2)-\Hh(v_1))\big]
\end{equation}%
for each pair of incident vertices $v_1,v_2$, where $v_1^\ast,v_2^\ast$ are the two faces of $\Gamma$ incident to the edge $(v_1v_2)$. 
A classic observation due to Maxwell~\cite{maxwell1864xlv, maxwell1870reciprocal, cremona1872figure} asserts that the function $\vPhi$ is well defined. Also, it is easy to see that $\vPhi$ is defined by $\Hh$ uniquely up to a global linear term.

Vice versa, given a piecewise linear function $\Phi$ on a domain $U\subset\CC$ one can reconstruct the embedding $\Hh$ of a graph $\Gamma$ 
and also the edge weights $c_{vv'}$, known as \emph{self-stresses}, via~\eqref{eq:intro_H*def} and~\eqref{eq:intro-Psi-def}.
A simple computation shows that these weights are positive if and only if the piecewise linear function~$\vPhi$ is convex on each convex subset of~$U$.

\begin{rem}\label{rem:intro_potential_isorad_ortho} (i) The most classical setup for discrete complex analysis beyond the square grid is that of rhombic lattices or isoradial graphs introduced by Duffin in~\cite{duffin-rhombic}; see also~\cite{chelkak-smirnov-isorad}. In this case, it is easy to deduce from~\eqref{eq:intro_H*def} that the restriction of $\Phi$ onto the vertices of~$\Gamma$ equals $\vPhi(\Hh(v))=\frac12|\Hh(v)|^2$. 

\noindent (ii) In a more general setup of orthodiagonal tilings~\cite{gurel-gurevich-jerison-nachmias,binder2024orthodiagonal,bou2024random}, one has $\Psi(w)=w+{O(\delta)}$ and hence $\vPhi(w)=\frac12|w|^2+{O(\delta|w|)}$, where $\delta\to 0$ denotes the size of the largest tile. As for the converse, let us emphasize that the assumption $\Psi(w)=w+{O(\delta)}$ does \emph{not} put the analysis back to the setup of~\cite{gurel-gurevich-jerison-nachmias,binder2024orthodiagonal,bou2024random} 
even if $\delta$ is comparable to the lengths of all the edges of $\Gamma$: the reason is that
the quads $\Hh(v_1)\Hh^\ast(v^\ast_1)\Hh(v_2)\Hh^\ast(v^\ast_2)$ may self-intersect. 
See also Remark~\ref{rem:intro-convergence-of-processes} below.
\end{rem}

Below we often identify vertices $v$ of the graph~$\Gamma$ and their positions $\Hh(v)$ if no confusion arises. 

\subsection{Uniform convexity of $\vphi$ and auxiliary scales $\delta^{(n)}$}
\label{subsec:scales-delta_n}

Let $U\subset \CC$ be a domain and \mbox{$\vphi:U\to \RR$} be a function that is convex on each convex subset of $U$. We say that $\vphi$ is \emph{uniformly convex} on $U$ if there exists 
a $\lambda>0$ such that %
\begin{equation}
  \label{eq:unif_conv}
  \lambda|w_1 - w_2|^2\ \leq\ \vphi(w_1) - 2\vphi\left( \tfrac{w_1+w_2}2 \right) + \vphi(w_2)\ \leq\ \lambda^{-1}|w_1-w_2|^2
\end{equation}
for each segment $[w_1;w_2]\subset U$. In our main convergence results we consider a sequence of Tutte embeddings $\Gamma^{(n)}$ whose Maxwell--Cremona potentials $\vPhi^{(n)}$ are all defined on $U$ and converge to a uniformly convex potential $\vphi$ as $n\to\infty$. Since all functions $\Phi^{(n)}$ are convex, one can always replace $U$ with a slightly smaller domain and assume that this convergence is uniform. %
In other words, without true loss of generality we can assume that
\begin{equation}
  \label{eq:def_of_deltan}
  \delta^{(n)} = \|\Phi_n - \vphi\|^{1/2}_{\mL^\infty(U)}\to 0\ \ \text{as}\ n\to\infty.
\end{equation}

The uniform convexity of $\vphi$ implies that the Maxwell--Cremona potentials $\Phi^{(n)}$ satisfy the same property~\eqref{eq:unif_conv} with a slightly smaller $\lambda>0$ if we additionally require that $|w_2-w_1|\geq C\delta^{(n)}$ for a big enough constant $C>0$. Thus, instead of a sequence of Tutte embeddings~$\Gamma^{(n)}$, $n\to\infty$, whose Maxwell--Cremona potentials converge to a uniformly convex potential $\varphi$ we can consider a sequence of Tutte embeddings $\Gamma_\delta$, $\delta\to 0$, that satisfy the following property:

\renewcommand{\theproperty}{(CONV)}
\begin{property}
\label{prty:CONV}\label{assum:conv} Given constants $\lambda,C,\delta>0$, we say that a Tutte embedding $\Gamma_\delta$ with Maxwell-Cremona potential~$\vPhi_\delta$ has property~\ref{prty:CONV} on a domain $U\subset \CC$ if the inequalities 
\begin{equation}
\label{eq:Conv}
\lambda\,|w_2-w_1|^2\ \le\ \vPhi_\delta(w_2)-2\vPhi_\delta(\tfrac12(w_1\!+\!w_2))+\vPhi_\delta(w_1)\ \le\ \lambda^{-1} |w_2-w_1|^2 
\end{equation}
hold for all straight segments $[w_1;w_2]\subset U$ such that $|w_2-w_1|\ge C\delta$.

Further, we say that a sequence of Tutte embeddings~$\Gamma_\delta$, $\delta\to 0$, satisfies property~\ref{prty:CONV} on a domain $U\subset\CC$ if this property holds for each $\Gamma_\delta$ with constants $\lambda,C>0$ that do not depend on $\delta$.
\end{property}

For the sake of notational simplicity, from now onwards we stick to the notation $\Gamma_\delta$ for Tutte embeddings instead of~$\Gamma^{(n)}$ and label all related objects using the auxiliary scales $\delta=\delta^{(n)}$. Note again that we do not lose generality as one can always define these scales by~\eqref{eq:def_of_deltan} and then find constants $C,\lambda>0$ depending on the potential $\varphi$ only so that~\eqref{eq:Conv} holds. Let us additionally emphasize that the scale $\delta$ can be \emph{much bigger} than the maximal diameter of faces of $\Gamma_\delta$. Note also that property~\ref{assum:conv} does not imply any bound on the degrees of vertices of $\Gamma_\delta$ or the angles of its faces.

In terms of piecewise constant mappings $\vPsi_\delta=2\partial_{\bar{w}}\vPhi_\delta:U\to\CC$ associated with $\Gamma_\delta$ by~\eqref{eq:intro-Psi-def}, property~\ref{prty:CONV} can be reformulated (in the bulk of $U$) as follows: 
\renewcommand{\theproperty}{(LIP)}
\begin{property}
  \label{prty:LIP}\label{assum:main}
  Given constants $\varkappa,C,\delta>0$, we say that a Tutte embedding $\Gamma_\delta$ with the map $\Psi_\delta$ defined by~\eqref{eq:intro-Psi-def} has property~\ref{prty:LIP} on a domain $U\subset\CC$ if 
  \begin{equation}
        \label{eq:Lip} 
        \Re \frac{\vPsi_\delta(w_2) - \vPsi_\delta(w_1)}{w_2-w_1}\,\geq\,\varkappa \quad \text{and}\quad \left|\frac{\vPsi_\delta(w_2) - \vPsi_\delta(w_1)}{w_2-w_1}\right|\,\leq\,\varkappa^{-1}
  \end{equation}
  for all straight segments $[w_1;w_2]\subset U$ such that $|w_2-w_1|\ge C\delta$.
  
  Further, we say that a sequence of Tutte embeddings~$\Gamma_\delta$, $\delta\to 0$, satisfies property~\ref{prty:LIP} on a domain $U\subset\CC$ if this property holds for each $\Gamma_\delta$ with constants $\varkappa,C>0$ that do not depend on $\delta$.
\end{property}
\begin{rem}\label{rem:intro-conv<=>lip+k-lip}
It is not hard to see that property~\ref{prty:CONV} for potentials $\Phi_\delta$ on $U$ implies property~\ref{prty:LIP} for their gradients~$\Psi_\delta$ on each subdomain $U'\Subset U$ and vice versa; we refer the reader to Lemma~\ref{lemma:Lip_equiv_conv} for the proof of this elementary fact. (It is sufficient to assume that $U'$ lies in an $O(\delta)$-interior of $U$.) More importantly, property~\ref{prty:LIP} is also equivalent to the $\kappa$-Lipschitzness, $\kappa=\kappa(\varkappa)<1$, of the so-called origami map in the terminology of~\cite{CLR1}, which makes the techniques developed in~\cite{CLR1} applicable to the setup of this paper; see Section~\ref{subsec:equivalence_of_Lip} below. %
\end{rem}

\subsection{Linearized Monge--Amp\`ere equation}
\label{subsec:LMA}

Let $\vphi:U\to \RR$ be a uniformly convex function. It follows that $\vphi\in\mC^{1,1}(U)$, that is, the gradient of $\vphi$ is a Lipshitz function (e.g., see Lemma~\ref{lemma:Lip_equiv_conv} below for a similar statement).
Denote
\begin{equation}
  \label{eq:def_of_Ll} 
  \Ll_\vphi h = -\vphi_{yy}h_{xx}+2\vphi_{xy}h_{xy}-\vphi_{xx}h_{yy} = -\div (A_\vphi\nabla h),\qquad 
  A_\vphi\ =\ \begin{pmatrix} \vphi_{yy} & -\vphi_{xy}\\ -\vphi_{xy} & \vphi_{xx}\end{pmatrix}.
\end{equation}
The matrix $A_\vphi$ is defined almost everywhere and is uniformly elliptic, that is, $\lambda\le A_\vphi\le \lambda^{-1}$, due to Rademacher's theorem and the assumption~\eqref{eq:unif_conv}.
The equation $\Ll_\vphi h=0$ is known under the name \emph{linearized Monge--Amp\`ere equation}~\cite{caffarelli-gutierrez-1997-lin-MA}. Even though the coefficients of the matrix~$A_\vphi$ are merely measurable, the fact that the operator~$\Ll_\vphi$ can be written in the divergence form
allows one to speak about \emph{weak solutions,} that is, functions $h\in W^{1,2}_\loc(U)$ such that $\int_U (A_\vphi \nabla h\cdot \nabla\phi)(w)d^2w=0$ for all smooth test functions~$\phi\in \mC_0^\infty(U)$.  

Another feature coming from the divergence form of $\Ll_\vphi$ is the notion of $A_\vphi$-harmonic conjugates.  
For each weak solution of the equation~$\Ll_\vphi h=0$ one can (locally) find a function $h^\ast\in W_\loc^{1,2}(\Omega)$ such that
\[
  \nabla h^\ast = \ast A_\vphi \nabla h,\quad \text{where}\quad \ast = \left(\begin{smallmatrix} 0 & -1 \\ 1 & 0\end{smallmatrix}\right)
\]
is the Hodge star operator in $\RR^2$. The function $h^\ast$, defined up to an additive constant, is called an \emph{$A_\vphi$-harmonic conjugate} of~$h$; see~\cite[Chapter~16]{AstalaBook} and Section~\ref{sec:Elliptic equations and closed 1-forms in 2D} below for more details. Note that $h^\ast$ is a weak solution of {the `dual' (\emph{not} in the sense of a scalar product) equation}
\begin{equation}
  \label{eq:def_of_Lltimes}
\Ll^\times_\vphi h^\ast = \div (\,\ast\,A_\vphi^{-1}\!\ast\nabla h^\ast) = 0.
\end{equation}
Standard results from the theory of elliptic equations guarantee that weak solutions~$h,h^\ast$ are continuous in~$\Omega$; see~\cite{kenig2000new} and references therein. Moreover, if $\Omega\Subset U$ is a compactly supported subdomain, then the Dirichlet problem~$\Ll_\vphi h=0$ in~$\Omega$, $h\vert_{\partial\Omega}=g$, has a unique solution $h\in W^{1,2}_\loc(\Omega)\cap \mC(\overline{\Omega})$ for each continuous function~$g:\partial\Omega\to\RR$~\cite{littman-stampacchia-weinberger-1963-regular-points, kenig2000new}.

\begin{rem}
  \label{rem:Ll_Phi}
  Although the potential $\vPhi_\delta$ corresponding to a given Tutte embedding~$\Gamma_\delta$ is not twice differentiable, one can still consider the quadratic form of the operator $\Ll_{\vPhi_\delta}$ on smooth compactly supported functions defined on edges of $\Gamma_\delta$. This quadratic form gives rise to a Brownian motion on the cable system (metric graph) of $\Gamma_\delta$, which explains why linearized Monge--Amp\`ere equations provide the right language for studying the convergence of discrete harmonic functions on $\Gamma_\delta$ as $\delta\to 0$.
\end{rem}

\subsection{Main convergence results}
\label{subsec:main_convergence}
Let $\Gamma_\delta$ be a Tutte embeddings that covers a domain~$U\subset\CC$. Fix a subdomain $\Omega\Subset U$ and let $\Omega_\delta$ be the subgraph of $\Gamma_\delta$ spanned by those vertices of $\Gamma_\delta$ that lie inside~$\Omega$. (If this intersection has more than one connected component, we keep only the one that covers the bulk of $\Omega$.) Let $\overline{\Omega}_\delta$ be the union of $\Omega_\delta$ and the set of all neighbors of these vertices, and denote $\partial\Omega_\delta=\overline{\Omega}_\delta\smallsetminus\Omega_\delta$. Given a function $H_\delta:\overline{\Omega}_\delta\to \RR$ on vertices of $\overline{\Omega}_\delta$ and $v\in\Omega_\delta$, we define
\begin{equation}
  \label{eq:def_of_Lldelta}
  [\Ll_\delta H_\delta](v) = (\mu_\delta(v))^{-1}\sum\nolimits_{v'\sim v} c_{vv'} (H_\delta(v) - H_\delta(v'))
\end{equation}
where the normalizing factors are given by
\begin{equation}
\label{eq:intro-mu-def}
\mu_\delta(v):=\sum\nolimits_{v'\sim v} c_{v'v}|v-v'|^2.
\end{equation}
A function $H_\delta$ is called \emph{discrete harmonic} in $\Omega_\delta$ if $[\Ll_\delta H_\delta](v)= 0$ for all $v\in\Omega_\delta$. If $\Omega$ is simply connected, then with each discrete harmonic function~$H_\delta$ in $\Omega_\delta$ one can associate its discrete harmonic conjugate~$H^\ast_\delta$ defined on faces of $\Gamma_\delta$ that are contained in $\Omega$ so that
\begin{equation}
\label{eq:intro_harm_conj}
H^\ast(v^\ast_2)-H^\ast(v^\ast_1)=c_{v_1v_2}(H(v_2)-H(v_1)),
\end{equation}
where $(v^\ast_1v^\ast_2)$ is the dual edge to $(v_1v_2)$ and $v^\ast_1$ lies to the right from $(v_1v_2)$. 
The function $H^\ast_\delta$ is discrete harmonic on the portion of the dual embedding $\Gamma^*_\delta$ on which it is defined. 

Our first convergence theorem deals with solutions of discrete Dirichlet problems. Recall that, given a sequence of Tutte embeddings $\Gamma^{(n)}$ whose Maxwell--Cremona potentials $\vPhi^{(n)}$ converge to a uniformly convex function $\varphi$ as $n\to\infty$, we re-label them as $\Gamma_\delta$ using auxiliary scales $\delta=\delta^{(n)}\to 0$ defined by~\eqref{eq:def_of_deltan}. As discussed in Section~\ref{subsec:scales-delta_n}, the unifrom convexity of~$\varphi$ implies that, after this re-labeling, the embeddings $\Gamma_\delta$ satisfy property~\ref{prty:CONV}. Vice versa, if $\Gamma_\delta$ satisfy property~\ref{prty:CONV}, then each limit of the corresponding potentials $\vPhi_\delta$ as $\delta\to 0$ is automatically uniformly convex.

\begin{thma}
  \label{thma:harmonic_functions_convergence}
  Let $\Gamma_\delta$ be a sequence of Tutte embeddings that satisfy property~\ref{prty:CONV} on a domain $U\subset \CC$ such that their Maxwell--Cremona potentials $\Phi_\delta$ converge to a function $\vphi:U\to \RR$ as $\delta\to 0$. Let $\Omega\Subset U$ and $g$ be a continuous function defined in a neighborhood of $\partial \Omega$. Denote by $h$ %
  the solution of the Dirichlet problem $\Ll_\vphi h = 0$, $h\vert_{\partial \Omega} = g$, and let $H_\delta: \overline{\Omega}_\delta\to \RR$ be the solution of the same Dirichlet problem for~$\Ll_{\delta}$, i.e., the unique discrete harmonic function in~$\Omega_\delta$ such that $H_\delta(v)=g(v)$ for all~$v\in\partial\Omega_\delta$. Then, the functions~$H_\delta$ converge to $h$ uniformly in $\overline \Omega$ as $\delta\to 0$. 
  
  Moreover, if $\Omega$ is simply connected, then discrete harmonic conjugates~$H_\delta^\ast$ can be chosen so that they converge to an $A_\vphi$-harmonic conjugate $h^\ast$ {of $h$} uniformly on compact subsets of $\Omega$.
\end{thma}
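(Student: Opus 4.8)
The plan is the classical three-step scheme for proving convergence of discrete harmonic functions: \textbf{(a)} a priori estimates giving precompactness of $\{H_\delta\}$ (and of $\{H^\ast_\delta\}$ after a normalisation); \textbf{(b)} identification of every subsequential limit as a weak solution of $\Ll_\vphi h=0$, together with its conjugate as a weak solution of the dual equation~\eqref{eq:def_of_Lltimes}; and \textbf{(c)} uniqueness of the continuous Dirichlet problem to upgrade precompactness to convergence of the whole sequence (and to pin down the conjugate).

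For \textbf{(a)}, Theorem~\ref{thma:LIP<=>RW} turns property~\ref{prty:LIP} (equivalently~\ref{prty:CONV}) into property~\ref{prty:RW} on every $U'\Subset U$: the walks on $\Gamma_\delta$ are uniformly elliptic above the scale $\delta$ and the invariant measures $\mu_\delta$ are comparable to Lebesgue measure. This is precisely the input of the De~Giorgi--Nash--Moser theory for reversible Markov chains (the estimates above scale $\delta$ are also recapped in~\cite{CLR1}): one obtains a uniform discrete Harnack inequality, a uniform discrete Hölder estimate, and a discrete Caccioppoli estimate for discrete harmonic functions. Together with the discrete maximum principle $\|H_\delta\|_{L^\infty(\overline\Omega_\delta)}\le\|g\|_{L^\infty}$, the Hölder estimate makes suitable interpolations of $\{H_\delta\}$ and $\{H^\ast_\delta\}$ (the latter normalised at a fixed point of $\Omega$) equicontinuous and uniformly bounded on compacts of $\Omega$, while the Caccioppoli estimate bounds their Dirichlet energy on compacts; hence along a subsequence $\delta_k\to0$ they converge locally uniformly, and weakly in $W^{1,2}_\loc(\Omega)$, to some $h_\star,h^\ast_\star$. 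For the behaviour near $\partial\Omega$ I would use that $\partial\Omega$ is smooth and $\vphi\in\mC^{1,1}$ is uniformly convex: classical barriers for uniformly elliptic divergence-form operators exist at each boundary point, and evaluated on $\Gamma_\delta$ they become discretely super/sub-harmonic functions (the needed discrete second-difference bounds follow from~\ref{prty:RW} and from $\vPsi_\delta\to\vpsi$), pinching $H_\delta$ to $g$ uniformly up to $\partial\Omega$; so the interpolations are equicontinuous on $\overline\Omega$ and $h_\star|_{\partial\Omega}=g$.

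For \textbf{(b)}, fix $\phi\in\mC^\infty_0(\Omega)$. For $\delta$ small enough that $\supp\phi\Subset\Omega_\delta$, discrete summation by parts and discrete harmonicity give $\sum_{(vv')}c_{vv'}(H_\delta(v)-H_\delta(v'))(\phi(v)-\phi(v'))=0$, the sum over edges of $\Gamma_\delta$ (all terms touching $\partial\Omega_\delta$ drop out). The task is to show this sum tends, along the subsequence, to $\int_\Omega(A_\vphi\nabla h_\star\cdot\nabla\phi)\,d^2w$. Since individual discrete gradients of $H_\delta$ carry no pointwise control, the limit has to be taken through the conjugate, which satisfies the exact relation $H^\ast_\delta(v^\ast_2)-H^\ast_\delta(v^\ast_1)=c_{v_1v_2}(H_\delta(v_1)-H_\delta(v_2))$ along the dual edge $\Hh^\ast_\delta(v^\ast_2)-\Hh^\ast_\delta(v^\ast_1)=ic_{v_1v_2}(\Hh_\delta(v_2)-\Hh_\delta(v_1))$, in parallel with~\eqref{eq:intro_H*def}. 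I would run a mesoscopic box decomposition: cover $\supp\phi$ by squares of side $\rho$ with $\delta\ll\rho\ll\dist(\supp\phi,\partial\Omega)$, replace $\phi$ by its affine part on each box (the error being absorbed via the $O(\rho)$ oscillation of $\nabla\phi$, the bound $\sum_{(vv')}c_{vv'}|\Hh_\delta(v)-\Hh_\delta(v')|^2\le C|\Omega|$ coming from~\ref{prty:RW}, and Cauchy--Schwarz against the Hölder estimate), and on each box reorganise the leading term $\sum_{(vv')}c_{vv'}(H_\delta(v)-H_\delta(v'))\,(\nabla\phi\cdot(\Hh_\delta(v)-\Hh_\delta(v')))$ over the quads $\Hh_\delta(v_1)\Hh^\ast_\delta(v^\ast_1)\Hh_\delta(v_2)\Hh^\ast_\delta(v^\ast_2)$ attached to the edges. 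Using $c_{v_1v_2}(\Hh_\delta(v_2)-\Hh_\delta(v_1))=-i(\Hh^\ast_\delta(v^\ast_2)-\Hh^\ast_\delta(v^\ast_1))$ and $c_{v_1v_2}(H_\delta(v_2)-H_\delta(v_1))=-(H^\ast_\delta(v^\ast_2)-H^\ast_\delta(v^\ast_1))$, the box sum becomes a discrete analogue of $\int(A_\vphi\nabla h\cdot\nabla\phi)$ whose coefficient matrix is read off from finite differences of $\Hh^\ast_\delta$; and $\Hh^\ast_\delta$ equals, on faces, the map $\vPsi_\delta=2\partial_{\bar w}\vPhi_\delta$, which converges locally uniformly to $\vpsi=2\partial_{\bar w}\vphi$, while $A_\vphi$ in~\eqref{eq:def_of_A} is exactly the matrix assembled from $\vpsi_w,\vpsi_{\bar w}$. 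Hence the box averages converge (first $\delta\to0$, then $\rho\to0$) to $\int_{\mathrm{box}}(A_\vphi\nabla h_\star\cdot\nabla\phi)$, and summing over boxes shows $h_\star$ is a weak solution of $\Ll_\vphi h_\star=0$; the same computation read on the dual graph shows $\nabla h^\ast_\star=\ast A_\vphi\nabla h_\star$, so $h^\ast_\star$ is an $A_\vphi$-harmonic conjugate of $h_\star$ and a weak solution of~\eqref{eq:def_of_Lltimes}.

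For \textbf{(c)}, uniqueness of the continuous Dirichlet problem for $\Ll_\vphi$ (recalled above) forces $h_\star=h$; as this holds for every subsequential limit, $H_\delta\to h$ for the whole sequence --- locally uniformly in $\Omega$ by~(a), and uniformly on $\overline\Omega$ because of the boundary barriers and equicontinuity up to $\partial\Omega$. Choosing the additive normalisation of $H^\ast_\delta$ consistently then makes $h^\ast_\star$ equal to the corresponding $A_\vphi$-harmonic conjugate $h^\ast$ of $h$, and $H^\ast_\delta\to h^\ast$ locally uniformly in $\Omega$. The genuine obstacle is step~(b): the bilinear quantity $\sum_{(vv')}c_{vv'}(H_\delta(v)-H_\delta(v'))(\phi(v)-\phi(v'))$ has no pointwise meaning in terms of $\nabla H_\delta$, and turning ``the discrete Dirichlet form converges, with coefficient matrix $A_\vphi$ emerging from $\vPsi_\delta\to\vpsi$'' into a proof forces one to cope with the possibly self-intersecting quads (so that the areas and $A_\vphi$ appear only after signed cancellations), to interchange the limits $\delta\to0$ and $\rho\to0$, and to control the affine-approximation errors uniformly --- and it is here that properties~\ref{prty:CONV}/\ref{prty:LIP} are used in an essential way, beyond the a priori bounds of step~(a).
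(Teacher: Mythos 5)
Your overall architecture (precompactness via ellipticity/Caccioppoli, identification of subsequential limits through the harmonic conjugate, uniqueness of the continuous Dirichlet problem) matches the paper's, and you correctly isolate the identification step as the crux. But the proposal has two genuine gaps there, and the second one is exactly where the paper's real work lies.

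First, the identification argument does not close as written. You aim to pass to the limit in the discrete Dirichlet form $\sum c_{vv'}(H_\delta(v)-H_\delta(v'))(\phi(v)-\phi(v'))$ and land directly on the \emph{weak} formulation $\int_\Omega A_\vphi\nabla h_\star\cdot\nabla\phi=0$. After replacing $\phi$ by its affine part on a box, the box sum is a bilinear expression in the increments of $H_\delta$ and of $\Hh^\ast_\delta$; since both converge only weakly (resp.\ in an averaged sense), the product of their weak limits is not the limit of the product, and no compensated-compactness or telescoping mechanism is supplied. The paper avoids this entirely: by Lemma~\ref{lemma:closed_form} the 1-form $H_\delta\,d\vPsi_\delta+iH_\delta^\ast\,dw$ is \emph{exactly} closed on the t-surface, so its contour integrals involve only the functions $H_\delta,H_\delta^\ast,\vPsi_\delta$ themselves (no gradients), and uniform convergence suffices to conclude that $h\,d\vpsi+ih^\ast\,dw$ is closed. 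But even then, Lemma~\ref{lemma:Morera_condition} only yields the \emph{very weak} formulation $\int_\Omega h\,\Ll_\vphi\xi=0$ (all derivatives on $\xi$ and $\vphi$). Upgrading ``very weak'' to ``weak'' --- equivalently, showing $h_\star\in W^{1,2}_\loc$ so that the conjugacy relation $\nabla h^\ast_\star=\ast A_\vphi\nabla h_\star$ even makes sense --- is a separate regularity step: Lemma~\ref{lemma:very_weak=weak} when $\vphi\in\mC^3$, and the much more involved Proposition~\ref{prop:h_W12} (modifying $\Gamma_\delta$ outside a disc to create fat faces and invoking the a priori $\mC^{1,\beta}$ theory of~\cite{CLR1}) for general $\vphi\in\mC^{1,1}$. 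Your proposal never confronts this step, and it cannot be bypassed by the box decomposition.

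Second, the boundary treatment via barriers is unjustified: a continuous super/subsolution of $\Ll_\vphi$ restricted to $\Gamma_\delta$ is \emph{not} discretely super/subharmonic in general, because on these irregular graphs there is no pointwise consistency of $\Ll_\delta$ with $\Ll_\vphi$ on smooth functions --- that lack of consistency is the whole difficulty of the paper. The correct (and standard) substitute, which the paper uses, is the crossing estimate of Lemma~\ref{lemma:crossing_estimates} (a weak Beurling bound), which gives equicontinuity of $H_\delta$ up to $\partial\Omega$ directly from property~\ref{prty:RW}.
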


We also have a similar convergence result for Green's functions. Namely, let~$G_{\Omega_\delta}(\cdot,v_0):\overline{\Omega}_\delta\to\RR$ be 
the discrete Green function in~$\Omega_\delta$, i.e., 
the unique function that is discrete harmonic in $\Omega_\delta\smm\{v_0\}$, has zero boundary conditions on~$\partial\Omega_\delta$, and such that $\mu_\delta(v_0)[\Ll_\delta G_{\Omega_\delta}(\cdot,v_0)](v_0)=1$. In probabilistic terms, $\mu_\delta(v_0)G_{\Omega_\delta}(v,v_0)$ is the expected time spent at~$v_0$ by the continuous time random walk~$X_t$ on~$\Gamma_\delta$ started at $v$ and parameterized so that $|X_t|^2-t$ is a local martingale before it exists $\Omega_\delta$. Further, let $G_\Omega(\cdot,w_0)$ be the Green function of the elliptic operator $\Ll_\vphi$, that is, the unique positive function that belongs to the space $W^{1,2}_\loc(\Omega\smm\{ w_0 \})\cap \mC(\overline\Omega \smm\{ w_0 \})$, vanishes on $\partial \Omega$, and satisfies the identity
\[
\textstyle \int_\Omega A_\vphi \nabla_w G_\Omega(w,w_0)\cdot \nabla \phi(w)\,d^2w = \phi(w_0)
\]
for all smooth compactly supported test functions~$\phi\in \mC_0^\infty(\Omega)$.

\begin{thma}
  \label{thma:Green_fct_convergence}
  In the same setup as above, discrete Green functions~$G_{\Omega_\delta}(\,\cdot\,,\,\cdot\,)$ converge, as $\delta\to 0$, to the Green function $G_\Omega(\,\cdot\,,\,\cdot\,)$, uniformly on each compact $K\Subset (\overline\Omega \times \overline\Omega) \smm\diag$.
\end{thma}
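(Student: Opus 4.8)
The plan is to combine a priori estimates for the discrete Green's functions --- which come from the uniform ellipticity furnished by property~\ref{prty:RW}, available via Theorem~\ref{thma:LIP<=>RW} --- with the already established convergence of discrete harmonic functions and their conjugates (Theorem~\ref{thma:harmonic_functions_convergence}), and to use this to identify every subsequential limit with the continuum Green's function~$G_\Omega$. Throughout, fix $w_0\in\Omega$, let $v_0=v_0(\delta)$ be a vertex of $\Gamma_\delta$ nearest to $w_0$ (so $|\Hh_\delta(v_0)-w_0|\to 0$), and abbreviate $G_\delta:=G^\delta_\Omega(\cdot,v_0)$. Reversibility of the random walk on $\Gamma_\delta$ with respect to $\mu_\delta$, together with the normalization in the definition of $G^\delta_\Omega$, yields the symmetry $G^\delta_\Omega(v,v_0)=G^\delta_\Omega(v_0,v)$; so it suffices to analyze $G_\delta$ as a function of its first argument for a \emph{fixed} pole, and to invoke symmetry only when upgrading to joint statements. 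As in the setup of Theorem~\ref{thma:harmonic_functions_convergence}, we tacitly use that for small $\delta$ the subgraph $\Omega_\delta$ is connected and shadows $\Omega$ up to an $O(\delta)$ error near $\partial\Omega$.

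\emph{A priori estimates.} By Theorem~\ref{thma:LIP<=>RW}, the walks on $\Gamma_\delta$ enjoy property~\ref{prty:RW} on compact subsets of $U$, hence $\Ll_\delta$-harmonic functions obey the discrete De~Giorgi--Nash--Moser estimates (Harnack inequality and interior Hölder continuity) uniformly above scale~$\delta$; combined with the standard recurrence estimates for uniformly elliptic planar walks, this gives, uniformly in $\delta$ and in $w_0$ over a fixed compact $\Omega'\Subset\Omega$: (i) a two-sided logarithmic bound $c\log(r/\delta)-C\le G_\delta(v)\le C\log(r/\delta)+C$ whenever $C\delta\le r:=|\Hh_\delta(v)-w_0|\le r_0$; (ii) uniform boundedness and equicontinuity of $G_\delta$ on $\{v:\dist(\Hh_\delta(v),w_0)\ge r\}$ for each $r>0$ (the bound from the maximum principle plus the upper estimate in~(i) on a circle of radius~$r$); (iii) uniform decay near $\partial\Omega$: using an exterior disc at each boundary point one builds discrete superharmonic barriers, so $G_\delta(v)\to 0$ as $\dist(\Hh_\delta(v),\partial\Omega)\to 0$, uniformly; by symmetry the same decay holds in the pole variable. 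Caccioppoli estimates for discrete harmonic functions also give uniform $W^{1,2}$ bounds on $\Omega'\setminus B(w_0,r)$.

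\emph{Identification of the limit.} By~(ii)--(iii) and Arzelà--Ascoli, along any subsequence the values of $G_\delta$ converge, locally uniformly on $\overline\Omega\setminus\{w_0\}$, to a function $\tilde G\ge 0$ with $\tilde G|_{\partial\Omega}=0$, belonging to $W^{1,2}_\loc(\Omega\setminus\{w_0\})\cap\mC(\overline\Omega\setminus\{w_0\})$, which by~(i) has an isolated non-removable singularity at $w_0$ with $\tilde G(w)=O(\log|w-w_0|^{-1})$, and is positive in $\Omega\setminus\{w_0\}$ by Harnack. The arguments behind Theorem~\ref{thma:harmonic_functions_convergence} --- that any locally uniform limit of discrete $\Ll_\delta$-harmonic functions is a weak solution of $\Ll_\vphi h=0$, with convergence of the associated discrete conjugates to an $A_\vphi$-harmonic conjugate --- show that $\tilde G$ is a weak solution of $\Ll_\vphi\tilde G=0$ in $\Omega\setminus\{w_0\}$. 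It remains to pin down the strength of the singularity. From $\mu_\delta(v_0)[\Ll_\delta G_\delta](v_0)=1$ and discrete summation by parts, for every region $B$ with $v_0\in B\subset\Omega$ and $B\cap\Gamma_\delta\subset\Omega_\delta$ one has
\[
\textstyle\sum_{v\in B\cap\Gamma_\delta}\mu_\delta(v)\,[\Ll_\delta G_\delta](v)\ =\ \sum_{(vv'):\,v\in B,\,v'\notin B}c_{vv'}\big(G_\delta(v)-G_\delta(v')\big)\ =\ 1 ,
\]
i.e. the discrete flux of $G_\delta$ across any cycle separating $v_0$ from $\partial\Omega_\delta$ equals $1$; equivalently, the period of the discrete conjugate $G^\ast_\delta$ around $v_0$ equals $1$. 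Since this flux is the increment of $G^\ast_\delta$ along a dual loop, it converges together with $G^\ast_\delta$, so the continuum flux of $\tilde G$ across $\partial B$ is $1$; equivalently, testing with a cutoff $\phi\equiv 1$ near $w_0$, using $\phi(v_0)=\sum_v\mu_\delta(v)[\Ll_\delta G_\delta](v)\phi(v)$, the convergence of discrete Dirichlet forms on $\Omega\setminus B(w_0,\eps)$, and an $O(\eps)+o_\delta(1)$ bound on their contribution over $B(w_0,\eps)$, one gets $\int_\Omega A_\vphi\nabla\tilde G\cdot\nabla\phi\,d^2w=\phi(w_0)$ for all $\phi\in\mC_0^\infty(\Omega)$. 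As $\tilde G$ is positive, vanishes on $\partial\Omega$, and lies in $W^{1,2}_\loc(\Omega\setminus\{w_0\})\cap\mC(\overline\Omega\setminus\{w_0\})$, uniqueness of $G_\Omega$ forces $\tilde G=G_\Omega(\cdot,w_0)$; since every subsequential limit is the same, $G_\delta\to G_\Omega(\cdot,w_0)$ locally uniformly on $\overline\Omega\setminus\{w_0\}$ for the whole family.

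\emph{Joint uniformity and the main difficulty.} To pass from ``for each fixed pole'' to uniform convergence on a compact $K\subset(\overline\Omega\times\overline\Omega)\setminus\diag$, one combines the above with joint equicontinuity of $(v,v_0)\mapsto G^\delta_\Omega(v,v_0)$ near $K$ --- equicontinuity in the first variable from~(ii), in the pole variable from the symmetry $G^\delta_\Omega(v,v_0)=G^\delta_\Omega(v_0,v)$ --- the joint continuity of $(w,w_0)\mapsto G_\Omega(w,w_0)$, and the elementary fact that an equicontinuous pointwise-convergent family with continuous limit converges uniformly on compacts; the uniform decay~(iii), in both variables, covers the parts of $K$ touching $\partial\Omega$. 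I expect the principal obstacle to be the last step of the identification: matching the \emph{discrete} ``unit point mass at $v_0$'' with the \emph{continuum} normalization $\int_\Omega A_\vphi\nabla G_\Omega(\cdot,w_0)\cdot\nabla\phi=\phi(w_0)$. This requires a careful discrete Gauss--Green computation on annuli around $v_0$, a proof that the resulting flux is independent of the encircling contour and stable as $\delta\to 0$ (equivalently, that the period of $G^\ast_\delta$ converges to the period of the conjugate of $\tilde G$ --- precisely where Theorem~\ref{thma:harmonic_functions_convergence} is used \emph{near}, but not \emph{at}, the pole), and control of the near-pole contribution to the discrete energy. A secondary technical point is making the barrier estimate~(iii) uniform as the pole itself is allowed to approach $\partial\Omega$ within~$K$.
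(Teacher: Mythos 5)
Your overall architecture (a priori bounds away from the diagonal, compactness, identification of subsequential limits via harmonicity off the pole plus boundary values plus a normalization at the pole) matches the paper's. The a priori boundedness is obtained in the paper essentially as you suggest, via the occupation-time representation $\mu_\delta(v_0)G^\delta_\Omega(v,v_0)=\EE\int_0^{\tau}\indic[X_t^{v}=v_0]\,dt$, the martingale $|X_t|^2-t$, crossing estimates, Harnack, and the lower bound on $\sum_{v\in B}\mu_\delta(v)$ (Lemma~\ref{lemma:Green_function_bounded}); note that only an $O(1)$ bound at fixed distance from the pole is needed there, not your two-sided logarithmic estimate (i), which you assert from ``standard recurrence estimates'' but which is genuinely nontrivial in this generality (ellipticity only above scale $\delta$) and, as it turns out, unnecessary.

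The genuine gap is the step you yourself flag as the principal obstacle: matching the discrete unit mass at $v_0$ with the continuum normalization. Your primary route --- testing with a cutoff and claiming an ``$O(\eps)+o_\delta(1)$'' bound on the contribution of $B(w_0,\eps)$ to the discrete Dirichlet form --- is not justified: the discrete energy $\Ee^\delta_{B(w_0,\eps)}(G_\delta)$ is expected to diverge like $\log(\eps/\delta)$, so controlling $\sum c_{vv'}(G_\delta(v)-G_\delta(v'))(\phi(v)-\phi(v'))$ near the pole would require precisely the near-pole gradient estimates you have not established. The paper sidesteps this entirely. First, Section~\ref{sec:Elliptic equations and closed 1-forms in 2D} records that any nonnegative function, continuous up to $\partial\Omega$ with zero boundary values and $\Ll_\vphi$-harmonic in $\Omega\smm\{w_0\}$, equals $c\,G_\Omega(\cdot,w_0)$ for some constant $c\ge 0$ (Harnack forces the same growth at $w_0$, and one invokes the classical classification of isolated singularities of Serrin--Weinberger); so no distributional identity at the pole need be verified for the subsequential limit $\widehat G$. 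Second, the constant is fixed by Lemma~\ref{lemma:Green_function_characterization}: the $A_\vphi$-harmonic conjugate of $G_\Omega(\cdot,w_0)$ has monodromy $-1$ around $w_0$. On the discrete side the period of $G^\ast_\delta$ around $v_0$ is exactly the normalization $\mu_\delta(v_0)[\Ll_\delta G_\delta](v_0)=1$, and Corollary~\ref{cor:Caccioppoli} plus Lemma~\ref{lemma:harmonic_conjugate_bound} give convergence of $G^\ast_\delta$ on compacts of the \emph{universal cover} of $\Omega\smm\{w_0\}$ (a point your sketch glosses over --- the conjugate is additively multivalued), so the limiting conjugate inherits monodromy $\mp1$ and $c=1$. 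If you want to salvage your route, you should either develop this monodromy argument (your parenthetical ``equivalently'' between the flux statement and the distributional identity is exactly the content of the Serrin--Weinberger step, not a triviality), or prove the near-pole logarithmic upper bound and dyadic Caccioppoli estimates needed to control the energy in $B(w_0,\eps)$.
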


We prove Theorem~\ref{thma:harmonic_functions_convergence} and Theorem~\ref{thma:Green_fct_convergence} in Section~\ref{sec:Convergence of harmonic functions in C0 topology}: first under an additional smoothness assumption~$\vphi\in\mC^3$ in Section~\ref{subsec:proof of Theorem12 vphi smooth} and then in the general case~$\vphi\in\mC^{1,1}$ in Section~\ref{subsec:Extension to C1,1}. The principle difference between the two cases is a regularity lemma asserting that, under the assumption $\vphi\in\mC^3$, every `very weak' solution of the equation~$\Ll_\vphi h=0$, i.e., a function~$h$ such that $\int_\Omega h(w)[\Ll_\vphi \phi](w) d^2w=0$ for all $\phi\in\mC^\infty_0(\Omega)$, is automatically a weak solution (see Lemma~\ref{lemma:very_weak=weak}). In the general case $\vphi\in\mC^{1,1}$ such a statement requires the existence of the gradient of $h$ to be known a priori.

To overcome this difficulty we choose to rely upon the a priori regularity theory developed in~\cite{CLR1} and an idea somewhat similar to the one used in~\cite{mahfouf-crossing-estimates} in the Ising model context. This allows us to show that each discrete harmonic function on~$\Gamma_\delta$ can be locally approximated by discrete harmonic functions on the same graph with bounded gradients, and hence to control the gradients of limits of discrete harmonic functions. Moreover, as a byproduct of this approach we show that the convergence in Theorem~\ref{thma:harmonic_functions_convergence} and Theorem~\ref{thma:Green_fct_convergence} holds also for the `gradients defined on a mesoscopic scale'; see Remark~\ref{rem:Hdelta_is_mesodiff}. While we do not know whether or not the actual gradients of discrete harmonic functions can be uniformly bounded in the general case, we note that the convergence 
of~$\nabla H_\delta$ and $\nabla G_{\Omega_\delta}(\cdot,v_0)$ holds provided that the graphs $(\Gamma_\delta)_{\delta\to 0}$ satisfy a very mild additional regularity assumption~\ref{assum:Exp-Fat}. We refer the reader to Section~\ref{subsec:C1 under ExpFat} and Section~\ref{subsec:Extension to C1,1} for details.

\begin{rem}
\label{rem:intro-convergence-of-processes} 
In the special setup of orthodiagonal tilings, Theorem~\ref{thma:harmonic_functions_convergence} has been obtained in~\cite{gurel-gurevich-jerison-nachmias}; see also references therein and~\cite[Theorem~B]{bou2024random} for an alternative proof. 
Moreover, in~\cite[Theorem~A]{bou2024random} the authors show that in this case \cite[Theorem~B]{bou2024random} implies the convergence of trajectories of the random walks on~$\Gamma_\delta$ to trajectories of the Brownian motion, up to a possible time-reparametrization. We believe that Theorem~\ref{thma:harmonic_functions_convergence} supplemented by Theorem~\ref{thma:Green_fct_convergence} is sufficient to prove the convergence of continuous time random walks on~$\Gamma_\delta$ to the driftless diffusion defined by %
$\Ll_\vphi$ \emph{together} with the time-parametrizations. However, we do not explore this question in our paper.
\end{rem}

\subsection{Ellipticity of random walks on Tutte embeddings}
\label{subsec:CONV-and-RW}

The uniform ellipticity of the operator $\Ll_\vphi$ given by~\eqref{eq:def_of_Ll} can be rephrased in probabilistic terms as follows: the diffusion process generated by~$\Ll_\varphi$ must satisfy uniform crossing estimates and propagate with the speed that is uniformly comparable to~$1$. Our next result asserts that a similar equivalence holds for the random walk on a Tutte embedding: $\Gamma_\delta$ satisfies property~\ref{assum:conv} if and only if the properly normalized random walk on $\Gamma_\delta$ is `uniformly elliptic' \emph{above} the scale $\delta$. Let us emphasize that this equivalence holds for each $\Gamma_\delta$ and does \emph{not} involve taking any limit. To formulate it precisely we need some notation.

Given a Tutte embedding $\Gamma_\delta$, we define the continuous time random walk~$X_t$ on its vertices so that the jump rates from a vertex $v$ to its neighbors~$v'$ are proportional to $c_{vv'}$, parametrized so that $|X_t|^2-t$ is a local martingale. (Note also that $X_t$ is a $\CC$-valued local martingale since the embedding is harmonic.) The normalization factors $\mu_\delta$ defined in~\eqref{eq:intro-mu-def}
give us a standard invariant measure for~$X_t$ if one excludes the effect of the boundary of $\Gamma_\delta$ from the consideration. %
\renewcommand{\theproperty}{(RW)} 
\begin{property} Given constants $c,C,\delta>0$, we say that a Tutte embedding $\Gamma_\delta$ has property~\ref{prty:RW} on a domain $U\subset\CC$ covered by $\Gamma_\delta$ if the following holds:
\label{prty:RW}\label{assum:RW} 
\begin{enumerate}[label=(\alph{enumi})]
  \item \label{prop:S} The random walk on $\Gamma_\delta$ satisfies usual crossing estimates starting from the scale $\delta$:
  for each vertex $v$ of $\Gamma_\delta$ lying in the $C\delta$-interior of $U$ and each $\theta\in\RR$ one has $\Var(\Re(e^{i\theta}X^v_\tau))\ge c {\delta}^2$, where $X_t^v$ denotes the random walk started at $v$ and $\tau=\tau(B(v,C{\delta}))$ is the first exist time from the ball $B(v,C{\delta})$.
     
     \vskip 2pt
    \item \label{prop:T} The invariant measure $\mu_\delta$ is comparable to the Lebesgue measure starting from the scale $\delta$:
    for each $w\in\CC$ lying in the $C\delta$-interior of $U$ one has $c\delta^2\le \sum_{v\in B(w,{C\delta})}\mu_\delta(v)\le c^{-1}\delta^2$.
  \end{enumerate}
\end{property}

Our next result shows the equivalence of property~\ref{prty:RW} of random walks on Tutte embeddings~$\Gamma_\delta$ to the `geometric' properties~\ref{prty:CONV}/\ref{prty:LIP} of $\Gamma_\delta$ discussed above.

\begin{thma}
\label{thma:LIP<=>RW} (i) Properties \ref{prty:CONV} and/or \ref{prty:LIP} of a Tutte embedding~$\Gamma_\delta$ on a domain $U\subset\CC$ imply property~\ref{prty:RW} on each subdomain $U'\Subset U$ provided that $\delta$ is small enough and with constants~$c,C$ that depend on the constants in \ref{prty:CONV}/\ref{prty:LIP} only.\\[2pt]
(ii) Vice versa, property~\ref{prty:RW} on $U'$ implies properties~\ref{prty:CONV}/\ref{prty:LIP} on each subdomain $U''\Subset U'$ provided that $\delta$ is small enough and with constants $\lambda,\varkappa,C$ that depend on the constants in \ref{prty:RW} only.\\[2pt] Moreover, it is enough to assume that $U'$, resp.~$U''$, lies in an $O(\delta)$-interior of $U$, resp.~$U'$. 
\end{thma}

The fact that~\ref{prty:LIP} implies~\ref{prty:RW} follows from the framework developed in~\cite{CLR1}; in particular,~\cite[Proposition~6.4]{CLR1} gives the uniform ellipticity of the random walk~$X_t$ above scale~$\delta$. We recap this material in Section~\ref{subsec:corollaries_of_Lip} and also sketch a self-contained proof in Remark~\ref{rem:ellipticity}. We prove the converse statement in Section~\ref{sec:LIP<=>RW}; see Theorem~\ref{thmas:RW=>LIP}.

\subsection{Convergence to harmonic functions in a non-trivial metric}%
\label{subsec:when_harmonic}
In general, there exists no coordinate change $w=w(\zeta)$ under which the operator~$\Ll_\vphi$ becomes proportional to the standard Laplacian in the new complex coordinate~$\zeta$. Still, the situations in which such a coordinate change exists are of special interest 
as in this case one recovers the conformal invariance (in the complex structure given by~$\zeta$) of the limits of discrete harmonic functions on~$\Gamma_\delta$. Theorem~\ref{thma:when_harmonic} given below provides a few equivalent conditions that characterize the existence of such a coordinate~$\zeta$.

To formulate this theorem {we need to introduce} a geometric object originating from the main tool that we use in our paper: \emph{t-surfaces}~$\Theta_\delta\subset \RR^{2,2}$ {corresponding} to Tutte embeddings~$\Gamma_\delta$; see Section~\ref{sec:T-embedding of corner graph} for details. As~$\delta\to0$, {these} discrete surfaces~$\Theta_\delta$ converge to a (locally) space-like surface
\begin{equation}
\label{eq:intro-Theta-def}
\Theta = \big\{ (\tfrac{1}{2}(w +\vpsi(w))\,;\,\tfrac12(\overline{\vpsi(w)}-\overline{w}))\in \CC^{1,1}\cong\RR^{2,2}\,\mid\,w\in U \big\},
\end{equation}
{where $\psi=2\vphi_{\bar{w}}=\vphi_x+i\vphi_y$ and} the quadratic form in the Minkowski space $\CC^{1,1}\cong\RR^{2,2}$ is given by
\begin{equation}
\label{eq:intro-R2,2metric-def}
\|(x_1+ix_2\,;x_3+ix_4)\|^2_{1,1}=\|(x_1,x_2\,;x_3,x_4)\|^2_{2,2} = x_1^2+x_2^2-x_3^2-x_4^2\,.
\end{equation}
\begin{thma}
 \label{thma:when_harmonic}
(i) %
The intrinsic metric on the surface~$\Theta$ induced by the metric~\eqref{eq:intro-R2,2metric-def} in $\CC^{1,1}\cong \RR^{2,2}$ 
coincides with the Riemannian metric on the domain $U\subset\CC\cong \RR^2$ defined by the Hessian $D^2\vphi$. \\[2pt]
(ii) The following conditions are equivalent:
  \begin{enumerate}[label=(\alph{enumi})]
    \item The convex potential $\vphi$ solves the Monge--Amp\`ere equation~$\det D^2\vphi=\mathrm{const}$ or, equivalently, its gradient $\vpsi=2\vphi_{\bar{w}}=\vphi_x+i\vphi_y$ preserves the area up to a global multiplicative constant.
    \item $\Theta$ is a maximal space-like surface in $\RR^{2,2}$ (i.e., its mean curvature vanishes at each point).
 
    \item There exists a coordinate change $w=w(\zeta)$ such that %
    we have $\Ll_\vphi h = 0$ if and only if $h$ is harmonic in $\zeta$. If this holds, then $\zeta$ is a conformal coordinate on $\Theta$.    
  \item\label{assertion:L-equiv-Ltimes} %
    The two differential equations $\Ll_\varphi h = 0$ and $\Ll^\times_\varphi h = 0$ are equivalent (that is, have the same space of solutions on each open set $U'\subset U$).
    \item The equation~$\Ll_\varphi \psi=0$ holds (in the weak sense).
    \end{enumerate}
\end{thma}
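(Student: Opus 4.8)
The plan is to prove Theorem~\ref{thma:when_harmonic} in two parts: first the metric identification (i), then the cycle of equivalences (ii). For (i), I would compute directly. Parametrize $\Theta$ by $w=x+iy\in U$ via the map $w\mapsto(\tfrac12(w+\psi(w))\,;\,\tfrac12(\overline{\psi(w)}-\overline w))$. Writing $\psi=\varphi_x+i\varphi_y$, the first $\CC$-coordinate has real/imaginary parts $\tfrac12(x+\varphi_x)$ and $\tfrac12(y+\varphi_y)$, while the second has parts $\tfrac12(\varphi_x-x)$ and $\tfrac12(-\varphi_y-y)$ (up to the sign conventions in~\eqref{eq:intro-Theta-def}). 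Differentiating in $x$ and $y$ and plugging into the form~\eqref{eq:intro-R2,2metric-def} with its $+,+,-,-$ signature, the cross terms involving the Euclidean part and the $\varphi$-part should combine so that the induced first fundamental form is exactly $D^2\varphi=\left(\begin{smallmatrix}\varphi_{xx}&\varphi_{xy}\\ \varphi_{xy}&\varphi_{yy}\end{smallmatrix}\right)$. This is a routine but slightly delicate $2\times2$ computation; uniform convexity of $\varphi$ guarantees $D^2\varphi$ is positive definite, hence the surface is space-like and the intrinsic (path) metric agrees with the Riemannian metric of $D^2\varphi$. Since $\varphi\in\mathcal C^{1,1}$ only, this is understood a.e., which suffices to define the length structure.

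**The equivalences in (ii).** I would organize these as a chain. The equivalence (a)$\Leftrightarrow$(e) is essentially algebraic: expand $\Ll_\varphi\psi=0$ componentwise. Since $\psi=\varphi_x+i\varphi_y$, the real part of $\Ll_\varphi\psi$ is $-\varphi_{yy}\varphi_{xxx}+2\varphi_{xy}\varphi_{xxy}-\varphi_{xx}\varphi_{xyy}=-\partial_x(\det D^2\varphi)$ (after collecting terms using symmetry of third derivatives), and likewise the imaginary part is $-\partial_y(\det D^2\varphi)$; so $\Ll_\varphi\psi=0$ iff $\det D^2\varphi$ is constant. In the $\mathcal C^{1,1}$ generality this must be phrased weakly, which is the point of stating (e) "in the weak sense"; I would first do the $\mathcal C^3$ computation and then invoke the regularity theory (the same $\mathcal C^{1,1}\rightsquigarrow$ a priori bounds machinery from~\cite{CLR1} referenced earlier) to pass to the weak statement, noting that $\psi\in W^{1,\infty}_{\loc}$ so $\Ll_\varphi\psi$ makes sense as a distribution. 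The equivalence (a)$\Leftrightarrow$(b) is the classical fact that a graph-type space-like surface in $\RR^{2,2}$ of the form~\eqref{eq:intro-Theta-def} has vanishing mean curvature iff its Gauss-map/area-preservation condition holds; concretely the mean curvature vector of $\Theta$, computed with respect to the Hessian metric from part (i), vanishes precisely when $\partial_w\psi$ has constant modulus, which is $|\det D^2\varphi|=\mathrm{const}$ up to the $\psi_{\bar w}$ correction — I would make this precise by writing the second fundamental form in the $(w,\bar w)$ frame and using that $\varphi$ convex forces $\psi_w-\psi_{\bar w}$ invertible.

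**The conformal-coordinate and dual-equation links.** For (a)$\Rightarrow$(c): when $\det D^2\varphi=\mathrm{const}$, the Hessian metric $D^2\varphi$ on $U$ is a Riemannian metric on a surface, hence admits isothermal (conformal) coordinates $\zeta$ locally, in which $D^2\varphi=\rho(\zeta)|d\zeta|^2$; then $\Ll_\varphi=-\div(A_\varphi\nabla\cdot)$ and $A_\varphi=(D^2\varphi)$ rotated by $90^\circ$ (cf.~\eqref{eq:def_of_A}, where $A_\varphi$ is the cofactor matrix of $D^2\varphi$, equal to $(\det D^2\varphi)(D^2\varphi)^{-1}$), so $A_\varphi$ is a constant multiple of $(D^2\varphi)^{-1}$, and $\div(A_\varphi\nabla h)=\mathrm{const}\cdot\div((D^2\varphi)^{-1}\nabla h)$, which in the isothermal coordinate $\zeta$ becomes a constant multiple of the flat Laplacian $\Delta_\zeta$. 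The statement that $\zeta$ is then conformal on $\Theta$ follows by combining with part (i). Conversely (c)$\Rightarrow$(a): if such $\zeta$ exists then the pullback metric is conformally flat with the specific form dictated by $A_\varphi$, forcing $\det D^2\varphi=\mathrm{const}$ by reversing the computation. Finally (a)$\Leftrightarrow$(d): $\Ll_\varphi$ and $\Ll^\times_\varphi$ have coefficient matrices $A_\varphi$ and $\ast A_\varphi^{-1}\ast$; these generate the same equation iff $A_\varphi$ and $\ast A_\varphi^{-1}\ast$ are proportional pointwise, and since $\ast M\ast = (\det M)M^{-1}$ for $2\times2$ $M$ with $\ast=\left(\begin{smallmatrix}0&-1\\1&0\end{smallmatrix}\right)$, we get $\ast A_\varphi^{-1}\ast=(\det A_\varphi)^{-1}A_\varphi$, so the two operators are \emph{always} proportional as matrices — hence the content of (d) is that the proportionality factor $\det A_\varphi=\det D^2\varphi$ is constant (a non-constant positive factor does change the notion of weak solution via the test-function pairing unless one also rescales), which is exactly (a). I would spell out this last point carefully since it is the most likely to look tautological but isn't.

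**Main obstacle.** The hard part will be handling the low regularity $\varphi\in\mathcal C^{1,1}$ throughout: the mean-curvature computation in (b), the existence of isothermal coordinates in (c), and the distributional meaning of (e) all nominally want more smoothness. My plan is to prove the full equivalence first under $\varphi\in\mathcal C^3$ by the explicit computations above, and then bootstrap: observe that $\det D^2\varphi=\mathrm{const}$ is itself a (degenerate Monge–Ampère) regularity-improving condition, so on the locus where any one of (a)–(e) holds one can upgrade $\varphi$ to be smooth (by Caffarelli-type interior regularity for Monge–Ampère, or directly since a $\mathcal C^{1,1}$ solution of $\det D^2\varphi=\mathrm{const}$ with strict convexity bounds is real-analytic in the interior), closing the loop. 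The equivalences among (a), (d), (e) can in fact be stated and proved purely in terms of the a.e.-defined matrix $A_\varphi$ and the distribution $\Ll_\varphi\psi$ without any such bootstrap, so the smoothness upgrade is only needed to connect those to the geometric statements (b), (c), (i), and I would present the argument in that order.
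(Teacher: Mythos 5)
Your overall architecture (part (i) by direct computation; a hub-and-spoke set of equivalences centered at (a); the identity $\Ll_\vphi\vpsi=-2\partial_{\bar w}\Jac_w(\vpsi)$ for (a)$\Leftrightarrow$(e); isothermal coordinates for the Hessian metric for (a)$\Rightarrow$(c)) overlaps substantially with the paper, which proves (i) exactly as you describe and establishes (ii) as a single cycle (a)$\Rightarrow$(b)$\Rightarrow$(c)$\Rightarrow$(d)$\Rightarrow$(e)$\Rightarrow$(a), using a conformal parametrization of $\Theta$ defined via quasiconformal maps (so no smoothness upgrade is ever needed) and the observation that in such a coordinate $\Jac_\zeta(w)\,\Ll_\vphi h=-\div_\zeta(\rho\nabla_\zeta h)$ and $\Jac_\zeta(w)\,\Ll^\times_\vphi h=-\div_\zeta(\rho^{-1}\nabla_\zeta h)$ with $\rho=(\det D^2\vphi)^{1/2}$. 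Your (a)$\Leftrightarrow$(e) computation is correct and is precisely the paper's Lemma~\ref{lemma:Lpsi_is_gradJac}.

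There are, however, three concrete gaps. First, for (d)$\Rightarrow$(a) you correctly reduce the question to whether a non-constant factor $\det A_\vphi$ can leave the weak solution set unchanged, but you never produce the witness function that settles this; the clean witness is $\vpsi$ itself, which \emph{always} satisfies $\Ll^\times_\vphi\vpsi=0$ (being the $A_\vphi$-conjugate of $w$), so (d) forces $\Ll_\vphi\vpsi=0$, i.e.\ (e), and then your own (a)$\Leftrightarrow$(e) closes the loop --- you have all the pieces but do not connect them. Second, your (c)$\Rightarrow$(a) ("reversing the computation") is not an argument: in (c) one does not know a priori that $\zeta$ is an isothermal coordinate for $D^2\vphi$; this must be \emph{derived}, e.g.\ from the fact that $\zeta$ and $\zeta^2$ are harmonic in $\zeta$, hence $\Ll_\vphi[\zeta]=\Ll_\vphi[\zeta^2]=0$, which forces $\Ll_\vphi$ to be proportional to $\Delta_\zeta$ and hence the Jacobian of $w(\zeta)$ to diagonalize $A_\vphi$. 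Third, the Monge--Amp\`ere regularity bootstrap is circular in exactly the directions where you need it: to invoke Caffarelli-type regularity you must already know $\det D^2\vphi=\mathrm{const}$, so it cannot be used to prove (b)$\Rightarrow$(a) or (c)$\Rightarrow$(a) at $\mC^{1,1}$ regularity. The paper avoids this entirely by working with quasiconformal parametrizations and weak identities throughout; relatedly, your stated criterion for maximality ("$\partial_w\vpsi$ has constant modulus") is imprecise --- the robust characterization is that $w$ and $\vpsi$ are harmonic in a conformal coordinate $\zeta$ of $\Theta$, which is what makes (a)$\Rightarrow$(b)$\Rightarrow$(c) run at low regularity.
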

\begin{rem} Assertion~\ref{assertion:L-equiv-Ltimes} %
  means that %
  {random walks} on the weighted planar graphs in question and on their duals `behave in a similar way' on large scales. (In general, the operators $\Ll_\varphi$ and $\Ll^\times_\varphi$ have the same diffusion parts but different drifts; see Section~\ref{sec:when_harmonic} for details.) One can try to verify this condition in various setups of interest 
as a precursor of the conformal invariance of 2d lattice models (e.g., such as LERW/UST) defined on irregular graphs
\emph{before} embedding these graphs into the complex plane. 
\end{rem}

\subsection{Organization of the paper} 
In Section~\ref{sec:T-embedding of corner graph} we recall the construction of t-embeddings and t-surfaces associated with Tutte embeddings and the modern discrete complex analysis framework developed in~\cite{CLR1}. Section~\ref{sec:Elliptic equations and closed 1-forms in 2D} is devoted to basic properties of elliptic equations in~2d. We prove our main convergence theorems -- Theorem~\ref{thma:harmonic_functions_convergence} and Theorem~\ref{thma:Green_fct_convergence} -- in Section~\ref{sec:Convergence of harmonic functions in C0 topology} starting with classical Dirichlet energy and Caccioppoli estimates. Section~\ref{sec:LIP<=>RW} contains the proof of Theorem~\ref{thma:LIP<=>RW}, that is, the equivalence of properties~\ref{prty:CONV}/\ref{prty:LIP} and property~\ref{prty:RW}. Theorem~\ref{thma:when_harmonic} is proved in Section~\ref{sec:when_harmonic}.

\addtocontents{toc}{\protect\setcounter{tocdepth}{1}}
\subsection*{Acknowledgments} The work of Mikhail Basok was supported by Research Council of Finland grants~333932 and~363549.
The work of Beno\^it Laslier was partially supported by the DIMERS project ANR-18-CE40-0033. Marianna Russkikh was partially supported by a Simons Foundation Travel Support for Mathematicians MPS-TSM00007877. Part of this research was performed while the authors were visiting the Institute for Pure and Applied Mathematics (IPAM), which is supported by the National Science Foundation (Grant No. DMS-1925919).
\addtocontents{toc}{\protect\setcounter{tocdepth}{2}}

\section{T-surface of a harmonic embedding}
\label{sec:T-embedding of corner graph}

Given a weighted planar graph $(\Gamma_\delta,c)$ embedded into the complex plane harmonically, there are two discrete surfaces that can be associated with it. The first is the graph of the Maxwell--Cremona potential $\vPhi_\delta$ discussed in the introduction; this surface belongs to the Euclidean space~$\RR^3$. The second is called a \emph{t-surface}: this is a polygonal surface $\Theta_\delta$ embedded into $\RR^{2,2}\cong\CC^{1,1}$, which is obtained as the lift to this Minkowski space of the \emph{t-embedding} associated with the harmonic embedding as described in~\cite[Section~8.1]{CLR1}. The structure of~$\Theta_\delta$ plays a crucial role in discrete complex analysis techniques developed in~\cite{CLR1} upon which our approach is based. In this section we focus on constructing this t-surface and reviewing basic facts about it. In what follows we drop the index~$\delta$ for shortness.

\subsection{Definition of the t-surface $\Theta_\delta$}
\label{subsec:Definition and basic properties of t-embedding}

Let $(\Gamma,c)$ be a weighted finite planar graph with a chosen outer face. We denote by $\Gamma^\ast$ the dual graph of $\Gamma$ modified as follows: replace the vertex $v_\out^\ast$ of $\Gamma^\ast$ corresponding to the outer face of $\Gamma$ with $n = \deg v_\out^\ast$ vertices $v_{\out,1}^\ast,\dots, v_{\out,n}^\ast$ of degree 1; see Figure~\ref{fig:Gamma_and_its_dual} for an example. Note that we still have a bijection between the edges of $\Gamma$ and $\Gamma^\ast$.

We define the \emph{superposition graph} $\Gamma\cup\Gamma^\ast$ to be the bipartite graph whose vertices correspond to vertices of $\Gamma$, vertices of $\Gamma^\ast$, and midpoints of edges of $\Gamma$, and whose edges correspond to half-edges of $\Gamma$ and $\Gamma^\ast$ (see Figure~\ref{fig:Gamma_and_its_dual}). We call the two classes of vertices black and white and say that a vertex of $\Gamma\cup\Gamma^\ast$ is black if it is a vertex of $\Gamma$ or $\Gamma^\ast$ and white otherwise.

In what follows, we assume that $\Gamma$ is embedded harmonically and denote by $\Hh:\Gamma\to \CC$ the map that sends a vertex of~$\Gamma$ to its position in the complex plane. Let $\Hh^\ast:\Gamma^\ast\to \CC$ be the dual embedding, that is, for each edge $v_1v_2$ and the corresponding dual edge $v_1^\ast v_2^\ast$ we have
\[
  \Hh^\ast(v^\ast_2) - \Hh^\ast(v^\ast_1) = i c_{v_1v_2} (\Hh(v_2) - \Hh(v_1))
\]
if $v_1^\ast$ is on the right of $v_1v_2$.

\begin{figure}
  \centering
  \includegraphics[clip, trim=0cm 0cm 0cm 0.7cm, width = \textwidth]{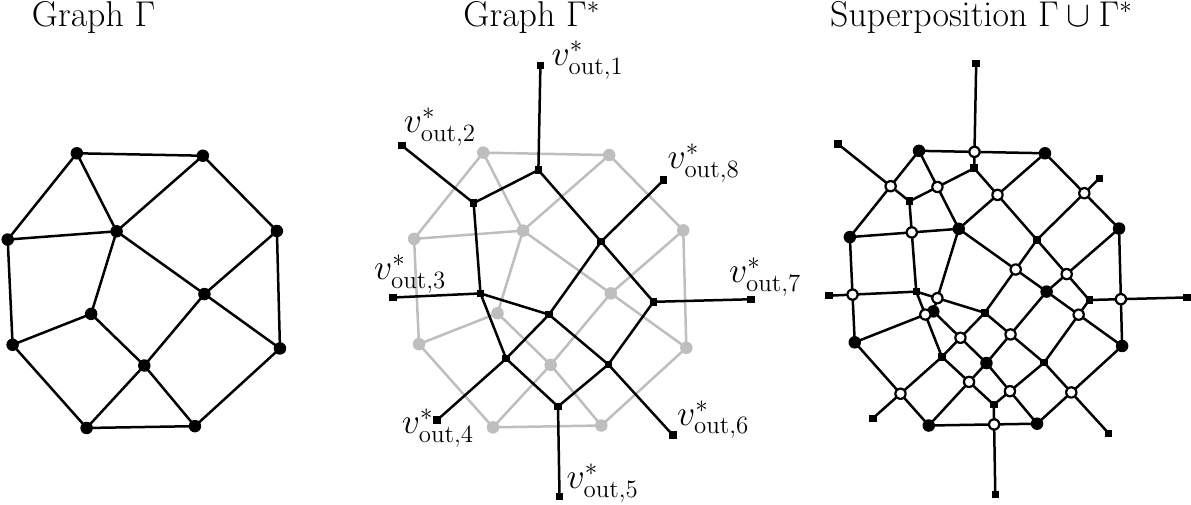}

  \caption{From left to right: an example of a harmonic embedding~$\Hh$ of a weighted graph $\Gamma$, dual harmonic embedding~$\Hh^\ast$ of~$\Gamma^\ast$ (with modified outer vertex), and the superposition graph $\Gamma\cup \Gamma^\ast$. Note that we do \emph{not} fix the embedding of $\Gamma\cup\Gamma^\ast$ into~$\CC$.
  }
  \label{fig:Gamma_and_its_dual}
\end{figure}

Let us emphasize that, unlike in Figure~\ref{fig:Gamma_and_its_dual}, the map $\Hh^\ast$ is not necessary a proper embedding unless one makes additional assumptions such as the convexity of the boundary polygon of~$\Hh$. Even more importantly, $\Hh$ and $\Hh^\ast$ are not necessarily `aligned', i.e., it may be not possible to shift $\Hh^\ast$ so that each vertex $\Hh^\ast(v^\ast)$ belongs to the corresponding face of $\Hh$. In other words, embeddings $\Hh$ and $\Hh^\ast$ do \emph{not} induce an embedding of the superposition graph $\Gamma\cup \Gamma^\ast$ in general. Nevertheless, the pair $(\Hh,\Hh^\ast)$ defines an embedding of the \emph{corner graph} $\Vv$ of $\Gamma$, which coincides with the dual graph of the superposition graph $\Gamma\cup \Gamma^\ast$ away from the boundary: see Definition~\ref{defin:t-embedding_of_Vv} below. We start with a formal definition of~$\Vv=\Vv(\Gamma)$.

\begin{defin}
  \label{defin:corner_graph}
  The corner graph $\Vv$ of $\Gamma$ is the graph whose vertices correspond to pairs $(v,v^\ast)$ of incident vertices of $\Gamma$ and $\Gamma^\ast$. Two vertices $(v_1,v_1^\ast)$ and $(v_2,v_2^\ast)$ are linked by an edge of $\Vv$ if either $v_1 = v_2$ and $v_1^\ast\sim v_2^\ast$ or $v_1\sim v_2$ and $v_1^\ast = v_2^\ast$; see Figure~\ref{fig:corner_graph} for an example.
\end{defin}

Note that there is a bijection between the edges of~$\Vv$ and the edges of~%
$\Gamma\cup \Gamma^\ast$. Also, there is a bijection between faces of $\Vv$ and all vertices of $\Gamma\cup \Gamma^\ast$ except the black boundary ones. The latter correspondence allows us to color the faces of $\Vv$ in black and white according to the coloring of vertices of $\Gamma\cup \Gamma^\ast$; see Figure~\ref{fig:corner_graph} for an example. We denote by $\mB(\Vv)$ and $\mW(\Vv)$ the sets of black and white faces of $\Vv$, respectively.
The next definition follows~\cite[Section~8.1]{CLR1}:

\begin{defin}
  \label{defin:t-embedding_of_Vv}
  T-embedding $\Tt:\Vv\to\CC$ of the corner graph is the map   %
  $\Tt(v,v^\ast) = \frac{1}{2}(\Hh(v) + \Hh^\ast(v^\ast))$.   %
\end{defin}

An example of a t-embedding $\Tt$ is given in Figure~\ref{fig:corner_graph}. It is easy to see that for each black face $b\in \mB(\Vv)$ the polygon $2\Tt(b)$ is a translation of the face of $\Gamma$ or $\Gamma^\ast$ corresponding to $b$. If $w\in \mW(\Vv)$ is a white face of $\Vv$, then $2\Tt(w)$ is a rectangle whose sides are translations of the edge of $\Gamma$ and the edge of $\Gamma^\ast$ corresponding to $w$, and whose orientation agrees with the orientation of the neighboring black faces. From these definitions, it is easy to see that $\Tt$ is always locally proper; moreover, it is also globally proper if we additionally assume that the \emph{boundary polygon} $\partial\Hh$ of $\Hh$ is convex. %

Following the terminology of~\cite{CLR1}, each t-embedding $\Tt$ has an \emph{origami map} $\Oo$ associated with it. Informally speaking, the mapping $\Oo$, which is defined up to rotations and translations, is obtained from the embedding $\Tt$ by folding the plane along all the edges of $\Tt$ consequently (one can prove that this procedure is consistent). Note that the images of a given face of~$\Vv$ in~$\Tt$ and in~$\Oo$ are isometric to each other, and that these images have the same orientation for white faces and the opposite ones for black faces. In the setup of this paper we use the following explicit definition:
\begin{equation}
  \label{eq:def_of_Oo}
  \Oo: \Vv\to \CC,\qquad \Oo((v,v^\ast)) = \tfrac{1}{2}(\overline{\Hh^\ast(v^\ast)} - \overline{\Hh(v)})\,;
\end{equation}
see also equation~\eqref{eq:dO=etadT} below and~\cite[Section~8.1]{CLR1} for more details. 

\begin{figure}
  \centering
  \includegraphics[clip, trim=0cm 0cm 14cm 0.7cm, width = 0.32\textwidth]{Gamma_dual_superposition.pdf}  
  \includegraphics[clip, trim=0cm 0cm 8cm 0.7cm, width = 0.33\textwidth]{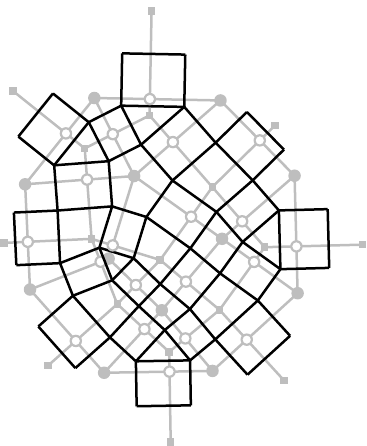}
  \includegraphics[clip, trim=8cm 0cm 0cm 0.7cm, width = 0.33\textwidth]{corner_graph.pdf}

  \caption{From left to right: an example of a harmonic embedding~$\Hh$ of a weighted graph~$\Gamma$, the corner graph $\Vv$ of $\Gamma$,
   and the t-embedding of $\Vv$ constructed out of~$\Hh$.}
  \label{fig:corner_graph}
\end{figure}

The pair of maps $\Tt$ and $\Oo$ gives rise to an embedding of the graph $\Vv$ into the Minkowski space $\CC^{1,1}\cong\RR^{2,2}$. We call this embedding a \emph{t-surface} and denote it by $\Theta$, that is,
\begin{equation}
  \label{eq:def_of_Theta}
  \Theta = (\Tt,\Oo): \Vv\to \CC^{1,1}.
\end{equation}

It is convenient to view $\Tt$ and $\Oo$ as piecewise linear functions on $\Theta$, which in its turn can be viewed as a polygonal surface in $\CC^{1,1}$. It is easy to see that on each face of~$\Theta$ we have
\begin{equation}
  \label{eq:|dT|=|dO|}
  |d\Tt| = |d\Oo|;
\end{equation}
see also equation~\eqref{eq:dO=etadT} below. Recall that the function $\vPhi$, defined by~\eqref{eq:intro-Phi-def}, is the Maxwell--Cremona potential associated with $\Gamma$ and that $\vPsi = 2\partial_{\bar w} \vPhi=\partial_x\vPhi+i\partial_y\vPhi$. 

\begin{lemma}
  \label{lemma:T-barO_projection}
  (i) If $b\in \mB(\Vv)$ corresponds to a vertex $v$ of $\Gamma$ and $p\in \Theta(b)$, then $(\Tt - \overline \Oo)(p)=\Hh(v)$.\\[2pt]
  (ii) If $u\in \mW(\Vv)$ corresponds to an edge of~$\Gamma$ and $p\in\Theta(u)$, then $(\Tt-\overline \Oo)(p)$ lies on the image of this edge of~$\Gamma$ in the harmonic embedding~$\Hh$.\\[2pt]
  (iii) Let $b\in \mB(\Vv)$ corresponds to a face of $\Gamma$. Assume that $p\in \Theta(b)$ and let $w = (\Tt - \overline \Oo)(p)$. Then, $w$ belongs to the image of this face of~$\Gamma$ in~$\Hh$ and one has
  \begin{equation}
  \label{eq:T-barO_projection}
    \Tt(p) = \tfrac{1}{2}(w + \vPsi(w))\,,\qquad \Oo(p) = \tfrac{1}{2}(\overline{\vPsi(w) -w})\,.
  \end{equation}
\end{lemma}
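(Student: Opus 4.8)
The plan is to verify the three claims directly from the definitions of $\Tt$, $\Oo$, and $\vPhi$, working face by face on the corner graph $\Vv$. For part (i), suppose $b\in\mB(\Vv)$ corresponds to a vertex $v$ of $\Gamma$. The vertices of the face $b$ are the corner-graph vertices $(v,v^\ast)$ where $v^\ast$ ranges over the faces of $\Gamma$ incident to $v$; by Definition~\ref{defin:t-embedding_of_Vv} and~\eqref{eq:def_of_Oo} we get $(\Tt-\overline\Oo)(v,v^\ast)=\tfrac12(\Hh(v)+\Hh^\ast(v^\ast))-\tfrac12(\Hh^\ast(v^\ast)-\Hh(v))=\Hh(v)$ at every vertex of $b$. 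Since $\Tt$ and $\Oo$ are both affine on each face of $\Theta$, so is $\Tt-\overline\Oo$, hence it is constant equal to $\Hh(v)$ on all of $\Theta(b)$. This also records the (well-known, cf.\ the discussion after Definition~\ref{defin:t-embedding_of_Vv}) fact that $2\Tt(b)$ and $2\Oo(b)$ are, up to translation, the mirror-image polygons of the face of $\Gamma^\ast$ dual to $v$, which is what makes $\Tt-\overline\Oo$ degenerate on $\Theta(b)$.

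For part (ii), let $u\in\mW(\Vv)$ correspond to an edge $(v_1v_2)$ of $\Gamma$, with dual edge $(v^\ast_1v^\ast_2)$. The four vertices of $u$ are $(v_i,v^\ast_j)$, $i,j\in\{1,2\}$, and at these vertices $(\Tt-\overline\Oo)(v_i,v^\ast_j)=\Hh(v_i)$ by the computation in (i). So on the white face $\Theta(u)$, which is a parallelogram, the affine map $\Tt-\overline\Oo$ takes the two values $\Hh(v_1),\Hh(v_2)$ (each on an edge of the parallelogram) and therefore maps $\Theta(u)$ onto the segment $[\Hh(v_1);\Hh(v_2)]$, i.e.\ onto the image of the edge $(v_1v_2)$ under $\Hh$.

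For part (iii), fix $b\in\mB(\Vv)$ corresponding to a face $f$ of $\Gamma$, i.e.\ to a dual vertex $v^\ast$. The vertices of $b$ are $(v,v^\ast)$ with $v$ ranging over the vertices of $f$, and by (i) one has $(\Tt-\overline\Oo)(v,v^\ast)=\Hh(v)$ there; since $\Tt-\overline\Oo$ is affine on $\Theta(b)$ and $\Theta(b)$ projects bijectively (under the first two coordinates, up to the affine relation) onto the polygon $\Hh(f)$, for $p\in\Theta(b)$ the point $w=(\Tt-\overline\Oo)(p)$ lies in $\Hh(f)$. Now I want to read off $\Tt(p)$ and $\Oo(p)$ in terms of $w$. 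On the vertices of $b$ we have $\Tt(v,v^\ast)=\tfrac12(\Hh(v)+\Hh^\ast(v^\ast))$; using that $\Hh^\ast(v^\ast)=\vPsi(w')$ for any $w'$ in the face $f$ (since $\vPsi$ is, by~\eqref{eq:intro-Psi-def}, constant equal to $\Hh^\ast(v^\ast)$ on $f$) and $\Hh(v)=(\Tt-\overline\Oo)(v,v^\ast)$, this reads $\Tt=\tfrac12((\Tt-\overline\Oo)+\vPsi\circ(\Tt-\overline\Oo))$ at every vertex of $b$; both sides are affine in $p$ on $\Theta(b)$ (the right side because $w\mapsto w$ and the constant $\vPsi$ on $f$ are affine and $p\mapsto w$ is affine), so the identity $\Tt(p)=\tfrac12(w+\vPsi(w))$ extends to all of $\Theta(b)$. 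Likewise $\Oo(v,v^\ast)=\tfrac12(\overline{\Hh^\ast(v^\ast)}-\overline{\Hh(v)})=\tfrac12(\overline{\vPsi(w)-w})$ at the vertices, and extending affinely gives $\Oo(p)=\tfrac12\overline{(\vPsi(w)-w)}$. This proves~\eqref{eq:T-barO_projection}. The only mild subtlety — the step I'd be most careful about — is the affineness bookkeeping in (iii): one must note that $p\mapsto w=(\Tt-\overline\Oo)(p)$ is an affine (indeed bijective) map from the planar face $\Theta(b)\subset\RR^{2,2}$ onto $\Hh(f)\subset\CC$, so that composing the coordinate functions $w$, $\vPsi\circ w$ (constant on $f$), $\overline w$ with it keeps everything affine and the equality of affine functions on the vertices of $b$ forces equality on $\Theta(b)$; none of this requires $\Hh$ or $\Hh^\ast$ to be globally embedded, only the local properness already noted, so no convexity hypothesis on $\Hh(\Gamma)$ enters.
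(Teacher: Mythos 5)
Your proposal is correct and is exactly the direct verification from Definitions~\ref{defin:t-embedding_of_Vv} and~\eqref{eq:def_of_Oo} that the paper itself invokes (its proof is one line: ``follows directly from the definition of $\Tt$ and $\Oo$''). The vertex computation $(\Tt-\overline\Oo)(v,v^\ast)=\Hh(v)$ and the affine-extension bookkeeping on each face are all that is needed, and you have carried them out accurately.
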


\begin{proof}
This follows directly from the definition of $\Tt$ and $\Oo$; see also~\cite[Section~8.1]{CLR1}.
\end{proof}

Let $\Ww=\Tt-\overline\Oo$. According to Lemma~\ref{lemma:T-barO_projection}, the mapping $\Ww$ projects $\Theta$ onto the union of faces of the harmonic embedding $\Hh$ of~$\Gamma$. As already mentioned above, the t-embedding $\Tt$ is not always globally proper (it may have overlaps if the boundary polygon~$\partial \Hh$ of the harmonic embedding~$\Hh$ is not convex), which means that $\Tt$ is not necessarily an injective map from the t-surface~$\Theta$ to~$\CC$. However, 
$\Tt$ is always locally one-to-one. Using~\eqref{eq:|dT|=|dO|}, it is easy to make this statement quantitative:

\begin{lemma}
  \label{lemma:T_is_locally_one-to-one}
  Let $p\in \Theta$ be an arbitrary point on the t-surface and $2r \le \dist(\Ww(p),\partial\Hh)$. There is an open (in the topology induced on~$\Theta$ from~$\CC^2$) connected subset $p\in B_\Theta(p,r)\subset\Theta$ such that $\Tt:B_\Theta(p,r)\to B(\Tt(p), r)$ %
  is a bijection. Moreover, $\Ww(B_\Theta(p,r))\subset B(\Ww(p), 2r)$.
\end{lemma}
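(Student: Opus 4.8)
The plan is to build the neighborhood $B_\Theta(p,r)$ by exploring the t-surface outward from $p$ along paths, tracking simultaneously the images under $\Tt$ and under $\Ww=\Tt-\overline\Oo$, and to use the contraction property $|d\Oo|=|d\Tt|$ together with the strict contraction coming from property~\ref{prty:LIP} to keep everything under control. First I would recall from Remark~\ref{rem:intro-conv<=>lip+k-lip} that property~\ref{prty:LIP} is equivalent to $\kappa$-Lipschitzness of the origami map with $\kappa=\kappa(\varkappa)<1$; more precisely, on each face of $\Theta$ one has $|d\Oo|=|d\Tt|$ as a relation of \emph{derivatives} but, when one measures displacement in $\Tt$ versus displacement in $\Ww$ along a path on $\Theta$, the contraction of $\Oo$ relative to $\Tt$ is governed by $\kappa$. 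Concretely, by Lemma~\ref{lemma:T-barO_projection}(iii) on a black face corresponding to a face of $\Gamma$ one has $\Tt=\tfrac12(w+\vPsi(w))$ and $\Ww=w$, so for a segment $[w_1;w_2]$ inside such a face, $|\Tt_2-\Tt_1|\ge \tfrac12\Re\frac{(w_2-w_1)+(\vPsi(w_2)-\vPsi(w_1))}{w_2-w_1}\cdot|w_2-w_1|\ge \tfrac12(1+\varkappa)|w_2-w_1|$; and on white faces and black faces over vertices of $\Gamma$, parts (i)–(ii) of Lemma~\ref{lemma:T-barO_projection} say $\Ww$ is locally constant along one of the two coordinate directions of the rectangle, so the corresponding estimate is even easier. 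Hence along any path $\gamma$ on $\Theta$ we get a bi-Lipschitz comparison $\ell(\Tt\circ\gamma)\asymp \ell(\Ww\circ\gamma)$ with constants depending only on $\varkappa$, and in fact $\ell(\Ww\circ\gamma)\le 2\,\ell(\Tt\circ\gamma)$ after rescaling (absorbing the $\tfrac12$ and the $(1+\varkappa)^{-1}$; one can of course adjust the constant $2$ if one prefers a sharper statement, but $2$ suffices for what follows).

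Next I would define $B_\Theta(p,r)$ as the set of endpoints $q$ of paths $\gamma$ on $\Theta$ starting at $p$ such that $\Tt\circ\gamma$ stays inside the open disk $B(\Tt(p),r)$. The key claim is that $\Tt$ restricts to a \emph{bijection} from this set onto $B(\Tt(p),r)$. Surjectivity: given a target point $z\in B(\Tt(p),r)$, lift the straight segment $[\Tt(p);z]$ through $\Theta$ — this is possible because $\Tt$ is locally one-to-one (each face of $\Theta$ maps affinely and bijectively to a polygon in $\CC$, and the faces glue along edges), so one can continue the lift as long as one does not run off the surface; but running off the surface would mean the $\Ww$-projection reaches $\partial[\Hh(\Gamma)]$, and by the length comparison above the $\Ww$-image of the lift stays within distance $2r\le \dist(\Ww(p),\partial[\Hh(\Gamma)])$ of $\Ww(p)$, a contradiction. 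Injectivity: if two such paths $\gamma_1,\gamma_2$ end at the same $\Tt$-value, consider the family of lifts of the segments from $\Tt(p)$ to moving endpoints; a standard monodromy/connectedness argument — identical in spirit to the proof that a local homeomorphism onto a simply connected target with a suitable properness condition is a covering, here onto the \emph{disk} $B(\Tt(p),r)$ which is simply connected — forces the two lifts to coincide. Openness and connectedness of $B_\Theta(p,r)$ in the induced topology are immediate from the construction (it is a union of lifted segments through $p$, and small perturbations of the target stay liftable), and the inclusion $\Ww(B_\Theta(p,r))\subset B(\Ww(p),2r)$ is exactly the length estimate applied to each such lifted segment.

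The main obstacle, and the place to be careful, is the interplay between the edges/vertices of $\Theta$ and the boundary $\partial[\Hh(\Gamma)]$: the lift of the straight $\Tt$-segment must be allowed to cross edges and vertices of the t-surface transversally (where the local-injectivity of $\Tt$ still holds by the polygonal structure described after Definition~\ref{defin:t-embedding_of_Vv}), while the only genuine obstruction to continuing the lift is the literal boundary of the surface, which $\Ww$ detects. One must verify that a lifted $\Tt$-segment cannot ``get stuck'' at a vertex of $\Theta$ — this is handled by the fact that around each interior vertex the $\Tt$-images of the incident faces tile a full neighborhood in $\CC$ (local properness), so the segment always has a well-defined continuation; and one must check that the $\Ww$-distance estimate is uniform across face types, which is precisely the content of Lemma~\ref{lemma:T-barO_projection} combined with the quantitative form of property~\ref{prty:LIP}. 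Once these points are nailed down, the bijectivity of $\Tt|_{B_\Theta(p,r)}$ and the bound on $\Ww(B_\Theta(p,r))$ follow formally, and this is the statement of Lemma~\ref{lemma:T_is_locally_one-to-one}.
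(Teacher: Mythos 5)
Your construction is essentially the paper's: define $B_\Theta(p,r)$ as the union of lifts through $p$ of the radial segments of $B(\Tt(p),r)$, use local properness of $\Tt$ to continue each lift uniquely across edges and vertices, rule out hitting $\partial\Theta$ by comparing the $\Ww$-displacement of a lift with its $\Tt$-displacement, and read off both the bijectivity and the inclusion $\Ww(B_\Theta(p,r))\subset B(\Ww(p),2r)$ from that comparison. The monodromy argument for injectivity is the standard one and matches what the paper leaves implicit.

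One caveat is worth fixing. You derive the length comparison $\ell(\Ww\circ\gamma)\le 2\,\ell(\Tt\circ\gamma)$ partly from property~\ref{prty:LIP}, but the lemma makes no such assumption, and it is invoked later (in Theorem~\ref{thmas:RW=>LIP}) precisely in a situation where the Lipschitz property of the origami map is what one is trying to \emph{prove}; relying on~\ref{prty:LIP} here would make that argument circular. Moreover,~\ref{prty:LIP} is only asserted at scales $|w_2-w_1|\ge C\delta$, so your face-by-face lower bound $|\Tt(w_2)-\Tt(w_1)|\ge\tfrac12(1+\varkappa)|w_2-w_1|$ is not actually available inside a single (possibly tiny) face. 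Fortunately the estimate you need is the trivial one: $|d\Ww|=|d(\Tt-\overline\Oo)|\le|d\Tt|+|d\Oo|=2|d\Tt|$ on every face by~\eqref{eq:|dT|=|dO|}, which holds for an arbitrary harmonic embedding and already gives the factor $2$ in the conclusion. Dropping the appeal to~\ref{prty:LIP} entirely and keeping only~\eqref{eq:|dT|=|dO|} makes your proof both correct in scope and identical in substance to the paper's.
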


\begin{proof} Let $B_\Theta(p,r)$ be the connected component of $\Tt^{-1}(B(\Tt(p), r))$ that contains $p$.
  For a point $w\in B(\Tt(p), r)$, let $\ell_w$ be the straight segment connecting $\Tt(p)$ and $w$ and let $\ell'_w\subset \ell$ be the longest subsegment that contains $\Tt(p)$ and does not intersect $\Tt(\partial\Theta)$. Denote by~$\gamma_w\subset\Tt$ the lift of $\ell'_w$ onto the t-surface. From the fact that $\Tt$ is locally proper it is easy to see that $\Tt$ is one-to-one on~$\gamma_w$. 
  It follows from~\ref{eq:|dT|=|dO|} that $\diam(\Ww(\gamma_w))\leq 2r$. If $\ell'_w\subsetneq\ell_w$, we would also have $\Ww(\gamma_w)\cap \partial \Hh\neq\varnothing$, a contradiction. Let~$B_\Theta(p,r)$ denote the union of paths $\gamma_w$ as above. By construction, $\Tt$ is a bijection between~$B_\Theta(p,r)$ and~$B(w,r)$. As~$\diam(\Ww(\gamma_w))\leq 2r$, we have $\Ww(B_\Theta(p,r))\subset B(\Ww(p),2r)$.
\end{proof}

\subsection{%
Property~$\Lip(\kappa,\delta)$ of the t-surface}
\label{subsec:equivalence_of_Lip}
According to Lemma~\ref{lemma:T-barO_projection}, we can view $\Tt$ and $\Oo$ as the pullbacks of the functions $w\mapsto \tfrac12(w + \vPsi)$ and $w\mapsto \tfrac12(\overline{\vPsi(w) - w})$ from the plane into which $\Gamma$ is embedded by~$\Hh$ to the corresponding t-surface $\Theta$. (Contrary to~$\vPsi$, the maps $\Tt$ and~$\Oo$ are well-defined and continuous on $\Theta$.) If $\vPhi$ was a \emph{smooth} and uniformly convex function, then  $\Oo=\overline{\partial}_w\vPhi-\frac12\overline{w}$ would be, locally, a $\kappa$-Lipschitz function of $\Tt=\frac12w+\partial_{\bar{w}}\vPhi$ for some $\kappa<1$ depending only on the constant~$\lambda$ in~\eqref{eq:Conv} or, equivalently, on the constant~$\varkappa$ in~\eqref{eq:Lip}; see also Lemma~\ref{lemma:Lip_equiv_conv}.
Moreover, one can easily see that $\vPhi$ is uniformly convex if and only if this condition holds with some $\kappa<1$. In our discrete setup, $\Oo$ is locally a linear function of $\Tt$, which means that this $\kappa$-Lipschitzness property cannot hold at very small distances. However, it is reasonable to expect that property~\ref{prty:CONV}/\ref{prty:LIP} implies that $\Oo$ is a $\kappa$-Lipschitz function of~$\Tt$ starting from the scale $\delta$. The following definition appeared in~\cite{CLR1}:

\begin{defin}
  \label{defin:Lip}
  Let $\delta>0$ and $0<\kappa < 1$ be given. We say that a t-surface $\Theta = (\Tt,\Oo)$ satisfies property $\Lip(\kappa,\delta)$ if for each set $U_\Theta\subset \Theta$ such that $\Tt$ restricted to $U_\Theta$ is one-to-one and $\Tt(U_\Theta)$ is convex and for each $p_1,p_2\in U_\Theta$ one has 
  \begin{equation}
  \label{eq:LipKdelta}
    |\Oo(p_2) - \Oo(p_1)|\leq\kappa|\Tt(p_2) - \Tt(p_1)|\ \ \text{if}\ \  |\Tt(p_2) - \Tt(p_1)|\geq \delta.
  \end{equation}
\end{defin}

The next lemma shows that property $\Lip(\kappa,\delta)$ is equivalent to property~\ref{prty:LIP} in the same sense as~\ref{prty:LIP} is equivalent (see Lemma~\ref{lemma:Lip_equiv_conv}) to property~\ref{prty:CONV}, i.e., $O(\delta)$-away from the boundary.

\begin{lemma}
  \label{lemma:Lip_follows_from_our_assumptions}
  Let~$\Theta$ be the t-surface corresponding to~$\Hh$ and~$\Psi=2\partial_{\bar{w}}\Phi$ as above.\\[2pt]
  (i) Assume that~$\Theta$ satisfy property $\Lip(\kappa,\delta)$. Then, $\Psi$ has property~\ref{prty:LIP} with $\varkappa=\frac{1-\kappa}{1+\kappa}$ and~$C=2$ provided that the points~$w_1,w_2$ in~\eqref{eq:Lip} are $8(1-\kappa)^{-1}\delta$-away from the boundary polygon $\partial\Hh$ of~$\Hh$.\\[2pt]
  (ii) Vice versa, if~$\Psi$ has property~\ref{prty:LIP}, then $\Theta$ satisfies~$\Lip(\kappa,C'\delta)$ for each $\kappa>\frac{1-\varkappa}{1+\varkappa}$ provided that the points~$p_1,p_2$ in~\eqref{eq:LipKdelta} are $C'\delta$-away from~$\partial\Theta$, where $C'$ depends on $\varkappa, \kappa$, and $C$ in~\ref{prty:LIP} only.
\end{lemma}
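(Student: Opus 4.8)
The plan is to use the explicit description of $\Tt,\Oo$ in terms of $\vPsi$ provided by Lemma~\ref{lemma:T-barO_projection}, namely that (pulling back via $\Ww=\Tt-\overline\Oo$) on the t-surface one has $\Tt = \tfrac12(w+\vPsi(w))$ and $\Oo = \tfrac12(\overline{\vPsi(w)-w})$, so that $\Tt+\overline\Oo = \vPsi(w)$ and $\Tt-\overline\Oo = w$. Both inequalities in the lemma are really statements comparing the two quantities $|\Delta\Tt+\overline{\Delta\Oo}| = |\Delta\vPsi|$ and $|\Delta\Tt-\overline{\Delta\Oo}| = |\Delta w|$ along a straight segment $[w_1;w_2]$, respectively along its lift $\gamma$ to $\Theta$; the algebraic identity behind everything is that for complex numbers, $|a|\le\kappa|b|$ is equivalent, after writing $a=\tfrac12((b{+}a)-(b{-}a))$ type manipulations, to a two-sided bound on $\Re\big((b{+}a)\overline{(b{-}a)}\big)/|b{-}a|^2$ and on $|b{+}a|/|b{-}a|$. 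Concretely, $\Re\frac{\vPsi(w_2)-\vPsi(w_1)}{w_2-w_1}$ and $\big|\frac{\vPsi(w_2)-\vPsi(w_1)}{w_2-w_1}\big|$ are exactly controlled by $|\Delta\Oo|/|\Delta\Tt|\le\kappa$; elementary manipulation gives the conversions $\varkappa=\frac{1-\kappa}{1+\kappa}$ and back $\kappa=\frac{1-\varkappa}{1+\varkappa}$. I would isolate this purely algebraic correspondence first as a short computation, identical in spirit to Lemma~\ref{lemma:Lip_equiv_conv}.

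The geometric content — and the reason the two directions are not literally symmetric and why the boundary-distance bookkeeping differs — is the discrepancy between a straight segment in the $w$-plane and a straight segment in the $\Tt$-plane, together with the fact that $\Tt$ is only \emph{locally} injective. For part~(i): given $w_1,w_2\in U$ with $[w_1;w_2]\subset U$ and $|w_2-w_1|\ge 2\delta$, lift $[w_1;w_2]$ to a path $\gamma$ on $\Theta$ (possible since $\Ww$ projects $\Theta$ onto the faces of $\Hh$, by Lemma~\ref{lemma:T-barO_projection}); then $|\Delta\Tt|\le|\Delta w|+|\Delta\Oo|$ and $|\Delta\Tt|\ge|\Delta w|-|\Delta\Oo|$ along $\gamma$, and using $|d\Tt|=|d\Oo|$ (equation~\eqref{eq:|dT|=|dO|}) one sees $\Tt(\gamma)$ has diameter $\le 2|\Delta w|$, which stays inside a convex set on which $\Tt$ is one-to-one provided the endpoints are $8(1-\kappa)^{-1}\delta$ away from $\partial[\Hh(\Gamma)]$ — this is where Lemma~\ref{lemma:T_is_locally_one-to-one} with $r\asymp|\Delta w|$ is used to guarantee the hypothesis of $\Lip(\kappa,\delta)$ applies. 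Once inside a valid $U_\Theta$, $\Lip(\kappa,\delta)$ gives $|\Delta\Oo|\le\kappa|\Delta\Tt|$, hence (chaining with $|\Delta\Tt+\overline{\Delta\Oo}|=|\Delta\vPsi|$, $|\Delta\Tt-\overline{\Delta\Oo}|=|\Delta w|$) the two inequalities~\eqref{eq:Lip} with $\varkappa=\frac{1-\kappa}{1+\kappa}$, $C=2$. For part~(ii): given $p_1,p_2$ on $\Theta$ with $\Tt$ one-to-one on a convex $\Tt(U_\Theta)\ni\Tt(p_1),\Tt(p_2)$ and $|\Tt(p_2)-\Tt(p_1)|\ge C'\delta$, set $w_j=\Ww(p_j)$; the straight segment $[\Tt(p_1);\Tt(p_2)]$ lifts (via the local injectivity of $\Tt$) to a path on $\Theta$ whose $\Ww$-image is a path from $w_1$ to $w_2$ of diameter $\le 2|\Delta\Tt|$, and to apply property~\ref{prty:LIP} I instead use the straight $w$-segment $[w_1;w_2]$, whose $\Tt$-lift is again controlled by $|d\Tt|=|d\Oo|\le\varkappa^{-1}|d\Tt|$-type bounds; the constants $C'$ absorbing $\varkappa,\kappa,C$ come from making sure $|w_2-w_1|\ge C\delta$ and that all the relevant lifts stay $O(\delta)$-inside $U$ resp.\ $\Theta$.

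The main obstacle I expect is not the algebra but the careful handling of \textbf{which straight segments live in which plane} and the accompanying loss in the constants and in the boundary distance: a segment that is straight in the $\Tt$-coordinate is generally a broken path in the $w$-coordinate and vice versa, so in each direction one must lift a straight segment from one side, verify via Lemma~\ref{lemma:T_is_locally_one-to-one} and equation~\eqref{eq:|dT|=|dO|} that its image on the other side is a short (hence non-self-overlapping, hence usable) path, and only then invoke the hypothesis. The asymmetry in the stated constants ($8(1-\kappa)^{-1}\delta$ in (i) versus an unspecified $C'(\varkappa,\kappa,C)$ in (ii)) is exactly an artifact of this: in (i) the hypothesis $\Lip(\kappa,\delta)$ is scale-$\delta$ clean, whereas in (ii) one must first pass from $\Lip$ at scale $\delta$ to property~\ref{prty:LIP} at scale $C\delta$ and then back, and each such passage eats a constant factor and a constant multiple of $\delta$ of boundary margin. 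I would organize the write-up so that the algebraic lemma is stated once and reused, and the two directions differ only in the lifting/convexity bookkeeping, keeping the constant-chasing to a minimum.
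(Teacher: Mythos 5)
Your overall strategy is the paper's: use Lemma~\ref{lemma:T-barO_projection} to write $\Psi\circ\Ww=\Tt+\overline{\Oo}$ and $\Ww=\Tt-\overline{\Oo}$, reduce both directions to the elementary equivalence between $|\Delta\Oo|\le\kappa|\Delta\Tt|$ and the two bounds in~\eqref{eq:Lip} applied to $(a+b)/(a-b)$, and use Lemma~\ref{lemma:T_is_locally_one-to-one} for the injectivity bookkeeping. However, two steps in your part~(i) would fail as written. The main one is that the direction of the Lipschitz control between the $\Tt$- and $\Ww$-coordinates is reversed: equation~\eqref{eq:|dT|=|dO|} gives $|d\Ww|\le|d\Tt|+|d\Oo|=2|d\Tt|$, i.e.\ $\Ww$ is $2$-Lipschitz as a function of $\Tt$ (that is the content of Lemma~\ref{lemma:T_is_locally_one-to-one}); it does \emph{not} give ``$\Tt(\gamma)$ has diameter $\le 2|\Delta w|$'' for a lift $\gamma$ of a $w$-segment. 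Indeed $\Ww$ collapses every black face lying over a vertex of $\Gamma$ to a single point, so the $\Tt$-diameter of a lift is not controlled by $|\Delta w|$ without invoking $\Lip(\kappa,\delta)$, and even then only with the factor $(1-\kappa)^{-1}$ and only above scale $\delta$ --- which is exactly why the boundary margin in the statement is $8(1-\kappa)^{-1}\delta$. The paper works from the $\Tt$-side instead: it takes $B_\Theta(p_1,r)$ with $r=4(1-\kappa)^{-1}\delta$, uses $\Lip(\kappa,\delta)$ to show that the $\Ww$-image of the boundary circle $|z-z_1|=r$ stays $4\delta$-away from $w_1$, and concludes by a winding argument that $\Ww(B_\Theta(p_1,r))\supset B(w_1,4\delta)$. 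This surjectivity step is what produces a point $p_2$ with $\Ww(p_2)=w_2$ \emph{inside} the set where $\Lip(\kappa,\delta)$ applies, together with the lower bound $|\Tt(p_2)-\Tt(p_1)|\ge\delta$ (via the $2$-Lipschitzness of $\Ww\circ\Tt^{-1}$) needed to invoke~\eqref{eq:LipKdelta}; your ``lift the segment'' step needs both and supplies neither.

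The second issue is that you treat the whole segment $[w_1;w_2]$ in a single injectivity neighbourhood with $r\asymp|\Delta w|$; for $|w_2-w_1|\gg\delta$ the hypothesis of Lemma~\ref{lemma:T_is_locally_one-to-one} then forces a boundary margin proportional to $|w_2-w_1|$ rather than the fixed $8(1-\kappa)^{-1}\delta$. The paper first proves~\eqref{eq:Lip} only for $2\delta\le|w_2-w_1|\le 4\delta$ and then passes to general segments by splitting them into collinear subsegments of that length, using that both inequalities in~\eqref{eq:Lip} survive collinear subdivision (the real-part bound is a weighted average over the pieces, the modulus bound sums by the triangle inequality). You should make this subdivision explicit. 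Your part~(ii) matches the paper's argument in outline (verify $|w_2-w_1|\ge C\delta$, apply~\eqref{eq:Lip} to the straight $w$-segment, convert back, and absorb an $O(\delta)$ additive error --- coming from points $p_i$ not lying over faces of $\Gamma$ --- by taking $\kappa$ strictly larger than $\frac{1-\varkappa}{1+\varkappa}$), though at the level of detail given it leaves the analogous bookkeeping implicit.
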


\begin{proof} (i) Let $r=4(1-\kappa)^{-1}\delta$. Recall the notation~$\Ww=\Tt-\overline{\Oo}$ and note that Lemma~\ref{lemma:T-barO_projection} yields
\begin{equation}
   \label{eq:lfa0}
\Psi\circ\Ww=\Tt+\overline{\Oo}
\end{equation}
if we restrict these maps to the black faces of $\Theta$ corresponding to faces of~$\Gamma$. Let~$p_1\in\Theta$, $z_1=\Tt(p_1)$, and $w_1=\Ww(p_1)$ be such that $\dist(w_1,\partial\Hh)\ge 2r$. Consider the neighborhood~$B_\Theta=B_\Theta(p_1,r)$ from Lemma~\ref{lemma:T_is_locally_one-to-one}; recall that~$\Tt:B_\Theta\to B(z_1,r)$ is a bijection and that $\Ww(B_\Theta)\subset B(w_1,2r)$. It is also easy to see that the Lipschitz property~\eqref{eq:LipKdelta} implies that~$\Ww(B_\Theta)\supset B(w_1,4\delta)$ since the image of the circle $|z-z_1|=r=4(1-\kappa)^{-1}\delta$ under the map~$\Ww\circ\Tt^{-1}$ stays at least $4\delta$-away from~$w_1$ and encircles~$w_1$ due to topological reasons.

Let~$w_2\in B(w_1,4\delta)\smallsetminus B(w_1,2\delta)$; without true loss of generality we can also assume that~$w_1$ and~$w_2$ lie in the interiors of faces of~$\Hh$. Denote~$p_2=\Ww^{-1}(w_2)\in B_\Theta$ and $z_2=\Tt(p_2)\in B(z_1,r)$. Note that $|z_2-z_1|\ge\delta$ as the mapping~$\Ww\circ\Tt^{-1}$ is $2$-Lipschitz. It is now easy to see from~\eqref{eq:lfa0} and~\eqref{eq:LipKdelta} that inequalities~\eqref{eq:Lip} hold for these~$w_1$ and~$w_2$ and~$\varkappa=(1-\kappa)/(1+\kappa)$. The general case follows by linearity: split the segment $[w_1;w_2]$ into subsegments of length between~$2\delta$ and~$4\delta$.

\smallskip

\noindent (ii) Let~$r=C'\delta$, where $C'>C(1-\kappa)^{-1}$ is a large enough constant. First, assume that~$p_1\in\Theta$ belongs to a black face corresponding to a face of~$\Gamma$ and $\dist(\Tt(p_1),\partial\Hh)\ge 2r$. Let~$p_2\in B_\Theta=B_\Theta(p_1,r)$ be another such point 
with~$|\Tt(p_2)-\Tt(p_1)|\ge \frac12r$. Denote~$w_1=\Ww(p_1)$, $w_2=\Ww(p_2)$ as above, and recall that~$w_2\in\Ww(B_\Theta)\subset  B(w_1,2r)$. It follows from~\eqref{eq:lfa0} and the upper bound in~\eqref{eq:Lip} that the image of the circle $|w-w_1|=C\delta$ under the map~$\Tt\circ\Ww^{-1}$ lies inside the disc $B(p_1,C(1-\varkappa)^{-1}\delta)$. Therefore, $|w_2-w_1|\ge C\delta$. It is now easy to deduce from~\eqref{eq:T-barO_projection} and the lower bound in~\eqref{eq:Lip} that
\[
\biggl|\frac{\Oo(p_2)-\Oo(p_1)}{\Tt(p_2)-\Tt(p_1)}\biggr|\,=\,\biggl|\frac{(\vPsi(w_2)-\vPsi(w_1))-(w_2-w_1))}{(w_2-w_1)+(\vPsi(w_2)-\vPsi(w_1))}\biggr| \,\le\, \frac{1-\varkappa}{1+\varkappa}\,.
\]

It remains to consider points $p_1,p_2\in\Theta$ that does not necessarily belong to black faces of~$\Theta$ that correspond to faces of~$\Gamma$ but still satisfy $p_2\in B_\Theta(p_1,r)$ and $|\Tt(p_2)-\Tt(p_1)|\ge \frac12r$. Replacing these points by nearby points on such faces and using~\eqref{eq:Lip} again, it is easy to see that
\[
|\Oo(p_2)-\Oo(p_1)|\ \le\ \frac{1-\varkappa}{1+\varkappa}\,|\Tt(p_2)-\Tt(p_1)|+2C(1+\varkappa^{-1})\delta\,.
\]
Therefore, for each~$\kappa>\tfrac{1-\varkappa}{1+\varkappa}$ one can find a large enough constant~$C'$ such that the estimate~\eqref{eq:LipKdelta} holds without additional assumptions on~$p_1,p_2$ provided that $|\Tt(p_2)-\Tt(p_1)|\ge C'\delta$.
\end{proof}

\subsection{Corollaries of property $\Lip(\kappa,\delta)$ for random walks and harmonic functions on~$\Gamma$.} 
\label{subsec:corollaries_of_Lip}
In order to simplify the notation, from now onwards we usually identify vertices~$v$ of an abstract weighted planar graph~$(\Gamma,c)$ with their positions~$\Hh(v)$ in the complex plane if no confusion arises.

\begin{prop}
  \label{prop:RW_ellipticity}
  Let $\Gamma$ be a harmonic embedding and the t-surface $\Theta$ be defined as above. Let~$X_t$ be the continuous time simple random walk on~$\Gamma$ parameterized so that $|X_t|^2-t$ is a local martingale. Denote by $X_t^v$ the random walk started at $X_0^v = v$ and stopped at~$\partial \Gamma$. Assume that $\Theta$ has property $\Lip(\kappa,\delta)$ for some $0<\kappa <1$ and $\delta>0$ (or, equivalently, that $\Gamma$ has properties~\ref{prty:CONV}/\ref{prty:LIP}). Then, there is a constant $C=C(\kappa)>0$ such that for each $t\geq C\delta^2$, each vertex~$v$ of~$\Gamma$ with $\dist (v,\partial\Gamma)\ge \sqrt{t}$, and each $\theta\in \RR$ we have
  \begin{equation}
    \label{eq:RW_ellipticity}
    \Var(\Re(e^{i\theta}X_t^v)) \geq C^{-1}t.
  \end{equation}
  The same assertion holds for the dual graph $\Gamma^\ast$.
\end{prop}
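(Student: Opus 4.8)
The plan is to transfer the $\Lip(\kappa,\delta)$ control from the t-surface back to the random walk on $\Gamma$ via the key identity $\Psi\circ\Ww=\Tt+\overline{\Oo}$ on black faces corresponding to faces of $\Gamma$. The starting observation is that although $X_t$ itself is a (complex) martingale on $\Hh(\Gamma)$, the quantity whose variance we want to lower bound should be related to a martingale on the t-surface side. Concretely, consider the lift: since $\Tt-\overline{\Oo}=\Ww$ projects $\Theta$ onto the faces of $\Hh(\Gamma)$, and $X_t$ lives on these faces, one can lift $X_t$ to a process $\widehat X_t$ on $\Theta$, and then $\Tt(\widehat X_t)$ is (up to the cable-system parametrization and a martingale-decomposition argument) the natural object: on black faces $\Tt+\overline\Oo=\Psi\circ\Ww$, and the increments of $\Tt$ along edges of the corner graph encode the increments of $\Hh$ along edges of $\Gamma$ in a bi-Lipschitz way because the $\Lip(\kappa,\delta)$ property forces $|d\Oo|\le\kappa|d\Tt|$ at scale $\ge\delta$ while $|d\Tt|=|d\Oo|$ holds face-by-face. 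I would first make precise that, run up to time $t\ge C\delta^2$ from a point at distance $\ge\sqrt t$ from $\partial\Gamma$, the walk $X_t^v$ stays (with the relevant quantities) within a region where Lemma~\ref{lemma:T_is_locally_one-to-one} applies, so the lift is unambiguous.

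Next I would exploit that $\Re(e^{i\theta}\,\cdot\,)$ of a complex martingale has a deterministic quadratic variation tied to the parametrization $|X_t|^2-t$; the issue is only that $\Re(e^{i\theta}X_t^v)$ could have small variance if the walk moved predominantly in the orthogonal direction — but that is exactly ruled out by uniform ellipticity of the associated diffusion tensor. The point is that the diffusion tensor of $X_t$ equals (up to the normalization making $|X_t|^2-t$ a martingale) something comparable to $(A_\varphi)^{-1}$ or $A_\varphi$ evaluated at mesoscopic scale, and property~\ref{prty:CONV}/\ref{prty:LIP} gives $\lambda\le A_\varphi\le\lambda^{-1}$, hence the tensor is uniformly elliptic above scale $\delta$. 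More combinatorially: the $\kappa$-Lipschitz bound on $\Oo$ as a function of $\Tt$ means $\Tt$ expands/contracts directions by a factor in $[(1-\kappa),(1+\kappa)]$ relative to $\Ww$, and since $\Tt+\overline\Oo=\Psi\circ\Ww$, any unit direction in the $w$-plane maps under $\Psi$ to a direction whose real and imaginary parts are both controlled. I would then argue: run the walk to the stopping time $\tau$ of exiting a ball of radius $c\sqrt t$ (which by optional stopping and the martingale property happens before time $\sim t$ with controlled probability), note $\Var(X^v_\tau)\gtrsim t$ by the radial martingale, and use the directional uniform ellipticity to push this into $\Var(\Re(e^{i\theta}X^v_\tau))\gtrsim t$; finally convert the stopped version into the fixed-time version at time $t$ by comparing $\Var(\Re(e^{i\theta}X^v_t))$ and $\Var(\Re(e^{i\theta}X^v_\tau))$ using the martingale property once more and the fact that increments after $\tau$ only add to the variance in the right direction on average.

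The cleanest route is probably to invoke \cite[Proposition~6.4]{CLR1} essentially as a black box for the scalar estimate $\EE[|X^v_t - v|^2]\asymp t$ (and the analogous exit-time estimates), which is the hard analytic input and is already available once $\Lip(\kappa,\delta)$ holds, and then spend the effort only on upgrading the isotropic estimate to the directional one~\eqref{eq:RW_ellipticity}. For that upgrade I would use the following device: for fixed $\theta$, apply the isotropic estimate not to $X_t$ directly but after precomposing the embedding with the real-linear map that the diffusion tensor induces; equivalently, observe that the t-surface construction is covariant under real-linear changes of the $w$-coordinate, so one may reduce to the case where the tensor at the relevant scale is close to the identity, in which $\Var(\Re(e^{i\theta}X_t^v))\asymp\Var(\Re(X_t^v))\asymp\frac12\EE[|X_t^v-v|^2]\asymp t$. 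The dual-graph statement is immediate because $\Gamma^\ast$ with its induced conductances $c^\ast_{v^\ast_1 v^\ast_2}=1/c_{v_1v_2}$ has, by~\eqref{eq:intro_H*def}, a t-surface that is the isometric image of $\Theta$ under swapping the roles of $\Tt-\overline\Oo$ and its conjugate/rescaled analogue, so it inherits $\Lip(\kappa,\delta)$ with the same $\kappa$ (this is the content of Lemma~\ref{lemma:Lip_follows_from_our_assumptions} applied symmetrically) and the same argument runs verbatim.

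The main obstacle I expect is the \emph{directional} lower bound rather than the isotropic one: getting $\Var(\Re(e^{i\theta}X^v_t))\gtrsim t$ uniformly in $\theta$ requires ruling out that the walk's fluctuations degenerate onto a line, which is precisely where one must use that $A_\varphi$ is uniformly elliptic (two-sided) and not merely bounded above; tracking how the anisotropy of $A_\varphi$ interacts with the choice of $\theta$, and confirming that the constant $C(\kappa)$ can be taken to depend on $\kappa$ alone (uniformly over all admissible $\varphi$, i.e.\ over the whole family satisfying the bound with that $\kappa$), is the delicate bookkeeping step. A secondary technical point is handling the boundary: ensuring that the hypothesis $\dist(v,\partial\Gamma)\ge\sqrt t$ genuinely suffices to keep the walk, up to the relevant stopping time, inside the region where $\Lip(\kappa,\delta)$ and Lemma~\ref{lemma:T_is_locally_one-to-one} are usable, which should follow from a crude a priori bound on $\Pr[\,|X^v_t-v|\ge\sqrt t\,]$ coming again from \cite[Prop.~6.4]{CLR1}.
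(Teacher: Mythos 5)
Your overall strategy---localize so that $\Tt$ is one-to-one, identify $\Gamma$ and $\Gamma^\ast$ with the two T-graph projections $\Tt\mp\overline{\Oo}$ of the t-surface, and then invoke \cite[Proposition~6.4]{CLR1}---is exactly the paper's, and your treatment of the dual graph (switching the roles of the two projections, i.e.\ the colors of the faces, so that $\pm\overline{\Oo}$ becomes the origami map) also matches. The gap is in the step you yourself flag as the main obstacle: the ``upgrade from the isotropic estimate to the directional one.'' First, this step is unnecessary: \cite[Proposition~6.4]{CLR1} is already stated as the \emph{directional} variance lower bound $\Var(\Re(e^{i\theta}X_t))\ge \mathrm{const}\cdot t$ for every $\theta$, for random walks on T-graphs obtained from a t-embedding satisfying $\Lip(\kappa,\delta)$; the only content of the paper's proof is the reduction to a proper t-embedding and the observation that $\Gamma$, $\Gamma^\ast$ are its two T-graph projections.

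Second, and more importantly, the upgrade as you describe it does not work, so if you insist on using only the scalar consequence $\EE[|X_t^v-v|^2]\asymp t$ you are left with a real hole. There is no single real-linear map ``induced by the diffusion tensor'': the tensor varies in space (and is only meaningful at mesoscopic scales, being degenerate along individual edges), and a global linear change of the $w$-coordinate changes the conductance-geometry relation~\eqref{eq:intro_H*def}, hence the dual embedding, the invariant measure normalizing $|X_t|^2-t$, and the Lipschitz constant $\kappa$ itself, so the reduction is not ``free.'' Worse, even granting a bounded linear normalization, an isotropic second-moment bound $\EE[|Y_t|^2]\asymp t$ in the new coordinates is perfectly consistent with $Y_t$ fluctuating only along a fixed line; no scalar estimate, before or after a bounded linear change of variables, can rule out degeneration of the fluctuations onto a line. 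A genuinely directional input is indispensable, and it must come from the structure of the T-graph/origami pair --- either by citing \cite[Proposition~6.4]{CLR1} in its actual (directional) form, or by the self-contained extremal-length/Dirichlet-energy argument sketched in Remark~\ref{rem:ellipticity}, which produces for each direction a harmonic function of bounded energy whose level sets force the walk to cross a prescribed side of a box with positive probability. The rest of your outline (monotonicity of variance of a martingale under stopping, the a priori exit estimates keeping the walk away from $\partial\Gamma$) is fine but peripheral.
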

\begin{proof}
  As explained in~\cite[Section~8.1]{CLR1}, $\Gamma$ and $\Gamma^\ast$ are \emph{T-graphs} obtained from $\Theta = (\Tt,\Oo)$ as projections $\Tt - \overline\Oo$ and $\Tt + \overline\Oo$ (cf. Lemma~\ref{lemma:T-barO_projection}). The ellipticity estimate~\eqref{eq:RW_ellipticity} for the random walk on such T-graphs follows from~\cite[Proposition~6.4]{CLR1}. More formally, since~\eqref{eq:RW_ellipticity} is local and the surface $\Theta$ satisfies condition $\Lip(\kappa,\delta)$, we can assume that the projection $\Theta\mapsto \Tt$ is one-to-one (cf. the proof of Lemma~\ref{lemma:Lip_follows_from_our_assumptions}), that is, $\Tt$ is a proper t-embedding as defined in~\cite{CLR1} and $\Oo$ is a concrete instance of its origami map. If we now swap the colors of faces of $\Tt$, then $\pm\overline \Oo$ will represent the origami map of thus obtained t-embedding (see~\cite[Section~2.2]{CLR1}), to which we can apply~\cite[Proposition~6.4]{CLR1} in order to get the desired estimates for $\Tt - \overline\Oo$ and $\Tt + \overline\Oo$.
\end{proof}

\begin{rem}
  \label{rem:RW_ellipticity_self-contained0}
  Instead of referring to~\cite[Proposition~6.4]{CLR1} one can give a self-contained proof of Proposition~\ref{prop:RW_ellipticity} using the notion of discrete extremal length; see Remark~\ref{rem:ellipticity} for further comments.
\end{rem}

The following lemmas are standard corollaries of Proposition~\ref{prop:RW_ellipticity}; see~\cite[Sections~6.3,~6.4]{CLR1}):

\begin{lemma}[crossing estimates]
  \label{lemma:crossing_estimates}
  Under the assumptions of Proposition~\ref{prop:RW_ellipticity}, there exists a constant $C=C(\kappa)>0$ such that for each $r\geq C\delta$, each vertex $v$ of $\Gamma$ with $\dist(v,\partial\Gamma)\ge Cr$, and each $\theta_0\in \RR$ we have
  \[
    \PP\biggl[\begin{array}{l}\text{random walk $X_t^v$ exists the disc $B(v,r)$}\\ \text{through the arc }\{v\!+\!e^{i\theta}r,\ |\theta-\theta_0|\leq \pi/4\}\end{array}\biggr]\,\geq\, C^{-1}.
  \]
  The same assertion holds for the dual graph $\Gamma^\ast$.
\end{lemma}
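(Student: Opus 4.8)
The plan is to deduce the crossing estimate from the variance lower bound in Proposition~\ref{prop:RW_ellipticity} by a standard iteration/scaling argument. First I would fix $v$ with $\dist(v,\partial\Gamma)\ge Cr$ and set $t = r^2$, so that the walk $X_t^v$ has $\Var(\Re(e^{i\theta}X_t^v))\ge C^{-1}t = C^{-1}r^2$ for every direction $\theta$, provided $r\ge C\delta$. Applying this in the two coordinate directions $\theta=0$ and $\theta=\pi/2$ shows that at time $t=r^2$ the walk has spread out to a scale comparable to $r$: more precisely, by the second-moment method (Paley--Zygmund, using that $|X_t^v|^2-t$ controls the fourth moment up to the stopping time) there is a constant $c>0$ such that $\PP[|X_{t\wedge\tau}^v - v|\ge cr]\ge c$, where $\tau$ is the exit time of $B(v,r)$. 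Combined with the optional stopping fact that $|X^v_{t\wedge\tau}-v|\le r$ always, this gives $\PP[\tau \le r^2] \ge c'$, i.e.\ the walk exits $B(v,r)$ within time $O(r^2)$ with uniformly positive probability.

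The next step is to upgrade ``exits somewhere'' to ``exits through the prescribed quarter-arc around $\theta_0$''. I would run the walk in successive time windows of length $r^2$: at each window, conditionally on still being inside $B(v,r/2)$, say, the above argument plus the directional variance bound (applied now with $\theta$ aligned with $\theta_0$) gives a uniformly positive chance that the displacement over that window has $\Re(e^{-i\theta_0}(X^v_{\text{end}}-X^v_{\text{start}}))\ge c r$, pushing the walk a definite amount in the $\theta_0$ direction. Since the walk cannot travel more than $r$ before exiting, after a bounded number $N=N(\kappa)$ of such windows it must have exited, and a union/independence-type argument (using the Markov property at window endpoints and the uniform lower bounds, which hold at every starting point with $\dist(\cdot,\partial\Gamma)\gtrsim r$) shows the exit happens through the arc $\{v+e^{i\theta}r: |\theta-\theta_0|\le \pi/4\}$ with probability at least $C^{-1}$. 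The constant requirement $\dist(v,\partial\Gamma)\ge Cr$ is exactly what guarantees that Proposition~\ref{prop:RW_ellipticity} is applicable throughout the region visited by the walk before exiting $B(v,r)$.

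The main technical obstacle is the passage from a one-sided variance bound in each fixed direction to genuine directional transport of the walk, i.e.\ controlling the \emph{conditional} law at the $k$-th window endpoint uniformly over the history; this is where one must be careful that the ellipticity constant in Proposition~\ref{prop:RW_ellipticity} is uniform over admissible starting points and that the drift-free (martingale) property of $X_t^v$ is used to rule out the walk systematically drifting the wrong way. A clean way to organize this is to first prove a ``one-step'' lemma: for any $v'$ with $\dist(v',\partial\Gamma)\ge 2r$ and any target direction, running the walk for time $r^2$ (or until exit of $B(v',r)$) moves its real part in that direction by at least $cr$ with probability $\ge c$; then iterate. This is entirely analogous to the classical derivation of crossing/Beurling-type estimates from uniform ellipticity, so I would cite \cite[Sections~6.3,~6.4]{CLR1} for the details and only indicate the scaling $t\asymp r^2$ and the use of both coordinate directions here, as the paper's phrasing (``standard corollaries of Proposition~\ref{prop:RW_ellipticity}'') suggests.
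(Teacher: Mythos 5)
The paper gives no argument for this lemma at all: it is declared a ``standard corollary'' of Proposition~\ref{prop:RW_ellipticity} with a pointer to \cite[Sections~6.3,~6.4]{CLR1}, which is also where you end up. Your identification of the ingredients is correct and matches how the cited proof begins: the scaling $t\asymp r^2$, the transfer of the variance bound to the exit time $\tau$ of $B(v,r)$ (using $\EE[\tau]\asymp r^2$ from the martingale $|X_t|^2-t$), and the second-moment step combining $\EE[\Re(e^{-i\theta_0}(X^v_\tau-v))]=0$ with $\EE[(\Re(e^{-i\theta_0}(X^v_\tau-v)))^2]\ge cr^2$ and $|X^v_\tau-v|\le r+O(\delta)$ to get $\PP[\Re(e^{-i\theta_0}(X^v_\tau-v))\ge c'r]\ge c'$.

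The iteration you sketch would not close the argument as stated, though, for two reasons. First, running the walk in ``successive time windows of length $r^2$, conditionally on still being inside $B(v,r/2)$'' conditions on an event whose conditional law is uncontrolled (and whose probability is bounded away from $1$, since $\EE[\tau]\asymp r^2$); the standard argument iterates over \emph{spatial} scales, applying the strong Markov property at exit times of smaller balls, not at fixed times. Second, and more substantively, ``progress $\ge c'r$ in the direction $\theta_0$ with probability $\ge c'$'' does not confine the exit point to the quarter-arc: the constant $c'$ produced by the second-moment method is small, so a single application only places the exit in the large arc $\{\cos(\theta-\theta_0)\ge c'\}$, and naive iteration lets the transversal displacement accumulate to order $r$, so the walk may well exit outside the cone $|\theta-\theta_0|\le\pi/4$ even if every window is ``successful''. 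The missing device is simultaneous control of the transversal coordinate. A standard fix is to phrase the one-step estimate as a crossing of a thin rectangle aligned with $\theta_0$ --- say $\{0\le\Re(e^{-i\theta_0}(w-v))\le r,\ |\Im(e^{-i\theta_0}(w-v))|\le r/2\}$, crossed from the short side containing $v$ to the far short side without touching the long sides --- since any point of $\partial B(v,r)$ reached while staying in that rectangle lies within angle $\pi/6<\pi/4$ of $\theta_0$; the rectangle crossing is then obtained from the directional variance bounds in both the longitudinal and transversal directions by the chaining argument carried out in \cite{CLR1}. Since both you and the paper ultimately delegate the details to that reference, this is a gap in the sketch rather than in the final justification, but the fixed-direction, fixed-time-window scheme as written is not the argument that works.
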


\begin{lemma}[Harnack inequality]
  \label{lemma:Harnack}
  Under the assumptions of Proposition~\ref{prop:RW_ellipticity}, there exists a constant $C=C(\kappa)>0$ such that for each $r\geq C\delta$, each vertex $v$ of $\Gamma$ with $\dist(v,\partial\Gamma)\ge 2r$, and each non-negative discrete harmonic function $H$ on $B(v,2r)$ we have
  \[
    \max_{v\in B(v,r)} H(v)\,\leq\, C\min_{v\in B(v,r)} H(v).
  \]
  The same assertion holds for the dual graph $\Gamma^\ast$.
\end{lemma}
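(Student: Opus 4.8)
The plan is to reduce the claim to the two-point estimate $H(w_1)\le C(\kappa)\,H(w_2)$ valid for all vertices $w_1,w_2\in B(v,r)$ — which, taking $w_1$ a maximizer and $w_2$ a minimizer of $H$ on $B(v,r)$, gives back the statement — and then to obtain this estimate from the crossing estimates of Lemma~\ref{lemma:crossing_estimates} by the standard discrete random-walk argument, in the spirit of~\cite[Sections~6.3,~6.4]{CLR1}. Throughout, $X_t$ denotes the random walk of Proposition~\ref{prop:RW_ellipticity}, $\tau$ its exit time from $B(v,2r)$, so that $H(w)=\EE_w[H(X^w_\tau)]$ for $w\in B(v,r)$; we use tacitly that $B(v,2r)$ is covered by faces of $\Gamma$. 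One soft ingredient already enters here: joining $w_1$ to $w_2$ by a chain of at most $C(\kappa)$ overlapping balls $B(x_j,\rho)$ with $\rho$ comparable to $r$ and $B(x_j,2\rho)\subset B(v,2r)$, and applying Lemma~\ref{lemma:crossing_estimates} in these balls together with the strong Markov property — this is where the hypothesis $r\ge C\delta$ is used, as the crossing estimates need $\rho\ge C\delta$ — one gets that $X^{w_1}_t$ visits $\overline{B(w_2,\rho)}$ before time $\tau$ with probability at least $c_1(\kappa)>0$. Since $H\ge 0$, the strong Markov property at the corresponding hitting time then gives $H(w_1)\ge c_1\min_{\overline{B(w_2,\rho)}}H$.

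The hard part is to replace $\min_{\overline{B(w_2,\rho)}}H$ by $c(\kappa)\,H(w_2)$, that is, to establish a lower Harnack inequality already at the macroscopic scale $\rho\sim r$ near $w_2$. This cannot be done by iterating the previous step down to smaller balls: the multiplicative loss would compound over the $\asymp\log(r/\delta)$ scales between $r$ and $\delta$, and indeed no comparison $H(v')\le C\,H(v'')$ for neighbours $v'\sim v''$ holds with a universal $C$ — ratios of conductances along $\Gamma$ are unbounded — which is precisely why the hypothesis $r\ge C\delta$ is indispensable. The missing input is the discrete \emph{expansion-of-positivity} (growth) lemma: there are $\theta\in(0,1)$ and $c_2(\kappa)>0$ such that if $H\ge 0$ is discrete harmonic in a ball $B(x,2\rho)$ covered by faces of $\Gamma$, $\rho\ge C\delta$, and $H\ge 1$ on a set $E$ of vertices of $B(x,\rho)$ with $\sum_{u\in E}\mu(u)\ge\theta\sum_{u\in B(x,\rho)}\mu(u)$, then $H\ge c_2$ on $B(x,\rho/2)$. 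I would prove this discrete analogue of the Krylov--Safonov growth lemma from Lemma~\ref{lemma:crossing_estimates} together with the comparability of the invariant measure $\mu$ with the Lebesgue measure at scale $\delta$ (property~\ref{prty:RW}(b), which holds here by Theorem~\ref{thma:LIP<=>RW}): the latter forbids the walk from avoiding a set of positive $\mu$-density inside $B(x,\rho)$ before it leaves $B(x,2\rho)$, and combined with the probabilistic representation of $H$ on $B(x,2\rho)$ and $H\ge 1$ on $E$ this yields $H\ge c_2$ on $B(x,\rho/2)$. Feeding this growth lemma into the usual iteration — a covering argument applied to the super-level sets of $H/H(w_2)$ over the dyadic scales between $C\delta$ and $r$ — upgrades the bound of the first paragraph to $H(w_1)\ge c(\kappa)\,H(w_2)$. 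The dual graph $\Gamma^\ast$ is handled identically, since Proposition~\ref{prop:RW_ellipticity} and Lemma~\ref{lemma:crossing_estimates} apply to it as well.

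I expect the expansion-of-positivity lemma to be the main obstacle: the reduction to two points and the macroscopic chaining are routine consequences of the crossing estimates, whereas this step is the one that genuinely combines the crossing estimates with the comparability of $\mu$ with Lebesgue measure at scale $\delta$, and it is exactly the ingredient that a naive scale-by-scale argument cannot supply.
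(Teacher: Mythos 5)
Your opening reduction and the chaining step are fine: concatenating the crossing estimates of Lemma~\ref{lemma:crossing_estimates} and applying optional stopping to the non-negative martingale $H(X_t)$ does give $H(w_1)\ge c_1\min_{\overline{B(w_2,\rho)}}H$ with $\rho\asymp r$. The problem is that the remaining step, $\min_{\overline{B(w_2,\rho)}}H\ge c_2\,H(w_2)$, is not a supplement to the chaining --- it \emph{is} the lower Harnack bound at scale $r$, i.e.\ essentially the statement being proved --- and the expansion-of-positivity lemma you propose to derive it from is asserted rather than proved. The one-line justification you offer (``comparability of $\mu$ with Lebesgue measure forbids the walk from avoiding a set of positive $\mu$-density before leaving $B(x,2\rho)$'') does not follow from the stated ingredients. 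Part (b) of property~\ref{prty:RW} controls $\sum_{u\in E}\mu(u)$; to convert this into a hitting probability you would need a pointwise lower bound on the exit-time Green's function $G_{B(x,2\rho)}(v,u)$ for \emph{all} $u\in B(x,\rho)$, which is itself Harnack-strength information. (The identity $\sum_u\mu(u)G(v,u)=\EE[\tau]\asymp\rho^2$ only prevents $G(v,\cdot)$ from being large on a large set; it does not prevent it from being tiny on a set of $\mu$-density $\theta$, so the walk may well avoid $E$ with high probability for all you have shown.) In the continuum this hitting estimate is exactly the hard step of Krylov--Safonov (proved via the ABP estimate) or of De Giorgi--Nash--Moser (proved via Caccioppoli/Sobolev iteration); neither mechanism is supplied by crossing estimates plus measure comparability, and the concluding ``covering argument over dyadic scales'' is likewise only named. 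So the central step of your proof is missing, and I do not believe it can be filled by the route you sketch.

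For comparison, the paper defers to~\cite[Sections~6.3--6.4]{CLR1}, where the two-point bound is obtained in one stroke at the scale $r$, with no growth lemma, no use of property~\ref{prty:RW}(b), and no analysis between scales $\delta$ and $r$. Fix $w_1,w_2\in B(v,r)$. Concatenating a bounded number of crossing estimates, the walk $X^{w_2}$ traces, before exiting $B(v,2r)$, a closed nearest-neighbour loop surrounding $w_1$ with probability at least $c_0(\kappa)>0$. Since $H(X^{w_2}_t)$ is a non-negative martingale started from $H(w_2)$, Doob's maximal inequality shows that its running maximum up to the exit time exceeds $\lambda:=2H(w_2)/c_0$ with probability at most $c_0/2$. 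Hence with probability at least $c_0/2$ there exists a closed loop $\ell$ of vertices surrounding $w_1$ inside $B(v,2r)$ with $H\le\lambda$ on $\ell$; in particular such a loop exists. By planarity, the walk from $w_1$ must hit a vertex of $\ell$ before exiting $B(v,2r)$, so optional stopping gives $H(w_1)\le\max_{\ell}H\le 2H(w_2)/c_0$. This is precisely the two-point estimate you were aiming for; the loop-plus-Doob device is the idea your argument is missing.
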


\begin{lemma}[H\"older-type decay of oscillations]
  \label{lemma:Holder}
  Under the assumptions of Proposition~\ref{prop:RW_ellipticity} there exist constants $\beta=\beta(\kappa)>0$ and $C=C(\kappa)>0$ such that for each $R\geq r\geq C\delta$, each vertex $v$ of $\Gamma$ with $\dist(v,\partial\Gamma)\ge CR$, and each discrete harmonic function $H$ on $B(v,R)$ we have
  \[
    \osc_{B(v,r)}H \,\leq\, C\left( \frac{r}{R} \right)^\beta \osc_{B(v,R)} H.
  \]
  The same assertion holds for the dual graph $\Gamma^\ast$.
\end{lemma}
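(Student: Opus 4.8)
The plan is to deduce the H\"older-type decay from the Harnack inequality (Lemma~\ref{lemma:Harnack}) by the standard De Giorgi--Nash--Moser oscillation argument, run along a dyadic sequence of scales down to the cutoff scale $\asymp\delta$. The whole statement reduces to a \emph{one-step contraction}: there exists $q=q(\kappa)\in(0,1)$ such that, whenever $\rho\ge C_0\delta$, $B(v,2\rho)$ lies in the bulk of $\Gamma$ (in particular $\dist(v,\partial\Gamma)\ge 2\rho$), and $H$ is discrete harmonic on $B(v,2\rho)$, one has $\osc_{B(v,\rho)}H\le q\cdot\osc_{B(v,2\rho)}H$. To prove this, normalize affinely so that $\inf_{B(v,2\rho)}H=0$ and $\sup_{B(v,2\rho)}H=1$; then both $H$ and $1-H$ are non-negative discrete harmonic functions on $B(v,2\rho)$, so Lemma~\ref{lemma:Harnack} gives $\max_{B(v,\rho)}H\le C\min_{B(v,\rho)}H$ and $\max_{B(v,\rho)}(1-H)\le C\min_{B(v,\rho)}(1-H)$. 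Writing $M=\max_{B(v,\rho)}H$, $m=\min_{B(v,\rho)}H$, these say $M\le Cm$ and $1-m\le C(1-M)$; adding the two inequalities yields $M-m\le(1-C^{-1})(1-m)\le 1-C^{-1}$, i.e.\ the contraction with $q=1-C^{-1}$.

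The second step is iteration. Given $R\ge r\ge C'\delta$ (with $C'$ a large enough multiple of the Harnack constant so that all intermediate scales are admissible) and $H$ discrete harmonic on $B(v,R)$ with $\dist(v,\partial\Gamma)\ge CR$, let $k$ be the largest integer with $2^k r\le R$. Applying the one-step contraction successively at radii $\rho=r,2r,\dots,2^{k-1}r$ --- at each stage the hypotheses $2^j r\ge C_0\delta$, $\dist(v,\partial\Gamma)\ge 2^{j+1}r$, and $H$ harmonic on $B(v,2^{j+1}r)$ are inherited from those of the Lemma because $2^{j+1}r\le 2^k r\le R$ --- gives
\[
\osc_{B(v,r)}H\ \le\ q^{\,k}\,\osc_{B(v,2^k r)}H\ \le\ q^{\,k}\,\osc_{B(v,R)}H.
\]
Since $2^k r>R/2$ we have $k\ge\log_2(R/r)-1$, hence $q^{\,k}\le q^{-1}(r/R)^{\beta}$ with $\beta:=\log_2\frac{1}{q}>0$; enlarging the constant to cover the remaining bounded range $R<2r$ yields $\osc_{B(v,r)}H\le C(r/R)^{\beta}\osc_{B(v,R)}H$. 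The assertion for the dual graph $\Gamma^\ast$ is identical, as Lemma~\ref{lemma:Harnack} holds for $\Gamma^\ast$ as well.

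Since the heavy input --- Harnack, hence the crossing estimates of Lemma~\ref{lemma:crossing_estimates}, hence the ellipticity of Proposition~\ref{prop:RW_ellipticity} --- is already available, there is no real analytic obstacle here; the only genuine care needed is bookkeeping. All three lemmas hold only above the scale $\delta$ and only sufficiently far from $\partial\Gamma$, so the dyadic iteration must be terminated at scale $\asymp\delta$ and one must check that the hypotheses ($r\ge C\delta$, the bulk condition, harmonicity on the relevant ball) propagate through every intermediate dyadic scale; this is exactly what forces the constant $C$ and the requirement $\dist(v,\partial\Gamma)\ge CR$ in the statement. An alternative to the one-step contraction, not relying on Harnack, is to couple two random walks started from two points of $B(v,\rho)$ and use Lemma~\ref{lemma:crossing_estimates} to merge them before they exit $B(v,2\rho)$ with uniformly positive probability, whence $|H(u_1)-H(u_2)|\le(1-c)\,\osc_{B(v,2\rho)}H$; but the Harnack route above is the shortest.
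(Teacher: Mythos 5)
Your proof is correct, and it is the standard Harnack-to-oscillation-decay iteration that the paper itself invokes (the paper gives no written proof of Lemma~\ref{lemma:Holder}, deferring to~\cite[Sections~6.3,~6.4]{CLR1}, where the H\"older estimate is obtained by exactly this kind of one-step contraction plus dyadic iteration down to scale $\delta$). The bookkeeping you flag — terminating the iteration at scale $\asymp\delta$ and propagating the bulk/harmonicity hypotheses through each dyadic scale — is indeed the only point requiring care, and you handle it correctly.
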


It follows from Lemma~\ref{lemma:crossing_estimates} that property $\Lip(\kappa,\delta)$ of the t-surface~$\Theta$ implies item~(a) in property \ref{prty:RW} of the random walks on~$\Gamma$. Let us show that $\Lip(\kappa,\delta)$ also implies item~(b) in~\ref{prty:RW}.

\begin{lemma}
  \label{lemma:white_area_is_bdd}
  Let $\Gamma$ be a harmonic embedding and the t-surface~$\Theta$ be defined as above. Assume that $\Theta$ has property $\Lip(\kappa,\delta)$ for some $0<\kappa < 1$ and $\delta>0$.
   There exists a constant $C=C(\kappa)>0$ such that for each $r\geq C\delta$ and each vertex~$v$ of~$\Gamma$ with $\dist(v,\partial\Gamma)\ge Cr$ we have
  \[
    C^{-1}r^2\, \leq\, \sum\nolimits_{v_1,v_2\in \Gamma\cap B(v,r):\ v_1\sim v_2} c_{v_1v_2} |v_1 - v_2|^2\,\leq\, Cr^2.
  \]
\end{lemma}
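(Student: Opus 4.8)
The quantity $\sum c_{v_1v_2}|v_1-v_2|^2$ over edges inside $B(v,r)$ is, up to the factor $2$ coming from the two orientations of each edge, precisely the total area of the white faces $2\Tt(u)$ of the t-embedding whose rectangles are built from the edge $(v_1v_2)$ of $\Gamma$ and the dual edge. Indeed, by Definition~\ref{defin:t-embedding_of_Vv} and the discussion after it, if $u\in\mW(\Vv)$ corresponds to the edge $(v_1v_2)$ then $2\Tt(u)$ is the rectangle with one side a translate of $\Hh(v_2)-\Hh(v_1)$ and the adjacent side a translate of $\Hh^\ast(v_2^\ast)-\Hh^\ast(v_1^\ast)=ic_{v_1v_2}(\Hh(v_2)-\Hh(v_1))$; hence $\Area(2\Tt(u))=c_{v_1v_2}|v_1-v_2|^2$, so $\Area(\Tt(u))=\tfrac14 c_{v_1v_2}|v_1-v_2|^2$. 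Thus the sum in the statement equals $4\sum_u\Area(\Tt(u))$ over white faces $u$ with $v_1,v_2\in B(v,r)$. So the lemma reduces to showing that the total $\Tt$-area of white faces sitting above $B(v,r)\subset\Hh(\Gamma)$ is comparable to $r^2$, uniformly, once $r\ge C\delta$ and $\dist(v,\partial\Gamma)\ge Cr$.

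For the \textbf{upper bound}, I would work on the t-surface. Pick a point $p\in\Theta$ with $\Ww(p)=v$ and invoke Lemma~\ref{lemma:T_is_locally_one-to-one} with radius $\rho:=2r$ (legitimate since $\dist(v,\partial[\Hh(\Gamma)])\ge Cr$): on the sheet $B_\Theta(p,\rho)$ the map $\Tt$ is a bijection onto the disc $B(\Tt(p),\rho)$ and $\Ww(B_\Theta(p,\rho))\subset B(v,2\rho)$; conversely, since $\Ww=\Tt-\overline\Oo$ and $\Oo$ is $\kappa$-Lipschitz in $\Tt$ above scale $\delta$ (here property $\Lip(\kappa,\delta)$ enters), the map $\Ww\circ\Tt^{-1}$ is, up to an $O(\delta)$ error, bi-Lipschitz with constants depending only on $\kappa$; hence the preimage under $\Ww$ of $B(v,r)$ is contained in a $\Tt$-disc of radius $O_\kappa(r)$ on the surface (for $r\ge C\delta$). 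Every white face $u$ with $v_1,v_2\in B(v,r)$ is (essentially) mapped by $\Ww$ into $B(v,r+O(\delta))$, so lies in that bounded $\Tt$-disc; since $\Tt$ is injective on the sheet, the faces $\Tt(u)$ are disjoint and fit inside a disc of radius $O_\kappa(r)$, so their total area is $O_\kappa(r^2)$. (One must be mildly careful that edges straddling $\partial B(v,r)$ contribute, but each such edge has length $O_\kappa(\delta)\le O_\kappa(r)$, so enlarging the disc slightly absorbs them.)

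For the \textbf{lower bound}, the cleanest route is to produce a definite amount of white area over, say, $B(v,r/2)$. Using again that $\Ww\circ\Tt^{-1}$ is bi-Lipschitz above scale $\delta$, the $\Ww$-preimage of $B(v,r/2)$ contains a $\Tt$-disc of radius $c_\kappa r$ on the sheet $B_\Theta(p,2r)$; on that $\Tt$-disc the black faces and white faces alternate, and a black face $2\Tt(b)$ corresponding to a vertex of $\Gamma$ is a single point (degenerate), while a black face corresponding to a face of $\Gamma$ is a translate of that face of $\Gamma$ — so the \emph{only} way the $\Tt$-image can have positive area is through the white rectangles and the $\Gamma$-face black faces. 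A slick way to finish: note $\Tt=\tfrac12(\Hh(v)+\Hh^\ast(v^\ast))$ and consider the $\Hh$-image and $\Hh^\ast$-image separately; the area of the $\Tt$-disc decomposes (via the identity $2\Tt=\Hh\circ(\text{proj})+\Hh^\ast\circ(\text{proj})$ and the fact that white faces account for the ``twist'' between the two projections) into contributions controlled below by $\sum_{B(v,r/2)} c_{v_1v_2}|v_1-v_2|^2$ plus the Euclidean area of $B(v,r/2)$ itself pushed through the $O(\delta)$-bi-Lipschitz map, which is $\ge c_\kappa r^2$. Alternatively, and perhaps more transparently: the Lebesgue measure of $B(v,r/2)$ is $\tfrac{\pi}{4}r^2$; this region is covered by (closures of) faces of $\Gamma$, and each face $f$ of $\Gamma$ satisfies $\Area_\CC(f)\le \sum_{(v_1v_2)\in\partial f} c_{v_1v_2}|v_1-v_2|^2$ up to a constant — this is exactly the statement that the black $\Gamma$-face of the t-embedding has area at most that of the adjacent white rectangles combined, which in turn follows from $|d\Tt|=|d\Oo|$, i.e.\ \eqref{eq:|dT|=|dO|}, forcing the turning of $\partial\Tt(b)$ to be matched by the side-lengths of the neighbouring rectangles; summing over faces of $\Gamma$ meeting $B(v,r/2)$ and using that each edge borders at most two faces gives $c_\kappa r^2\le \sum_{B(v,r)} c_{v_1v_2}|v_1-v_2|^2$, with the extra room $B(v,r)\setminus B(v,r/2)$ absorbing edges of faces that stick out.

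\textbf{Main obstacle.} The conceptual content is the identification of the edge-sum with white-face area, which is immediate; the real work is the \emph{quantitative} control of the map $\Ww\circ\Tt^{-1}$ and the comparison of areas ``above $B(v,r)$'' in the presence of possible global non-injectivity of $\Tt$ and of the $O(\delta)$ discretization error. I expect the bulk of the proof to be the careful bookkeeping in passing between the three pictures ($\Gamma$, the t-surface $\Theta$, and the t-embedding $\Tt$), making sure that ``the white faces over $B(v,r)$'' are exactly the ones indexed by edges with both endpoints in $B(v,r)$ up to a boundary layer of width $O_\kappa(\delta)\le O_\kappa(r)$, and that property $\Lip(\kappa,\delta)$ — via Lemma~\ref{lemma:Lip_follows_from_our_assumptions} and Lemma~\ref{lemma:T_is_locally_one-to-one} — indeed upgrades the local injectivity of $\Tt$ into the genuine two-sided bi-Lipschitz bound above scale $\delta$ that both the upper and lower estimates rely on.
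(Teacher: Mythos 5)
Your reduction of the edge sum to the total $\Tt$-area of white faces, via $c_{v_1v_2}|v_1-v_2|^2=4\Area(\Tt(u(v_1v_2)))$, and your localization to a sheet on which $\Tt$ is injective are exactly the first two steps of the paper's proof (which then simply cites \cite[Eq.~(6.1) and Lemma~6.3]{CLR1} for the area comparison, rather than reproving it). Your upper bound is essentially sound: disjointness of the $\Tt$-images on the sheet, plus the two-sided control of $\Ww\circ\Tt^{-1}$ above scale $\delta$ coming from $\Lip(\kappa,\delta)$, confines all the relevant white faces to a disc of radius $O_\kappa(r)$.

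The lower bound, however, has a genuine gap. First, the assertion that the black face $2\Tt(b)$ over a vertex $v$ of $\Gamma$ is a single point is false: it is a translate of the dual face around $v$; it is the image under $\Ww=\Tt-\overline\Oo$, not under $\Tt$, that degenerates to the point $\Hh(v)$. Second, your ``more transparent'' route rests on the per-face inequality $\Area(f)\le C\sum_{e\in\partial f}c_e|e|^2$, which you claim follows from $|d\Tt|=|d\Oo|$; it does not, and it is false in general. Under property~\ref{prty:LIP} every face of $\Gamma$ has diameter $O(\delta)$ (otherwise the lower bound in~\eqref{eq:Lip} would fail for two points of the same face, where $\Psi_\delta$ is constant), and nothing prevents an individual face of diameter, say, $\delta/100$ from having all its boundary conductances uniformly tiny, so that the adjacent white rectangles are arbitrarily thin while the face keeps positive area; the Lipschitz property only constrains aggregates at scales $\ge C\delta$, not single faces. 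The mechanism that actually yields the lower bound (and is what \cite[Lemma~6.3]{CLR1} exploits) is that in the coordinate $\Tt+\alpha^2\Oo$, $\alpha\in\TT$, \emph{all} black faces are collapsed to segments: by~\eqref{eq:dO=etadT} the differential there equals $d\Tt+\alpha^2\overline{\eta_b^2}\,d\overline{\Tt}$, the sum of a linear and an anti-linear map of equal norm, hence of zero Jacobian, while on white faces the Jacobian is $|1+\alpha^2\eta_u^2|^2\le4$. Consequently the area of the image of the sheet under $\Tt+\alpha^2\Oo$ is at most four times the total white area; since this map is the identity plus a $\kappa$-Lipschitz perturbation above scale $\delta$, its image contains a disc of radius $(1-\kappa)r-O(\delta)$, which gives $W\ge c(\kappa)r^2$. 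Your sketch does not contain this (or any other valid) argument for the lower bound.
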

\begin{proof}
  Since the required estimate is local, it is enough to assume that the projection $\Theta\mapsto \Tt$ is one-to-one; see Lemma~\ref{lemma:T_is_locally_one-to-one}). Given an edge $v_1v_2$ of $\Gamma$ denote by $u(v_1v_2)\in\mW(\Vv)$ the corresponding white face of $\Vv$. It follows from Definition~\ref{defin:t-embedding_of_Vv} that
  \[
    c_{v_1v_2}|v_1-v_2|^2 = 4\Area(\Tt(u(v_1v_2)))\,.
  \]
  The claim now follows from~\cite[Eq.~(6.1) and Lemma~6.3]{CLR1}.
\end{proof}

\subsection{T-holomorphic functions}
\label{subsec:T-holomorphic functions}

Let $\Gamma$ be a weighted planar graph and $\Gamma^\ast$ be its dual. There is a standard way to define a discrete holomorphic function on the superposition of $\Gamma$ and $\Gamma^\ast$ by declaring it to be a pair of two functions $H: \Gamma\to \RR$ and $H^\ast: \Gamma^\ast\to \RR$ that are harmonic conjugate to each other. Assume now that $\Gamma$ is embedded harmonically and $\Tt$ is the corresponding t-embedding; see Section~\ref{subsec:Definition and basic properties of t-embedding}. In this case the aforementioned discrete holomorphic functions correspond to \emph{t-white-holomorphic functions} on $\Tt$ as defined in~\cite{CLR1}. It is convenient to introduce an alternative definition of the latter using the terminology developed in Section~\ref{subsec:Definition and basic properties of t-embedding}.

We begin by fixing an instance of the \emph{origami square root function} $\eta$ associated with the t-embedding $\Tt$ and the origami map $\Oo$; see~\cite[Section~2.2]{CLR1}. The function $\eta$ is defined on the set $\mB(\Vv)\cup \mW(\Vv)$ of faces of~$\Tt$. Recall that~$\Vv$ stands for the corner graph of~$\Gamma$ and that there is a correspondence between $\mB(\Vv)$ on the one side and vertices of $\Gamma$ and those of $\Gamma^\ast$ on the other, as well as between $\mW(\Vv)$ and edges of $\Gamma$, which are also in a one-to-one correspondence with edges of~$\Gamma^\ast$. (See Definition~\ref{defin:corner_graph} and the discussion below it.) 
Define the function~$\eta$ as follows:
\begin{equation}
  \label{eq:def_of_eta}
  \begin{array}{l} 
  \eta_b = 1\ \text{if}\ b\in \Gamma\subset \mB(\Vv),\\
  \eta_b = i\ \text{if}\ b\in \Gamma^\ast\subset\mB(\Vv),
  \end{array}\quad 
  \eta_u %
    = \pm i \frac{\overline{v_1 - v_2}}{|v_1 - v_2|}\ \ \text{if}\ u\in\mW(\Vv)\text{ corresponds to an edge $v_1v_2$ of $\Gamma$,}
\end{equation}
where the $\pm$ signs are chosen arbitrary. %
This function helps to relate the gradients of $\Tt$ and $\Oo$ viewed as piecewise linear mappings on the t-surface $\Theta=(\Tt,\Oo)$ corresponding to~$\Tt$: 
\begin{equation}
  \label{eq:dO=etadT}
  d\Oo(p) = \begin{cases}
    \eta_u^2\,d\Tt(p)&\text{if $p$ belongs to a white face }\Theta(u),\ u\in \mW(\Vv),\\
    \overline\eta_b^2\,d\overline{\Tt(p)} & \text{if $p$ belongs to a black face }\Theta(b),\ b\in \mB(\Vv).
  \end{cases}
\end{equation}

Given a function $F:\mB(\Vv)\to \CC$, let $F\,d\Tt$ be the 1-form defined on edges of $\Theta$, which equals $F(b)\,d\Tt$ on each edge incident to a face $\Theta(b)$, $b\in \mB(\Vv)$, of the t-surface~$\Theta$.

\begin{defin}
  \label{defin:t-white-holomorphic}
  A function $F: \mB(\Vv)\to \CC$ is called t-white-holomorphic if for each $b\in \mB(\Vv)$ we have $F(b)\in \eta_b\RR$ and the 1-form $F\,d\Tt$ defined on edges of $\Theta$ is closed.
\end{defin}

The assertion that the 1-form $F\,d\Tt$ is closed implies a non-trivial relation on $F$ for each $u\in \mW(\Vv)$. Namely, let $u=u(v_1v_2)$ corresponds to an edge $v_1v_2$ of~$\Gamma$. As usual, we denote by $v_1^\ast v_2^\ast$ the dual edge of~$\Gamma^\ast$ oriented such that $v_1^\ast$ is on the right of $v_1v_2$. By the definition of the t-embedding~$\Tt$ (see Fig.~\ref{fig:corner_graph}), the fact that the form $F\,d\Tt$ is closed around~$u$ can be written as the equation
\[
(-F(v_1)+F(v_2))(\Hh^\ast(v_2^\ast) - \Hh^\ast(v_1^\ast)) +(F(v_1^\ast)-F(v_2^\ast))(\Hh(v_2) - \Hh(v_1)) = 0\,,
\]
which is equivalent to saying that 
\[
  F(v_2^\ast) - F(v_1^\ast) = ic_{v_1v_2} (F(v_2) - F(v_1)).
\]
Taking into account that $F(b)$ is purely real when $b\in \Gamma$ and purely imaginary when $b\in \Gamma^\ast$ we conclude that the functions $H = F\vert_\Gamma$ and $H^\ast = -i F\vert_{\Gamma^\ast}$ are harmonic conjugate to each other. Thus, we obtained the definition of discrete holomorphicity that we have started with. We address the reader to~\cite[Section~8.1]{CLR1} for more comments.

By the definition given above, a t-white-holomorphic function $F$ is a pair $(H, iH^\ast)$ of harmonic functions, whereas, by the analogy with the continuous holomorphicity, the actual holomorphic function should correspond to the sum $H + iH^\ast$. This expression does not make sense `as is' because the two functions~$H$ and~$H^\ast$ are defined on different sets. However, we can give it a meaning by extending~$F$ to white, rectangular, faces of~$\Vv$ as follows. 

\def\Wws{\mW_\spl}
\def\Bbs{\mB_\spl}

\begin{defin}
  \label{defin:true_complex_values}
  For each white face of $\Vv$ choose its diagonal splitting into two triangles and denote by $\Wws(\Vv)$ the set of thus obtained triangles. Given a t-white-holomorphic function $F$, define a function $F^\circ: \Wws(\Vv)\to \CC$ so that for each $u\in \Wws(\Vv)$ incident to a black face $b\in \mB(\Vv)$ we have
  \[
    F(b) = \Pr(F^\circ(u),\eta_b\RR).
  \]
\end{defin}

Note that each triangle $u\in \Wws(\Vv)$ has exactly two black faces $v\in \Gamma$ and $v^\ast\in \Gamma^\ast$ incident to~it. 
By definition, $F(v)\in \eta_v \RR = \RR$ and $F(v^\ast)\in \eta_{v^\ast}\RR = i\RR$. Thus, we simply have
\[
  F^\circ(u) = F(v) + F(v^\ast)\ \ \text{if}\ \ v\sim u\sim v^\ast.
\]
Following~\cite{CLR1}, we call these values \emph{`true complex values'} of a t-white-holomorphic function $F$. 

\begin{lemma}
  \label{lemma:closed_form}
  Let $F:\mB(\Vv)\to\CC$ be a t-white-holomorphic function and $F^\circ$ be its extension to~$\Wws(\Vv)$ defined above. Define a piecewise constant 1-form $F^\circ \,d\Tt + \overline {F^\circ}\,d\overline \Oo$ on the t-surface~$\Theta$ as follows:
  \[%
  (F^\circ \,d\Tt + \overline{F^\circ}\,d\overline \Oo)(p) =
  \begin{cases}
  F^\circ(u) \,d\Tt(p) + \overline{F^\circ(u)}\,d\overline{\Oo(p)} & \text{if $p\in \Theta(u)$, $u\in\Wws(\Vv)$,}\\
  2F(b)\,d\Tt(p) = F^\circ(u) \,d\Tt(p) + \overline{F^\circ(u)}\,d\overline {\Oo(p)} & \text{if $p\in\Theta(b)$, $b\in\Bb(\Vv)$},
  \end{cases}
  \]%
  where in the second line one can take any white triangle~$u$ incident to the face~$b$. This piecewise constant form is well defined, does not depend on the choices of~$u\sim b$, and is closed.
\end{lemma}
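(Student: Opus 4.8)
The plan is to reduce the statement to two separate checks: (1) that the form is well-defined (i.e.\ the two cases in the definition agree on black faces, and the value on a black face does not depend on the choice of incident white triangle $u$); and (2) that the resulting piecewise constant $1$-form is closed, i.e.\ its integral around every face of $\Theta$ vanishes. For (1), on a black face $\Theta(b)$ with $b\in\mB(\Vv)$, the relation~\eqref{eq:dO=etadT} gives $d\overline{\Oo(p)}=\eta_b^2\,d\Tt(p)$, so $F^\circ(u)\,d\Tt(p)+\overline{F^\circ(u)}\,d\overline{\Oo(p)}=(F^\circ(u)+\overline{\eta_b^2 F^\circ(u)})\,d\Tt(p)$, and since $F(b)=\Pr(F^\circ(u),\eta_b\RR)=\tfrac12(F^\circ(u)+\eta_b^2\overline{F^\circ(u)})$ by Definition~\ref{defin:true_complex_values} and $\eta_b\in\{1,i\}$, the scalar factor is exactly $2F(b)$; this is manifestly independent of $u$ since $F(b)$ is. So both cases of the definition coincide and the form is well-defined on $\Theta(b)$, and it equals $2F(b)\,d\Tt(p)$ there. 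On a white face, $F^\circ(u)$ is a single value attached to the triangle $u\in\Wws(\Vv)$ and the form is $F^\circ(u)\,d\Tt+\overline{F^\circ(u)}\,d\overline{\Oo}$ on the two edges of $\Theta(u)$, again well-defined as written.

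For (2), closedness is a local condition to be checked around each face of $\Theta$ — equivalently, since there is a bijection between faces of $\Vv$ and non-boundary vertices of $\Gamma\cup\Gamma^\ast$, around each vertex of $\Gamma$, each vertex of $\Gamma^\ast$, and each midpoint of an edge of $\Gamma$. Around a midpoint (a white vertex of $\Gamma\cup\Gamma^\ast$), the face of $\Theta$ is assembled from the black faces $\Theta(v),\Theta(v^\ast)$ and the two white triangles into which the rectangle of $u(v_1v_2)$ has been split; here one must use that on the black pieces the form is $2F(v)\,d\Tt$ resp.\ $2F(v^\ast)\,d\Tt$, which matches $F^\circ(u)\,d\Tt+\overline{F^\circ(u)}\,d\overline{\Oo}$ on those pieces by step (1), and on the white pieces $F^\circ$ is constant; the cancellation is then the statement that $\sum d\Tt=\sum d\overline{\Oo}=0$ around the face, i.e.\ that $\Tt$ and $\Oo$ close up, which holds because $\Theta$ is a genuine polygonal surface. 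Around a vertex $v$ of $\Gamma$ (a black vertex), every incident face of $\Theta$ is white, the form is $F^\circ(u)\,d\Tt+\overline{F^\circ(u)}\,d\overline{\Oo}$, and one wants $\sum_{u\sim v}(F^\circ(u)\,d\Tt+\overline{F^\circ(u)}\,d\overline{\Oo})=0$. Projecting onto $\eta_v\RR=\RR$: since $F(v)=\Pr(F^\circ(u),\RR)$ for every incident $u$, the $d\Tt$-part collapses to $F(v)\sum d\Tt=0$; the orthogonal part is controlled by the hypothesis that $F\,d\Tt$ is closed (Definition~\ref{defin:t-white-holomorphic}) together with~\eqref{eq:dO=etadT}. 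The case of a vertex $v^\ast$ of $\Gamma^\ast$ is identical with $\RR$ replaced by $i\RR$.

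Concretely, I expect the cleanest route is to observe that on the union $\Theta(v)\cup(\text{white triangles incident to }v)$ — or the analogous union at a $\Gamma^\ast$-vertex — the form $F^\circ\,d\Tt+\overline{F^\circ}\,d\overline{\Oo}$ agrees edge-by-edge with $F\,d\Tt+\overline{F}\,d\overline{\Oo}$ extended by the black-face rule; then closedness of the combined form around each vertex of $\Gamma\cup\Gamma^\ast$ follows from closedness of $F\,d\Tt$ (a t-white-holomorphic input) applied separately to the $\Gamma$-part and the $\Gamma^\ast$-part, plus~\eqref{eq:dO=etadT} to rewrite the $\Oo$-contributions, and from the triviality $d\Tt$, $d\Oo$ close up around any face of the surface $\Theta$. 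The main obstacle, and the only genuinely delicate point, is bookkeeping at the white faces: one must track which diagonal was chosen in Definition~\ref{defin:true_complex_values}, verify that the contribution of a white triangle to the face-integral around each of its (at most) four surrounding faces is consistent, and confirm that the black-face value $2F(b)\,d\Tt$ glues correctly with $F^\circ(u)\,d\Tt+\overline{F^\circ(u)}\,d\overline{\Oo}$ across the shared edge $\Theta(b)\cap\Theta(u)$; once the identity $2F(b)=F^\circ(u)+\eta_b^2\overline{F^\circ(u)}$ from step (1) is in hand, this gluing is automatic via~\eqref{eq:dO=etadT}, so the verification is routine but needs care with signs and orientations.
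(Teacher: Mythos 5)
Your step (1) is correct and is the right way to see that the two expressions agree on black faces: on $\Theta(b)$ one has $d\overline{\Oo}=\eta_b^2\,d\Tt$ by \eqref{eq:dO=etadT}, and $2\Pr(F^\circ(u),\eta_b\RR)=F^\circ(u)+\eta_b^2\,\overline{F^\circ(u)}=2F(b)$ for every white triangle $u\sim b$; the same identity shows that the tangential components of the form match across every edge separating a black face from a white triangle. (For what it is worth, the paper does not argue any of this itself --- its proof is a citation to \cite[Proposition~3.7 and Section~5]{CLR1} --- so a self-contained verification is a legitimately different route.)

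The gap is in step (2), at exactly the point you dismiss as routine: the \emph{diagonal} of each white rectangle. The two triangles $u_1,u_2$ of the rectangle $u(v_1v_2)$ carry \emph{different} coefficients (with the diagonal joining $(v_1,v_1^\ast)$ to $(v_2,v_2^\ast)$ one has $F^\circ(u_1)=F(v_1)+F(v_2^\ast)$ and $F^\circ(u_2)=F(v_2)+F(v_1^\ast)$), so the form has a potential jump across this interior edge; consequently neither well-definedness nor the vanishing of the integral around the white face reduces to ``$\sum d\Tt=\sum d\overline{\Oo}=0$''. Writing $a=F(v_1)-F(v_2)\in\RR$, $b=F(v_1^\ast)-F(v_2^\ast)\in i\RR$, $d=\Hh(v_2)-\Hh(v_1)$, $c=c_{v_1v_2}$, the $\Tt$-increment along the diagonal is $\tfrac12 d(1+ic)$ and $d\overline{\Oo}=\overline{\eta_u^2}\,d\overline{\Tt}$ with $\eta_u^2=-\bar d^2/|d|^2$, so the jump of the tangential component equals $d\,(iac-b)$; it vanishes if and only if $F(v_2^\ast)-F(v_1^\ast)=ic\,(F(v_2)-F(v_1))$, which is precisely the closedness of $F\,d\Tt$ around $u$. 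This is the \emph{only} place where t-white-holomorphicity of $F$ is used, and your write-up never performs this check; instead you invoke the hypothesis around black vertices, where it buys nothing (around $\Theta(v)$ the form is the constant multiple $2F(v)\,d\Tt$, whose integral over the closed polygon $\partial\Theta(v)$ vanishes trivially; also ``projecting the $d\Tt$-part onto $\eta_v\RR$'' is not a legitimate operation, since multiplication by the complex number $F^\circ(u)$ mixes the real and imaginary parts of $d\Tt$). Your description of the face of $\Theta$ at an edge-midpoint as ``assembled from $\Theta(v)$, $\Theta(v^\ast)$ and the two white triangles'' is likewise a combinatorial confusion: that face is just the rectangle. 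Once both the black--white and the white--white edge matchings are established, closedness does follow as you indicate, because a constant-coefficient form $a\,d\Tt+\bar a\,d\overline{\Oo}$ integrates to zero over the boundary of any single face.
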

\begin{proof}
See~\cite[Proposition~3.7 and Section~5]{CLR1}. The choice of~$u\sim b$ is irrelevant due to~\eqref{eq:dO=etadT}. The fact that this differential form is well defined on the diagonals that were used to split white faces is equivalent to the fact that $H$ and $H^\ast$ are harmonic conjugate to each other.
\end{proof}

In Section~\ref{subsec:C1 under ExpFat} and Section~\ref{subsec:Extension to C1,1} we also use the notion of \emph{t-black-holomorphic functions} on~$\Theta$. Similarly to Definition~\ref{defin:t-white-holomorphic}, a t-black-holomorphic function $G:\mW(\Vv)\to\CC$ is originally defined on white (rectangular) faces of~$\Theta$ so that~$G(u)\in\eta_u\RR$ for each~$u$ and the differential 1-form $Gd\Tt$ defined on edges of~$\Theta$ as $G(u)d\Tt$ is closed. It follows from~\eqref{eq:def_of_eta} that the imaginary part of the primitive of this form is constant on vertices $(vv^\ast)\in\Vv$ that share the  
same vertex $v\in\Gamma$ and similarly for the real part and $v^\ast\in \Gamma^\ast$. This allows one to consider the primitives 
\[
\textstyle H(v):=\int^v \Im[Gd\Tt],\qquad H^\ast(v^\ast):=-\int^{v^\ast}\!\Re[Gd\Tt].
\]
By construction,~$H$ and~$H^\ast$ are harmonically conjugate to each other. In particular, $H$ is harmonic on~$\Gamma$ and $H^\ast$ is harmonic on~$\Gamma^\ast$. In other words, t-black-holomorphic functions on~$\mW(\Vv)$ are, up to the multiple~$i$, the \emph{gradients} of harmonic functions on~$\Gamma$.

In order to define `true complex values' $G^\bullet$ of~$G$, we triangulate each black face of~$\Theta$, extend $H$ and~$H^\ast$ linearly to thus obtained triangles $b\in \Bbs(\Vv)$, and define $G^\bullet(b)$ to be the gradient of~$H$ (if~$b$ is a part of a face of $\Gamma^\ast$) or the gradient of~$H^\ast$ (if~$b$ is a part of a face of $\Gamma$), viewed as a complex number and multiplied by $i$. Similarly to Lemma~\ref{lemma:closed_form}, this allows one to introduce a closed differential 1-form~$G^\bullet d\Tt+\overline{G^\bullet}d\Oo$ on $\Theta$. We address the reader to~\cite[Section~8.1]{CLR1} for more details.

\section{Elliptic equations and closed 1-forms in 2D}
\label{sec:Elliptic equations and closed 1-forms in 2D}

In this section we collect some facts about the linearized Monge--Amp\`ere equation that we need later. This is a classical topic at the crossroad of the theory of elliptic PDE and the optimal transport problem, so we do not attempt to summarize the literature and only mention the foundational paper~\cite{caffarelli-gutierrez-1997-lin-MA} and the monograph~\cite{figalli-2017-MA} where the link with the transport problem is addressed. For our purposes it is enough to restrict the discussion to linearized Monge--Amp\`ere equations with uniformly convex potentials. Such equations are uniformly elliptic which makes the standard theory of elliptic PDE applicable. We address the reader to~\cite{kenig2000new} and references therein for the standard properties of elliptic operators in divergence form with measurable coefficients. The book~\cite{AstalaBook} is as a standard reference for the interplay between Beltrami and elliptic equations in 2D.

We begin with the notion of $A$-harmonic conjugate that can be related with any elliptic operator in 2D in divergence form. Let $\Omega\subset \CC$ be a simply connected domain and $A$ be a $2\times 2$ matrix whose coefficients are measurable real-valued functions in $\Omega$. In what follows we use the notation $w = x+iy$ for the coordinate in $\CC$. %
Consider a differential operator
\begin{equation*} %
  \Ll h = -\div(A\nabla h).
\end{equation*}
We assume that the operator $\Ll$ has uniformly bounded coefficients and that $\Ll$ is strongly elliptic: that is, there exists a constant $\lambda>0$ such that for each $w\in \Omega$ and $\nu \in \RR^2$ we have
\begin{equation}
  \label{eq:strong_ellipticity}
  \lambda|\nu|^2\ \leq\ \nu\cdot A(w)\nu\ \leq\ \lambda^{-1}|\nu|^2.
\end{equation}
Let us emphasize that we do not make any smoothness assumption on the coefficients of $A$. In particular, the equation $\Ll h = 0$ is understood in a weak form, that is, we say that $\Ll h = 0$ if the gradient of $h$ is locally square integrable and for each test function $\xi\in \mC^\infty_0(\Omega)$ we have
\[
  \int_\Omega (\nabla \xi\cdot A\nabla h)\,dxdy= 0.
\]

Denote by $\ast = \left(\begin{smallmatrix} 0 & -1 \\ 1 & 0\end{smallmatrix}\right)$ the Hodge star operator in $\RR^2$.
Classically~\cite[Chapter~16.1.3]{AstalaBook}, with each solution $h$ of the equation $\Ll h = 0$ we can associate its $A$-harmonic conjugate $h^\ast$ defined by
\begin{equation} 
\label{eq:def_of_A-conj}
  \nabla h^\ast = \ast A\nabla h.
\end{equation}
(The equation $\Ll h = 0$ is equivalent to $\ast A\nabla h$ being curl-free, that is to $\ast A\nabla h$ being the gradient of a function.) Also, note that
\[
  \div(\ast A^{-1} \ast \nabla h^\ast) = -\div(\ast\nabla h) = 0
\]
hence $h^\ast$ is annihilated by the `dual' operator
\[
  \Ll^\times h = \div(\ast A^{-1} \ast \nabla h).
\]
For example, if $A = \Id_{2\times 2}$, then $\Ll$ is the (positively defined) Euclidean Laplacian and $h^\ast$ is the usual harmonic conjugate of a harmonic function $h$. In this case $f = h + ih^\ast$ is holomorphic and the relation~\eqref{eq:def_of_A-conj} is nothing but the Cauchy--Riemann equations. 

Let us now focus on elliptic operators $\Ll_\vphi$ that give rise to the linearized Monge--Amp\`ere equation. In this case, the matrix $A$ is symmetric and takes the form
\begin{equation}
  \label{eq:A_no_drift}
  A  \ = \ \begin{pmatrix} \vphi_{yy} & -\vphi_{xy} \\ -\vphi_{xy} & \vphi_{xx} \end{pmatrix} \ =\ \begin{pmatrix} \Re (\vpsi_w - \vpsi_{\bar w}) & \Im (\vpsi_w - \vpsi_{\bar w})\\ -\Im(\vpsi_w + \vpsi_{\bar w}) & \Re(\vpsi_w + \vpsi_{\bar w})\end{pmatrix}\ = \ \begin{pmatrix} v_y & -u_y\\ -v_x & u_x\end{pmatrix},
\end{equation}
where $\vpsi = u + iv = 2\partial_{\bar w}\vphi$ and $\vphi$ is a convex function. The boundedness of entries of $A$ is equivalent to the Lipschitzness of $\vpsi$ while the ellipticity condition~\eqref{eq:strong_ellipticity} is equivalent to the existence of $a>0$ and $\mu < 1$ such that the inequalities
\begin{equation}
  \label{eq:vpsi_ellip}
  \vpsi_w\geq a,\qquad |\vpsi_{\bar w}|\leq \mu \vpsi_w
\end{equation}
hold almost everywhere. (Note that $\vpsi$ is differentiable almost everywhere by Rademacher's theorem.) Further, \eqref{eq:vpsi_ellip} is equivalent to $\vphi$ being uniformly convex, that is, to the existence of $\lambda>0$ such that for each segment $[w_1,w_2]\subset \Omega$ we have
\begin{equation}
  \label{eq:vphi_uniformly_convex}
  \lambda|w_1 - w_2|^2 \leq \vphi(w_2) - 2\vphi(\tfrac12(w_1 + w_2)) + \vphi(w_1) \leq \lambda^{-1}|w_1 - w_2|^2.
\end{equation}

\begin{lemma}
  \label{lemma:Morera_condition}
  Let $\vphi$ be a uniformly convex function in a domain $\Omega\subset\CC$ and $\vpsi = 2\partial_{\bar w}\vphi$. Assume that $h,h^\ast$ are continuous real-valued functions in $\Omega$ such that the differential form $h\,d\vpsi+ih^\ast\,dw$ is closed `in the weak sense', i.e., that its integral over each contractible piecewise smooth loop vanishes. 
  \begin{enumerate}[label=(\roman*)]
    \item\label{item:h_very_weak} %
    For each test function $\xi\in \mC_0^2(\Omega)$ we have $\int_\Omega (h\cdot \Ll_\vphi \xi)\,dxdy = 0$.
    
    \vskip 2pt
    
    \item\label{item:h_W12} Assume additionally that $h$ is differentiable almost everywhere and that the gradient of $h$ is locally square integrable. Then, $h^\ast$ is an $A$-harmonic conjugate of $h$.
  \end{enumerate}
\end{lemma}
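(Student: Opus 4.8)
The plan is to treat the two items separately, with item~\ref{item:h_very_weak} being essentially a computation with a regularized test contour and item~\ref{item:h_W12} following from a standard identification of weak derivatives. For item~\ref{item:h_very_weak}, fix a test function $\xi\in\mC^2_0(\Omega)$. The closedness of $h\,d\vpsi+ih^*\,dw$ over contractible loops is equivalent, via Stokes/Green, to saying that the distributional mixed partials of the primitive match, i.e. informally $\partial_x(h\,\partial_y\vphi_{\text{-part}})=\partial_y(\cdots)$; the clean way to exploit it is to integrate the closed form against $d\xi$ over $\Omega$. Concretely, I would write $0=\int_\Omega d\xi\wedge(h\,d\vpsi+ih^*\,dw)$ (which vanishes because $d\xi$ has compact support and the form is closed, after the usual mollification of $h,h^*$ to make the wedge-product manipulation legitimate) and then expand $d\vpsi=\vpsi_w\,dw+\vpsi_{\bar w}\,d\bar w$ and collect real and imaginary parts. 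The real part will produce precisely $\int_\Omega h\,\Ll_\vphi\xi\,dxdy$ up to the coefficient conventions in~\eqref{eq:A_no_drift}: indeed $\Ll_\vphi\xi=-\div(A_\vphi\nabla\xi)=-\vphi_{yy}\xi_{xx}+2\vphi_{xy}\xi_{xy}-\vphi_{xx}\xi_{yy}$, and one checks by a direct (if tedious) calculation that the $dx\wedge dy$-coefficient of $d\xi\wedge h\,d\vpsi$, after moving all derivatives off $h$ onto $\xi$ via integration by parts twice, equals $h\cdot(-\div(A_\vphi\nabla\xi))$; the imaginary part $\int_\Omega d\xi\wedge ih^*\,dw$ contributes $\int_\Omega(h^*_x\xi_y-h^*_y\xi_x)$-type terms which, since $h^*$ need not be differentiable a priori, one keeps in the form $\int h^*\,d\xi\wedge dw=0$ and which simply says $h^*$ is (very weakly) curl-free-compatible — it does not interfere with the real part. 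A subtlety: $h$ is only continuous, so $d\vpsi$ against $h$ must be interpreted after mollifying $\vpsi$ (which is Lipschitz, hence its mollifications converge in $W^{1,p}_{\mathrm{loc}}$ for all $p<\infty$ and a.e.), and $h$ can be mollified too; passing to the limit is justified by dominated convergence using the uniform Lipschitz bound on $\vpsi$ from~\eqref{eq:vphi_uniformly_convex}/\eqref{eq:vpsi_ellip}.

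For item~\ref{item:h_W12}, now assume in addition $h\in W^{1,2}_{\mathrm{loc}}(\Omega)$. The closedness of $h\,d\vpsi+ih^*\,dw$ says exactly that this $1$-form is, locally, the differential of a function; equivalently, splitting into real and imaginary parts, $dh^*=\ast(\text{the real }1\text{-form }h\,d(\text{real/imag parts of }\vpsi))$. I would show that $h\,d\vpsi$, written in the $(dx,dy)$ basis, has real part $h(u_x\,dx+u_y\,dy)=h\,du$ and imaginary part $h\,dv$, where $\vpsi=u+iv$; then closedness of the full form is: $d(h\,du)=0$ in the real slot paired with the statement that $ih^*\,dw=ih^*(dx+i\,dy)$ combines with $i\,h\,dv$, i.e. the genuinely closed real $1$-forms are $h\,du - h^*_{\text{-increments}}$... — more carefully, matching coefficients, closedness of $h\,d\vpsi+ih^*\,dw$ is equivalent to the two real closed-form conditions $d(h\,u_x - h^*)\wedge$ ... which upon unwinding says precisely $\nabla h^* = \ast A_\vphi\nabla h$ in the sense of distributions, using $A_\vphi=\left(\begin{smallmatrix} v_y & -u_y\\ -v_x & u_x\end{smallmatrix}\right)$ from~\eqref{eq:A_no_drift} and the symmetry $u_y=v_x$ coming from $\vpsi=2\vphi_{\bar w}$ being a gradient. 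Once $\nabla h^* = \ast A_\vphi\nabla h$ holds distributionally and $\nabla h\in L^2_{\mathrm{loc}}$, the right-hand side is in $L^2_{\mathrm{loc}}$ (bounded coefficients), so $h^*\in W^{1,2}_{\mathrm{loc}}$ and it is by definition an $A_\vphi$-harmonic conjugate of $h$; the continuity hypothesis on $h^*$ then just pins down the additive constants on each simply connected piece and is consistent with the Sobolev regularity.

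The main obstacle I expect is purely bookkeeping rather than conceptual: getting the signs and the $w,\bar w$ versus $x,y$ conversions to line up so that the real part of $\int_\Omega d\xi\wedge h\,d\vpsi$ is \emph{exactly} $\int_\Omega h\,\Ll_\vphi\xi$ with the convention $\Ll_\vphi\xi=-\vphi_{yy}\xi_{xx}+2\vphi_{xy}\xi_{xy}-\vphi_{xx}\xi_{yy}$, and making sure the imaginary ($ih^*\,dw$) part decouples cleanly. The one genuine analytic point that needs care is the mollification argument in~\ref{item:h_very_weak}: since $h$ is merely continuous one cannot integrate by parts directly, so one mollifies $h\to h_\epsilon$ and $\vpsi\to\vpsi_\epsilon$, applies the closedness to the smooth loop integrals (closedness of a $1$-form is stable under mollification of the coefficients), integrates by parts in the smooth setting, and passes $\epsilon\to0$ using $h_\epsilon\to h$ locally uniformly and $\nabla\vpsi_\epsilon\to\nabla\vpsi$ in $L^p_{\mathrm{loc}}$ together with the uniform bound $|\nabla\vpsi_\epsilon|\le\lambda^{-1}$-type estimates; the fixed $\mC^2_0$ test function $\xi$ absorbs all the derivatives, so no regularity of $h$ or $h^*$ beyond continuity is used in part~\ref{item:h_very_weak}.
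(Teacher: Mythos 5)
Your overall strategy for item~\ref{item:h_very_weak} --- pair the weakly closed form with $d\xi$ to get $\int_\Omega(h\,d\vpsi+ih^\ast\,dw)\wedge d\xi=0$ and then extract $\int_\Omega h\,\Ll_\vphi\xi\,dxdy=0$ --- matches the paper's, but the step where you actually produce $\Ll_\vphi\xi$ is wrong, and it conceals the one genuine idea of the proof. Writing the wedge identity in coordinates and separating real and imaginary parts gives the two scalar identities
$\int_\Omega(h(\vphi_{xx}\xi_y-\vphi_{xy}\xi_x)+h^\ast\xi_x)\,dxdy=0$ and $\int_\Omega(h(\vphi_{xy}\xi_y-\vphi_{yy}\xi_x)+h^\ast\xi_y)\,dxdy=0$. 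Two problems with your treatment. First, $h^\ast$ does \emph{not} decouple: since $ih^\ast\,dw\wedge d\xi=(h^\ast\xi_x+ih^\ast\xi_y)\,dx\wedge dy$, the $h^\ast$ terms sit inside both the real and the imaginary parts of the identity; your claim that the $ih^\ast dw$ piece ``does not interfere with the real part'' and the identity $\int h^\ast\,d\xi\wedge dw=0$ are simply false. Second, these identities carry only one derivative of $\xi$, whereas $\Ll_\vphi\xi$ carries two, and you cannot bridge this gap by ``integration by parts twice moving derivatives off $h$ onto $\xi$'': there are no derivatives on $h$ to move, and moving a derivative off $\vphi_{xy}$, $\vphi_{xx}$, $\vphi_{yy}$ would by the product rule place a derivative on $h$, which is only continuous. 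The correct move, and the crux of the paper's proof, is to apply the first identity with $\xi$ replaced by $\xi_y$ and the second with $\xi$ replaced by $\xi_x$ (both admissible $\mC_0^1$ test functions when $\xi\in\mC_0^2$) and subtract: the $h^\ast\xi_{xy}$ terms cancel and one is left with $\int_\Omega h\,(\vphi_{xx}\xi_{yy}-2\vphi_{xy}\xi_{xy}+\vphi_{yy}\xi_{xx})\,dxdy=0$. This substitution simultaneously raises the derivative order on $\xi$ and eliminates $h^\ast$; it is exactly what your proposal is missing. A further caveat on your mollification: replacing $h,h^\ast$ by $h_\eps,h^\ast_\eps$ does not preserve closedness of $h\,d\vpsi+ih^\ast\,dw$, because its coefficients are the \emph{products} $h\vpsi_x$, etc.; the paper instead obtains the wedge identity directly by integrating the closed form over the level lines of $\xi$, with no mollification of $h$ at all.

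For item~\ref{item:h_W12} your plan (unwind closedness into $\nabla h^\ast=\ast A_\vphi\nabla h$ distributionally) is in the right spirit and close to the paper's, but the computation you sketch trails off, and the one analytic point that genuinely needs an argument is glossed over: justifying the integration by parts $\int_\Omega h\,d\vpsi\wedge d\xi=-\int_\Omega\xi\,dh\wedge d\vpsi$ when $\vpsi$ is merely Lipschitz and $\nabla h\in\mL^2_\loc$. The paper proves this identity for smooth data and passes to the limit using that both sides are continuous in the norm $\|h\|_{\mL^2}+\|\nabla h\|_{\mL^2}+\|\nabla\vpsi\|_{\mL^\infty}$; once it is available, comparing with the defining relation of an $A_\vphi$-harmonic conjugate shows $h^\ast-\widetilde h^\ast$ is locally constant. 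You should make this approximation step explicit.
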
  
\begin{proof}
  Let $\xi\in\mC_0^1(\Omega)$ be a continuously differentiable real-valued test function. Integrating over level lines of $\xi$ we obtain the identity
\begin{equation}
  \label{eq:Mcon1}
\int_\Omega(h\,d\vpsi+ih^\ast\,dw)\wedge d\xi =0.
\end{equation}
Taking the real and the imaginary parts separately and plugging $\vpsi=\vphi_x+i\vphi_y$, $w=x+iy$ in, this can be rewritten as
\[
\int_\Omega(h(\vphi_{xx}\xi_y-\vphi_{xy}\xi_x)+h^\ast \xi_x)\,dxdy\ =\ 0\ =\ \int_\Omega(h(\vphi_{xy}\xi_y-\vphi_{yy}\xi_x)+h^\ast \xi_y)\,dx dy\,.
\]
Assume now that $\xi$ is twice differentiable. Using the first equation with $\xi$ replaced by $\xi_y$, the second with $\xi$ replaced by $\xi_x$, and subtracting the two, one sees that
\begin{equation}
\label{eq:h-weak_sol}
\int (h\cdot\Ll_\vphi \xi)\,dxdy\ =\ \int_\Omega (h\cdot (\vphi_{xx}\xi_{yy}-2\vphi_{xy}\xi_{xy}+\vphi_{yy}\xi_{xx}))\,dxdy =0
\end{equation}
for each real-valued test function $\xi\in\mC_0^2(\Omega)$, as required in~\ref{item:h_very_weak}.

To prove~\ref{item:h_W12}, assume first that $h$ and $\vpsi$ are smooth. In this case, for each $\xi\in\mC_0^1(\Omega)$ we can use integration by parts and write
\begin{equation}
  \label{eq:Mcon2}
  \int_\Omega \xi dh\wedge d\vpsi= -\int_\Omega hd\vpsi\wedge d\xi.
\end{equation}
Both sides of this equation are continuous with respect to the norm $\|h\|_{\mL^2(U)} + \|\nabla h\|_{\mL^2(U)} + \|\nabla\vpsi\|_{\mL^\infty(U)}$, where $U\subset \Omega$ is any open set containing $\supp \xi$. Thus,~\eqref{eq:Mcon2} is actually valid even when $\vpsi$ is only Lipschitz and $\nabla h$ is locally square integrable, as in this case we can approximate $\vpsi$ and $h$ by smooth functions and use the equality~\eqref{eq:Mcon2} for them. It now follows from~\eqref{eq:Mcon1} that
\begin{equation}
  \label{eq:Mcon3}
  \int_\Omega \xi dh\wedge h\vpsi = i \int_\Omega h^\ast dw\wedge d\xi.
\end{equation}
A direct calculation using~\eqref{eq:A_no_drift} shows that an $A$-harmonic conjugate $\widetilde h^\ast$ of $h$ satisfies the same equation. Therefore, $h^\ast - \widetilde h^\ast$ is locally constant, hence $h^\ast$ is also an $A$-harmonic conjugate.
\end{proof}

One can ask if the property of $h$ being a `very weak', or distributional, solution of the equation $\Ll_\vphi h = 0$ as stated in item~(i) of Lemma~\ref{lemma:Morera_condition} implies that $h$ is differentiable. In the next lemma we give a simple proof of this fact provided that~$\varphi$ is smooth enough.

\begin{lemma}
  \label{lemma:very_weak=weak}
  Assume that $\vphi\in \mC^3(\Omega)$ and $h\in \mC(\Omega)$ is such that for each $\xi\in \mC^2_0(\Omega)$ we have $\int_\Omega (h\cdot \Ll_\vphi \xi)\,dxdy = 0$. Then, $h\in \mC^2(\Omega)$ and $\Ll_\vphi h = 0$.
\end{lemma}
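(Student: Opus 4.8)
\emph{Proof strategy.} The plan is to compare $h$, on each small ball, with a genuine weak solution of $\Ll_\vphi u=0$ having the same boundary values, and to deduce from a duality argument that the two coincide. So fix a ball $B\Subset\Omega$ and let $u\in W^{1,2}_\loc(B)\cap\mC(\overline B)$ be the unique weak solution of $\Ll_\vphi u=0$ in $B$ with $u|_{\partial B}=h|_{\partial B}$, which exists by the standard theory recalled above. Since $\vphi\in\mC^3$, the matrix $A_\vphi$ has $\mC^1$ entries (in particular $\mC^{0,\alpha}$) and $\div A_\vphi=0$, so $\Ll_\vphi$ is written simultaneously in divergence and non-divergence form; interior Schauder estimates then give $u\in\mC^{2,\alpha}_\loc(B)$ and $\Ll_\vphi u=0$ classically. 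It thus suffices to prove $h\equiv u$ in $B$, since $B\Subset\Omega$ is arbitrary.

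Set $v:=h-u$; then $v\in\mC(\overline B)$ and $v|_{\partial B}=0$. Both $h$ and $u$ are very weak solutions in $B$ (for $u$, integrate by parts once, using $u\in W^{1,1}_\loc(B)$ and symmetry of $A_\vphi$), so $\int_B v\,\Ll_\vphi\xi\,dxdy=0$ for all $\xi\in\mC^2_0(B)$. The one delicate point I anticipate is upgrading this to all $\xi\in\mC^2(\overline B)$ with $\xi|_{\partial B}=0$: I would multiply such a $\xi$ by cutoffs $\chi_n$ that equal $1$ on $\{\dist(\cdot,\partial B)>2/n\}$, vanish outside $\{\dist(\cdot,\partial B)>1/n\}$, with $|\nabla\chi_n|=O(n)$ and $|D^2\chi_n|=O(n^2)$. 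Then $\xi\chi_n\in\mC^2_0(B)$, and in $\Ll_\vphi(\xi\chi_n)=\chi_n\Ll_\vphi\xi+R_n$ the remainder $R_n$ is supported in an annulus of width $O(1/n)$ and is $O(n)$ there; the key point is that $|\xi|=O(1/n)$ on that annulus because $\xi\in\mC^1$ vanishes on $\partial B$, which tames the $O(n^2)$ size of $D^2\chi_n$. Hence $|\int_B v\,R_n|=O(\sup_{\dist(\cdot,\partial B)<2/n}|v|)\to0$ by continuity of $v$, while $\int_B v\,\chi_n\Ll_\vphi\xi\to\int_B v\,\Ll_\vphi\xi$ by dominated convergence, which yields the extended identity.

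The last step is a duality argument. For $g\in\mC^\infty_0(B)$, solve $\Ll_\vphi\xi_g=g$ in $B$ with $\xi_g|_{\partial B}=0$; as $\partial B$ is smooth, the coefficients are $\mC^{0,\alpha}$, and the boundary data is zero, boundary Schauder gives $\xi_g\in\mC^{2,\alpha}(\overline B)$, an admissible test function in the identity just established. Therefore $\int_B v\,g\,dxdy=\int_B v\,\Ll_\vphi\xi_g\,dxdy=0$ for every $g\in\mC^\infty_0(B)$, so $v=0$ a.e.\ and hence $v\equiv0$ by continuity. This gives $h=u\in\mC^{2,\alpha}_\loc(B)$ with $\Ll_\vphi h=0$. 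Apart from the boundary passage in the middle paragraph, which is where continuity of $h$ up to $\partial B$ enters, everything reduces to the standard interior and boundary elliptic regularity and solvability results for uniformly elliptic operators with Hölder coefficients, which is why the argument stays short.
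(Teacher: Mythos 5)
Your proof is correct and follows essentially the same route as the paper's: compare $h$ on a ball $B$ with the classical Dirichlet solution, then test the difference against the solution of $\Ll_\vphi\xi=g$ with zero boundary data, using cutoffs near $\partial B$ where the $O(n^2)$ second derivatives of the cutoff are tamed by $|\xi|=O(1/n)$ and the vanishing of the continuous difference $h-u$ on $\partial B$. The only cosmetic difference is that you first enlarge the admissible test-function class and then invoke duality, whereas the paper applies the cutoff directly to the particular $\xi$ solving $\Ll_\vphi\xi=\phi$.
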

\begin{proof}
  Consider a ball $B=B(p,r)$ such that $\overline{B}\subset \Omega$ and let $h_0$ be the solution of the equation $\Ll_\vphi h_0= 0$ in $B$ with continuous boundary values $h_0=h$ on $\partial B$. Since we assume that $\vphi\in \mC^3$, the matrix $A$ has $\mC^1$-smooth coefficients. Therefore, this strong solution exists and is $\mC^2$-smooth inside~$B$. Thus, it remains to prove that $h=h_0$ in $B$ or that $\int_B ((h-h_0)\cdot \phi)\, dxdy =0$ for all $\phi\in \mC_0^\infty(B)$.
 
  To this end, note that the difference $h-h_0$ is continuous in $\overline{B}$, has zero boundary values on $\partial B$ and satisfies~\eqref{eq:h-weak_sol} for each twice differentiable compactly supported test function $\xi$ in $B$. Now let $\xi$ be the (strong) solution of the nonhomogeneous equation $\Ll_\vphi \xi=\phi$ in $B$ with zero boundary values. As $\phi \in \mC_0^\infty(B)$, classical Schauder estimates~\cite[Chapter~6]{Gilbarg-Trudinger} imply that $\xi$ is $\mC^2$-smooth inside $B$ and has uniformly bounded gradient in $B$ including near $\partial B$. Finally, for sufficiently small $\eps>0$ denote $\xi_\eps=\eta_\eps \xi$, where $\eta_\eps:B\to[0,1]$ is a smooth function such that $\eta_\eps=1$ in $B(p,(1-2\eps)r)$, $\eta_\eps=0$ outside $B(p,(1-\eps)r)$, and $|D^2\eta_\eps|=O(\eps^{-2})$. Since $\xi_\eps\in \mC_0^2(B)$, by applying~\eqref{eq:h-weak_sol} to $h-h_0$ and $\xi_\eps$ we see that
  \begin{equation}
  \label{eq:int(h-h0)g}
  \int_B ((h-h_0)\cdot \phi)\,dxdy\ =\ \int_B ((h-h_0)\cdot \Ll_\vphi \xi)\,dxdy \ =\ \int_B ((h-h_0)\cdot \Ll_\vphi[(1-\eta_\eps)\xi])\,dxdy\,.
  \end{equation}
  Note that $[\Ll_\vphi(1-\eta_\eps)\xi](w)\ne 0$ only if $\eps\le \dist(w,\partial B)\le 2\eps$. For such $w$ we have $[\Ll_\vphi \xi](w)=0$ and $\xi(w)=O(\varepsilon)$ as the gradient of $\xi$ is uniformly bounded in $\overline{B}$. Hence, $\Ll_\vphi[(1-\eta_\eps)\xi](w)=O(\eps^{-1})$, which allows us to estimate the integral~\eqref{eq:int(h-h0)g} by the maximum of $|h-h_0|$ in the $\eps$-neighborhood of $\partial B$. Since $h-h_0$ is continuous in $\overline{B}$, taking the limit as $\eps\to 0$ completes the proof of $h=h_0$.
\end{proof}

Classically (see~\cite[Section~6]{littman-stampacchia-weinberger-1963-regular-points} or~\cite{taylor-kim-brown-2013-green-function} and references therein), the operator $\Ll_\vphi$ admits Green's function in a domain $\Omega$, that is, a function $G_\Omega(w_1,w_2)$ that solves the equation
\[
  \Ll_\vphi G_\Omega(\cdot, w_2) = \delta_{w_2}
\]
for each $w_2\in \Omega$ and vanishes along the boundary of $\Omega$. The standard theory of elliptic equations implies that~$G_\Omega$ is symmetric, positive in $(\Omega\times\Omega)\smm\diag$, and continuous in $(\overline\Omega\times\overline\Omega)\smm\diag$. Moreover, each non-negative continuous function $\wtd G(\cdot, w_2)$ in $\overline\Omega\smm\{ w_2\}$ that vanishes on~$\partial\Omega$ and satisfies the equation $\Ll_\vphi\wtd G(\cdot,w_2) = 0$ in $\Omega\smm\{w_2\}$ coincides with $G_\Omega(\cdot,w_2)$ up to a multiplicative constant. Indeed, a standard application of Harnack inequality (see e.g.~\cite[Section~7]{littman-stampacchia-weinberger-1963-regular-points}) guarantees that $\wtd G(w_1,w_2)$ has the same growth as $G_\Omega(w_1,w_2)$ as $w_1\to w_2$, which implies the assertion due to a classical result from~\cite{serrin-weinberger-1966-isolated-singularities}. The unknown multiplicative constant can be fixed by computing the monodromy of the $A$-harmonic conjugate:

\begin{lemma}
  \label{lemma:Green_function_characterization}
  Let $G_\Omega$ be the Green function of the operator $\Ll_\vphi$ in~$\Omega$. Fix $w_2\in \Omega$ and let $G_\Omega^\ast(\,\cdot\,,w_2)$ be an $A$-harmonic conjugate of $G_\Omega(\,\cdot\,,w_2)$. Then, the function $G_\Omega^\ast(\,\cdot\,,w_2)$ is additively multivalued and its monodromy around $w_2$ in the counterclockwise direction equals $-1$.
\end{lemma}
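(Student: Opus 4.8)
The plan is to compute the monodromy of $G^\ast(\cdot,w_2)$ by contour integration around a small loop encircling $w_2$, using the defining relation $\nabla G^\ast = \ast A_\vphi \nabla G$ together with the distributional identity $\Ll_\vphi G(\cdot,w_2) = \delta_{w_2}$. Recall that for any loop $\gamma$ the increment of the $A$-harmonic conjugate along $\gamma$ is
\[
  \oint_\gamma dG^\ast \;=\; \oint_\gamma \ast A_\vphi \nabla G \cdot d\ell \;=\; \oint_\gamma (A_\vphi \nabla G)\cdot n\, d\ell,
\]
where $n$ is the outward normal; by the divergence theorem (applied in the weak sense, using that $G$ is a weak solution away from $w_2$ and $A_\vphi\nabla G$ is a divergence-free $L^2_{\loc}$ vector field in $\Omega\smm\{w_2\}$) this flux through a positively oriented loop enclosing $w_2$ equals $-\int_{\Omega_\gamma}\div(A_\vphi\nabla G)\,d^2w = -\langle \Ll_\vphi G(\cdot,w_2),\indic\rangle = -1$, by the normalization in the definition of the Green's function. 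Care has to be taken with orientations: the monodromy ``in the counterclockwise direction'' is exactly the increment of $G^\ast$ along such a positively oriented loop, and one must check the sign conventions in $\ast=\left(\begin{smallmatrix}0&-1\\1&0\end{smallmatrix}\right)$ and in the weak formulation $\int_\Omega A_\vphi\nabla G\cdot\nabla\phi = \phi(w_2)$ are consistent; this bookkeeping is what produces $-1$ rather than $+1$.

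To make the contour argument rigorous I would first fix a small disc $B(w_2,\rho)\Subset\Omega$ and note that $G(\cdot,w_2)\in W^{1,2}_\loc(\Omega\smm\{w_2\})\cap\mC(\overline\Omega\smm\{w_2\})$, so $A_\vphi\nabla G$ is a well-defined element of $L^2_{\loc}(\Omega\smm\{w_2\};\RR^2)$ which is divergence-free in the weak sense on $\Omega\smm\{w_2\}$. Such a vector field has a well-defined flux through any Lipschitz loop not meeting $w_2$, independent of the loop within a fixed homotopy class (this is the standard fact underlying the very definition of $G^\ast$ and its multivaluedness; cf.\ the discussion around~\eqref{eq:def_of_A-conj} and~\cite[Chapter~16]{AstalaBook}). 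Hence the monodromy of $G^\ast$ around $w_2$ is precisely this flux through a small positively oriented circle $\partial B(w_2,\rho)$. Then I test the distributional equation $\Ll_\vphi G(\cdot,w_2)=\delta_{w_2}$ against a cutoff $\phi\in\mC_0^\infty(\Omega)$ that equals $1$ on $B(w_2,\rho)$: the left-hand side $\int_\Omega A_\vphi\nabla G\cdot\nabla\phi\,d^2w$ is supported in the annulus where $\nabla\phi\ne 0$, and an integration by parts there (legitimate since $G$ is a genuine weak solution off $w_2$) rewrites it as the flux of $-A_\vphi\nabla G$ through $\partial B(w_2,\rho)$; the right-hand side is $\phi(w_2)=1$. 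This yields flux $=-1$, i.e.\ monodromy $=-1$.

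The main obstacle is the sign/orientation bookkeeping: matching the convention $\ast=\left(\begin{smallmatrix}0&-1\\1&0\end{smallmatrix}\right)$ with the orientation of $\partial B(w_2,\rho)$ and with the sign in the weak formulation of the Green's function, so that the final answer comes out as $-1$ and not $+1$. I would handle this by checking the flat model: take $A_\vphi=\Id$, so $G(w_1,w_2)=-\tfrac{1}{2\pi}\log|w_1-w_2|$ (normalized so that $-\Delta G=\delta_{w_2}$, consistent with $\Ll_\vphi=-\div(A_\vphi\nabla\cdot)$ and $\int_\Omega\nabla G\cdot\nabla\phi=\phi(w_2)$), whose $A$-harmonic conjugate is $G^\ast(w_1,w_2)=-\tfrac{1}{2\pi}\arg(w_1-w_2)$, decreasing by $1$ as $w_1$ circles $w_2$ counterclockwise. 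Since the flux computation above is purely topological and continuous in $A_\vphi$, the sign obtained in the model is the sign in general, which confirms the value $-1$. A secondary, minor point is justifying the integration by parts in the annulus despite only $W^{1,2}_\loc$ regularity of $G$; this is routine since on the compact annulus $\overline{B(w_2,2\rho)}\smm B(w_2,\tfrac{\rho}{2})$ the function $G(\cdot,w_2)$ is a bounded $W^{1,2}$ weak solution of a uniformly elliptic equation and the divergence-free field $A_\vphi\nabla G$ pairs with $\nabla\phi\in\mC^\infty$ without difficulty.
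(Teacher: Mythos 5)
Your proof is correct and follows essentially the same route as the paper's: both identify the monodromy of $G^\ast(\cdot,w_2)$ with the pairing of $\ast A_\vphi\nabla G$ against the gradient of a test function that is constant near $w_2$, and then evaluate that pairing as $\phi(w_2)=1$ via the defining weak identity of the Green's function. The only (minor) divergence is in bookkeeping: the paper first reduces to smooth $\vphi$ by a continuity argument and integrates a radial test function along concentric circles, whereas you stay in the weak setting and phrase the same computation as the flux of the divergence-free field $A_\vphi\nabla G$ through a small circle, checked against the flat model for the sign — both are fine.
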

\begin{proof}
  Using standard continuity arguments (e.g., see~\cite[Section~5]{littman-stampacchia-weinberger-1963-regular-points}) it is enough to consider smooth $\vphi$, in which case $G_\Omega$ is smooth too due to Schauder estimates. Fix a smooth function $\xi:\RR\to\RR$ with a small support that is constant in a vicinity of $0$. Let $m$ be the monodromy of $G_\Omega^\ast(\cdot,w_2)$. Using the definition of an $A$-harmonic conjugate and integrating along concentric circles around $w_2$ we see that
  \[
    -m \xi(0) = \int_\Omega \nabla G_\Omega^\ast(w,w_2)\cdot \ast\nabla\xi(|w-w_0|)\,d^2w = \int_\Omega A\nabla G_\Omega(w,w_2)\cdot \nabla\xi(|w-w_0|)\,d^2w = \xi(0),
  \]
  which gives $m = -1$.
\end{proof}

\section{Convergence of discrete harmonic functions on $\Gamma_\delta$ as $\delta\to 0$}
\label{sec:Convergence of harmonic functions in C0 topology}

The goal of this section is to prove Theorem~\ref{thma:harmonic_functions_convergence} and Theorem~\ref{thma:Green_fct_convergence}. Our strategy can be described as follows. We begin by observing that property~\ref{prty:LIP}, Lemma~\ref{lemma:Lip_follows_from_our_assumptions}, and Proposition~\ref{prop:RW_ellipticity} yield the precompactness of the family $(H_\delta)_{\delta\to 0}$ in the $\mC^0$ topology (and, in fact, in $\mC^\alpha$ for a small $\alpha>0$) and the fact that all subsequential limtis of $H_\delta$ have correct boundary conditions. Further, discrete Dirichlet energy of $H_\delta$ can be bounded on compacts due to a discrete Caccioppoli estimate which, in its turn, allows to estimate \emph{harmonic conjugate} functions $H^\ast_\delta$ and to conclude that the family $(H^\ast_\delta)_{\delta\to 0}$ is precompact in the $\mC^0$ topology as well. Passing to a subsequence and applying Lemma~\ref{lemma:closed_form} we obtain a pair of continuous functions $h,h^\ast$ such that the 1-form $h\,d\vpsi + ih^\ast\,dw$ is closed. This allows us to recover the coefficients of the elliptic equation that $h$ satisfies by using results from Section~\ref{sec:Elliptic equations and closed 1-forms in 2D}.

\subsection{Oscillations estimate via the Dirichlet energy}
\label{subsec:Dirichlet energy estimates}
In this section we recall a well-known equicontinuity estimate for discrete harmonic functions via their Dirichlet energies; see Lemma~\ref{lemma:harmonic_conjugate_bound} and Remark~\ref{rem:Dirichlet_bound_log_continuity} below. In its simplest form, this estimate dates back to Lusternik~\cite{Lusternik1926}; see also~\cite{skopenkov-Dirichlet,gurel-gurevich-jerison-nachmias} and references therein.

In what follows we assume that $\Gamma_\delta$ is a family of graphs embedded harmonically that satisfy property~\ref{prty:LIP}.
Given $U\subset \CC$, we denote by $U_\delta$ the set of vertices of $\Gamma_\delta$ that belong to $U$ and by $\partial U_\delta$ the set of vertices of $\Gamma_\delta$ that do not belong to $U$ but have neighbors in $U_\delta$. We put $\overline U_\delta = U_\delta\cup \partial U_\delta$ and denote by $E(U_\delta)$ the set of edges of $\Gamma_\delta$ with both ends in $U_\delta$. Also, we denote by $E(U_\delta^\ast)$ the set of edges of $\Gamma_\delta^\ast$ that are dual to $E(U_\delta)$ and by $U_\delta^\ast$ the set of all vertices of $\Gamma_\delta^\ast$ incident to these edges.

\begin{defin}
  \label{defin:Dirichlet_energy}
  Let $\delta>0$ and $U\subset \CC$. The Dirichlet energy of a function $H:U_\delta\to\RR$ is given by 
  \[
    \Ee^\delta_U(H)\ \ =\!\! \sum_{v_1v_2\in E(U_\delta)} c_{v_1v_2}(H(v_1) - H(v_2))^2.
  \]
  Similarly, for a function $H^\ast:U_\delta^\ast\to\RR$ we define its Dirichlet energy by
  \[
    \Ee^{\delta, \ast}_U(H^\ast)=\sum_{v^\ast_1 v^\ast_2\in E(U_\delta^\ast)}c_{v_1v_2}^{-1}(H^\ast(v^\ast_1) - H^\ast(v^\ast_2))^2.
  \]
\end{defin}

\begin{lemma}
  \label{lemma:C1_functions_DE_bound} 
  Assume that $U\subset \CC$ is bounded and that there exists $r_0>0$ such that the {$r_0$-neighborhood} of $U$ is covered by faces of $\Gamma_\delta$ for all small enough $\delta>0$. Let $g$ be a continuously differentiable function defined in a $\delta$-neighborhood of $U$ (and hence on all $U_\delta$). Then, %
  \[
    \textstyle \Ee_U^\delta(g)\ \leq\ \|\nabla g\|_\infty^2\cdot \sum_{v_1v_2\in E(U_\delta)} c_{v_1v_2}|v_1-v_2|^2 .
  \]
\end{lemma}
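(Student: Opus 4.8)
The plan is to bound each edge contribution $c_{v_1v_2}(g(v_1)-g(v_2))^2$ separately using the mean value inequality and then sum. First I would note that since the $r_0$-neighborhood of $U$ is covered by faces of $\Gamma_\delta$, for $\delta$ small enough every edge $v_1v_2\in E(U_\delta)$ together with the straight segment $[v_1;v_2]$ lies inside the region where $g$ is defined and continuously differentiable (indeed both endpoints lie in $U$, and the edge is a straight segment of $\Gamma_\delta$, hence contained in the union of faces; in any case it is contained in an $O(\delta)$-neighborhood of $U$, which for small $\delta$ sits inside the $r_0$-neighborhood).

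Next, for a fixed edge $v_1v_2$ I would apply the one-dimensional fundamental theorem of calculus to the $\mathcal{C}^1$ function $t\mapsto g(v_1+t(v_2-v_1))$, $t\in[0,1]$, to get
\[
g(v_1)-g(v_2)=-\int_0^1 \nabla g(v_1+t(v_2-v_1))\cdot(v_2-v_1)\,dt,
\]
whence $|g(v_1)-g(v_2)|\le \|\nabla g\|_\infty\,|v_1-v_2|$ by Cauchy--Schwarz, i.e. $(g(v_1)-g(v_2))^2\le \|\nabla g\|_\infty^2\,|v_1-v_2|^2$. Multiplying by the conductance $c_{v_1v_2}>0$ and summing over all $v_1v_2\in E(U_\delta)$ gives exactly
\[
\Ee_U^\delta(g)=\sum_{v_1v_2\in E(U_\delta)} c_{v_1v_2}(g(v_1)-g(v_2))^2\ \le\ \|\nabla g\|_\infty^2\sum_{v_1v_2\in E(U_\delta)} c_{v_1v_2}|v_1-v_2|^2,
\]
which is the claimed bound.

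There is essentially no obstacle here: the only mild point to address is the domain of definition of $g$, which is precisely why the hypothesis about the $r_0$-neighborhood being covered by faces (for all small $\delta$) is stated. One should just remark that the edges of $\Gamma_\delta$ appearing in $E(U_\delta)$ are contained in this covered region, so that $\nabla g$ is available along each such segment and $\|\nabla g\|_\infty$ can be taken over (a fixed neighborhood of) $U$. Everything else is the routine telescoping argument above; no discreteness of $\Gamma_\delta$ or harmonicity is used, and the same proof applies verbatim to $\Ee_U^{\delta,\ast}$ for $\mathcal{C}^1$ functions on the dual vertices.
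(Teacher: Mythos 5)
Your argument is correct and matches the paper's proof, which simply cites the estimate $|g(v_2)-g(v_1)|\le\|\nabla g\|_\infty\,|v_2-v_1|$ and sums over edges; you have merely spelled out that estimate via the fundamental theorem of calculus and checked that $g$ is defined along each segment. No further comment is needed.
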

\begin{proof} This follows from the trivial estimate $|g(v_2)-g(v_1)|\le\|\nabla g\|_\infty\cdot|v_2-v_1|$.
\end{proof}

Given two subsets $A,B\subset U_\delta$, we denote by $\EL_U^\delta(A\leftrightarrow B)$ the \emph{(edge) extremal length} of paths connecting $A$ and $B$ inside $U_\delta$; this quantity is also known as the \emph{effective resistance} between $A$ and $B$ in the electrical network $U_\delta$. Similarly, given $A^\ast,B^\ast\subset U_\delta^\ast$, we denote by $\EL_U^{\delta,\ast}(A^\ast\leftrightarrow B^\ast)$ the extremal length of paths connecting $A^\ast$ and $B^\ast$ inside $U_\delta^\ast$. We address the reader to~\cite[Chapter~2]{lyons-peres-book} or to \cite[Section~6]{chelkak-robust},~\cite[Section~2.2]{binder2024orthodiagonal} and references therein for the definition and basic properties of the discrete extremal length.

For $K\subset \CC$ and a function $H^\ast: K^\ast_\delta\to \RR$ define
\[
  \osc_K H^\ast=\max_{K_\delta^\ast} H^\ast - \min_{K_\delta^\ast} H^\ast.
\]

\begin{lemma}
  \label{lemma:harmonic_conjugate_bound}
  Let $U\subset \CC$ be an open simply connected set and assume that $U$ is covered by faces of $\Gamma_\delta$ for all small enough $\delta>0$. For each compact subset $K\subset U$ there exist constants $C,\delta_0>0$ such that for each $\delta\leq \delta_0$ and each discrete harmonic function $H:\overline U_\delta\to \RR$ we have
  \begin{equation}
    \label{eq:harmonic_conjugate_bound}
    \osc_K H^\ast \leq C \sqrt{\Ee^\delta_U(H)},
  \end{equation}
where $H^\ast$ is a harmonic conjugate to $H$. (Note that $H^\ast$ is uniquely defined, up to an additive constant, in the bulk of $U$ provided that $\delta$ is sufficiently small.)
\end{lemma}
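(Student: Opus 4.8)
The plan is to transfer the statement to the dual graph. Recall that the harmonic conjugate~$H^\ast$ is determined, up to an additive constant (harmless since~$U$ is simply connected), by $H^\ast(v_2^\ast)-H^\ast(v_1^\ast)=c_{v_1v_2}\,(H(v_2)-H(v_1))$; in particular~$H^\ast$ is a discrete harmonic function on~$\Gamma_\delta^\ast$ with the dual conductances~$c_{v_1v_2}^{-1}$, and the Dirichlet energies satisfy $\Ee^{\delta,\ast}_{V}(H^\ast)=\Ee^{\delta}_{V}(H)$ for every~$V\subset U$ by the very definition of the harmonic conjugate. Since, by the dual versions of Proposition~\ref{prop:RW_ellipticity}, Lemmas~\ref{lemma:crossing_estimates}--\ref{lemma:Holder}, and item~\ref{prop:T} of property~\ref{prty:RW}, all a~priori estimates used below hold for~$H^\ast$ on~$\Gamma_\delta^\ast$ as well, it is enough to prove the following: \emph{if~$\phi$ is discrete harmonic in~$U_\delta$, then $\osc_K \phi\le C\sqrt{\Ee^{\delta}_{U}(\phi)}$}, and then apply this to~$\phi=H^\ast$ on~$\Gamma_\delta^\ast$. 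To localize, I would fix a connected compact~$K'$ with $K\subset K'\Subset U$ and cover it by finitely many discs $B_i=B(z_i,r_i)$, $i=1,\dots,N$, with $4B_i\subset U$ and connected union. Picking a chain of overlapping discs of this cover joining the disc that contains the vertex where~$\phi$ is maximal on~$K_\delta$ to the one containing the vertex where it is minimal gives $\osc_K\phi\le N\max_i\osc_{B_i}\phi$; since $\Ee^{\delta}_{2B_i}(\phi)\le\Ee^{\delta}_U(\phi)$, it then remains to prove the local bound $\osc_{B}\phi\le C\sqrt{\Ee^{\delta}_{2B}(\phi)}$ for one disc $B=B(z,r)$ with $4B\subset U$.

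For the local bound I would argue as follows. Set $A=2B\setminus B$, let~$\bar c$ be the~$\mu_\delta$-average of~$\phi$ over~$A_\delta$, and put $\phi_0:=\phi-\bar c$, so that $\Ee^{\delta}_A(\phi_0)=\Ee^{\delta}_A(\phi)\le\Ee^{\delta}_{2B}(\phi)$. A discrete Poincar\'e inequality on the annular network~$A_\delta$ — a standard consequence of the crossing and Harnack estimates of Lemmas~\ref{lemma:crossing_estimates}--\ref{lemma:Harnack}, the Poincar\'e constant being~$O(r^2)$ because above scale~$\delta$ this network looks like a planar annulus — gives $\sum_{v\in A_\delta}\mu_\delta(v)\,\phi_0(v)^2\le Cr^2\,\Ee^{\delta}_{2B}(\phi)$. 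Since~$\phi_0$ is discrete harmonic in~$2B$, for each vertex~$v$ on the middle circle~$\partial B(z,\tfrac32 r)$ the ball $B(v,\tfrac13 r)$ lies inside~$A$, and the local boundedness estimate for discrete harmonic functions (a standard Moser iteration from Lemma~\ref{lemma:Harnack}; cf.~\cite{gurel-gurevich-jerison-nachmias,skopenkov-Dirichlet}), together with $\mu_\delta(B(v,\tfrac13 r))\asymp r^2$ (Lemma~\ref{lemma:white_area_is_bdd}), yields
\[
  |\phi_0(v)|^2\ \le\ \frac{C}{r^2}\sum_{u\in B(v,\tfrac13 r)}\mu_\delta(u)\,\phi_0(u)^2\ \le\ C\,\Ee^{\delta}_{2B}(\phi).
\]
Hence $\osc_{\partial B(z,\tfrac32 r)}\phi_0\le C\sqrt{\Ee^{\delta}_{2B}(\phi)}$, and the discrete maximum principle applied to~$\phi=\phi_0+\bar c$ in~$B(z,\tfrac32 r)$ upgrades this to $\osc_{B(z,\tfrac32 r)}\phi\le C\sqrt{\Ee^{\delta}_{2B}(\phi)}$, which contains the required estimate since $B\subset B(z,\tfrac32 r)$.

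The delicate point — and the reason~\eqref{eq:harmonic_conjugate_bound} carries~$\Ee^{\delta}_U(H)$ and not $\log(\delta^{-1})\,\Ee^{\delta}_U(H)$ — is that the naive estimate $(\osc_K\phi)^2\le R_{\mathrm{eff}}\cdot\Ee^{\delta}_U(\phi)$, obtained from the Dirichlet principle by taking the two vertices realizing the extremal values of~$\phi$ as the marked pair, only produces the logarithmically weaker bound recorded in Remark~\ref{rem:Dirichlet_bound_log_continuity}, since the effective resistance~$R_{\mathrm{eff}}$ between two macroscopically separated vertices of a two-dimensional network of mesh~$\delta$ is of order~$\log(\delta^{-1})$. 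Removing this logarithm is exactly what harmonicity buys us in the argument above: the maximum principle reduces the oscillation over a macroscopic set to the oscillation over a single circle, and the Poincar\'e-plus-local-boundedness step is the discrete substitute for the interior gradient estimate $\|\nabla\phi\|_{L^\infty(B(z,r/4))}\le Cr^{-1}\|\nabla\phi\|_{L^2(B(z,r))}$ valid for continuous harmonic functions. I expect this log-removal to be the only real obstacle; everything else is bookkeeping. Alternatively, one can run the whole argument in the language of discrete extremal length, using that the extremal distance between the two boundary circles of a fixed-shape annulus stays $O(1)$ uniformly in~$\delta$ — this is the route pointed to by the references cited just before the lemma.
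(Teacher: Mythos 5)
Your write-up correctly diagnoses the crux (avoiding the factor $\log\delta^{-1}$ that the naive effective-resistance bound would produce), and the extremal-length route you mention in your last sentence is in fact the paper's actual proof. But the argument you actually develop has a genuine gap: it rests on two analytic facts that are not available on these graphs from what is established in the paper. First, the ``discrete Poincar\'e inequality'' $\sum_{v\in A_\delta}\mu_\delta(v)\phi_0(v)^2\le Cr^2\,\Ee^\delta_{2B}(\phi)$ is \emph{not} a standard consequence of the crossing and Harnack estimates of Lemmas~\ref{lemma:crossing_estimates}--\ref{lemma:Harnack}: the implication goes the other way (Poincar\'e plus volume doubling imply parabolic Harnack, \`a la Delmotte/Saloff-Coste), and the \emph{elliptic} Harnack inequality, which is all the paper proves, does not imply a Poincar\'e inequality. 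Second, the $L^\infty$--$L^2$ local boundedness estimate $|\phi_0(v)|^2\le Cr^{-2}\sum_{u\in B(v,r/3)}\mu_\delta(u)\phi_0(u)^2$ requires Moser (or De Giorgi) iteration, hence a Sobolev/Nash inequality, which again is not established; the Harnack inequality of Lemma~\ref{lemma:Harnack} gives oscillation decay for harmonic functions but not an $L^\infty$--$L^2$ bound. On graphs this irregular, both ingredients are comparable in difficulty to the lemma itself, so invoking them as ``standard'' does not close the argument. A smaller point: running everything on the dual graph forces you to use the dual versions of all a priori estimates, including a dual analogue of Lemma~\ref{lemma:white_area_is_bdd}; this is exactly the subtlety flagged in Remark~\ref{rem:osc_of_primal_H}, and the paper's proof is arranged so as to avoid it.

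For comparison, the paper's proof is shorter and uses much less machinery. By the maximum principle, the level sets of $H^\ast$ produce two paths $\gamma_M^\ast,\gamma_m^\ast$ crossing the dual annulus $A(p,r,2r)^\ast_\delta$ on which $H^\ast$ is at least the max (resp.\ at most the min) of $H^\ast$ over $B(p,r)^\ast_\delta$. The Dirichlet principle gives $(\osc_{B(p,r)}H^\ast)^2\le \Ee^{\delta,\ast}(H^\ast)\cdot\EL^{\delta,\ast}(\gamma_M^\ast\leftrightarrow\gamma_m^\ast)=\Ee^\delta(H)\cdot\EL^{\delta,\ast}(\gamma_M^\ast\leftrightarrow\gamma_m^\ast)$, and by Ford--Fulkerson duality the latter extremal length is at most the reciprocal of the \emph{primal} extremal length between the two boundary circles of the annulus, which is bounded below by a constant using the explicit test metric $\nabla\log|v-p|$ together with Lemmas~\ref{lemma:C1_functions_DE_bound} and~\ref{lemma:white_area_is_bdd}. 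Note that this only uses the \emph{upper} bound on $\sum c_{v_1v_2}|v_1-v_2|^2$ (a fact exploited later in Lemma~\ref{lemma:harmonic_conjugate_bound_again}), whereas your route would need far stronger hypotheses. If you want to keep your structure, the step to repair is the oscillation bound on the middle circle; replacing the Poincar\'e-plus-local-boundedness step by the level-set/extremal-length argument above is the cleanest fix.
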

\begin{proof}
  It is enough to prove that for each $p\in U$ there exist $C,r>0$ such that~\eqref{eq:harmonic_conjugate_bound} holds for $K = \overline B(p,r)$ and small enough $\delta$. Let $r = \frac{1}{2}\dist(p,\partial U)$ and $A(p,r,2r) = B(p,2r)\smm \overline B(p,r)$. 
  
  Using the maximum principle for the function $H^\ast$ it is easy to show the existence of two paths $\gamma_M^\ast$ and $\gamma_m^\ast$ connecting the boundary components of the discrete annulus $A(p,r,2r)_\delta^\ast$ such that
  \[
    H^\ast\vert_{\gamma_M^\ast} \geq \max_{B(p,r)_\delta^\ast}H^\ast,\qquad H^\ast\vert_{\gamma_m^\ast} \leq \min_{B(p,r)_\delta^\ast}H^\ast.
  \]
  Basic properties of the extremal length/effective resistance immediately imply that
  \[%
    \left( \osc_{B(p,r)}H^\ast \right)^2 
    \leq\ \Ee_{B(p,2r)}^\delta(H)\cdot \EL^{\delta,\ast}_{A(p,r,2r)}(\gamma_M^\ast \leftrightarrow \gamma_m^\ast),
  \]%
  where we also used the fact that the Dirichlet energies of the harmonic function $H$ and of its harmonic conjugate $H^\ast$ are equal.
  We can now bound $\EL^{\delta,\ast}_{A(p,r,2r)}(\gamma_M^\ast \leftrightarrow \gamma_m^\ast)$ from above by the extremal length of circuits separating the two boundary components of the annulus. Classically~\cite[Theorem~1]{ford1956maximal}, the latter equals the inverse of the extremal length $\EL^\delta_{A(p,r,2r)}(\partial B(p,r)\leftrightarrow \partial B(p,2r))$, where $\partial B(p,r)$ denotes the set of vertices in $A(p,r,2r)_\delta$ incident to vertices from $\overline B(p,r)_\delta$, and the same for $\partial B(p,2r)$. Therefore,
  \begin{equation*} %
    \left( \osc_{B(p,r)}H^\ast \right)^2 \leq\ \Ee_{B(p,2r)}^\delta(H)\cdot (\EL^\delta_{A(p,r,2r)}(\partial B(p,r)\leftrightarrow \partial B(p,2r))^{-1}.
  \end{equation*}
  In order to estimate $\EL^\delta_{A(p,r,2r)}(\partial B(p,r)\leftrightarrow \partial B(p,2r))$ from below we can consider the metric given by the discrete gradient of the function $g_p(v) = \log|v - p|$. Namely,
  \[%
    \EL^\delta_{A(p,r,2r)}(\partial B(p,r)\leftrightarrow \partial B(p,2r))%
    \ \geq\ \left(\min_{\partial B(p,2r)} g_p - \max_{\partial B(p,r)} g_p\right)^{\!2}\cdot (\Ee^\delta_{A(p,r,2r)}(g_p))^{-1}.
  \]%
  This allows us to conclude that
  \begin{equation}
  \label{eq:oscillations_bound}
  \left(\osc_{B(p,r)}H^\ast \right)^2\ \leq\ (\log 2)^{-2}\cdot \Ee^\delta_{A(p,r,2r)}(g_p)\cdot \Ee_{B(p,2r)}^\delta(H)\,.
  \end{equation}
  Due to Lemma~\ref{lemma:C1_functions_DE_bound} and Lemma~\ref{lemma:white_area_is_bdd}, $\Ee^\delta_{A(p,r,2r)}(g_p)$ is bounded from above by a constant that depends on the constant in property~\ref{prty:LIP} only (provided that $\delta$ is small enough). The claim easily follows. 
\end{proof}

\begin{rem}
  \label{rem:osc_of_primal_H}
  Similar arguments %
  can be used to estimate the oscillations of a harmonic function $H$ on the primal graph $\Gamma_\delta$. A subtle difference is that on the last step we need to estimate the \emph{dual} extremal length $\EL^{\delta,\ast}_{A(p,r,2r)}(\partial B(p,r)\leftrightarrow \partial B(p,2r))$, which is comparable to the conformal modulus of the annulus $\vPsi_\delta(A(p,r,2r))$. This conformal modulus is uniformly bounded due to Lemma~\ref{lemma:Lip_follows_from_our_assumptions}.
\end{rem}

\begin{rem}
\label{rem:Dirichlet_bound_log_continuity}
  A simple modification of the arguments from Lemma~\ref{lemma:harmonic_conjugate_bound} implies that for each compact $K\subset U$ and a discrete harmonic function $H$ on $U_\delta$ one has
  \begin{equation}
    \label{eq:log_Holder_via_capacity}
    \omega_H(t,K) \leq C|\log t|^{-1/2}\cdot \sqrt{\Ee^\delta_U(H)} \ \ \text{for all}\ \ t\geq C\delta\,,
  \end{equation}
    where $\omega_H(t,K)$ is the modulus of continuity of $H$ on $K$ and 
    the constant $C$ depends on $K,U$, and the constants in property~\ref{prty:LIP} only. 
    In fact, Proposition~\ref{prop:RW_ellipticity} gives a stronger H\"older-type estimate for $\omega_H(t,K)$ once we know that the oscillations of $H$ are bounded. As explained in the next remark, one can also give a self-contained proof of Proposition~\ref{prop:RW_ellipticity} that does not rely upon a more general result provided by~\cite[Proposition~6.4]{CLR1}.
\end{rem}

\begin{rem}
  \label{rem:ellipticity}
  Note that the arguments used above to derive the inequality~\eqref{eq:log_Holder_via_capacity} can be applied to any (not necessarily harmonic) function $H$ that satisfies the following property: for each $v\in U_\delta$ there exist two paths $\gamma_+,\gamma_-$ connecting $v$ to $\partial U_\delta$ such that $H(v_+)\geq H(v) - \cst\cdot\delta$ for all $v_+\in \gamma_+$ and $H(v_-)\leq H(v) + \cst\cdot\delta$ for all $v_-\in \gamma_-$. This observation can be used to give an independent proof of the ellipticity estimate from Proposition~\ref{prop:RW_ellipticity} as follows: 
  
  Pick $v_0\in U_\delta$ and consider the square $Q = v_0 + [-L\delta, L\delta]^2$, where $L$ is a sufficiently large constant. Assume that a non-negative smooth function $f$ equals 1 at the middle of the right side of $Q$ (that is, at the point $v_0 + L\delta$) and vanishes in an $O(\delta)$-neighborhood of the other three sides. Let a function $H$ be harmonic in~$Q_\delta$ and equal to $f$ outside~$Q$. The Dirichlet energy $\Ee_{B(v_0, 2L\delta)}^\delta(H) \leq \Ee_{B(v_0, 2L\delta)}^\delta(f)$ is bounded by an absolute constant and one can choose~$f$ so that $H$ satisfies the aforementioned property on the existence of paths~$\gamma_\pm$. %
  Then, the estimate~\eqref{eq:log_Holder_via_capacity} implies that $H$ is bounded away from zero in a small vicinity of the point $v_0 + L\delta$. In its turn, this gives a uniform lower bound on the probability that a random walk started in this vicinity exits $Q$ through its right side. Finally, one can show that such a weaker version form of Lemma~\ref{lemma:crossing_estimates} is ultimately equivalent to Proposition~\ref{prop:RW_ellipticity}.
\end{rem}

\subsection{Discrete Caccioppoli estimate}
\label{subsec:Discrete Caccioppoli estimate}

Caccioppoli's theorem is a classical result in the theory of elliptic equations asserting that the $\mL^2$ norm over a ball $B(p,r)$ of the gradient of a solution to an elliptic equation can be controlled via the $\mL^2$ norm of the solution itself over a larger ball $B(p,2r)$. This admits a simple analogue for discrete harmonic functions.

\begin{prop}
  \label{prop:discrete_Caccioppoli}
  Let $\Gamma_\delta$ be a weighted planar graph embedded harmonically into the complex plane. 
  Let $r>0$ and $p\in \CC$ be such that the disc $B(p,2r)$ is covered by faces of $\Gamma_\delta$. Then, for each harmonic function $H:\overline {B(p,2r)}_\delta\to \RR$ the following estimate holds: %
  \[
    \Ee_{B(p,r)}^\delta(H)\ \leq\ \frac{2}{r^2}\ \times\!\!\!\!\!\! \sum_{v_1v_2\in E({B(p,2r)}_\delta)} c_{v_1v_2}|v_1 - v_2|^2(H(v_1) + H(v_2))^2.
  \]
\end{prop}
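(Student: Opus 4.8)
The plan is to mimic the classical Caccioppoli argument in the discrete setting, using a cutoff function supported on $B(p,2r)$. Fix a Lipschitz cutoff $\chi:\CC\to[0,1]$ with $\chi\equiv 1$ on $B(p,r)$, $\chi\equiv 0$ outside $B(p,2r)$, and $|\nabla\chi|\le 2/r$; sample it at the vertices of $\Gamma_\delta$ to get $\chi(v)$. The starting point is the discrete integration-by-parts (Green's) identity: for any function $\phi$ supported (on vertices) inside $B(p,2r)_\delta$,
\[
\sum_{v_1v_2\in E(B(p,2r)_\delta)} c_{v_1v_2}\bigl(H(v_1)-H(v_2)\bigr)\bigl(\phi(v_1)-\phi(v_2)\bigr)\ =\ -\sum_{v} \phi(v)\sum_{v'\sim v} c_{vv'}\bigl(H(v')-H(v)\bigr)\ =\ 0,
\]
where the last equality uses that $H$ is discrete harmonic at every $v$ where $\phi(v)\ne 0$ (these are interior vertices of $B(p,2r)_\delta$). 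I would apply this with $\phi = \chi^2 H$, after checking that $\chi^2 H$ indeed vanishes at $\partial B(p,2r)_\delta$ so the identity is legitimate.

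The next step is the purely algebraic heart of the argument: expand the product $\bigl(H(v_1)-H(v_2)\bigr)\bigl(\chi(v_1)^2H(v_1)-\chi(v_2)^2H(v_2)\bigr)$. Writing $a_i=H(v_i)$, $t_i=\chi(v_i)$, one has the identity
\[
(a_1-a_2)(t_1^2a_1-t_2^2a_2)\ =\ \tfrac12(t_1^2+t_2^2)(a_1-a_2)^2\ +\ \tfrac12(t_1^2-t_2^2)(a_1^2-a_2^2),
\]
and then $\tfrac12(t_1^2-t_2^2)(a_1^2-a_2^2) = \tfrac12(t_1-t_2)(t_1+t_2)(a_1-a_2)(a_1+a_2)$. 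Summing over edges and using that the left side vanishes gives
\[
\sum_{v_1v_2} c_{v_1v_2}\,\tfrac12(\chi(v_1)^2+\chi(v_2)^2)(H(v_1)-H(v_2))^2\ =\ -\sum_{v_1v_2} c_{v_1v_2}\,\tfrac12(\chi(v_1)-\chi(v_2))(\chi(v_1)+\chi(v_2))(H(v_1)-H(v_2))(H(v_1)+H(v_2)).
\]
Now I would estimate the right-hand side by Cauchy–Schwarz on the edge sum: bound $|\chi(v_1)-\chi(v_2)|\le (2/r)|v_1-v_2|$, bound $\tfrac12|\chi(v_1)+\chi(v_2)|\le 1$, and split $c_{v_1v_2}\le c_{v_1v_2}^{1/2}\cdot c_{v_1v_2}^{1/2}$ to pair one copy with $(\chi(v_1)^2+\chi(v_2)^2)^{1/2}(H(v_1)-H(v_2))$ and the other with $|v_1-v_2|(H(v_1)+H(v_2))$. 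This yields, with $S:=\sum_{v_1v_2} c_{v_1v_2}(\chi(v_1)^2+\chi(v_2)^2)(H(v_1)-H(v_2))^2$,
\[
S\ \le\ \frac{4}{r}\,S^{1/2}\Bigl(\sum_{v_1v_2\in E(B(p,2r)_\delta)} c_{v_1v_2}|v_1-v_2|^2(H(v_1)+H(v_2))^2\Bigr)^{1/2},
\]
hence $S\le \frac{16}{r^2}\sum c_{v_1v_2}|v_1-v_2|^2(H(v_1)+H(v_2))^2$. Finally, since $\chi\equiv 1$ on $B(p,r)$, every edge of $E(B(p,r)_\delta)$ contributes at least $2c_{v_1v_2}(H(v_1)-H(v_2))^2$ to $S$, so $\Ee^\delta_{B(p,r)}(H)\le \tfrac12 S$, which gives the claimed bound with $C=8$.

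I expect the only genuinely delicate point to be bookkeeping at the boundary: making sure that when $\phi=\chi^2 H$ the Green identity holds with no boundary terms, i.e. that $\chi$ is genuinely zero at all vertices of $\partial B(p,2r)_\delta$ and that harmonicity of $H$ is available at every vertex carrying nonzero weight in the sums. This is handled by taking the support of $\chi$ strictly inside $B(p,2r)$ (e.g. $\chi\equiv0$ outside $B(p,\tfrac{3}{2}r)$ with $|\nabla\chi|\le C/r$, which only changes the absolute constant), so that all relevant vertices and their neighbors lie in $B(p,2r)_\delta$ where $H$ is harmonic; the rest is the elementary algebra and Cauchy–Schwarz above. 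Note the hypothesis that $B(p,2r)$ is covered by faces of $\Gamma_\delta$ guarantees $\overline B(p,2r)_\delta$ together with the incident edges is well-defined and that $\chi$ sampled on vertices behaves as expected.
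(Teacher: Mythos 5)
Your proposal is correct and follows essentially the same route as the paper's proof: discrete integration by parts against the test function $\chi^2 H$, the algebraic identity isolating $\tfrac12(\chi(v_1)^2+\chi(v_2)^2)(H(v_1)-H(v_2))^2$, and Cauchy--Schwarz on the cross term, with $|\nabla\chi|\lesssim 1/r$ producing the $C/r^2$ factor. The boundary bookkeeping you flag is handled in the paper exactly as you suggest, by taking the cutoff to vanish near $\partial B(p,2r)$.
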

\begin{proof} Denote $B_\delta=B(p,2r)_\delta$ for shortness.
  Let $\phi: \CC\to [0,1]$ be a Lipschitz function that equals 1 over $B(p,r)$, vanishes at the boundary of $B(p,2r)$ and outside this disk, and such that $|\nabla\phi|\le r^{-1}$. As $H$ is discrete harmonic and $\phi$ vanishes outside $B(p,2r)$, the discrete integration by parts gives
  \[ %
    \sum_{v_1v_2\in E(B_\delta)} c_{v_1v_2}(H(v_1) - H(v_2))(\phi(v_1)^2H(v_1) - \phi(v_2)^2H(v_2)) = 0.
  \] %
  Simple algebraic manipulations allow one to rewrite this identity as
  \[%
    \sum_{v_1v_2\in E(B_\delta)} c_{v_1v_2}(H(v_1) - H(v_2))^2(\phi(v_1)^2 + \phi(v_2)^2)\ =
    \!\!\! \sum_{v_1v_2\in E(B_\delta)} c_{v_1v_2}(H(v_1)^2-H(v_2)^2)%
    (\phi(v_2)^2 - \phi(v_1)^2).
  \]%
  Applying Cauchy--Schwartz's inequality one sees that the square of the right-hand side is bounded from above by
  \[%
  \sum_{v_1v_2\in E(B_\delta)} c_{v_1v_2}(H(v_1) - H(v_2))^2(\phi(v_1)+\phi(v_2))^2
  \ \ \times\!\!\sum_{v_1v_2\in E(B_\delta)} c_{v_1v_2}(H(v_1) + H(v_2))^2(\phi(v_2) - \phi(v_1))^2.
  \]%
  As $(\phi(v_1)+\phi(v_2))^2\le 2(\phi(v_1)^2+\phi(v_2)^2)$, this gives the estimate
  \[ 
    \sum_{v_1v_2\in E(B_\delta)} c_{v_1v_2}(H(v_1) - H(v_2))^2(\phi(v_1)^2 + \phi(v_2)^2)\ \leq\
    2\!\!\!\!\sum_{v_1v_2\in E(B_\delta)} c_{v_1v_2}(H(v_1) + H(v_2))^2(\phi(v_2) - \phi(v_1))^2\,.
  \]
  Using the fact that $\phi=1$ on $B(p,r)$ we conclude that
  \begin{align*} 
    \Ee_{B(p,r)}^\delta(H)\ &\le \!\!\sum_{v_1v_2\in E(B_\delta)} c_{v_1v_2}(H(v_1) - H(v_2))^2(\phi(v_1)^2+\phi(v_2)^2)\\
    &\le\  2\!\!\!\!\sum_{v_1v_2\in E(B_\delta)} c_{v_1v_2}(H(v_1)+H(v_2))^2(\phi(v_2) - \phi(v_1))^2\\
    &\le\ 2\|\nabla \phi\|_\infty^2\ \times\!\!\!\!\!\!\sum_{v_1v_2\in E({B(p,2r)}_\delta)} c_{v_1v_2}|v_1 - v_2|^2(H(v_1) + H(v_2))^2.
  \end{align*} 
  The claim follows.
\end{proof}

\begin{cor}
  \label{cor:Caccioppoli}
  Let $(\Gamma_\delta)_{\delta\to 0}$ be a family of weighted planar graphs embedded harmonically into $\CC$ that satisfy property~\ref{prty:LIP}. Assume that $r>0$ and $p\in \CC$ are such that the disc $B(p,3r)$ is covered by the faces of $\Gamma_\delta$ for each sufficiently small $\delta>0$. There exists a constant $C=C(\kappa)$ and $\delta_0>0$ such that for each $\delta\leq \delta_0$ and each harmonic function $H:\overline {B(p,2r)}_\delta\to \RR$ we have
  \[
    \Ee_{B(p,r)}^\delta(H)\ \leq\ Cr^{-2}\max\nolimits_{v\in \overline{B(p,2r)}_\delta} (H(v))^2.
  \]
\end{cor}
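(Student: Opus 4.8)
The plan is to obtain Corollary~\ref{cor:Caccioppoli} as a direct consequence of Proposition~\ref{prop:discrete_Caccioppoli} combined with the area bound of Lemma~\ref{lemma:white_area_is_bdd}. First I would apply the discrete Caccioppoli estimate of Proposition~\ref{prop:discrete_Caccioppoli} to the function $H$, harmonic in $B(p,2r)_\delta$, which yields
\[
  \Ee_{B(p,r)}^\delta(H)\ \leq\ \frac{C_0}{r^2}\sum_{v_1v_2\in E(B(p,2r)_\delta)} c_{v_1v_2}|v_1-v_2|^2\,(H(v_1)+H(v_2))^2
\]
with an absolute constant $C_0$. Setting $M:=\max_{v\in B(p,2r)_\delta}|H(v)|$ and using the crude bound $(H(v_1)+H(v_2))^2\le 4M^2$, I would pull $4M^2$ out of the sum, reducing the task to controlling $\sum_{v_1v_2\in E(B(p,2r)_\delta)} c_{v_1v_2}|v_1-v_2|^2$ by $O(r^2)$.

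For this last sum, I would invoke Lemma~\ref{lemma:white_area_is_bdd}, which is available since property~\ref{prty:LIP} holds (equivalently, the t-surface $\Theta_\delta$ satisfies $\Lip(\kappa,\delta)$ by Lemma~\ref{lemma:Lip_follows_from_our_assumptions}). Every edge of $\Gamma_\delta$ with both endpoints in $B(p,2r)$ is contained in $B(p,2r)$ because the ball is convex and edges are straight segments, so the sum is at most $\sum_{v_1,v_2\in\Gamma_\delta\cap B(p,2r),\, v_1\sim v_2} c_{v_1v_2}|v_1-v_2|^2$. To match the hypothesis of Lemma~\ref{lemma:white_area_is_bdd} --- which bounds such a sum over a ball of radius $\rho$ whose center is a vertex at distance $\ge C_1\rho$ from $\partial\Gamma_\delta$ --- I would cover $B(p,2r)$ by a bounded number $N=N(\kappa)$ of balls $B(x_i,\rho)$ centered at vertices $x_i\in B(p,2r)$ with $\rho=r/C_1$; the assumption that $B(p,3r)$ is covered by faces guarantees $\dist(x_i,\partial\Gamma_\delta)\ge r=C_1\rho$, and choosing $\delta_0$ so that $\delta\le\delta_0$ forces $\rho\ge C_1\delta$. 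Summing the $N$ local estimates $\le C_1\rho^2$ then gives $\sum c_{v_1v_2}|v_1-v_2|^2\le (N/C_1)\,r^2$.

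Putting the two steps together yields $\Ee_{B(p,r)}^\delta(H)\le 4C_0(N/C_1)\,M^2$, i.e.\ the claimed bound with $C:=4C_0N/C_1$ depending only on the constant in property~\ref{prty:LIP}. No essentially new idea is required; the only point demanding a little care --- the ``main obstacle'', though a very minor one --- is bridging the gap between the distance-to-boundary hypothesis built into Lemma~\ref{lemma:white_area_is_bdd} and the weaker ``$B(p,3r)$ is covered by faces'' assumption of the corollary, which is precisely what the covering-by-small-balls argument above takes care of, at the cost of enlarging $C$ and shrinking $\delta_0$.
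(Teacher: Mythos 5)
Your proposal is correct and follows exactly the route of the paper, whose proof of Corollary~\ref{cor:Caccioppoli} is the one-line combination of Proposition~\ref{prop:discrete_Caccioppoli} with Lemma~\ref{lemma:white_area_is_bdd}; your covering argument just spells out the (routine) reduction to the distance-to-boundary hypothesis of the latter lemma.
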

\begin{proof}
  This follows from Lemma~\ref{lemma:white_area_is_bdd} and Proposition~\ref{prop:discrete_Caccioppoli}.
\end{proof}

\subsection{Proofs of Theorem~\ref{thma:harmonic_functions_convergence} and Theorem~\ref{thma:Green_fct_convergence} under the assumption $\vphi\in \mC^3$}
\label{subsec:proof of Theorem12 vphi smooth}

We begin with the proof of Theorem~\ref{thma:harmonic_functions_convergence} and then discuss the only additional ingredient -- a priori uniform bound on the values of the Green function -- that is needed to prove Theorem~\ref{thma:Green_fct_convergence} along the same lines.

\begin{proof}[Proof of Theorem~\ref{thma:harmonic_functions_convergence} under the assumption $\vphi\in \mC^3$]
Note that the functions $H_\delta$ are uniformly bounded by maximum principle. Corollary~\ref{cor:Caccioppoli} implies that for each compact $K\subset \Omega$ there is a constant $C = C(K)$ such that
\begin{equation}
  \label{eq:Ee(H)_is_bdd}
  \Ee^\delta_K(H_\delta) \leq C(K).
\end{equation}
Lemma~\ref{lemma:harmonic_conjugate_bound} implies that harmonic conjugate functions $H_\delta^\ast$ of $H_\delta$ can be chosen to be uniformly bounded on compacts of $\Omega$. Combining property~\ref{prty:LIP}, Lemma~\ref{lemma:Lip_follows_from_our_assumptions} and Proposition~\ref{prop:RW_ellipticity} we conclude that the family $(H_\delta,H_\delta^\ast)_{\delta\to 0}$ is precompact in the topology of uniform convergence on every compact in $\Omega$. Let $(h,h^\ast)$ be a subsequential limit of this family. Denote $f = h + ih^\ast$ and let $z=\frac12(w+\vpsi)$, $\vartheta=\frac12(\overline{\vpsi}-\overline{w})$. By Lemma~\ref{lemma:closed_form} we know that the 1-form 
\[
f\,dz + \bar f\,d\bar \vtheta = h\,d\vpsi+ih^\ast\,dw
\] 
is closed `in the weak sense': namely, we have $\int_\gamma (f\,dz + \bar f\,d\bar \vtheta) = 0$ for each contractible piecewise smooth loop $\gamma\subset \Omega$. Applying Lemma~\ref{lemma:Morera_condition} and Lemma~\ref{lemma:very_weak=weak} (here we use that $\vphi\in \mC^3$) we conclude that $h\in \mC^2(\Omega)$, we have $\Ll_\vphi h = 0$, and $h^\ast$ is an $A$-harmonic conjugate of $h$.

Using Lemma~\ref{lemma:crossing_estimates} it is easy to see that $h$ is continuous up to $\overline \Omega$, we have $h\vert_{\partial \Omega} = g$, and that $H_\delta$ converge to $h$ uniformly in $\overline \Omega$ along the corresponding subsequence. It follows that $h$ solves the Dirichlet problem for $\Ll_\vphi$ with boundary conditions $g$, thus $h$ satisfies the assumptions of the theorem. Using the uniqueness of the solution of this Dirichlet problem we conclude that the full sequence $H_\delta$ converges to $h$ as $\delta\to 0$.
\end{proof}

Among the arguments used in the proof of Theorem~\ref{thma:harmonic_functions_convergence} given above, the only missing ingredient in the setup of Theorem~\ref{thma:Green_fct_convergence} is an a priori uniform bound on the values of discrete harmonic functions under consideration, from which the proof starts. The next lemma provides this missing ingredient. 

\begin{lemma}
  \label{lemma:Green_function_bounded}
  In the setup of Theorem~\ref{thma:Green_fct_convergence}, for each compact $K\subset (\overline\Omega\times \overline\Omega) \smm \diag$ there exists a constant $C>0$ that depends on $K$ and the constants in property~\ref{prty:LIP} only such that for each sufficiently small $\delta$ we have
  \[
    \max_{(v_1,v_2)\in K} G_{\Omega_\delta}(v_1,v_2)\leq C.
  \]
\end{lemma}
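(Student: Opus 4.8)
The plan is to derive the bound from two ingredients: a domain–enlargement trick that removes all boundary effects of $\Omega$, and the elementary fact that the expected exit time of the random walk from a \emph{fixed} domain is $O(1)$, which controls the total mass $\sum_v\mu_\delta(v)G^\delta$. First I would fix an auxiliary bounded simply connected domain $\Omega'$ with $\overline\Omega\subset\Omega'\Subset U$ and set $d_0=\dist(\overline\Omega,\partial\Omega')>0$, $D=\diam\overline{\Omega'}$. For $\delta$ small one has $\Omega_\delta\subset\Omega'_\delta$ as vertex sets, and the probabilistic description of the discrete Green's function (expected time at $v_2$ before exiting the domain, divided by $\mu_\delta(v_2)$) immediately yields the monotonicity $0\le G^\delta_\Omega(v_1,v_2)\le G^\delta_{\Omega'}(v_1,v_2)$, since the walk exits $\Omega_\delta$ no later than it exits $\Omega'_\delta$. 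Thus it suffices to bound $u(v_1):=G^\delta_{\Omega'}(v_1,v_2)=G^\delta_{\Omega'}(v_2,v_1)$ (the discrete Green's function is symmetric because $\Ll_\delta$ is self-adjoint for the $\mu_\delta$-weighted inner product) for $(v_1,v_2)\in K$; and now, crucially, $\dist(v_1,\partial\Omega')\ge d_0$ and $|v_1-v_2|\ge\rho$ for some $\rho=\rho(K)>0$ by compactness of $K\subset\overline\Omega\times\overline\Omega\smm\diag$.

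Next I would record the a priori bound on the total mass of the Green's function. Since the embedding is harmonic, $X_t$ is a $\CC$-valued martingale and $|X_t|^2-t$ is a martingale, hence so is $|X_t-v_2|^2-t$; applying optional stopping at $\tau:=\tau(\partial\Omega'_\delta)$ (a.s.\ finite on a finite graph, and $|X_{t\wedge\tau}-v_2|$ is bounded) gives
\[
\sum_{v\in\Omega'_\delta}\mu_\delta(v)\,G^\delta_{\Omega'}(v,v_2)\ =\ \EE_{v_2}[\tau]\ =\ \EE_{v_2}\big[\,|X_\tau-v_2|^2\,\big]\ \le\ D^2 ,
\]
where the first equality uses that $\mu_\delta(v)G^\delta_{\Omega'}(v_2,v)$ is the expected time spent at $v$ by $X^{v_2}_\cdot$ before exiting $\Omega'_\delta$, together with the symmetry of $G^\delta_{\Omega'}$.

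Then I would localise this global bound using the Harnack inequality. Put $s=\tfrac18\min\big(\rho,\,d_0,\,\dist(\Omega',\partial U)\big)$, so that $s\ge C\delta$ for $\delta$ small. Since $\dist(v_1,\partial\Omega')\ge d_0\ge 8s$ and $|v_1-v_2|\ge\rho\ge 8s$, the ball $B(v_1,2s)$ is covered by faces of $\Gamma_\delta$ and does not contain $v_2$, so $u=G^\delta_{\Omega'}(\cdot,v_2)$ is non-negative and discrete harmonic in $B(v_1,2s)_\delta$. Lemma~\ref{lemma:Harnack} then gives $u(v_1)\le\max_{B(v_1,s)_\delta}u\le C_\kappa\min_{B(v_1,s)_\delta}u$, while Lemma~\ref{lemma:white_area_is_bdd} (applied on $B(v_1,s/2)$, each of whose edges contributes to two of the $\mu_\delta(v)$) gives $\sum_{v\in B(v_1,s)_\delta}\mu_\delta(v)\ge c_\kappa s^2$. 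Combining these with the previous step,
\[
c_\kappa s^2\,C_\kappa^{-1}\,u(v_1)\ \le\ \min_{B(v_1,s)_\delta}u\ \cdot\!\!\sum_{v\in B(v_1,s)_\delta}\!\!\mu_\delta(v)\ \le\ \sum_{v\in\Omega'_\delta}\mu_\delta(v)\,u(v)\ \le\ D^2,
\]
whence $G^\delta_\Omega(v_1,v_2)\le G^\delta_{\Omega'}(v_1,v_2)=u(v_1)\le C_\kappa^{2}D^2/(c_\kappa s^2)=:C(K)$ for all $\delta\le\delta_0(K)$, which is the claim; note that $C(K)$ and $\delta_0(K)$ depend only on $K$ (through $\rho$, $d_0$, $D$) and on the constant $\kappa$, i.e.\ on the constants in property~\ref{prty:LIP}.

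The argument is largely routine once the enlargement step is set up; the one genuinely useful idea is precisely that enlargement, since it is what lets us avoid any boundary-regularity input for $\partial\Omega$ — without it, points $v_1\in K$ lying very close to $\partial\Omega$ would force a Harnack ball of radius $\approx\dist(v_1,\partial\Omega)\to 0$, and the bound would degenerate. (The alternative, which the paper alludes to in Remark~\ref{rem:intro-rough-boundaries}, is a weak Beurling estimate near $\partial\Omega$ obtained from Lemma~\ref{lemma:crossing_estimates} and the smoothness of $\partial\Omega$, but this is heavier and unnecessary here.) The only bookkeeping requiring a little care is the ``keep the bulk component'' convention in the definition of $\Omega_\delta$ (ensuring $\Omega_\delta\subset\Omega'_\delta$ for $\delta$ small) and the scale-$\delta$ caveats in Lemmas~\ref{lemma:Harnack} and~\ref{lemma:white_area_is_bdd}, all of which are harmless for $\delta$ small enough.
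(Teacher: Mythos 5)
Your proof is correct, and it shares the paper's overall skeleton --- bound a $\mu_\delta$-weighted sum of Green's function values by an occupation-time/martingale argument, then extract a pointwise bound via the Harnack inequality (Lemma~\ref{lemma:Harnack}) and the mass lower bound from Lemma~\ref{lemma:white_area_is_bdd} --- but the implementation differs in two genuine ways. The paper keeps the domain $\Omega$, reduces by monotonicity to $\dist(v_2,\partial\Omega)\ge C\delta$, and bounds only the occupation time of the small ball $B(v_2,r)$ with $r=\frac14\dist(v_2,\partial\Omega)$; that step needs the crossing estimates of Lemma~\ref{lemma:crossing_estimates} to control the number of excursions into $B(v_2,r)$, and the Harnack inequality is then applied around the pole $v_2$. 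You instead enlarge the domain once to a fixed $\Omega'\Supset\overline\Omega$, bound the \emph{total} occupation measure $\sum_v\mu_\delta(v)G^\delta_{\Omega'}(v,v_2)=\EE_{v_2}[\tau]\le D^2$ by optional stopping of $|X_t-v_2|^2-t$, and apply Harnack around the source $v_1$, which requires the symmetry of $G^\delta_{\Omega'}$ (true, and correctly justified by self-adjointness of $\Ll_\delta$ in the $\mu_\delta$-weighted inner product, though the paper never invokes it). Your route buys a more elementary occupation-time step (no excursion decomposition, no crossing estimates) and a cleaner treatment of points near $\partial\Omega$ via the enlargement; the paper's route stays inside $\Omega$ and localizes at the pole, which is the form needed when one later wants estimates that degenerate correctly as $v_2\to\partial\Omega$. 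The scale caveats you flag (Harnack and the mass bound only hold at scales $\ge C\delta$, and the balls must be covered by faces) are indeed harmless since your radius $s$ is a fixed constant determined by $K$.
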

\begin{proof}
   Without loss of generality, we can assume that $\dist(v_2,\partial\Omega)\geq C\delta$ for some $C>0$ big enough: the desired estimate in the other case follows from the monotonicity with respect to the domain. Below we use the symbol $O(\cdot)$ to denote an estimate with a constant that depends on the constants from property~\ref{prty:LIP} only.

  Let $X_t$ be the continuous time random walk on $\Gamma_\delta$ with jump rates from~$v$ to~$v'$ proportional to~$c_{vv'}$ and parameterized such that $|X_t|^2 - t$ is a local martingale. Recall that
  \[
    \mu_\delta(v) = \sum_{v_1\sim v} c_{vv'}|v-v'|^2
  \]
  is an invariant (in the bulk) measure for $X_t$; recall that we include $\mu_\delta(v)$ as the normalizing factor in the definition~\eqref{eq:def_of_Lldelta} of the discrete Laplacian. By Dynkin's formula we have
  \begin{equation}
    \label{eq:G_Dynkin}
    G_{\Omega_\delta}(v_1,v_2) = \frac{1}{\mu_\delta(v_2)} \EE\int_0^{\tau_{\Omega_\delta}} \indic[X_t^{v_1} = v_2]\,dt,
  \end{equation}
  where $X_t^v$ stands for the random walk started at $v$ and $\tau_{\Omega_\delta}$ is the first  exit time from $\Omega_\delta$.

  Let $r = \frac{1}{4}\dist(v_2,\partial\Omega)$ and assume that $|v_1-v_2|\geq 2r$. Using the fact that $|X_t|^2-t$ is a local martingale and crossing estimates from Lemma~\ref{lemma:crossing_estimates} it is easy to see that
  \begin{equation*}
    \EE\int_0^{\tau_{\Omega_\delta}}\indic[X_t^{v_1}\in B(v_2,r)] = O(r^2).
  \end{equation*}
  Due to~\eqref{eq:G_Dynkin}, this gives
  \begin{equation*}
    \sum_{v\in B(v_2,r)_\delta}\mu_\delta(v)G_{\Omega_\delta}(v_1,v) = O(r^2).
  \end{equation*}
  By Harnack inequality (Lemma~\ref{lemma:Harnack}) there exist $c>0$ such that for each $v\in B(v_2,r)$ we have
  \begin{equation*}
    G_{\Omega_\delta}(v_1,v)\geq c\cdot G_{\Omega_\delta}(v_1,v_2).
  \end{equation*}
  From Lemma~\ref{lemma:white_area_is_bdd} we also know that $\sum_{v\in B(p,r)_\delta} \mu_\delta(v)\ge cr^2$ for some $c>0$. Thus,  $G_{\Omega_\delta}(v_1,v_2)=O(1)$. 
We conclude that $G_{\Omega_\delta}(v_1,v_2)$ is uniformly bounded when $|v_1-v_2|\geq \max(C\delta, \frac{1}{2}\dist(v_2,\partial\Omega))$. This implies the estimate over an arbitrary compact $K\subset (\overline \Omega\times\overline \Omega)\smm\diag$ by Harnack inequality.
\end{proof}

\begin{proof}[Proof of Theorem~\ref{thma:Green_fct_convergence} under the assumption $\vphi\in \mC^3$]
Using Lemma~\ref{lemma:Green_function_bounded} and Lemma~\ref{lemma:Holder}, as well as Lemma~\ref{lemma:crossing_estimates} to estimate $G_{\Omega_\delta}$ near the boundary, we conclude that the family $(G_{\Omega_\delta})_{\delta>0}$ is precompact in the topology of uniform convergence on compact subsets of $(\overline\Omega\times\overline\Omega)\smm \diag$. It remains to prove that each subsequential limit coincides with the Green function $G_\Omega$ of the operator $\Ll_\vphi$ in the domain~$\Omega$. To this end, we employ the characterization of $G_\Omega$ discussed before Lemma~\ref{lemma:Green_function_characterization}.

Let $\wtd G$ be such a subsequential limit. Clearly, $\wtd G$ is non-negative, continuous in $(\overline\Omega\times\overline\Omega)\smm \diag$, and equal to zero on $\partial\Omega$. Now fix $w_2\in \Omega$ and denote $G_\delta = G_{\Omega_\delta}(\,\cdot\,,w_2)$ for shortness. Corollary~\ref{cor:Caccioppoli} and Lemma~\ref{lemma:harmonic_conjugate_bound} allow us to define harmonic conjugates $G^\ast_\delta$ so that they are uniformly bounded on compact subsets of the universal cover of $\Omega\smm\{ w_2 \}$. Arguing as in the proof of Theorem~\ref{thma:harmonic_functions_convergence} given above we see that $\wtd G(\,\cdot\,,w_2)\in \mC^2(\Omega\smm\{ w_2 \})$, we have $\Ll_\vphi \wtd G(\,\cdot\,, w_2) = 0$ in $\Omega\smm\{ w_0 \}$, and $G_\delta^\ast(\cdot, w_2)$ converge to an $A$-harmonic conjugate of $\wtd G(\,\cdot\,, w_2)$ uniformly on compact subsets of the universal cover of $\Omega\smm\{ w_0 \}$. In particular, this means that an $A$-harmonic conjugate of $\wtd G(\,\cdot\,, w_2)$ has monodromy $-1$ around $w_2$. Due to Lemma~\ref{lemma:Green_function_characterization}, this characterizes $\wtd G$ as the Green function of $\Ll_\vphi$.
\end{proof}

\subsection{$C^1$ convergence under additional regularity assumption~\ref{assum:Exp-Fat}}
\label{subsec:C1 under ExpFat} The main convergence results of our paper, Theorem~\ref{thma:harmonic_functions_convergence} and Theorem~\ref{thma:Green_fct_convergence}, guarantee the convergence of discrete harmonic functions~$H_\delta$ to solutions of the equation $\Ll_\varphi h=0$ under no local `non-degeneracy' assumptions on the embeddings $\Gamma_\delta$. At the same time, in applications of such results to the dimer model one often needs to prove convergence of the \emph{gradients} of $H_\delta$. (For instance, see~\cite[Section~1.7]{berestycki-lis-qian-free-dimers} for a short discussion of the link between dimer model observables in this setup and discrete Green's functions.)
In this section we strengthen Theorem~\ref{thma:harmonic_functions_convergence} and Theorem~\ref{thma:Green_fct_convergence} to the convergence of gradients of~$H_\delta$ under a  mild additional regularity assumption~\ref{assum:Exp-Fat} on~$\Gamma_\delta$ that originated in~\cite{CLR1}. 
We say that a face of $\Gamma_\delta$ is \emph{$\rho$-fat} if it can be triangulated so that each triangle contains a disc of radius~$\rho$.

\renewcommand{\theassum}{(EXP-FAT)}

\begin{assum}[\protect{cf.~\cite[Assumption~5.9]{CLR1}}] 
  \label{assum:Exp-Fat}
  Assume that a family of harmonic embeddings $(\Gamma_\delta)_{\delta\to 0}$ and an open set $U\subset \CC$ covered by each of~$\Gamma_\delta$ be given. We say that $\Gamma_\delta$ have property $\ExpFat$ in $U$ if there exists a function $\delta'(\delta)$ such that $\delta'(\delta)\to 0$ as $\delta\to 0$ and the following holds:
  \begin{center}
    \noindent   if one removes from $\Gamma_\delta$ all $\delta\exp(-\delta'(\delta)\delta^{-1})$-fat faces, then all vertex-connected\\ components of~$\Gamma_\delta$ that are fully contained in $U$ have diameters less than~$\delta'(\delta)$.
    \end{center}
\end{assum}

Let us begin with a short motivating discussion. Assume that the potential $\varphi$ is smooth enough and that a real-valued function $h$ is a solution of the equation $\Ll_\vphi h = 0$. A direct computation shows that this is equivalent to the fact that the 1-form $\Im[h_w \,d\vphi]$ is closed. Note also that the 1-form $\Re[h_w\,dw]=\frac12 dh$ is always closed. %
These real-valued 1-forms are the imaginary and the real parts of the complex-valued 1-form 
\begin{equation}
  \label{eq:form_on_t-surf_continuous}
  h_w\,dz - h_{\bar w}\,d\theta,\qquad z = \tfrac12(w + \psi),\ \ \theta = \tfrac12(\bar\psi - \bar w),
\end{equation}
which is therefore also closed. Let $\alpha\in\TT$ and note that
\[
h_w\,dz - h_{\bar w}\,d\theta = \bar\alpha \big( \Re[\alpha h_w]\,d\psi_\alpha + i\Im[\alpha h_w]\,dw_\alpha \big),\qquad
    w_\alpha = z + \alpha^2\theta,\ \ \psi_\alpha = z - \alpha^2\theta.
\]
Repeating the proof of Lemma~\ref{lemma:Morera_condition} one can now see that each projection $\Re[\alpha h_w]$ of the gradient~$h_w$ of $h$ satisfies an elliptic PDE (which depends on $\alpha$) in the coordinate $w_\alpha$.

The framework developed in~\cite{CLR1} provides a `discrete version' of the computations made above when a smooth potential~$\vphi$ and a function~$h$ are replaced with the Maxwell--Cremona potential~$\vPhi_\delta$ and a discrete harmonic function~$H_\delta$ on~$\Gamma_\delta$, respectively. Splitting each face of $\Gamma_\delta$ into triangles and extending $H_\delta$ inside these triangles linearly we can view $H_\delta$ as a piecewise linear function and its gradient $F_\delta:=\partial_w H_\delta$ as a piecewise constant function. (Note the advantage of the discrete setup: $F_\delta$ is well-defined pointwise for each discrete harmonic function $H_\delta$.)
The discrete analog of the 1-form~\eqref{eq:form_on_t-surf_continuous} is then the 1-form $F_\delta d\Tt_\delta - \overline{F}_\delta d\Oo_\delta$ on the t-surface $\Theta_\delta$ associated with $\Gamma_\delta$. 
(See~\cite[Section~4.2 and~Section~5]{CLR1} for an extension of this 1-form from black faces~$\Theta_\delta$ to the whole t-surface.)
As above, this gives a certain discrete harmonicity property for each of the functions~$\Re[\alpha F_\delta]$, $\alpha\in\TT$, in the coordinate $\Tt_\delta+\alpha^2\Oo_\delta$; see~\cite[Section~4.3]{CLR1} for details. Importantly, if $\Gamma_\delta$ has property~\ref{prty:CONV}/\ref{prty:LIP}, with \emph{no} additional assumptions on~$\Gamma_\delta$, 
then the corresponding random walks on T-graphs $\Tt_\delta+\alpha^2\Oo_\delta$ are uniformly elliptic starting from the scale~$\delta$. This leads to the following:

\begin{prop}
  \label{prop:Hw_max_Holder_alternative} 
  Let a harmonic embedding $\Gamma_\delta$ cover an open set~$U\subset\CC$, and $H_\delta:\overline{U}_\delta\to\RR$ be a discrete harmonic function. Triangulate faces of $\Gamma_\delta$ arbitrarily and extend $H_\delta$ to $U$ by linearity.\\[2pt]
  (i) The gradient $\nabla H_\delta$ of $H_\delta$ satisfies the maximum principle: $\max_U |\nabla H_\delta| = \max_{\partial U}|\nabla H_\delta|$, where we take the maximum of $|\nabla H_\delta|$ over all incident faces if $\partial U$ passes through a vertex of $\Gamma_\delta$.\\[2pt]
  (ii) Assume in addition that $\Gamma_\delta$ has property~\ref{prty:LIP}. There exists constants $C,\beta>0$ (depending on constants in~\ref{prty:LIP} only) such that for each $R\geq r\geq C\delta$ and $w\in U$ such that $B(w,R)\subset U$ we have
      \[
        \osc_{B(w,r)} \nabla H_\delta\ \leq\ C(r/R)^\beta \osc_{B(w,R)} \nabla H_\delta\,,
      \]
  where $\osc_B F=\sup_{w_1,w_2\in B}|F(w_1)-F(w_2)|$. The exponent $\beta$ depends only on the constant~$\lambda$ in~\eqref{eq:Conv}.\\[2pt]
  (iii) In the same setup, there exist constants $\beta_0,C_0>0$ depending on the constants in property~\ref{prty:LIP} only such that the following holds. Let $r\geq C_0\delta$ and $w\in U$ be such that $B(w,r)\subset U$. Then,
  \begin{align*}
    \text{either } &\max\nolimits_{B(v,\frac12r)} |\nabla H_\delta|\ \leq\ C_0r^{-1}\osc_{B(v,r)}H_\delta\,,\\
    \text{or } &\max\nolimits_{B(v,\frac34r)} |\nabla H_\delta|\ \geq\ \exp(\beta_0r\delta^{-1})C_0r^{-1}\osc_{B(w,r)}H_\delta\,.
  \end{align*}
\end{prop}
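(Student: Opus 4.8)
\emph{Plan of proof.} All three assertions concern the piecewise constant function $F_\delta:=\partial_w H_\delta$ on the triangulated faces of $\Gamma_\delta$, and the plan is to run them through the t-holomorphic calculus of~\cite{CLR1} recalled at the end of Section~\ref{subsec:T-holomorphic functions} and in the discussion preceding the statement. Namely, $iF_\delta$ is a t-black-holomorphic function on the t-surface $\Theta_\delta$ of $\Gamma_\delta$: its true complex values $F_\delta^\bullet$ on the black faces of $\Theta_\delta$ produce the closed $1$-form $F_\delta^\bullet\,d\Tt_\delta+\overline{F_\delta^\bullet}\,d\Oo_\delta$, and for every $\alpha\in\TT$ the projection $\Re[\alpha F_\delta]$ of the gradient, viewed as a function on the faces of $\Gamma_\delta$, is discrete harmonic on the T-graph $\Tt_\delta+\alpha^2\Oo_\delta$ obtained from $\Theta_\delta$; see~\cite[Sections~4.2--4.3,~5]{CLR1}. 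Under property~\ref{prty:LIP} these T-graphs are uniformly elliptic starting from scale $\delta$, uniformly in $\alpha\in\TT$ (by Lemma~\ref{lemma:Lip_follows_from_our_assumptions} and~\cite[Proposition~6.4]{CLR1}; see the discussion in Section~\ref{subsec:C1 under ExpFat}), so the crossing, Harnack and H\"older estimates of Lemmas~\ref{lemma:crossing_estimates}, \ref{lemma:Harnack}, \ref{lemma:Holder} hold for $\Re[\alpha F_\delta]$ with constants independent of $\alpha$. Finally, $\Gamma_\delta$ and all the T-graphs $\Tt_\delta+\alpha^2\Oo_\delta$ share the map $\Tt_\delta$ and differ from it by terms $\kappa$-Lipschitz in $\Tt_\delta$, so a Euclidean ball of radius $\geq C\delta$ in any of them is comparable, up to enlarging $C$, to a concentric ball in any other; we use this tacitly.

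\emph{Parts (i) and (ii).} Since $|\nabla H_\delta|=2|F_\delta|=2\sup_{\alpha\in\TT}\Re[\alpha F_\delta]$ pointwise, and each $\Re[\alpha F_\delta]$ satisfies the maximum principle on the subgraph of $\Tt_\delta+\alpha^2\Oo_\delta$ lying over $U$ — whose combinatorial boundary is common to all these graphs — a supremum over $\alpha$ of these maximum principles gives $\max_U|\nabla H_\delta|=\max_{\partial U}|\nabla H_\delta|$, which is (i). For (ii), applying Lemma~\ref{lemma:Holder} to $\Re[\alpha F_\delta]$ gives, for $R\geq r\geq C\delta$ with $B(w,R)\subset U$,
\[
\osc_{B(w,r)}\Re[\alpha F_\delta]\ \leq\ C(r/R)^\beta\osc_{B(w,R)}\Re[\alpha F_\delta]\ \leq\ C(r/R)^\beta\osc_{B(w,R)}\nabla H_\delta ,
\]
and taking the supremum over $\alpha\in\TT$ on the left, together with $2\sup_\alpha\osc_B\Re[\alpha F_\delta]=\osc_B\nabla H_\delta$, gives the bound in (ii); the exponent $\beta$ is the one from Lemma~\ref{lemma:Holder} and thus depends only on the ellipticity constant $\lambda$ in~\eqref{eq:Conv}.

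\emph{Part (iii).} Write $M_\rho:=\max_{\overline{B(w,\rho)}}|\nabla H_\delta|$, nondecreasing in $\rho$ by (i). The elementary input is an integration estimate: if $p\in\overline{B(w,\rho)}$ realizes $M_\rho$ and $\osc_{B(p,t)}\nabla H_\delta\leq\tfrac12 M_\rho$ for some $t$ with $B(p,t)\subset B(w,r)$, then $\nabla H_\delta$ stays inside a cone of aperture $<\pi$ about $\nabla H_\delta(p)$, with modulus bounded below by a fixed fraction of $M_\rho$, on all of $B(p,t)$; integrating $\tfrac12\,dH_\delta=\Re[F_\delta\,dw]$ along the radius of $B(p,t)$ in the direction of steepest ascent gives $\osc_{B(w,r)}H_\delta\geq c\,t\,M_\rho$, i.e. $M_\rho\leq(ct)^{-1}\osc_{B(w,r)}H_\delta$. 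One then iterates the resulting dichotomy along an arithmetic family of radii $\rho_j=\tfrac12 r+jC_0\delta$, $\rho_N=\tfrac34 r$, so that $N\asymp r\delta^{-1}$: at level $j$ either $\osc_{B(p_j,t)}\nabla H_\delta\leq\tfrac12 M_{\rho_j}$ holds for some admissible $t$ — in which case, propagating the resulting lower bound on $|\nabla H_\delta|$ away from $p_j$ out to a ball of radius of order $r$ by means of the $\alpha$-uniform Harnack inequality for $\Re[\alpha F_\delta]$, one reaches the first alternative $M_{r/2}\leq C_0 r^{-1}\osc_{B(w,r)}H_\delta$ — or it fails, and feeding the failure at scale $t\asymp C_0\delta$ into the H\"older estimate (ii) at center $p_j$ with a suitable room produces a genuine multiplicative gain $M_{\rho_{j+1}}\geq K_0 M_{\rho_j}$ with $K_0=K_0(\kappa)>1$. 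If the gain occurs at all $N$ steps, chaining them yields $M_{3r/4}\geq M_{\rho_N}\geq K_0^{\,N}M_{r/2}\geq\exp(\beta_0 r\delta^{-1})M_{r/2}$; since the first alternative has then failed, $M_{r/2}>C_0 r^{-1}\osc_{B(w,r)}H_\delta$, which is the second alternative.

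\emph{Main obstacle.} The delicate point is the quantitative bookkeeping in part (iii). The H\"older estimate (ii) by itself yields only a bounded multiplicative gain per \emph{macroscopic} step, so obtaining the genuinely \emph{exponential} (not merely polynomial in $r\delta^{-1}$) lower bound of the second alternative forces one to extract a gain $K_0>1$ at each of the $\asymp r\delta^{-1}$ steps of a chain of scales of order $\delta$, which means combining the H\"older and Harnack inputs on the T-graphs $\Tt_\delta+\alpha^2\Oo_\delta$ with some care and checking that every scale used stays $\geq C\delta$ throughout the chain. Dually, one must make sure the ``tame'' branch of the dichotomy really produces the \emph{first} alternative, with the factor $r^{-1}$ and not just $\delta^{-1}$ — which is why the $\alpha$-uniform Harnack inequality, rather than a naive integration over a $\delta$-ball, has to be used to spread the gradient lower bound out to a ball of radius of order $r$. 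The remaining ingredients — the maximum principle and H\"older estimate on the T-graphs and the one-dimensional integration estimate — are routine.
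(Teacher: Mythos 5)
The paper does not reprove this statement from scratch: after identifying $\partial_w H_\delta$ with a t-black-holomorphic function on $\Theta_\delta$, it simply cites \cite[Propositions~4.17,~5.7]{CLR1} for (i) and \cite[Proposition~6.13, Theorem~6.17]{CLR1} for (ii)--(iii). Your proposal instead tries to rederive these facts from the paper's internal lemmas. For (i) and (ii) your argument is essentially a correct unpacking of the cited results: the identity $|\nabla H_\delta|=2\sup_{\alpha\in\TT}\Re[\alpha F_\delta]$ reduces both the maximum principle and the oscillation decay to the corresponding statements for the scalar functions $\Re[\alpha F_\delta]$, uniformly in $\alpha$. Two caveats: the maximum principle and the H\"older estimate you invoke are needed on the T-graphs $\Tt_\delta+\alpha^2\Oo_\delta$, whereas Lemmas~\ref{lemma:Harnack} and~\ref{lemma:Holder} of this paper are stated only for $\Gamma_\delta$ and $\Gamma_\delta^\ast$ (i.e.\ $\alpha^2=\mp1$); and the combinatorics of general T-graphs (degenerate faces, the correct notion of boundary) is precisely what \cite[Propositions~4.17,~5.7]{CLR1} handle. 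So even for (i)--(ii) you are implicitly re-using the CLR1 machinery rather than the paper's lemmas verbatim.

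For (iii) there is a genuine gap, and it sits exactly where you place your ``main obstacle''. Your single-scale dichotomy is: at the maximizer $p_j$ of $M_{\rho_j}:=\max_{\overline{B(w,\rho_j)}}|\nabla H_\delta|$, either $\osc_{B(p_j,t)}\nabla H_\delta\le\frac12 M_{\rho_j}$ (tame case) or not (growth case). To extract a multiplicative gain $K_0>1$ per step from the growth case via the H\"older estimate (ii), the scale $t$ must be of order $\delta$ (otherwise only $O(1)$ steps fit into $[\tfrac12 r,\tfrac34 r]$ and the total gain is bounded, not exponential in $r\delta^{-1}$). But then the tame case, via your integration estimate, only yields $M_{\rho_j}\le (ct)^{-1}\osc_{B(w,r)}H_\delta\asymp\delta^{-1}\osc_{B(w,r)}H_\delta$, which is far weaker than the first alternative $C_0r^{-1}\osc_{B(w,r)}H_\delta$. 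Your proposed repair --- spreading the lower bound $|\nabla H_\delta|\ge\frac12M_{\rho_j}$ from $B(p_j,C\delta)$ to a ball of radius $\asymp r$ using Harnack for $\Re[\alpha F_\delta]$ --- does not close as stated: $\Re[\alpha F_\delta]$ is not nonnegative, so one must apply Harnack to $\max_{B(p_j,s)}\Re[\alpha F_\delta]-\Re[\alpha F_\delta]$ with $s\asymp r$, and the resulting lower bound on $\Re[\alpha F_\delta]$ over $B(p_j,s/2)$ degrades by $C\,(M_{\rho_j+s}-M_{\rho_j})$. Controlling this error reintroduces a second dichotomy whose bad branch ($M_{\rho_j+s}>(1+\epsilon)M_{\rho_j}$ with $s\asymp r$) produces only a bounded total gain over the annulus, so neither alternative of the proposition is reached. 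Reconciling the $r^{-1}$ normalization of the first alternative with the $\exp(\beta_0 r\delta^{-1})$ factor of the second is the actual content of \cite[Theorem~6.17]{CLR1}, and your sketch does not supply the missing bookkeeping; as written, part (iii) is not proved.
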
 
\begin{proof}
Note that the gradient of $H_\delta$ can be seen as a t-black-holomorphic function on the t-embedding $\Tt_\delta=\frac12(\Tt_\delta+\Oo_\delta)$; see the discussion at the end of Section~\ref{subsec:T-holomorphic functions} and~\cite[Proposition~8.2(ii)]{CLR1}. Recall also that the mapping from $\Gamma_\delta$ to $\Tt_\delta$ is bi-Lipschitz on convex sets starting from the scale~$\delta$. Item~(i) follows from~\cite[Proposition~4.17 and~Proposition~5.7]{CLR1}. 
  Items~(ii) and~(iii) follow from~\cite[Proposition~6.13 and~Theorem~6.17]{CLR1}.
\end{proof}

Given the alternative provided by item~(iii) in Proposition~\ref{prop:Hw_max_Holder_alternative}, we can now benefit from the fact that the gradient of a discrete harmonic function~$H_\delta$ on a $\rho$-fat face of $\Gamma_\delta$ cannot be bigger than $\rho^{-1}\max|H_\delta|$. Together with the maximal principle for $\nabla H_\delta$ from item~(i), this allows to control $\nabla H_\delta$ if $\Gamma_\delta$ does not have clusters of 
\emph{not} $\exp(-o_{\delta\to 0}(1)\delta^{-1})$-fat faces that percolate to the boundary. 

\begin{thmas}
  \label{thmas:C1_convergence} In the setup of Theorem~\ref{thma:harmonic_functions_convergence}, assume additionally that harmonic embeddings~$\Gamma_\delta$ satisfy Assumption~\ref{assum:Exp-Fat}. %
  There exists a constant~$\beta>0$ that depends only on the constant~$\lambda$ in~\eqref{eq:Conv} such that the following holds:  
if we extend discrete harmonic functions~$H_\delta$ linearly to $\Omega\Subset U$ using the triangulations of faces from Assumption~\ref{assum:Exp-Fat}, then $h\in \mC^{1,\beta}(\Omega)$ and the gradients $\nabla H_\delta$ converge to $\nabla h$ uniformly on compact subsets of~$\Omega$. Moreover, for each compact $K\subset \Omega$ there exists a constant $C=C(K,\lambda)>0$ such that $\|h\|_{\mC^{1,\beta}(K)}\leq C\|h\|_{\mC(\overline{\Omega})}$.

Similarly, if $\Gamma_\delta$ satisfy Assumption~\ref{assum:Exp-Fat} in the setup of Theorem~\ref{thma:Green_fct_convergence}, then the gradients of~$G_{\Omega_\delta}$ converge to the gradients of $G_\Omega$ uniformly on compact subsets of $(\Omega\times\Omega)\smallsetminus\mathrm{diag}$.
\end{thmas}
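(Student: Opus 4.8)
The plan is to reduce the theorem to the single a priori estimate $\max_K|\nabla H_\delta|\le C(K,\lambda)\,\|H_\delta\|_{\mC(\overline\Omega)}$, valid for every compact $K\Subset\Omega$ and all small $\delta$, with $\nabla H_\delta$ taken in the triangulations of Assumption~\ref{assum:Exp-Fat}. Granting it, Proposition~\ref{prop:Hw_max_Holder_alternative}(ii) upgrades this to equicontinuity of the piecewise constant maps $\nabla H_\delta$ on compacts of $\Omega$, so $(\nabla H_\delta)$ is precompact in $\mC^0_\loc(\Omega)$ by Arzel\`a--Ascoli; since $H_\delta\to h$ uniformly by Theorem~\ref{thma:harmonic_functions_convergence}, every subsequential limit of $\nabla H_\delta$ equals the distributional gradient $\nabla h$, hence $\nabla H_\delta\to\nabla h$ locally uniformly. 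Letting $\delta\to0$ in the a priori bound and in the oscillation inequality of item~(ii) gives $h\in\mC^{1,\beta}(\Omega)$ together with $\|h\|_{\mC^{1,\beta}(K)}\le C(K,\lambda)\|h\|_{\mC(\overline\Omega)}$, where $\beta$ is the exponent of item~(ii), which depends only on $\lambda$. For Green's functions I would run the very same argument on the universal cover of $\Omega$ with the pole removed, exactly as in the proof of Theorem~\ref{thma:Green_fct_convergence}, using the a priori bound on $G^\delta_\Omega$ off the diagonal from Lemma~\ref{lemma:Green_function_bounded} in place of $\|H_\delta\|_{\mC(\overline\Omega)}$.

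To obtain the a priori bound I would normalize $\|H_\delta\|_{\mC(\overline\Omega)}=1$ and set $\rho_\delta:=\delta\exp(-\delta'(\delta)\delta^{-1})$, the fatness threshold of Assumption~\ref{assum:Exp-Fat}. \textbf{Step 1: a crude universal bound.} On a $\rho$-fat triangle the gradient of the linear interpolant of $H_\delta$ is at most $\rho^{-1}$ times the oscillation of $H_\delta$ over its three vertices. By Assumption~\ref{assum:Exp-Fat}, for $\delta$ small every face meeting a ball $B(p,r)$ with $B(p,3r)\subset\Omega$ is either cut, in the chosen triangulation, into triangles of in-radius $\ge\rho_\delta$, or lies in a vertex-connected cluster of non-$\rho_\delta$-fat faces of diameter $<\delta'(\delta)\to0$ contained in $U$ (a macroscopic face cannot be non-$\rho_\delta$-fat, else it would itself be such a cluster, of macroscopic diameter). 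On the triangles of the first kind one gets $|\nabla H_\delta|\le\rho_\delta^{-1}\osc_{B(p,3r)}H_\delta$ directly; enclosing each small bad cluster in a slightly larger region whose boundary runs only through $\rho_\delta$-fat faces and invoking the maximum principle of Proposition~\ref{prop:Hw_max_Holder_alternative}(i) transfers the bound across it. The outcome is $|\nabla H_\delta|\le2\rho_\delta^{-1}\osc_{B(p,3r)}H_\delta$ on $B(p,r)$ whenever $B(p,3r)\subset\Omega$ and $\delta$ is small.

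\textbf{Step 2: upgrading via the dichotomy.} Fix $p\in K$ and a macroscopic radius $r_0$ with $B(p,3r_0)\subset\Omega$, and put $\omega:=\osc_{B(p,r_0)}H_\delta$; if $\omega=0$ then $H_\delta$ is constant near $p$ and $\nabla H_\delta$ vanishes there. Otherwise I would apply Proposition~\ref{prop:Hw_max_Holder_alternative}(iii) to $\widetilde H_\delta:=H_\delta/\omega$ at $(p,r_0)$. Its ``blow-up'' alternative asserts $\max_{B(p,3r_0/4)}|\nabla\widetilde H_\delta|\ge\exp(\beta_0 r_0\delta^{-1})C_0 r_0^{-1}$, whereas Step~1 (with $r=r_0$) gives $\max_{B(p,3r_0/4)}|\nabla\widetilde H_\delta|\le4\rho_\delta^{-1}\omega^{-1}=4\delta^{-1}\exp(\delta'(\delta)\delta^{-1})\omega^{-1}$; since $\delta'(\delta)\to0$, these are incompatible once $\omega\gtrsim\eta_\delta:=(r_0/C_0)\delta^{-1}\exp(-(\beta_0 r_0-\delta'(\delta))\delta^{-1})$, and $\eta_\delta\to0$. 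Hence for $\omega\gtrsim\eta_\delta$ the first alternative holds and $|\nabla H_\delta(p)|=\omega|\nabla\widetilde H_\delta(p)|\le C_0 r_0^{-1}\omega\le2C_0 r_0^{-1}$; for $\omega\lesssim\eta_\delta$, Step~1 (with $r=r_0/3$) already yields $|\nabla H_\delta(p)|\le2\rho_\delta^{-1}\omega\lesssim\rho_\delta^{-1}\eta_\delta\asymp\delta^{-2}\exp((2\delta'(\delta)-\beta_0 r_0)\delta^{-1})\to0$. Either way $|\nabla H_\delta(p)|\le 2C_0 r_0^{-1}$ for $\delta$ small, uniformly in $p\in K$; taking $r_0$ a fixed fraction of $\dist(K,\partial\Omega)$ gives the a priori bound.

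The hard part will be Steps~1--2, and in particular the bookkeeping of the competing exponential scales: one must arrange that the super-exponential factor $\rho_\delta^{-1}=\delta^{-1}\exp(\delta'(\delta)\delta^{-1})$ tolerated by Assumption~\ref{assum:Exp-Fat}, the exponential gain $\exp(\beta_0 r_0\delta^{-1})$ of the dichotomy, and the vanishing cluster size $\delta'(\delta)$ fit together. This should work precisely because $\delta'(\delta)=o(1)$ and---after harmlessly enlarging $\delta'(\delta)$ so that also $\delta'(\delta)\gg\delta\log\delta^{-1}$---the factor $\exp(\delta'(\delta)\delta^{-1})$ is dwarfed by $\exp(\beta_0 r_0\delta^{-1})$ while still beating every fixed power of $\delta^{-1}$. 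A secondary nuisance is that faces of $\Gamma_\delta$ need not be small; such a face is forced to be $\rho_\delta$-fat for $\delta$ small, so only the per-triangle form of the fat-face estimate is ever used, and the oscillations entering it are dominated by $\osc_{B(p,3r_0)}H_\delta$ when $\omega\lesssim\eta_\delta$ and are immaterial when $\omega\gtrsim\eta_\delta$, where the dichotomy carries the argument.
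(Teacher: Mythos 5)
Your proposal is correct and follows essentially the same route as the paper: the fat-face gradient bound combined with the maximum principle of Proposition~\ref{prop:Hw_max_Holder_alternative}(i) across the small bad clusters of Assumption~\ref{assum:Exp-Fat}, the dichotomy of item~(iii) to rule out the exponentially large alternative, and the H\"older estimate of item~(ii) for precompactness and the $\mC^{1,\beta}$ bound — you have merely written out the exponential bookkeeping that the paper leaves implicit. The only (harmless) organizational difference is that you cite Theorem~\ref{thma:harmonic_functions_convergence} to identify the limit, whereas the paper re-runs that proof using the newly established differentiability so as to bypass Lemma~\ref{lemma:very_weak=weak} for $\vphi\notin\mC^3$.
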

\begin{rem} Let us emphasize that in Theorem~\ref{thmas:C1_convergence} we do \emph{not} impose any additional regularity assumption on the potential~$\vphi\in\mC^{1,1}$ besides the uniform convexity~\eqref{eq:Conv}. In particular, one can use Theorem~\ref{thmas:C1_convergence} to prove that $h\in \mC^{1,\beta}_\loc$ for each (weak) solution $h\in W^{1,2}_\loc$ of the equation~$\Ll_\vphi h=0$ with uniformly convex potential~$\varphi$ by discretizing the operator~$\Ll_\vphi$; see~\cite{Gutierrez-Truyen-gradient2011} for related results.
\end{rem}
\begin{proof}
  Consider the setup of Theorem~\ref{thma:harmonic_functions_convergence} first. The family $(H_\delta)_{\delta>0}$ of solutions of discrete Dirichlet problems in $\Omega_\delta$ is uniformly bounded in $\overline{\Omega}$ by the maximum principle. Combining Assumption~\ref{assum:Exp-Fat} with the alternative provided by Proposition~~\ref{prop:Hw_max_Holder_alternative}, item~(iii), and the maximum principle from Proposition~\ref{prop:Hw_max_Holder_alternative}, item~(i), we conclude that the gradients $\nabla H_\delta$ are uniformly bounded on compact subsets of $\Omega$. Moreover, the H\"older-type estimate from the item~(ii) of the same proposition and a standard pre-compactness argument implies that any subsequential limit $h$ of $H_\delta$ belongs to $\mC^{1,\beta}(\Omega)$ and that $\nabla H_\delta$ converge to $\nabla h$ uniformly on compacts along the corresponding subsequence. Since all estimates in Proposition~\ref{prop:Hw_max_Holder_alternative} are uniform in $\delta$, we also get the desired estimate $\|h\|_{\mC^{1,\beta}(K)}\leq C\|h\|_{\mC(\Omega)}$. 
  
  It remains to prove that $h$ is the (weak) solution of the Dirichlet problem for $\Ll_\vphi$ even if $\varphi\not\in\mC^{3}$. To this end we repeat the proof of Theorem~\ref{thma:harmonic_functions_convergence} given in Section~\ref{subsec:proof of Theorem12 vphi smooth} and note that we do not need to use Lemma~\ref{lemma:very_weak=weak} anymore since under Assumption~\ref{assum:Exp-Fat} we already know that each subsequential limit~$h$ of functions~$H_\delta$ is continuously differentiable inside~$\Omega$.
  
  Similar arguments apply in the setup of Theorem~\ref{thma:Green_fct_convergence}: discrete Green's functions~$G_{\Omega_\delta}$ are uniformly bounded on compact subsets of $(\Omega\times\Omega)\smallsetminus \mathrm{diag}$ due to Lemma~\ref{lemma:Green_function_bounded} which relies upon property~\ref{prty:CONV} only and does not require any additional regularity of~$\varphi$. As above, this implies that the gradients of discrete Green's functions are uniformly bounded on compacts. (Note that in Assumption~\ref{assum:Exp-Fat} we insist that each point of $\Omega$ is surrounded by a \emph{small} contour consisting of~$\delta\exp(-\delta'\delta^{-1})$-fat faces.) The proof finishes as before, with no need in Lemma~\ref{lemma:very_weak=weak} as we now know that subsequential limits of $G_{\Omega_\delta}$ are continuously differentiable.
\end{proof}

\begin{rem} \label{rem:C1-harm}
Note that in the setup of Theorem~\ref{thma:harmonic_functions_convergence} we only used that each compact set~$K\subset \Omega$ is separated from $\partial\Omega$ by a cycle of $\delta\exp(-\delta'(\delta)\delta^{-1})$-fat faces of~$\Gamma_\delta$ for all small enough~$\delta\le \delta_0(K)$. This weaker form of Assumption~\ref{assum:Exp-Fat} is not enough in the setup of Theorem~\ref{thma:Green_fct_convergence} as we also need the existence of a small circuit of $\delta\exp(-\delta'(\delta)\delta^{-1})$-fat faces separating $K$ from the pole of~$G_{\Omega_\delta}$.
\end{rem}

\subsection{Proof of Theorem~\ref{thma:harmonic_functions_convergence} and Theorem~\ref{thma:Green_fct_convergence} for general uniformly convex potentials $\vphi\in\mC^{1,1}$}
\label{subsec:Extension to C1,1}
Recall that the only place in the proofs of Theorem~\ref{thma:harmonic_functions_convergence} and Theorem~\ref{thma:Green_fct_convergence} in Section~\ref{subsec:proof of Theorem12 vphi smooth} where we used the additional regularity assumption $\vphi\in \mC^3$ is Lemma~\ref{lemma:very_weak=weak}, which is needed to prove that subsequential limits of discrete harmonic functions on $\Gamma_\delta$ are differentiable. 
In this section we provide an alternative proof of the differentiability of all such limits -- see Proposition~\ref{prop:h_W12} below -- which can be used instead of Lemma~\ref{lemma:very_weak=weak} in the general case. 

Since the statement is local, we can assume that we are given a sequence of uniformly bounded discrete harmonic functions in a disc $B(w_0,R)\Subset U$ such that $H_\delta\to h\in \mC({B(w_0,R)})$ as~$\delta\to 0$, uniformly on compacts. Our goal is to prove that $h$ is continuously differentiable and, moreover, that
\begin{equation}
\label{eq:h-in-C1}
\|h\|_{\mC^{1,\beta}({B(w_0,r)})}\ \le\ C(\lambda,r/R)\cdot \|h\|_{C({B(w_0,R)})}\ \ \text{for all}\ r<R,
\end{equation}
where the exponent~$\beta>0$ depends only on the constant~$\lambda$ in~\eqref{eq:Conv}. Let $\wtd{r}=\frac12(r\!+\!R)$.
The idea of the proof is to modify~$\Gamma_\delta$ outside the disc $B(w_0,\wtd{r})$ and to construct approximations $\wtd{H}_\delta^\eps$ of given functions $H_\delta$ such that the gradients $\nabla \wtd{H}_\delta^\eps$ are uniformly bounded in $B(w_0,\wtd{r})$ due to the alternative from  Proposition~\ref{prop:Hw_max_Holder_alternative}, item~(iii), applied on the \emph{modified} graph~$\wtd{\Gamma}_\delta$;
see also Remark~\ref{rem:C1-harm} above.

Let
\[
\wtd{V}_\delta=\left\{v_\delta\in\CC:\ \begin{array}{l}\text{either $v_\delta\in \Gamma_\delta$ and $|v_\delta-w_0|<\wtd{r}$}\\
\text{or $v_\delta\in\delta\ZZ^2$ and $\wtd{r}\le|v_\delta-w_0|\le R$}\end{array}\right\}
\]
and denote by~$\wtd{P}_\delta$ the upper convex hull of the set~$\{(v_\delta;\Phi_\delta(v_\delta)),v_\delta\in \wtd V_\delta\}\subset \CC\times\RR$. 
Since~$\Phi_\delta$ is convex, $\wtd{P}_\delta$ is the supergraph of a convex function~$\wtd{\Phi}_\delta$ that is defined on the convex hull of~$\wtd V_\delta$ and is linear on faces of a certain graph whose vertices belong to~$\wtd V_\delta$. We denote this graph by~$\wtd{\Gamma}_\delta$ and define the weights on edges of~$\wtd{\Gamma}_\delta$ using the Maxwell--Cremona correspondence.
(In other words, the weighted graph~$\wtd{\Gamma}_\delta$ and the modified potential~$\wtd{\Phi}_\delta$ are related to each other in the usual way; e.g., see~\eqref{eq:intro_H*def}.) Note that these weights are positive since $\wtd\Phi_\delta$ is convex.
It follows from~\eqref{eq:Conv} that for each small enough $\eps>0$ there exists~$\delta_0(\eps)>0$ such that for all~$\delta\le\delta_0(\eps)$
\begin{itemize}
\item all edges of~$\wtd{\Gamma}_\delta$ in the disc~$B(w_0,R-\eps)$ have length at most~$C\delta$,
\item all faces of~$\wtd{\Gamma}_\delta$ in the annulus~$B(w_0,R-\eps)\smallsetminus B(w_0,\wtd{r}+\eps)$ are $c\delta$-fat,
\item $\wtd{\Phi}_\delta$ equals $\Phi_\delta$ (and hence $\wtd{\Gamma}_\delta$ coincides with $\Gamma_\delta$ as weighted graphs) in the disc $B(w_0,\wtd{r}-\eps)$,
\end{itemize}
where the constants~$C,c>0$ depend only on the constants in property~\ref{prty:CONV}. We are now in the position to establish the smoothness of all subsequential limits of discrete harmonic functions $H_\delta$.
\begin{prop}
  \label{prop:h_W12}
  There exist an exponent $\beta>0$ depending only on the constant~$\lambda$ in~\eqref{eq:Conv} such that the estimate~\eqref{eq:h-in-C1} holds for each (subsequential, uniform on compacts) limit $h\in \mC(B(w_0,r))$ of discrete harmonic functions $H_\delta$ on $\Gamma_\delta$. 
\end{prop}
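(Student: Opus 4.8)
The plan is, for each small $\eps>0$, to build a discrete harmonic function $\wtd H_\delta^\eps$ on the modified graph $\wtd\Gamma_\delta$ which (i) carries a gradient bound uniform in $\delta$ and $\eps$ on $B(w_0,r)$ — a bound that $H_\delta$ itself need not have — and (ii) agrees with $H_\delta$ far enough inside $B(w_0,R)$ to be compared with it by the discrete maximum principle; then to let $\delta\to0$ and afterwards $\eps\to0$. Concretely, I would take $\wtd H_\delta^\eps$ to be the solution of the discrete Dirichlet problem for $\wtd\Gamma_\delta$ in $\wtd\Gamma_\delta\cap B(w_0,\wtd r+2\eps)$ whose boundary values are those of the piecewise-linear extension of $H_\delta$. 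The point of this choice is that $\partial(\wtd\Gamma_\delta\cap B(w_0,\wtd r+2\eps))$ lies inside the shell $B(w_0,\wtd r+2\eps)\smm\overline{B(w_0,\wtd r+\eps)}$, all of whose faces are $c\delta$-fat, so the modification affects only a thin fat collar just outside $B(w_0,\wtd r)$, while $\wtd\Gamma_\delta$ and $\Gamma_\delta$ coincide as weighted graphs on $B(w_0,\wtd r-\eps)$; moreover $\wtd\Gamma_\delta$ inherits property~\ref{prty:CONV}/\ref{prty:LIP}, with constants depending only on $\lambda$ above scale $O(\delta)$, because $\wtd\Phi_\delta$ is a convex function built from $\delta\ZZ^2$-samples of the uniformly convex $\Phi_\delta$.

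The gradient bound for $\wtd H_\delta^\eps$ is the crux of the construction, and it is exactly the mechanism behind Theorem~\ref{thmas:C1_convergence}, now applicable because the \emph{modified} graph surrounds $B(w_0,r)$ with a thick collar of fat faces. By the maximum principle for the function one has $\|\wtd H_\delta^\eps\|_\infty\lesssim\|h\|_{\mC(B(w_0,R))}$ once $\delta$ is small; on a $c\delta$-fat face $|\nabla\wtd H_\delta^\eps|\lesssim\delta^{-1}\|h\|$; hence by the maximum principle for the gradient, Proposition~\ref{prop:Hw_max_Holder_alternative}(i) — whose relevant boundary faces lie in the fat collar — the crude bound $\max|\nabla\wtd H_\delta^\eps|\lesssim\delta^{-1}\|h\|$ holds on the whole domain. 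Feeding this \emph{polynomial}-in-$\delta^{-1}$ bound into the alternative of Proposition~\ref{prop:Hw_max_Holder_alternative}(iii), applied around $w\in B(w_0,r)$ with the \emph{fixed} radius $r_1\sim R-r$, rules out its exponential branch once $\delta$ is small, leaving $\|\nabla\wtd H_\delta^\eps\|_{L^\infty(B(w_0,r))}\le C(\lambda,r/R)\|h\|$; Proposition~\ref{prop:Hw_max_Holder_alternative}(ii) then upgrades this to a $\mC^{1,\beta}$ bound with $\beta=\beta(\lambda)$. Thus $(\wtd H_\delta^\eps)_\delta$ is precompact in $\mC^1(\overline{B(w_0,r)})$; repeating the proof of Theorem~\ref{thma:harmonic_functions_convergence} on $\wtd\Gamma_\delta$ — legitimate for $\vphi\in\mC^{1,1}$ because the limit is now \emph{known} to be differentiable, so Lemma~\ref{lemma:Morera_condition}(ii) takes the place of Lemma~\ref{lemma:very_weak=weak} — identifies every subsequential limit $h^\eps=\lim_\delta\wtd H_\delta^\eps$ as the unique weak solution of $\Ll_\vphi h^\eps=0$ in $B(w_0,\wtd r+2\eps)$ with $h^\eps\vert_{\partial}=h$, and $\|h^\eps\|_{\mC^{1,\beta}(B(w_0,r))}\le C(\lambda,r/R)\|h\|$ uniformly in $\eps$; in particular $h^\eps$ does not depend on the subsequence, so $\wtd H_\delta^\eps\to h^\eps$.

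It then remains to show $h^\eps\to h$ on $B(w_0,r)$ as $\eps\to0$, and this is the main obstacle: very weak solutions of the non-divergence-form operator $\Ll_\vphi$ with merely $L^\infty$ coefficients need not be unique, so one cannot simply appeal to uniqueness of the Dirichlet problem for $h$ directly. I would exploit instead the discrete maximum principle: on $B(w_0,\wtd r-\eps)$ the graphs $\wtd\Gamma_\delta$ and $\Gamma_\delta$ coincide, so $D_\delta:=\wtd H_\delta^\eps-H_\delta$ is genuinely discrete harmonic there; since $\wtd H_\delta^\eps$ is equicontinuous on $\overline{B(w_0,\wtd r-\eps)}$ for fixed $\eps$ (by the gradient bound) and $H_\delta\to h$, passing to the limit gives $\sup_{B(w_0,r)}|h^\eps-h|\le\sup_{|w-w_0|=\wtd r-\eps}|h^\eps-h|$. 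The circle $\{|w-w_0|=\wtd r-\eps\}$ lies at distance $O(\eps)$ from $\partial B(w_0,\wtd r+2\eps)$, where $h^\eps=h$; invoking the boundary regularity theory for $\Ll_\vphi$ in a ball — which gives $h^\eps$ a modulus of continuity on $\overline{B(w_0,\wtd r+2\eps)}$ depending only on $\lambda$ and on the modulus of continuity of $h$, hence uniform in $\eps$ — together with the continuity of $h$, the right-hand side is bounded by $\wtd\omega(O(\eps))+\omega_h(O(\eps))\to0$. Hence $h^\eps\to h$ on $B(w_0,r)$, and precompactness of $(h^\eps)_\eps$ in $\mC^1(\overline{B(w_0,r)})$ with uniform $\mC^{1,\beta}$ bounds yields $h\in\mC^{1,\beta}(B(w_0,r))$ together with the estimate~\eqref{eq:h-in-C1}.

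I expect the delicate points to be: verifying that $\wtd\Gamma_\delta$ genuinely inherits property~\ref{prty:LIP} and that its collar is genuinely $c\delta$-fat (combinatorial facts about convex hulls of lattice samples of uniformly convex functions, as asserted in the construction above), and, above all, carrying out the last paragraph so that \emph{all} constants — in the gradient bounds, in the identification of $h^\eps$, and in the boundary modulus of continuity of $h^\eps$ — are uniform in $\eps$, which is precisely what makes the passage $\eps\to0$ legitimate and turns the a priori regularity of the modified approximations into the regularity of $h$.
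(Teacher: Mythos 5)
Your construction is the paper's own: the same modified graph $\wtd\Gamma_\delta$ with its fat collar, the same Dirichlet approximants $\wtd H_\delta^\eps$, and the same use of Proposition~\ref{prop:Hw_max_Holder_alternative}(i)--(iii) to turn the crude $O(\delta^{-1})$ gradient bound on the fat boundary faces into a uniform $\mC^{1,\beta}$ bound on $B(w_0,r)$. Where you genuinely diverge is in the order of limits and in the comparison step. The paper never identifies a continuum limit $h^\eps$ for fixed $\eps$: it proves the purely discrete estimate $\max_{B(w_0,\wtd r-\eps)}|\wtd H_\delta^\eps-H_\delta^\eps|\le C\eps^{\beta/2}+\omega_h(\sqrt\eps)$ by writing both values as expectations of the boundary data at the exit points of two uniformly elliptic random walks, and then takes a diagonal sequence $\eps=\eps(\delta)\to0$, so that $\wtd H_\delta^{\eps(\delta)}\to h$ directly and the gradient estimates pass to the limit. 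You instead fix $\eps$, send $\delta\to0$ first, identify $h^\eps$ as the weak solution in $B(w_0,\wtd r+2\eps)$ with data $h$ (legitimate: differentiability of the limit is known from the gradient bounds, so Lemma~\ref{lemma:Morera_condition}(ii) plus uniqueness of $W^{1,2}$ solutions of the Dirichlet problem apply), and then control $h^\eps-h$ on the circle $|w-w_0|=\wtd r-\eps$ via the discrete maximum principle on the region where the two graphs coincide together with a boundary modulus of continuity for $\Ll_\vphi$ in a ball, uniform in $\eps$. That last input is an extra continuum ingredient the paper does not need -- it is classical for uniformly elliptic divergence-form operators in a ball with equicontinuous boundary data, so the step is valid, but the paper's diagonal/exit-point argument stays entirely at the discrete level. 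Both routes yield \eqref{eq:h-in-C1} with $\beta=\beta(\lambda)$.
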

\begin{proof} Without loss of generality, assume that $\|h\|_{\mC(B(w_0,R))}=1$ and recall that the functions $H_\delta$ satisfy a H\"older-type regularity estimate from Lemma~\ref{lemma:Holder}, which is uniform in~$\delta$. Passing to the limit~$\delta\to 0$, this gives an estimate $\omega_h(\rho)\le C \rho^\beta/(R\!-\!\wtd{R})^\beta$ for the modulus of continuity of the function~$h$ in each smaller disc $B(w_0,\wtd{R})$, where $C,\beta>0$ depend on constants in~\eqref{eq:Conv} only.

Let $H_\delta^\eps$ be the solution of the discrete Dirichlet problem on~$\Gamma_\delta$ in the disc~$B(w_0,\wtd{r}+2\varepsilon)$ with boundary values $h$. Similarly, let $\wtd{H}_\delta^\eps$ be the solution of the same boundary value problem on the \emph{modified} graph~$\wtd{\Gamma}_\delta$. Recall that these two graphs coincides in~$B(w_0,\wtd{r}-\eps)$ if $\delta\le\delta_0(\eps)$. We claim that 
\begin{equation}
\label{eq:wtdH-H}
\max\nolimits_{B(w_0,\wtd{r}-\eps)}|\wtd{H}_\delta^\eps-H_\delta^\eps|\ \le\ C(\eps/\sqrt{\eps})^{\beta}+\omega_h(\sqrt{\eps})\ \ \text{for all}\ \delta\le\delta_0(\eps).
\end{equation}
Due to the maximum principle, it is enough to prove this estimate for vertices~$v_\delta$ near~$\partial B(w_0,\wtd{r}-\eps)$. Each of the two values $\wtd{H}_\delta^\eps(v_\delta)$ and $H_\delta^\eps(v_\delta)$ can be written as the expectation of the boundary value at the exit point of the corresponding random walk from $B(w_0,\wtd{r}+2\eps)$. Since both random walks are uniformly elliptic (see Lemma~\ref{lemma:crossing_estimates}), the probability that at least one of these hitting points does not belong to a $\sqrt{\eps}$-vicinity of $v_\delta$ is bounded by $O((\varepsilon/\sqrt{\varepsilon})^{\beta})$, which implies the desired upper bound. 

Since~$H_\delta\to h$ as~$\delta\to 0$ uniformly on compacts, the maximum principle implies that $H_\delta^\eps\to h$ in $B(w_0,\wtd{r})$, uniformly in~$\varepsilon$. Choose $\eps=\eps(\delta)\to 0$ so that $\delta\le\delta_0(\varepsilon)$. It follows from~\eqref{eq:wtdH-H} that 
\[
\wtd{H}_\delta^{\eps(\delta)}\to h\ \ \text{as $\delta\to 0$ uniformly on compact subsets of $B(w_0,\wtd{r})$}.
\] 
By construction, the modified graph~$\wtd{\Gamma}_\delta$ contains a circuit of $c\delta$-fat faces near the boundary of the disc $B(w_0,\wtd{r}+2\eps)$. Using Proposition~\ref{prop:Hw_max_Holder_alternative} similarly to the proof of Theorem~\ref{thmas:C1_convergence}, we conclude that the gradients $\nabla \wtd{H}_\delta^\eps$ are uniformly bounded in the disc $B(w_0,\wtd{r})$. Passing to the limit~$\delta\to 0$ in the H\"older-type estimate for the gradients $\nabla\wtd{H}_\delta^\eps$ provided by Proposition~\ref{prop:Hw_max_Holder_alternative}, item~(ii), gives~\eqref{eq:h-in-C1}. 
\end{proof}

\begin{proof}[Proof of Theorem~\ref{thma:harmonic_functions_convergence} and Theorem~\ref{thma:Green_fct_convergence} for $\vphi\in\mC^{1,1}$] As already mentioned in the beginning of this section, to prove Theorem~\ref{thma:harmonic_functions_convergence} and Theorem~\ref{thma:Green_fct_convergence} in the general case one simply repeats the proof given in Section~\ref{subsec:proof of Theorem12 vphi smooth} with Lemma~\ref{lemma:very_weak=weak} replaced by Proposition~\ref{prop:h_W12}.\end{proof}

\begin{rem}
  \label{rem:Hdelta_is_mesodiff}
  The approximation trick that we used in the proof of Proposition~\ref{prop:h_W12} does not allow us to conclude that the gradients $\nabla H_\delta$ are always uniformly bounded. However, it allows one to prove that $H_\delta$ are `differentiable at a mesoscopic scale'; the result recently obtained in~\cite{pechersky-rate2025} for harmonic functions on orthodiagonal tilings. To this end, note that in the arguments given above one can set $\wtd{r}=R-\delta^{\alpha}$, $\alpha<1$, and $\eps=C_0\delta$ with a big enough constant~$C_0$ depending on constants in property~\ref{prty:CONV} only. If one replaces~$\sqrt{\eps}$ in the proof of~$\eqref{eq:wtdH-H}$ with $\delta^{\frac12(1+\alpha)}$, the right-hand side of this estimate reads as $O(\delta^{\frac12\beta(1-\alpha)})$. Thus,
  \begin{align*}
  H_\delta(w_2)- H_\delta(w_1)&=H_\delta^\eps(w_2)- H_\delta^\eps(w_1)+O(\delta^{\frac12\beta(1-\alpha)})\\
  &=\Re\big[\,\overline{\partial_wH_\delta^\eps(w_1)}(w_2-w_1)\big]+O(|w_2-w_1|^{1+\beta})
  \end{align*}
  if $|w_2-w_1|\ge \delta^{\frac12\beta(1-\alpha)/(1+\beta)}$ and $w_1,w_2\in B(w_0,r)$, where the constant in $O(...)$ depends on~$r/R$.
\end{rem}

\begin{rem}  \label{rem:Hdelta_is_diff}
  We do not know whether or not the gradients of uniformly bounded discrete harmonic functions are always uniformly bounded if no additional regularity assumption like~\ref{assum:Exp-Fat} is imposed on~$\Gamma_\delta$ and leave this as an open question.
\end{rem}

\section{Equivalence of properties~\ref{prty:CONV}/\ref{prty:LIP} and~\ref{prty:RW}}
\label{sec:LIP<=>RW}

We begin this Section with an (elementary) proof of the equivalence of properties~\ref{prty:CONV} and~\ref{prty:LIP}. Recall that property~\ref{prty:LIP} is also equivalent to a similar Lipschitz property $\Lip(\kappa,\delta)$ of the t-surfaces~$\Theta_\delta$; see Definition~\ref{defin:Lip} and Lemma~\ref{lemma:Lip_follows_from_our_assumptions}. 

The proof of property~\ref{prty:RW} on t-surfaces satisfying $\Lip(\kappa,\delta)$ is contained in Proposition~\ref{prop:RW_ellipticity} and Lemma~\ref{lemma:white_area_is_bdd}. The main content of this section is the converse statement: the fact that~\ref{prty:RW} implies property $\Lip(\kappa,\delta)$ of the underlying t-surfaces.
\begin{lemma}
\label{lemma:Lip_equiv_conv}
(i) Assume that the potential~$\vPhi_\delta$ satisfies property~\ref{prty:CONV}
with constant~$\lambda>0$ on an open set~$U\subset\CC$. Then its gradient~$\vPsi_\delta=2\partial_w\vPhi_\delta$ satisfies property~\ref{prty:LIP} with constant~$\varkappa=\frac18\lambda$ on each open set $U'\subset U$ such that $\dist(U',\partial U)\ge 4C\delta$, where $C$ is the constant from~\eqref{eq:Conv}.\\[2pt]
(ii) Vice versa, if the map~$\vPsi_\delta=2\partial_w\vPhi_\delta$ satisfies property~\ref{prty:LIP} with constant~$\varkappa>0$ on a set $U\subset\CC$, then the potential~$\vPhi_\delta$ satisfies property~\ref{prty:CONV} with constant~$\lambda=\frac14\varkappa$ on the same set~$U$.
\end{lemma}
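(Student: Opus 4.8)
The plan is to get both directions out of one elementary identity. After the identification $\CC\cong\RR^2$, the map $\Psi_\delta=2\partial_w\Phi_\delta=\partial_x\Phi_\delta+i\partial_y\Phi_\delta$ is just the gradient $\nabla\Phi_\delta$, and
\[
\Re\frac{\Psi_\delta(w_2)-\Psi_\delta(w_1)}{w_2-w_1}\ =\ \frac{\langle\nabla\Phi_\delta(w_2)-\nabla\Phi_\delta(w_1),\,w_2-w_1\rangle}{|w_2-w_1|^2}
\]
is the directional monotonicity quotient of the gradient, which I will relate to the second difference $\Delta^2_{w_1,w_2}\Phi_\delta:=\Phi_\delta(w_1)+\Phi_\delta(w_2)-2\Phi_\delta(\tfrac12(w_1+w_2))$ using only the difference‑quotient inequalities for the one‑variable convex functions $t\mapsto\Phi_\delta(p+te)$. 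I will use throughout that $\Psi_\delta(w)$ (the gradient of any face incident to $w$) is a subgradient of $\Phi_\delta$ on every ball around $w$ contained in $U$, by local convexity of $\Phi_\delta$; this removes any need for a genericity assumption on $w_1,w_2$.

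For part (ii) I would write $\phi(t)=\Phi_\delta(w_1+t(w_2-w_1))$, convex on $[0,1]$, and use
\[
\Delta^2_{w_1,w_2}\Phi_\delta=\phi(0)+\phi(1)-2\phi(\tfrac12)=\int_0^{1/2}\bigl(\phi'(t+\tfrac12)-\phi'(t)\bigr)\,dt .
\]
For a.e.\ $t$ the integrand equals $\tfrac12|w_2-w_1|^2\cdot\Re\frac{\Psi_\delta(p_1(t))-\Psi_\delta(p_2(t))}{p_1(t)-p_2(t)}$, where $p_1(t)-p_2(t)=\tfrac12(w_2-w_1)$ and $[p_1(t);p_2(t)]\subset[w_1;w_2]\subset U$. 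Since $|p_1(t)-p_2(t)|=\tfrac12|w_2-w_1|\ge C\delta$ as soon as $|w_2-w_1|\ge 2C\delta$, property~\ref{prty:LIP} bounds this integrand between $\tfrac12\varkappa|w_2-w_1|^2$ and $\tfrac12\varkappa^{-1}|w_2-w_1|^2$, and integrating in $t$ gives $\tfrac14\varkappa|w_2-w_1|^2\le\Delta^2_{w_1,w_2}\Phi_\delta\le\tfrac14\varkappa^{-1}|w_2-w_1|^2$. This is~\eqref{eq:Conv} with $\lambda=\tfrac14\varkappa$ and scale constant $2C$ (the upper bound is even stronger than required, since $\tfrac14\varkappa^{-1}\le4\varkappa^{-1}=\lambda^{-1}$).

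For part (i) I would first reduce to short segments: given $[w_1;w_2]\subset U'$ with $\ell:=|w_2-w_1|\ge C\delta$ (otherwise~\eqref{eq:Lip} is vacuous), split it into collinear sub‑segments $[z_j;z_{j+1}]$ of length in $[C\delta,2C\delta]$; collinear increments of $\Psi_\delta$ telescope, so the lower bound in~\eqref{eq:Lip} follows from the lower bound on each piece, and the upper bound from $|\Psi_\delta(w_2)-\Psi_\delta(w_1)|\le\sum_j|\Psi_\delta(z_{j+1})-\Psi_\delta(z_j)|$. So fix a segment of length $\ell\in[C\delta,2C\delta]$, let $e=(w_2-w_1)/\ell$ and $m=\tfrac12(w_1+w_2)$. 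Evaluating the difference‑quotient inequalities for $g(t)=\Phi_\delta(w_1+te)$ at $t=\ell/2$ gives $\langle\Psi_\delta(w_2)-\Psi_\delta(w_1),e\rangle\ge\tfrac2\ell\Delta^2_{w_1,w_2}\Phi_\delta\ge2\lambda\ell$ by~\eqref{eq:Conv}, hence $\Re\frac{\Psi_\delta(w_2)-\Psi_\delta(w_1)}{w_2-w_1}\ge2\lambda\ge\tfrac18\lambda$. For the upper bound put $b=\Psi_\delta(w_2)-\Psi_\delta(w_1)$ (assume $b\neq0$), $v=b/|b|$, and --- this is the key choice --- take the auxiliary points $A=w_2+\tfrac\ell2 v$, $B=w_1-\tfrac\ell2 v$, so that $\tfrac12(A+B)=m$. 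Then $[A;B]$, centered at $m\in U'$ with half‑length at most $\ell\le2C\delta$, lies in $U$ by the $4C\delta$‑margin; the difference‑quotient inequalities along $v$ at scale $\ell/2$ yield
\[
|b|=\langle b,v\rangle\ \le\ \tfrac2\ell\bigl(\Phi_\delta(A)+\Phi_\delta(B)-\Phi_\delta(w_1)-\Phi_\delta(w_2)\bigr)=\tfrac2\ell\bigl(\Delta^2_{A,B}\Phi_\delta-\Delta^2_{w_1,w_2}\Phi_\delta\bigr)\ \le\ \tfrac2\ell\Delta^2_{A,B}\Phi_\delta ,
\]
using $\Delta^2_{w_1,w_2}\Phi_\delta\ge0$. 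Finally $|A-B|=\ell|e+v|\le2\ell$, and since $\langle e,v\rangle=\langle e,b\rangle/|b|>0$ by the lower bound already proved we get $|A-B|\ge\sqrt2\,\ell\ge C\delta$, so~\eqref{eq:Conv} applies to $[A;B]$ and gives $|b|\le\tfrac2\ell\cdot4\lambda^{-1}\ell^2=8\lambda^{-1}\ell$, i.e.\ $\bigl|\frac{\Psi_\delta(w_2)-\Psi_\delta(w_1)}{w_2-w_1}\bigr|\le8\lambda^{-1}=\varkappa^{-1}$.

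I expect the only real obstacle to be bookkeeping rather than any genuine difficulty: \eqref{eq:Conv}/\eqref{eq:Lip} are assumed only above scale $C\delta$ and only for segments contained in the domain, which is exactly what forces the $4C\delta$‑margin and the reduction to short segments in (i) (so that the auxiliary segment $[A;B]$, which may be twice as long as $[w_1;w_2]$, stays inside $U$) and the factor‑$2$ loss in the scale in (ii). The one genuinely clever point is the choice of $A,B$ with midpoint $m$: it rewrites the gradient jump $|\Psi_\delta(w_2)-\Psi_\delta(w_1)|$ as a difference of two second differences of $\Phi_\delta$ at comparable scales --- one bounded above by~\eqref{eq:Conv}, the other simply discarded since it is non‑negative --- and this is what produces exactly $\varkappa=\tfrac18\lambda$. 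The fact that $\Psi_\delta$ is defined only up to a choice of incident face on the $1$‑skeleton of $\Gamma_\delta$, and the a.e.\ differentiability issues in the formula for $\Delta^2_{w_1,w_2}\Phi_\delta$, are harmless: the argument uses $\Psi_\delta(w)$ only through the subgradient inequality, valid for every such choice, and the identity for $\Delta^2$ can be secured by an innocuous perturbation of $w_1,w_2$ together with continuity of $\Phi_\delta$.
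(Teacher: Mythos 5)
Your proof is correct, and it is worth separating the two halves when comparing with the paper. Part (ii) and the lower bound in part (i) follow essentially the same route as the paper: the integral identity $\vPhi_\delta(w_2)-2\vPhi_\delta(m)+\vPhi_\delta(w_1)=\int\langle\vPsi_\delta(p_1(t))-\vPsi_\delta(p_2(t)),w_2-w_1\rangle\,dt$ over pairs at half the original scale (the paper's~\eqref{eq:x-conv-lip}) for (ii), and the difference-quotient/subgradient inequalities along the segment for the lower bound in (i); your telescoping reduction to pieces of length in $[C\delta,2C\delta]$ is the paper's ``by linearity'' step made explicit. The upper bound in part (i) is where you genuinely diverge. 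The paper doubles the segment $[w_1;w_2]$ to bound $\Re[\vPsi_\delta(w_2)-\vPsi_\delta(w_1)]$ via monotonicity of the directional derivative, and then separately controls the transverse component $\Im[\vPsi_\delta(w_2)-\vPsi_\delta(w_1)]$ by applying the subgradient inequality towards the orthogonal points $m\pm i\,|w_2-w_1|$ and invoking the upper bound in~\eqref{eq:Conv} on that orthogonal segment. Your choice of auxiliary points $A=w_2+\tfrac{\ell}{2}v$, $B=w_1-\tfrac{\ell}{2}v$ with $v$ in the direction of the gradient jump $b$ and with common midpoint $m$ packages both components into a single application of~\eqref{eq:Conv} on $[A;B]$, yielding the bound on the full modulus $|b|\le 8\lambda^{-1}\ell$ in one stroke (the paper's split into real and imaginary parts, each bounded by $8\lambda^{-1}\ell$, strictly speaking only yields $8\sqrt2\,\lambda^{-1}\ell$ for the modulus). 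Both arguments consume the same $O(C\delta)$ margin from $\partial U$, and your verification that $|A-B|\ge\sqrt2\,\ell\ge C\delta$ — needed so that~\eqref{eq:Conv} applies to the auxiliary segment and which uses the already-established lower bound $\langle e,v\rangle>0$ — is the one non-obvious checkpoint, and you handle it correctly. The remarks on subgradients of the piecewise-linear convex $\vPhi_\delta$ and on a.e.\ differentiability adequately dispose of the measure-zero issues on the $1$-skeleton of $\Gamma_\delta$.
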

\begin{proof} We start with item (ii). By linearity, it is enough to check~\eqref{eq:Conv} for segments $[w_1;w_2]\subset U'$ of length $4r=|w_2-w_1|\in [2C\delta,4C\delta]$, where~$C$ is the constant from~\eqref{eq:Lip}. 
Rotating and translating the graph~$\Gamma_\delta$, we can assume that~$w_1=-2r$ and~$w_2=2r$. We have
\begin{equation}
\label{eq:x-conv-lip}
\Phi_\delta(2r)-2\Phi_\delta(0)+\Phi_\delta(-2r)=\int_{-r}^{r}\Re[\Psi_\delta(t+r)-\Psi_\delta(t-r)]dt.
\end{equation}
Due to~\eqref{eq:Lip}, we have $2\varkappa r\le \Re[\Psi_\delta(t+r)-\Psi_\delta(t-r)]\le 2\varkappa^{-1}r$ for all~$t$. This gives~\eqref{eq:Conv} with~$\lambda=\frac14\varkappa$.

Let us now pass to item~(i). Again, it is sufficient to check~\eqref{eq:Lip} for segments $[w_1;w_2]\subset U'$ of length $2r=|w_2-w_1|\in [C\delta,2C\delta]$,  where~$C$ is the constant from~\eqref{eq:Conv}. Rotating and translating the graph~$\Gamma_\delta$, we can assume that $w_1=-r$, $w_2=r$ and, moreover, that the set~$U$ contains the twice bigger disc~$\overline B(0,2r)$. Since~$\Psi_\delta$ is convex, the function~$\Re\Phi_\delta(x)$ is an increasing function of $x\in \RR$ and the identity~\eqref{eq:x-conv-lip} gives
\begin{align*}
\lambda r^2\ \le\ \Phi_\delta(r)-2\Phi_\delta(0)+\Phi_\delta(-r)\ &\le\ r\cdot \Re[\Psi_\delta(r)-\Psi_\delta(-r)]\\
&\le\ \Phi_\delta(2r)-2\Phi_\delta(0)+\Phi_\delta(-2r)\,\le\,16\lambda^{-1}r^2.
\end{align*}
To conclude the proof of~\eqref{eq:Conv}, it remains to prove a similar \emph{upper} bound for $|\Im[\Psi_\delta(r)-\Psi_\delta(-r)]|$. To this end, note that the convexity of~$\Phi_\delta$ implies that
\begin{align*}
r^{-1}(\Phi_\delta(2ir)-\Phi_\delta(r))\ &\ge\ -\Re\Psi_\delta(r)+2\Im\Psi_\delta(r),\\
r^{-1}(\Phi_\delta(-2ir)-\Phi_\delta(-r))\ &\ge\ \Re\Psi_\delta(r)-2\Im\Psi_\delta(r).
\end{align*}
Therefore, 
\[
2\Im[\Psi_\delta(r)-\Psi_\delta(-r)]\,\le\,\Re[\Psi_\delta(r)-\Psi_\delta(-r)]+r^{-1}(\Phi_\delta(2ir)-\Phi_\delta(0)+\Phi_\delta(-2ir))\,\le\,32\lambda^{-1}r.
\]
The upper bound for~$2\Im[\Psi_\delta(-r)-\Psi_\delta(r)]$ is similar and we obtain~\eqref{eq:Lip} with~$\varkappa=\frac18\lambda$.
\end{proof}

We now pass to proving that property~\ref{prty:RW} of random walks on an open set~$U$ implies property~$\Lip(\kappa,\delta)$ of the corresponding t-surfaces on each subset~$U'\Subset U$. In fact, it is sufficient to assume that $\dist(U',\partial U)\ge C\delta$ with a large enough constant~$C>0$ that depends on constants in property~\ref{prty:RW} only. We start with the following observation:
\begin{lemma}
  \label{lemma:harmonic_conjugate_bound_again}
  The assertion of Lemma~\ref{lemma:harmonic_conjugate_bound} is still valid if instead of property~\ref{prty:LIP} we assume that the graphs $\Gamma_\delta$ satisfy the upper bound on the invariant measure $\mu_\delta$ from property~\ref{assum:RW}.
\end{lemma}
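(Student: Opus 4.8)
The plan is to revisit the proof of Lemma~\ref{lemma:harmonic_conjugate_bound} and check that property~\ref{prty:LIP} was used there \emph{only} through the upper bound in Lemma~\ref{lemma:white_area_is_bdd}, namely the inequality $\sum_{v_1\sim v_2\,\mathrm{in}\,B(v,r)} c_{v_1v_2}|v_1-v_2|^2\le Cr^2$, and that this particular inequality is itself an immediate consequence of the hypothesis of the present lemma. So the whole content is a bookkeeping remark isolating which hypothesis each step actually consumes.

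First I would recall the structure of that proof: for $p\in K$ and $r\simeq\dist(p,\partial U)$ one produces, via the maximum principle for the discrete harmonic function $H^\ast$, level paths $\gamma_M^\ast,\gamma_m^\ast$ in the dual annulus $A(p,r,2r)_\delta^\ast$, and then uses only the elementary extremal-length/effective-resistance inequalities, the equality of the Dirichlet energies of $H$ and $H^\ast$, and the duality between path and circuit families to arrive at
\[
\big(\osc_{B(p,r)}H^\ast\big)^2\ \le\ (\log 2)^{-2}\,\Ee^\delta_{A(p,r,2r)}(g_p)\cdot\Ee^\delta_{B(p,2r)}(H),\qquad g_p(v)=\log|v-p|.
\]
All of these steps are purely combinatorial/algebraic and use neither the uniform ellipticity of the random walk nor property~$\Lip(\kappa,\delta)$; the only place property~\ref{prty:LIP} intervened was the final remark that $\Ee^\delta_{A(p,r,2r)}(g_p)=O(1)$, obtained there from Lemma~\ref{lemma:C1_functions_DE_bound} and Lemma~\ref{lemma:white_area_is_bdd}.

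The key observation is that for the endpoints of any edge of $A(p,r,2r)_\delta$ one has $r\le|v_i-p|\le 2r$, so the reverse triangle inequality together with $|\log a-\log b|\le|a-b|/\min(a,b)$ gives $|g_p(v_1)-g_p(v_2)|\le|v_1-v_2|/r$; there is no issue with long edges here since nothing is integrated along segments. Hence
\[
\Ee^\delta_{A(p,r,2r)}(g_p)\ \le\ \frac1{r^2}\!\!\sum_{\substack{v_1,v_2\in B(p,2r)\\ v_1\sim v_2}}\!\! c_{v_1v_2}|v_1-v_2|^2\ \le\ \frac1{2r^2}\sum_{v\in B(p,2r)}\mu_\delta(v),
\]
the last inequality being just the definition of $\mu_\delta$ (each internal edge counted at least twice). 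Covering $B(p,2r)$ by $O((r/\delta)^2)$ discs of radius $C\delta$ centered at a $C\delta$-net and applying the assumed upper bound $\sum_{v\in B(x,C\delta)}\mu_\delta(v)\le c^{-1}\delta^2$ to each of them gives $\sum_{v\in B(p,2r)}\mu_\delta(v)=O(r^2)$, hence $\Ee^\delta_{A(p,r,2r)}(g_p)=O(1)$ with a constant depending only on $c$. Plugging this into the displayed inequality, bounding $\Ee^\delta_{B(p,2r)}(H)\le\Ee^\delta_U(H)$, and summing over a finite cover of $K$ by discs $B(p,r)$ reproduces~\eqref{eq:harmonic_conjugate_bound} verbatim.

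I do not expect any serious obstacle; the only mild point of care is that the assumed upper bound on $\mu_\delta$ must be available at every point of a $C\delta$-neighbourhood of $B(p,2r)$. This is harmless: one simply takes $r=\tfrac14\dist(p,\partial U)$ (or works on a slightly smaller subdomain $U'\Subset U$ on which property~\ref{prty:RW} is assumed), so that $B(p,2r)$ together with its $C\delta$-neighbourhood stays in the region where the bound holds for all small $\delta$, and absorbs the resulting loss into the constant $C=C(K,U,c)$.
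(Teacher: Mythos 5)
Your proof is correct and takes essentially the same route as the paper, which likewise observes that property \ref{prty:LIP} enters the proof of Lemma~\ref{lemma:harmonic_conjugate_bound} only through the upper bound of Lemma~\ref{lemma:white_area_is_bdd} on $\sum c_{v_1v_2}|v_1-v_2|^2$, and that this bound is exactly the upper estimate on $\mu_\delta$ in property~\ref{prty:RW}\ref{prop:T} (after a trivial covering argument). Your additional details --- the direct endpoint bound $|g_p(v_1)-g_p(v_2)|\le |v_1-v_2|/r$ and the explicit $C\delta$-net covering --- are accurate and only make explicit what the paper leaves implicit.
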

\begin{proof}
  By examining the proof of Lemma~\ref{lemma:harmonic_conjugate_bound} one can see that the only way in which property~\ref{prty:LIP} is used there is the estimate of the Dirichlet energy of a smooth function using the upper bound from Lemma~\ref{lemma:white_area_is_bdd}. The latter upper bound is nothing but the estimate of the invariant measure~$\mu_\delta$ from property~\ref{prty:RW}. Thus, we can replace Lemma~\ref{lemma:white_area_is_bdd} by property~\ref{prty:RW} keeping the proof unchanged.
\end{proof}

An immediate corollary of this observation is

\begin{lemma}
  \label{lemma:vPsi_is_Lipshitz}
  Assume that $\Gamma_\delta$ have property~\ref{prty:RW} on an open set~$U\subset\CC$. There exists a constant $C>0$ depending on the constants in~\ref{prty:RW} only such that one has
  \[
    |\vPsi_\delta(w_1) - \vPsi_\delta(w_2)|\,\leq\,C|w_1 - w_2|\ \ \text{if}\ \ |w_1-w_2|\ge C\delta
  \]
  for each segment $[w_1;w_2]$ lying in the~$C\delta$-interior of~$U$.
\end{lemma}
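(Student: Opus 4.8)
The plan is to read off the variation of $\vPsi_\delta$ along a segment from the oscillation of the discrete harmonic conjugate of the coordinate function of $\Gamma_\delta$, and then to control that oscillation by the Dirichlet energy of the coordinate function via Lemma~\ref{lemma:harmonic_conjugate_bound_again}, which in turn is $O(r^2)$ on a disc of radius $r$ by the upper bound on the invariant measure in property~\ref{prty:RW}(b). Concretely, write the coordinate function of $\Gamma_\delta$ as $\Hh_\delta=\Hh_\delta^{(1)}+i\Hh_\delta^{(2)}$, a $\CC$-valued discrete harmonic function; by~\eqref{eq:intro_H*def} its harmonic conjugate in the sense of Section~\ref{subsec:T-holomorphic functions} is, up to the global factor $i$ and taken componentwise, exactly the dual embedding $\Hh_\delta^\ast$, and by~\eqref{eq:intro-Psi-def}--\eqref{eq:intro-Phi-def} one has $\vPsi_\delta(w)=\Hh_\delta^\ast(v^\ast)$ on the face of $\Gamma_\delta$ corresponding to $v^\ast$. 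Thus bounding the variation of $\vPsi_\delta$ along a segment is the same as bounding the oscillation of the harmonic conjugate of $\Hh_\delta$.

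First I would bound the Dirichlet energy of $\Hh_\delta$ on discs. By~\eqref{eq:intro-mu-def},
\[
\Ee^\delta_{B(p,s)}(\Hh_\delta^{(1)})+\Ee^\delta_{B(p,s)}(\Hh_\delta^{(2)})\ \le\ \sum\nolimits_{v\in B(p,s)_\delta}\mu_\delta(v),
\]
and covering $B(p,s)$ by $O((s/\delta)^2)$ discs of radius $C\delta$ and applying the upper bound in property~\ref{prty:RW}(b) gives $\Ee^\delta_{B(p,s)}(\Hh_\delta^{(j)})\le C's^2$ for $s\ge C\delta$, with $C'$ depending on the constants in~\ref{prty:RW} only. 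Next, I would apply Lemma~\ref{lemma:harmonic_conjugate_bound_again} (i.e.\ Lemma~\ref{lemma:harmonic_conjugate_bound} with property~\ref{prty:LIP} replaced by the invariant-measure bound coming from~\ref{prty:RW}) to the concentric discs $B(p,r)\subset B(p,2r)$; inspecting its proof, the constant there is \emph{uniform} over such a fixed-ratio pair, since it reduces to the Dirichlet energy of $v\mapsto\log|v-p|$ over a fixed annulus, which is itself $O(1)$ by the same bound on $\mu_\delta$ (exactly as Lemma~\ref{lemma:harmonic_conjugate_bound_again} observes when it replaces Lemma~\ref{lemma:white_area_is_bdd}). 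Together with the energy estimate this yields
\[
\osc_{B(p,r)}\vPsi_\delta\ \le\ C r\qquad\text{whenever }r\ge C\delta\ \text{and}\ B(p,2r)\ \text{lies in the $C\delta$-interior of }U.
\]
Finally, given $[w_1;w_2]$ in the $C\delta$-interior of $U$ with $|w_1-w_2|\ge C\delta$, I would chain this estimate along the segment using $N=O(|w_1-w_2|/\delta)$ overlapping discs of radius of order $\delta$, so that $\vPsi_\delta$ changes by $O(\delta)$ across each step and $|\vPsi_\delta(w_1)-\vPsi_\delta(w_2)|\le C|w_1-w_2|$, with $C$ depending on the constants in~\ref{prty:RW} only. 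This proves Lemma~\ref{lemma:vPsi_is_Lipshitz}.

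The step I expect to require the most care is the last piece of book-keeping: $\vPsi_\delta$ is only piecewise constant, and a single face of $\Gamma_\delta$ may be large, so one must make sure that the value $\vPsi_\delta(w_i)$ --- taken on the face through $w_i$ --- genuinely enters $\osc_B\vPsi_\delta$ for a disc $B$ of radius of order $\delta$ near $w_i$, i.e.\ that the corresponding dual vertex lies in the discrete set over which that oscillation is computed. Here one uses once more the lower bound in property~\ref{prty:RW}(b): every disc of radius $C\delta$ meets $\Gamma_\delta$, so the faces encountered along the segment cannot be arbitrarily wide transversally, and choosing the chaining points on the edges of $\Gamma_\delta$ that the segment actually crosses (so that each lies in suitably small faces) disposes of this point. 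Everything else --- the energy bound and the scale-invariant form of Lemma~\ref{lemma:harmonic_conjugate_bound} --- is routine given the earlier results.
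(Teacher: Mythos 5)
Your proposal is correct and follows essentially the same route as the paper: identify $\vPsi_\delta$ with the harmonic conjugate of the coordinate function $\Hh_\delta$, bound the Dirichlet energy of $\Hh_\delta$ on a disc by $\sum_v\mu_\delta(v)=O(r^2)$ using property~\ref{prty:RW}(b), apply Lemma~\ref{lemma:harmonic_conjugate_bound_again} componentwise, and reduce to segments of length $O(\delta)$ by linearity/chaining. Your closing remark about the scale-uniformity of the constant in Lemma~\ref{lemma:harmonic_conjugate_bound} and about which dual vertices enter the oscillation is a sensible piece of extra care that the paper leaves implicit.
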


\begin{proof}
  Due to linearity, it is sufficient to consider the case~$C_0\delta \leq |w_1-w_2| \leq 2C_0\delta$ for a large constant $C_0>0$. Let~$B=B(w_1,2|w_1-w_2|)$ and $\Hh_\delta: \Gamma_\delta\to \CC$ be the map that sends each vertex to its position in the complex plane. Provided that $C_0$ is big enough, it follows from property~\ref{prty:RW} that
  \[ %
    \Ee_U^\delta(\Hh_\delta)\,\leq\,\frac{1}{2}\sum_{v\in B_\delta} \mu_\delta(v)\,\leq\,2c^{-1}|w_1-w_2|^2.
  \] %
  Applying Lemma~\ref{lemma:harmonic_conjugate_bound} and Lemma~\ref{lemma:harmonic_conjugate_bound_again} to the real and the imaginary parts of~$\Hh_\delta$ we conclude that 
  $|\vPsi_\delta(w_1) - \vPsi_\delta(w_2)| \leq C|w_1-w_2|$, where~$C$ depends only on constants in property~\ref{prty:RW}.
\end{proof}

Recall the notation~$\Ww_\delta = \Tt_\delta - \overline \Oo_\delta$ introduced in Section~\ref{sec:T-embedding of corner graph} and denote $\Ww_\delta^\ast = \Tt_\delta + \overline \Oo_\delta$. It follows from Lemma~\ref{lemma:T-barO_projection}) that $\Ww_\delta$ projects $\Theta_\delta$ onto the harmonic embedding~$\Gamma_\delta$ while $\Ww_\delta^\ast$ projects $\Theta_\delta$ onto $\Gamma_\delta^\ast$ embedded via the conjugate harmonic embedding. 

\begin{cor}
  \label{cor:Tt(w)_is_Lipshitz}
  Assume that $\Gamma_\delta$ have property~\ref{prty:RW} on an open set~$U\subset\CC$. There exists a constant $C>0$ depending on the constants in~\ref{prty:RW} only such that one has
  \[
    |\Tt_\delta(p_1) - \Tt_\delta(p_2)|\,\leq\, C|\Ww_\delta(p_1) - \Ww_\delta(p_2)|\ \ \text{if}\ \ |\Ww_\delta(p_1)-\Ww_\delta(p_2)|\ge C\delta.
  \]
  for all pairs of points $p_1,p_2\in \Theta_\delta$ such that the segment $[\Ww_\delta(p_1);\Ww_\delta(p_2)]$ lies in the $C\delta$-interior of~$U$.
\end{cor}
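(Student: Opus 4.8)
The plan is to deduce this from Lemma~\ref{lemma:vPsi_is_Lipshitz} and the structure of the t-surface recalled in Section~\ref{sec:T-embedding of corner graph}. Writing $\Ww^\ast_\delta:=\Tt_\delta+\overline\Oo_\delta$, so that $\Tt_\delta=\tfrac12(\Ww_\delta+\Ww^\ast_\delta)$, it is enough to bound $|\Ww^\ast_\delta(p_1)-\Ww^\ast_\delta(p_2)|$ by a multiple of $|\Ww_\delta(p_1)-\Ww_\delta(p_2)|$ whenever the latter is at least $C\delta$. The key point is that on every black face of $\Theta_\delta$ that corresponds to a \emph{face} of $\Gamma_\delta$ one has, by Lemma~\ref{lemma:T-barO_projection}(iii) (equivalently, by~\eqref{eq:lfa0}), the identity $\Ww^\ast_\delta=\vPsi_\delta\circ\Ww_\delta$, while on such a face $\Ww_\delta$ maps onto the corresponding face of $\Gamma_\delta$. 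Thus, if $p_1,p_2$ both lie on faces of this type and $d:=|\Ww_\delta(p_1)-\Ww_\delta(p_2)|\ge C\delta$ with $[\Ww_\delta(p_1);\Ww_\delta(p_2)]$ in the $C\delta$-interior of $U$, then Lemma~\ref{lemma:vPsi_is_Lipshitz} gives $|\Ww^\ast_\delta(p_1)-\Ww^\ast_\delta(p_2)|=|\vPsi_\delta(\Ww_\delta(p_1))-\vPsi_\delta(\Ww_\delta(p_2))|\le Cd$ and hence $|\Tt_\delta(p_1)-\Tt_\delta(p_2)|\le\tfrac12(1+C)d$, which is the assertion for such points.

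It then remains to reduce an arbitrary pair $p_1,p_2\in\Theta_\delta$ to this case, by replacing each $p_i$ with a point $\hat p_i$ on a black face of the above type, at the cost of changing $\Tt_\delta$ and $\Ww_\delta$ by $O(\delta)$. Since every point of $\Theta_\delta$ is separated from such a black face by at most two faces of $\Theta_\delta$ --- a white face is adjacent to two of them, and a black face corresponding to a vertex of $\Gamma_\delta$ is surrounded by white faces --- it suffices to control the $\Tt_\delta$- and $\Ww_\delta$-diameters of the faces of $\Theta_\delta$ that meet a fixed interior of $U$. A white face $\Theta(u)$ corresponding to an edge $e=v_1v_2$ of $\Gamma_\delta$ is, up to translation, a rectangle with side lengths $\tfrac12|e|$ and $\tfrac12 c_{v_1v_2}|e|$, and property~\ref{prty:RW}(b) together with $c_{v_1v_2}|e|^2=4\Area(\Tt_\delta(u))\le\mu_\delta(v_1)\le c^{-1}\delta^2$ bounds its area by $O(\delta^2)$. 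To upgrade this to a diameter bound I would invoke the oscillation estimate of Lemma~\ref{lemma:harmonic_conjugate_bound_again}, applied to the coordinate functions of $\Hh_\delta$ exactly as in the proof of Lemma~\ref{lemma:vPsi_is_Lipshitz}: it shows that the dual edges $|\Hh^\ast_\delta(v_1^\ast)-\Hh^\ast_\delta(v_2^\ast)|=c_{v_1v_2}|e|$ have length $O(\delta)$ in the region (using the oscillation bound for short primal edges, and $c_{v_1v_2}|e|=O(\delta^2/|e|)$ for long ones). It then follows that $\Ww^\ast_\delta$ varies by $O(\delta)$ on each white face, and that the black faces corresponding to vertices of $\Gamma_\delta$, whose doubled $\Tt_\delta$-images are translates of dual faces --- polygons assembled from these $O(\delta)$-long dual edges --- also have $\Tt_\delta$-diameter $O(\delta)$. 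Granting this, each replacement $p_i\mapsto\hat p_i$ costs $O(\delta)$ in both $\Tt_\delta$ and $\Ww_\delta$; after enlarging $C$ we have $|\Ww_\delta(\hat p_1)-\Ww_\delta(\hat p_2)|\ge\tfrac12|\Ww_\delta(p_1)-\Ww_\delta(p_2)|\ge C\delta$, and the triangle inequality combined with the previous paragraph finishes the proof.

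The hard part is the geometric input of the second step: turning the white-face \emph{area} bound, which is immediate from~\ref{prty:RW}(b), into a \emph{diameter} bound, and controlling uniformly the black faces corresponding to vertices of $\Gamma_\delta$. This is precisely where property~\ref{prty:RW} has to be used together with the Lipschitz/oscillation control of $\vPsi_\delta$ from Lemma~\ref{lemma:vPsi_is_Lipshitz}, rather than either ingredient on its own; all the remaining steps are elementary manipulations with the maps $\Tt_\delta$, $\Oo_\delta$, $\Ww_\delta$, $\Ww^\ast_\delta$ recalled in Section~\ref{sec:T-embedding of corner graph}.
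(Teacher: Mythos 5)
Your first paragraph is exactly the paper's (one‑line) argument: $\Tt_\delta=\tfrac12(\Ww_\delta+\Ww^\ast_\delta)$ and $\Ww^\ast_\delta=\vPsi_\delta\circ\Ww_\delta$ on black faces coming from faces of $\Gamma_\delta$, so Lemma~\ref{lemma:vPsi_is_Lipshitz} gives the claim there. The gaps are in your reduction of general $p_1,p_2$ to this case, i.e., in the claimed $O(\delta)$ diameter bounds. (a) For a black face $b$ corresponding to a vertex $v$ of $\Gamma_\delta$, the polygon $2\Tt_\delta(b)$ has $\deg v$ sides and vertex degrees are unbounded; a simple polygon with $n$ sides of length $O(\delta)$ can have diameter of order $n\delta$, so ``assembled from $O(\delta)$-long dual edges, hence diameter $O(\delta)$'' is a non sequitur. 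The oscillation estimate does not rescue this either: $K^\ast_\delta$ in Lemma~\ref{lemma:harmonic_conjugate_bound} only contains dual vertices incident to a primal edge with \emph{both} endpoints in $K_\delta$, so a face touching $v$ only through two long edges is missed. (b) The $\Tt_\delta$- and $\Ww_\delta$-diameters of a white face are at least $\tfrac12|e|$ and $|e|$ respectively, so your reduction needs primal edges of length $O(\delta)$. Nothing you invoke gives this: the area bound $c_e|e|^2=O(\delta^2)$ is compatible with $|e|\gg\delta$, and under~\ref{prty:RW} alone the absence of long edges is essentially a consequence of~\ref{prty:LIP}, i.e., of Theorem~\ref{thmas:RW=>LIP}, whose proof uses this very corollary — so assuming it here would be circular. (This second gap is reparable: crossing the white rectangle only in the dual direction reaches a good black face at cost $\tfrac12 c_e|e|=O(\delta)$ in $\Tt_\delta$ and cost $0$ in $\Ww_\delta$, since that direction lies in the kernel of $d\Ww_\delta$; but gap (a) is not fixed by your cited tools.)

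Both gaps disappear if you avoid face diameters altogether. By Lemma~\ref{lemma:T-barO_projection} and \eqref{eq:dO=etadT}, for \emph{every} $p\in\Theta_\delta$ the point $\Ww^\ast_\delta(p)$ is a convex combination of the values $\Hh^\ast_\delta(u^\ast)$ over the faces $u^\ast$ of $\Gamma_\delta$ whose closures contain $\Ww_\delta(p)$ (one value on good black faces, a point of the dual edge on white faces, a point of the simple dual-face polygon — hence of the convex hull of its vertices — on vertex black faces). Therefore $\Ww^\ast_\delta(p_1)-\Ww^\ast_\delta(p_2)$ is a convex combination of differences $\Hh^\ast_\delta(u_1^\ast)-\Hh^\ast_\delta(u_2^\ast)$ with $\overline{u_i^\ast}\ni\Ww_\delta(p_i)$, and each such difference is at most $C|\Ww_\delta(p_1)-\Ww_\delta(p_2)|$ by Lemma~\ref{lemma:vPsi_is_Lipshitz} applied to interior points of $u_1^\ast,u_2^\ast$ arbitrarily close to $\Ww_\delta(p_1),\Ww_\delta(p_2)$. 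This is what the paper's proof does implicitly via the identity $\Tt_\delta=\tfrac12\big(\Ww_\delta+\vPsi_\delta(\Ww_\delta)\big)$.
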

\begin{proof} This directly follows from Lemma~\ref{lemma:vPsi_is_Lipshitz} as~$\Tt_\delta(p)=\Ww_\delta(p)+\Ww_\delta^\ast(p)=\Ww_\delta(p)+\Psi_\delta(\Ww_\delta(p))$\,.
\end{proof}

Given an edge $v_1v_2$ of $\Gamma_\delta$ denote by $u(v_1v_2)$ the corresponding white face of the corner graph $\Vv_\delta$ (see Definition~\ref{defin:corner_graph} and the discussion below it). 
Recall that $\mu_\delta(v_1) = 4\sum_{v_2\sim v_1}\Area[\Tt_\delta(u(v_1v_2))]$. For $\alpha\in \TT$, define
\begin{align}
\mu_\delta^\alpha(v_1) &\,=\, 2\sum_{v_2\sim v_1}\Area[(\Tt_\delta - \alpha^2\Oo_\delta)(u(v_1v_2))] \notag\\
&\,=\,\sum_{v_2\sim v_1} c_{v_1v_2}(\Im[\,\overline{\alpha}\eta_{u(v_1v_2)}\,])^2 |v_1 - v_2|^2
\,=\,\sum_{v_2\sim v_1} c_{v_1v_2} (\Re[\,\overline{\alpha}(v_1-v_2)\,])^2.
\label{eq:def_of_mu_alpha}
\end{align}
Above, $\eta$ is the origami square root function defined by~\eqref{eq:def_of_eta}, the second equality follows from~\eqref{eq:dO=etadT}, and the last equality follows from the definition of $\eta$. Note that we always have $\mu_\delta^\alpha\leq \mu_\delta$. As we will see in the next proposition, the ellipticity estimate from property~\ref{prty:RW} implies that these measures are in fact uniformly comparable to each other starting from the scale $\delta$.

\begin{prop}
  \label{prop:mu_alpha_bound}
  Assume that $\Gamma_\delta$ have property~\ref{prty:RW} 
  on an open set~$U\subset\CC$. There exist constants $c,C>0$ depending on the constants in~\ref{prty:RW} only such that for each~$\alpha\in\TT$ one has
  \[
    \textstyle \sum_{v\in B(w,r)} \mu^\alpha_\delta(v) \,\geq\, cr^2.
  \]
  for each $w\in U$ and $r\geq C\delta$ such that the disc $B(w,r)$ lies in the $C\delta$-interior of~$U$.
\end{prop}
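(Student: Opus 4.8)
The plan is to restate the claim in terms of the random walk, propagate the single–scale ellipticity in~\ref{prop:S} to the macroscopic scale $r$, and then pass back from a statement about the walk to the sum $\sum_v\mu^\alpha_\delta(v)$; the last of these is the substantive point.

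Fix $\alpha\in\TT$ and put $\ell_\alpha(z)=\Re[\bar\alpha z]$. Since $\Hh_\delta$ is a harmonic embedding, the restriction of $\ell_\alpha$ to the vertices of $\Gamma_\delta$ is discrete harmonic, and a direct computation (using harmonicity of $\ell_\alpha$ to kill the cross term in $\ell_\alpha(v')^2-\ell_\alpha(v)^2$) shows that $(\ell_\alpha(X_t^v)-\ell_\alpha(v))^2-\int_0^t\mu_\delta^{-1}\mu^\alpha_\delta(X_s)\,ds$ is a local martingale: the quadratic variation of $\ell_\alpha(X_t)$ accumulates at rate $\mu_\delta(X_s)^{-1}\mu^\alpha_\delta(X_s)$. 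Optional stopping at $\tau(B(v,C\delta))$ therefore gives $\EE_v\big[\int_0^{\tau(B(v,C\delta))}\mu_\delta^{-1}\mu^\alpha_\delta(X_s)\,ds\big]=\Var\big(\Re[\bar\alpha X^v_{\tau(B(v,C\delta))}]\big)\ge c\delta^2$ by~\ref{prop:S}. I would also record the elementary facts $\mu^\alpha_\delta+\mu^{i\alpha}_\delta=\mu_\delta$ and, by covering $B(w,r)$ with $O((r/\delta)^2)$ discs of radius $C\delta$ and applying~\ref{prop:T}, that $\sum_{v\in B(w,r)_\delta}\mu_\delta(v)\asymp r^2$; together these make the Proposition equivalent to saying that on scale $r$ all the measures $\mu^\alpha_\delta$, $\alpha\in\TT$, are comparable to one another.

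To upgrade the scale-$\delta$ bound I would decompose the walk started at the vertex $v_0$ nearest to $w$ into excursions of size $C\delta$: set $\sigma_0=0$, $\sigma_{k+1}=\inf\{t>\sigma_k:\ |X_t-X_{\sigma_k}|\ge C\delta\}\wedge\tau$ with $\tau=\tau(B(w,r))$, and let $N$ count the complete excursions. The predictable quadratic variation of $\ell_\alpha(X_t)-\ell_\alpha(v_0)$ is additive along the intervals $[\sigma_k,\sigma_{k+1}]$, and on each complete one its conditional expectation is $\ge c\delta^2$ by the previous step (the base point $X_{\sigma_k}$ lies in $B(w,r)$, hence well inside $U$); summing,
\[
\textstyle\EE_{v_0}\big[(\ell_\alpha(X_\tau)-\ell_\alpha(v_0))^2\big]=\EE_{v_0}\Big[\int_0^\tau\mu_\delta^{-1}\mu^\alpha_\delta(X_s)\,ds\Big]\ \ge\ c\delta^2\,\EE_{v_0}[N].
\]
Here $\EE_{v_0}[N]\asymp(r/\delta)^2$: indeed $\EE_{v_0}[\tau]=\EE_{v_0}[|X_\tau-v_0|^2]\asymp r^2$ by optional stopping for $|X_t|^2-t$, while the conditional mean length of an excursion equals $\EE_{X_{\sigma_k}}[|X_{\tau(B(\cdot,C\delta))}-X_{\sigma_k}|^2]$. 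The one point requiring care is controlling the overshoot on exiting a small disc (a single long edge would spoil $\EE_v[|X_{\tau(B(v,C\delta))}-v|^2]\lesssim\delta^2$); this I would deduce from $\mu_\delta(v)\le C\delta^2$ — a consequence of~\ref{prop:T} applied with centre $v$ — by summing the contributions of the possible last edges weighted by the probabilities of traversing them. Running the same argument with $\alpha$ replaced by $i\alpha$ and using $\int_0^\tau\mu_\delta^{-1}(\mu^\alpha_\delta+\mu^{i\alpha}_\delta)(X_s)\,ds=\tau$ with $\EE_{v_0}[\tau]\asymp r^2$, one gets $\EE_{v_0}\big[\int_0^\tau\mu_\delta^{-1}\mu^\alpha_\delta(X_s)\,ds\big]\asymp r^2$ for every $\alpha$.

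Finally one must transfer this to the unweighted sum: $\EE_{v_0}\big[\int_0^\tau\mu_\delta^{-1}\mu^\alpha_\delta(X_s)\,ds\big]=\sum_{v\in B(w,r)_\delta}\mu^\alpha_\delta(v)\,G^\delta_{B(w,r)}(v_0,v)\gtrsim r^2$, and we want $\sum_v\mu^\alpha_\delta(v)\gtrsim r^2$. This is the delicate step, and where I expect the main difficulty: $G^\delta_{B(w,r)}(v_0,v)$ is \emph{not} uniformly bounded — it grows like $\log(r/\delta)$ as $\delta\to0$ for $v$ near $v_0$ — so one cannot simply pull it out of the sum. I would handle it by running the excursion bound from every starting vertex (equivalently against the reversing measure), exploiting that $G^\delta_{B(w,r)}(v_0,v)\asymp1$ once $|v-v_0|\asymp r$ while~\ref{prop:T} prevents $\sum_v\mu^\alpha_\delta(v)$ from being concentrated in an $O(\delta)$-neighbourhood of a single vertex, and iterating dyadically over the scales $2^{-j}r$ to remove the logarithm. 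An alternative, closer to the preceding lemmas, is to note that $\sum_{v\in B(w,r)_\delta}\mu^\alpha_\delta(v)$ is comparable to the Dirichlet energy $\Ee^\delta_{B(w,r)}(\ell_\alpha)$, to bound the latter below by the squared oscillation of the harmonic conjugate $\ell^\ast_\alpha$ on $\Gamma^\ast_\delta$ through the extremal-length computation in the proof of Lemma~\ref{lemma:harmonic_conjugate_bound} (which by Lemma~\ref{lemma:harmonic_conjugate_bound_again} uses only the upper bound in~\ref{prop:T}, not~\ref{prty:LIP}), and then to observe that $\ell^\ast_\alpha$ — i.e. the dual embedding $\vPsi_\delta$ — must spread by $\gtrsim r$ in every direction across $B(w,r)$, since otherwise the t-surface $\Theta_\delta$ would be almost light-like there, in contradiction with the ellipticity in~\ref{prop:S}. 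Either way the crux of the argument is exactly this passage from the walk (or the conjugate) back to $\mu^\alpha_\delta$.
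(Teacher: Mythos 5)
Your first two steps are exactly right and match the paper's starting point: the identity that the predictable quadratic variation of $\Re[\bar\alpha X_t]$ accumulates at rate $\mu^\alpha_\delta(X_s)/\mu_\delta(X_s)$ (equivalently, $\frac{d}{dt}\Var(\Re[\bar\alpha X^v_t])|_{t=0}=\mu^\alpha_\delta(v)/\mu_\delta(v)$) is the engine of the proof, and property~\ref{prop:S} supplies the scale-$\delta$ lower bound on the variance. The gap is precisely where you say it is: you never close the passage from the occupation functional back to the unweighted sum $\sum_v\mu^\alpha_\delta(v)$, and both of your proposed completions are left as sketches. The paper's resolution is a short trick that you circle around but do not land on: start the walk from the \emph{normalized sub-stationary measure} $\nu=Z^{-1}\mu_\delta\indic_{B(w,r)}$, $Z=\sum_{v\in B(w,r)}\mu_\delta(v)\ge cr^2$, and run it for a \emph{fixed deterministic time} $C\delta^2$ rather than to an exit time. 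Sub-stationarity of $\nu$ makes the occupation density at every time $t$ dominated by $\nu$ itself, so one gets the uniform upper bound $\Var(\Re[\bar\alpha(X^\nu_{C\delta^2}-X^\nu_0)])\le C\delta^2 Z^{-1}\sum_v\mu^\alpha_\delta(v)$ with no Green's function appearing at all; the matching lower bound $\ge\tfrac14c^2\delta^2$ comes from one application of the scale-$\delta$ ellipticity together with the mass of $\nu$ in $B(w,r/2)$ from~\ref{prop:T}. Comparing the two and using $Z\ge cr^2$ finishes in three lines.

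Concretely, the parts of your argument that would still need real work are: (a) the overshoot control — your bound $\EE_v[\tau(B(v,C\delta))]=\EE_v[|X_{\tau}-v|^2]\lesssim\delta^2$ and the excursion count $\EE[N]\asymp(r/\delta)^2$ both require it, and the fix you sketch via $\mu_\delta(v)\le C\delta^2$ is delicate (a naive version, bounding the expected squared last jump by $\sum_u G(v,u)\mu_\delta(u)=\EE[\tau]$, is circular); the paper's fixed-time formulation avoids exit times entirely. (b) The "delicate step" itself: your first route does work if carried out — averaging the occupation identity against $\nu$ and using reversibility gives $\sum_{v_0}\mu_\delta(v_0)G(v_0,v)=\mu_\delta$-weighted total $=\EE_v[\tau]\le Cr^2$, so the logarithmic singularity disappears without any dyadic iteration — but as written you only gesture at it, and the mention of iterating over scales suggests you had not seen that reversibility alone kills the log. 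Your second route (Dirichlet energy of $\ell_\alpha$ versus oscillation of its conjugate $\Im[\bar\alpha\Psi_\delta]$) would additionally require proving that this oscillation is $\gtrsim r$ in every direction, which is essentially equivalent to what is being proved and is not justified by the appeal to the surface being "almost light-like". So: right objects, right single-scale input, crux correctly identified but not resolved.
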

\begin{proof}
  Denote $Z = \sum_{v\in B(w,r)} \mu_\delta(v)$ and $\nu = Z^{-1}\mu_\delta\indic_{B(w,r)}$. Recall that~$Z\ge cr^2$ as we assume property~\ref{prty:RW}.
  Let~$X_t^\nu$ be the random walk on $\Gamma_\delta$ stopped upon leaving $B(w,r)$ with $X_0$ distributed according to~$\nu$. Provided that $C$ is large enough, Proposition~\ref{prop:RW_ellipticity} and a standard large deviation estimate (e.g., see~\cite[Proposition~6.1]{CLR1}) imply that for each $\alpha\in \TT$ one has
  \begin{equation}    
  \label{eq:mab0}
    \textstyle \Var(\Re[\,\overline\alpha (X_{C\delta^2}^\nu-X_0^\nu)\,])\,\geq\,\delta^2\sum_{v\in B(w,r/2)}\nu(v)\,\geq\, \tfrac14c^2\delta^2,
  \end{equation}
  where the last inequality is a part of property~\ref{prty:RW}.
  
  Now let $X_t^{v_1}$ be the random walk on $\Gamma_\delta$ started from a fixed vertex $v_1\in\Gamma_\delta$. Using the last equality in~\eqref{eq:def_of_mu_alpha} it is easy to see that
  \[%
    \frac{d}{dt}\Var(\Re[\,\overline\alpha X^{v_1}_t\,])\big|_{t = 0}\ =\ \frac{\mu_\delta^\alpha(v_1)}{\mu_\delta(v_1)}.
  \] %
  Applying this identity to $X_t^\nu$ we obtain the following estimate:
  \begin{equation}
    \label{eq:mab2}
    \Var(\Re[\,\overline\alpha (X_{C\delta^2}^\nu-X^\nu_0)\,])\,\leq\, \int_0^{C\delta^2}\!\!\sum_{v\in B(w,r)} \frac{\mu_\delta^\alpha(v)}{\mu_\delta(v)}\nu(v)\,dt\,=\,C\delta^2 Z^{-1}\!\! \sum_{v\in B(w,r)} \mu_\delta^\alpha(v),
  \end{equation}
  where the first inequality follows from the fact that the measure~$\nu$ is sub-stationary (as~$\mu$ is stationary). Combining~\eqref{eq:mab0}, \eqref{eq:mab2}, and the estimate~$Z\ge cr^2$ one sees that $\sum_{v\in B(w,r)} \mu^\alpha_\delta(v)\geq \frac14c^2C^{-1}r^2$.
\end{proof}
We are now in the position to prove the remaining part of Theorem~\ref{thma:LIP<=>RW}.
\begin{thmas} \label{thmas:RW=>LIP} Assume that~$\Gamma_\delta$ have property~\ref{prty:RW} %
on an open set~$U$ and~$U'\Subset U$. Then, for all sufficiently small~$\delta$, the t-surfaces~$\Theta_\delta$ associated with~$\Gamma_\delta$ have property~$\Lip(\kappa,\delta)$ on~$U'$, where $k<1$ depends only on constants in~\ref{prty:RW}. Moreover, it is sufficient to assume that~$U'$ lies in the~$C\delta$-interior of~$U$ with a sufficiently large constant~$C>0$.
\end{thmas}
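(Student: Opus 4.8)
The plan is to reduce the statement to the uniform convexity of $\vPhi_\delta$ and then to extract that from the directional non-degeneracy of the random walk provided by Proposition~\ref{prop:mu_alpha_bound}. By Lemma~\ref{lemma:Lip_follows_from_our_assumptions}(ii) it is enough to prove that the gradient $\vPsi_\delta=2\partial_w\vPhi_\delta$ has property~\ref{prty:LIP} on the $C\delta$-interior of $U$ with constants depending only on those in~\ref{prty:RW}, since then the t-surfaces $\Theta_\delta$ automatically satisfy $\Lip(\kappa,C'\delta)$, which is $\Lip(\kappa,\delta)$ after relabelling $\delta$. The upper bounds in~\eqref{eq:Lip} (hence also the upper bound for the real part of the difference quotient) are exactly the conclusion of Lemma~\ref{lemma:vPsi_is_Lipshitz}, so the whole point is the lower bound $\Re\frac{\vPsi_\delta(w_2)-\vPsi_\delta(w_1)}{w_2-w_1}\ge\varkappa$, equivalently by Lemma~\ref{lemma:Lip_equiv_conv} the lower bound in property~\ref{prty:CONV} for $\vPhi_\delta$. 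By the linearity/subdivision trick used in the proof of Lemma~\ref{lemma:Lip_equiv_conv} it suffices to establish it for segments $[w_1;w_2]$ with $|w_2-w_1|$ of order $C\delta$.

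Fix such a segment $S=[w_1;w_2]$, set $\alpha=(w_2-w_1)/|w_2-w_1|$, and recall that $\vPsi_\delta$ is constant on faces of $\Gamma_\delta$ and jumps by $\pm ic_e(\Hh(v_2^e)-\Hh(v_1^e))$ across an edge $e=(v_1^ev_2^e)$, the sign being fixed by~\eqref{eq:intro_H*def}. Checking this sign against the direction in which $S$ crosses $e$ yields the key identity
\[
\Re\bigl[\overline\alpha\,(\vPsi_\delta(w_2)-\vPsi_\delta(w_1))\bigr]\ =\ \sum_{e:\ e\cap S\ne\varnothing} c_e\,\bigl|\Re[\overline\alpha\,i\,(\Hh(v_2^e)-\Hh(v_1^e))]\bigr|\ \ge\ 0,
\]
each summand being $c_e$ times the length of the projection of the edge onto the line $\RR\cdot i\alpha$ perpendicular to $S$. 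One then observes that all edges of $\Gamma_\delta$ in the bulk of $U$ have length $O(\delta)$: a short segment crossing a single long edge would produce a jump of $\vPsi_\delta$ incompatible with Lemma~\ref{lemma:vPsi_is_Lipshitz}, while the same argument applied to the dual harmonic embedding $\Hh^\ast_\delta$ (whose Dirichlet energy is comparable to that of $\Hh_\delta$, so that Lemma~\ref{lemma:harmonic_conjugate_bound} applies via Lemma~\ref{lemma:harmonic_conjugate_bound_again}) controls the complementary quantities $c_e|\Hh(v_2^e)-\Hh(v_1^e)|$. Consequently, for every edge $e$ meeting a $C\delta$-neighbourhood of $S$ one has $c_e(\Re[\overline\alpha\,i\,e])^2\le C\delta\,c_e|\Re[\overline\alpha\,i\,e]|$, where $e$ abbreviates $\Hh(v_2^e)-\Hh(v_1^e)$. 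Comparing with the measures of~\eqref{eq:def_of_mu_alpha}, for which $\mu_\delta^{i\alpha}(v)=\sum_{e\ni v}c_e(\Re[\overline\alpha\,i\,e])^2$, and translating $S$ perpendicularly to $S_t:=S+ti\alpha$, a Fubini computation (each edge well inside the swept square is crossed by $S_t$ for a set of $t$'s of measure $|\Re[\overline\alpha\,i\,e]|$) gives
\[
\int \Bigl(\sum_{e:\ e\cap S_t\ne\varnothing}c_e|\Re[\overline\alpha\,i\,e]|\Bigr)\,dt\ \ge\ \sum_{e\ \text{well inside}} c_e(\Re[\overline\alpha\,i\,e])^2\ \ge\ \tfrac12\sum_{v}\mu_\delta^{i\alpha}(v)\ \ge\ c\delta^2
\]
by Proposition~\ref{prop:mu_alpha_bound}, uniformly in $\alpha$; thus the average over the family $(S_t)$ of the left-hand side of the key identity is of order $\delta$.

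The main obstacle is to upgrade this averaged bound to the estimate for every individual segment $S$: a single chord may a priori carry little of the mass $\mu_\delta^{i\alpha}$, whereas it is enough for $\Lip(\kappa,\delta)$ to fail that $\vPsi_\delta$ collapse one direction along one chord. Here one must use the t-surface more carefully. In the coordinate $w=\Ww_\delta(p)$ the inequality~\eqref{eq:LipKdelta} reads $|(\vPsi_\delta(w_2)-\vPsi_\delta(w_1))-(w_2-w_1)|\le\kappa|(\vPsi_\delta(w_2)-\vPsi_\delta(w_1))+(w_2-w_1)|$, which after squaring is equivalent to the sought lower bound on $\Re[\overline{(w_2-w_1)}(\vPsi_\delta(w_2)-\vPsi_\delta(w_1))]$ together with the upper bounds already in hand; the quantity to control is therefore the deficit between $\int_\gamma|d\Oo_\delta|=\int_\gamma|d\Tt_\delta|$ along the straight $\Tt_\delta$-lift $\gamma$ of $[\Tt_\delta(p_1);\Tt_\delta(p_2)]$ and $|\int_\gamma d\Oo_\delta|$, i.e. the turning of the piecewise-constant unit field $d\Oo_\delta/|d\Tt_\delta|$ along $\gamma$. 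This turning is an integrated ($L\times L$) quantity, robust unlike a single increment, and it is driven by the spread of the directions of the edges of $\Gamma_\delta$ swept along $\gamma$, which is precisely what the uniform-in-$\alpha$ bound of Proposition~\ref{prop:mu_alpha_bound} supplies; equivalently, one may run the argument through extremal length and show that $\vPsi_\delta$ distorts the moduli of annuli by at most a bounded factor (cf.\ Remark~\ref{rem:osc_of_primal_H}), which with Lemma~\ref{lemma:vPsi_is_Lipshitz} gives the quasiconformality of $\vPsi_\delta$ and hence property~\ref{prty:LIP}. Carrying out this last step, while tracking that all constants depend only on those in~\ref{prty:RW} and that it suffices to stay $C\delta$ away from $\partial U$, completes the proof.
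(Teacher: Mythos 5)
Your reduction to a lower bound on $\Re\bigl[\overline{\alpha}(\vPsi_\delta(w_2)-\vPsi_\delta(w_1))\bigr]$, the edge-crossing identity expressing this quantity as a nonnegative sum over crossed edges, and the Fubini computation that bounds its \emph{average} over parallel translates $S_t$ via Proposition~\ref{prop:mu_alpha_bound} are all sound and in the right spirit. But the proof has a genuine gap exactly where you flag "the main obstacle": passing from the averaged bound to an estimate for every individual chord. Your proposed resolution — controlling "the turning of $d\Oo_\delta/|d\Tt_\delta|$ along $\gamma$" or, "equivalently", showing that $\vPsi_\delta$ distorts moduli of annuli by a bounded factor — is not carried out, and as stated it is essentially a restatement of the conclusion: uniform quasiconformality of $\vPsi_\delta$ above scale $\delta$ \emph{is} property~\ref{prty:LIP}, and the "spread of edge directions swept along a single thin chord" is precisely the quantity that Proposition~\ref{prop:mu_alpha_bound} (a statement about $\mu^\alpha_\delta$-mass of \emph{balls}) does not directly control — a single chord may miss all the edges carrying that mass, which is the same averaging problem in disguise.

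The missing idea, which is the heart of the paper's argument, is an \emph{area} argument that converts the two-dimensional lower bound of Proposition~\ref{prop:mu_alpha_bound} into a bound on a single increment. Suppose $\Oo_\delta(z_1)-\Oo_\delta(z_0)=(1-\eps)\overline{\alpha}^2(z_1-z_0)$ with $\eps$ small, for a chord of length $R\delta$ in the $\Tt_\delta$-coordinate. Since $|d\Oo_\delta|=|d\Tt_\delta|$ pointwise (equation~\eqref{eq:|dT|=|dO|}), Jensen's inequality applied to $t\mapsto d\Oo_\delta(z_t)/dz$ forces the image of the \emph{entire} segment under $z\mapsto z-\alpha^2\Oo_\delta(z)$ to have diameter $O(\sqrt{\eps}\,R\delta)$; and because $\Oo_\delta$ is $1$-Lipschitz in $\Tt_\delta$, the map $\Tt_\delta-\alpha^2\Oo_\delta$ has no overlaps, so the image of a $2C\delta$-tube around the segment has area at most $\pi(2C+\sqrt{2\eps}R)^2\delta^2$. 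On the other hand that area dominates (a quarter of) $\sum_v\mu^\alpha_\delta(v)$ over the tube, which by Proposition~\ref{prop:mu_alpha_bound} applied to $\sim R$ disjoint discs of radius $C\delta$ along the chord is at least $c'R\delta^2$. Taking $R=R_0$ large (depending only on the constants in~\ref{prty:RW}) makes the linear-in-$R$ lower bound incompatible with the upper bound unless $\eps$ is bounded below, which is $\Lip(\kappa,\delta)$ for chords of length $\asymp R_0\delta$, and hence for all longer ones by subdivision. Without this (or an equivalent) mechanism for upgrading the integrated bound to a pointwise one, the proof is incomplete.
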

\begin{proof} %
  Let $R_0>0$ be a sufficiently large constant that will be chosen at the end of the proof and consider a point $p\in \Theta_\delta$ such that $B(\Ww_\delta(p), 2R_0\delta)\subset U$. It follows from Lemma~\ref{lemma:T_is_locally_one-to-one} that 
  \[
  D:=B(\Tt_\delta(p), R_0\delta)\subset\Tt_\delta(B(\Ww_\delta(p), 2R_0\delta))
  \]
  and that $\Tt_\delta$ is one-to-one if restricted to the connected component of $\Tt_\delta^{-1}(D)$ that contains~$p$; note also that it contains the disc~$B(\Ww_\delta(p),2C^{-1}R_0\delta)$ due to Corollary~\ref{cor:Tt(w)_is_Lipshitz}. In what follows, we replace~$\Gamma_\delta$ by this connected component and assume that~$\Tt_\delta$ is a global bijection from~$\Theta_\delta$ onto the disc~$D$.

 Since we now assume that $\Tt_\delta$ is one-to-one, we can identify points $p\in\Theta_\delta$ with their images $z=\Tt_\delta(p)\in D$ and view all maps defined on $\Theta_\delta$ as maps on $D$ in order to simplify the notation. For example, in this new notation we have $\Tt_\delta(z) = z$ and $\Ww_\delta(z) = z - \overline{\Oo_\delta(z)}$.  
 Corollary~\ref{cor:Tt(w)_is_Lipshitz} implies that
  \begin{equation}    
  \label{eq:pot1}
    B(\Ww_\delta(z),C^{-1}r\delta)\subset \Ww_\delta(B(z,r\delta))
  \end{equation}
  whenever $r\geq C$ and $B(z,r\delta)\subset D$ (where the constant $C>0$ is taken from Corollary~\ref{cor:Tt(w)_is_Lipshitz}).

  Let $z_0 = \Tt_\delta(p)$ and fix $R\in[\frac12R_0,R_0-2C]$ and $z_1\in D$ such that $|z_1 - z_0| = R\delta$. Further, let $\eps\in[0,1]$ and $\alpha\in \TT$ be defined by the equality
  \[
    \Oo_\delta(z_1) - \Oo_\delta(z_0)\, =\, (1-\eps)\overline\alpha^2(z_1 - z_0).
  \] 
  Our goal is to prove that~$\eps$ cannot be too close to~$0$. For~$t\in[0,1]$, denote~
  \[
    z_t=(1-t)z_0+tz_1.
  \]
  Since $|d\Oo_\delta(z_t)| = |z_1-z_0|dt$ almost everywhere and 
  \[
  \int_0^1 \Re\biggl[1-\alpha^2\frac{d}{dt}\frac{\Oo_\delta(z_t)}{z_1-z_0}\biggr]dt\ =\
     \Re\biggl[1-\alpha^2\frac{\Oo_\delta(z_1)-\Oo_\delta(z_0)}{z_1-z_0}\biggr]=\eps,
  \]
  Jensen's inequality for the concave function~
  \[
    [0,2]\ni x\mapsto \sqrt{x^2+(1-(1-x)^2)}=\sqrt{2x}
  \]
  implies that  
  \[
  \biggl|\frac{(z_t- \alpha^2\Oo_\delta(z_t)) - (z_0 - \alpha^2\Oo(z_0))}{z_1-z_0}\biggr|\ \le\ \int_0^1 \biggl|1-\alpha^2\frac{d}{dt}\frac{\Oo_\delta(z_t)}{z_1-z_0}\biggr|dt\ \le\ \sqrt{2\eps}
  \]
  for all~$t\in [0,1]$. 
  
  As $z\mapsto\Oo_\delta(z)$ is a~$1$-Lipschitz function, the mapping $D\ni z\mapsto z - \alpha^2\Oo_\delta(z)$ has no overlaps; see also~\cite[proof of Proposition~4.3]{CLR1}.  
  Let $Q$ be the $2C\delta$-neighborhood of the segment $[z_0;z_1]$. It follows from the last estimate that
  \begin{equation}
    \label{eq:pot3}
    \Area[(\Tt_\delta - \alpha^2\Oo_\delta)(Q)]\ \leq\ \pi (2C\delta + \sqrt{2\eps}|z_1 - z_0|)^2\ =\ \pi(2C + \sqrt{2\eps}R)^2\delta^2.
  \end{equation}
  On the other hand, Corollary~\ref{cor:Tt(w)_is_Lipshitz} implies that
  \[
    |\Ww_\delta(z_1) - \Ww_\delta(z_0)|\geq C^{-1}|z_1 - z_0|.
  \]
  Splitting this segment into~$\lfloor|\Ww_\delta(z_1)-\Ww_\delta(z_0)|/(2C^2\delta)\rfloor$ pieces of length~$2C\delta$, using inclusion~\eqref{eq:pot1} for discs centered at the midpoints of these pieces, definition~\eqref{eq:def_of_mu_alpha} of $\mu_\delta^\alpha$, and Proposition~\ref{prop:mu_alpha_bound}, we obtain the lower bound
  \begin{equation}
    \label{eq:pot4}
    \Area[(\Tt_\delta - \alpha^2\Oo_\delta)(Q)]\ \geq\ \frac{1}{4}\sum_{v\in \Ww_\delta(Q')} \mu_\delta^\alpha(v)\ \geq\ \frac{1}{4}\left\lfloor\frac{|z_1-z_0|}{2C^2\delta}\right\rfloor\cdot c\pi\delta^2,
  \end{equation} 
  where~$Q'$ stands for the~$C\delta$-neighborhood of~$[z_0;z_1]$. Recall that~$|z_1-z_0|=R\delta$ and~$\frac12R_0\le R\le R_0$. Provided that the constant~$R_0$ is chosen large enough, combining~\eqref{eq:pot3} and~\eqref{eq:pot4} one obtains a lower bound on~$\varepsilon$ that depends only on the constants in property~\ref{prty:RW}. 
\end{proof}

\section{Harmonicity in the intrinsic metric of the surface~$\Theta$: proof of Theorem~\ref{thma:when_harmonic}}
\label{sec:when_harmonic}

We begin with a few definitions and preliminary lemmas. Recall that $\Ll_\vphi h = -\div(A_\vphi\nabla h)$ where the matrix $A_\vphi$ is given by~\eqref{eq:def_of_Ll} and 
$\Ll^\times_\vphi h = \div(\,\ast\, A^{-1}_\vphi\!\ast \nabla h)$ is the 'dual' operator that annihilates $A$-harmonic conjugate functions of solutions of the equation~$\Ll_\vphi h=0$; see~\eqref{eq:def_of_Lltimes}. In particular, it is easy to see that $\Ll_\vphi^\times \vpsi = 0$: indeed,
the function $-i\vpsi$ is an $A$-harmonic conjugate of $w$.

\begin{lemma}
  \label{lemma:Lpsi_is_gradJac}
  Let $\Jac_w(\vpsi)=|\vpsi_w|^2-|\vpsi_{\bar{w}}|^2$ denote the Jacobian of the mapping $\vpsi:U\to\CC\cong\RR^2$. The following equation holds in the weak sense:
  \[
    \Ll_\vphi \psi = -2\partial_{\bar w} \Jac_w(\vpsi).
  \]
\end{lemma}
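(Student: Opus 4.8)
The plan is to recognize the claimed identity as a null-Lagrangian relation, check it by an elementary computation when $\vphi$ is smooth, and then descend to $\vphi\in\mC^{1,1}$ by mollification. The starting observation is three reformulations. First, since $\psi=\vphi_x+i\vphi_y$ is the complex incarnation of $\nabla\vphi$, its Jacobian is $\Jac_w(\psi)=\vphi_{xx}\vphi_{yy}-\vphi_{xy}^2=\det D^2\vphi$. Second, the matrix $A_\vphi$ from~\eqref{eq:def_of_A} is exactly the cofactor matrix $\operatorname{cof}(D^2\vphi)$. Third, applying $\Ll_\vphi$ componentwise, the statement $\Ll_\vphi\psi=-2\partial_{\bar w}\Jac_w(\psi)=-(\partial_x+i\partial_y)\det D^2\vphi$ is equivalent to the two real distributional identities $\Ll_\vphi\vphi_x=-\partial_x\det D^2\vphi$ and $\Ll_\vphi\vphi_y=-\partial_y\det D^2\vphi$, the first of which reads
\[
\int_\Omega A_\vphi\nabla\vphi_x\cdot\nabla\xi\,d^2w=\int_\Omega (\det D^2\vphi)\,\partial_x\xi\,d^2w
\]
for every $\xi\in\mC_0^\infty(\Omega)$ (and similarly for $\vphi_y$ with $\partial_y\xi$ on the right).

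Next I would establish the pointwise identity when $\vphi$ is smooth; it suffices to have $\vphi\in\mC^3$, and uniform convexity plays no role in this algebraic step. Here the cofactor matrix of a Hessian is row-wise divergence free (the Piola identity $\sum_i\partial_i(\operatorname{cof}(D^2\vphi))_{ij}=0$), so $\Ll_\vphi\vphi_x=-\div(A_\vphi\nabla\vphi_x)=-\sum_{i,j}(A_\vphi)_{ij}\,\partial_i\partial_j\vphi_x=-\sum_{i,j}\operatorname{cof}(D^2\vphi)_{ij}\,\partial_x(D^2\vphi)_{ij}$, using $\partial_i\partial_j\vphi_x=\partial_x(D^2\vphi)_{ij}$. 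By Jacobi's formula $\partial_x\det D^2\vphi=\sum_{i,j}\operatorname{cof}(D^2\vphi)_{ij}\,\partial_x(D^2\vphi)_{ij}$, which gives $\Ll_\vphi\vphi_x=-\partial_x\det D^2\vphi$; the argument for $\vphi_y$ is identical. In particular both weak identities above hold for smooth $\vphi$, hence so does the assertion of the lemma in that case.

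Finally I would remove the smoothness hypothesis by mollification. Put $\vphi_\eps=\vphi\ast\rho_\eps$, which is smooth, still uniformly convex with the same constant $\lambda$, and defined on any fixed $K\Subset\Omega$ once $\eps$ is small. Writing the weak identities for $\vphi_\eps$ and letting $\eps\to0$, the key point is that $D^2\vphi_\eps\to D^2\vphi$ in $\mL^2(K)$ while remaining bounded in $\mL^\infty(K)$; consequently the quadratic quantities $\operatorname{cof}(D^2\vphi_\eps)\,\nabla(\vphi_\eps)_x$ and $\det D^2\vphi_\eps$ converge in $\mL^1(K)$ to $\operatorname{cof}(D^2\vphi)\,\nabla\vphi_x$ and to $\det D^2\vphi$, respectively, and pairing these against the bounded, compactly supported factors $\nabla\xi$ and $\partial_x\xi$ lets one pass to the limit. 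I expect this last passage to the limit to be the only genuinely delicate point: the second derivatives of $\vphi$ are merely in $\mL^\infty_\loc$, not continuous, so one cannot argue pointwise, but the combination of strong $\mL^2_\loc$ convergence of the Hessians with the uniform $\mL^\infty_\loc$ bound is exactly what makes the nonlinear expressions converge. Assembling the three steps yields $\Ll_\vphi\psi=-2\partial_{\bar w}\Jac_w(\psi)$ in the weak sense.
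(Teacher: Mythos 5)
Your proof is correct, but it takes a genuinely different route from the paper's. The paper's proof is a one\nobreakdash-line complex\nobreakdash-variable computation: it rewrites the Dirichlet form $\int_U \nabla h\cdot A_\vphi\nabla g\,dxdy$ in the coordinates $\partial_w,\partial_{\bar w}$ as a bilinear expression in $(h_w,h_{\bar w})$ and $(g_w,g_{\bar w})$ whose coefficient matrix is built from $\psi_w,\psi_{\bar w}$, and then substitutes $h=\psi$; the coefficient of $g_w$ cancels algebraically, while the coefficient of $g_{\bar w}$ equals $|\psi_w|^2-|\psi_{\bar w}|^2=\Jac_w(\psi)$, so the pairing is $2\int_U \Jac_w(\psi)\,g_{\bar w}\,dxdy$, which is precisely the distributional statement $\Ll_\vphi\psi=-2\partial_{\bar w}\Jac_w(\psi)$. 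Since that identity is purely algebraic in the first derivatives of $\psi$ (which exist a.e.\ and are bounded), it holds at once for $\vphi\in\mC^{1,1}$ with no regularization. Your argument instead recognizes $A_\vphi=\operatorname{cof}(D^2\vphi)$ and $\Jac_w(\psi)=\det D^2\vphi$, proves the identity pointwise for $\mC^3$ potentials via the Piola and Jacobi formulas, and then mollifies; your limiting step is sound (strong $\mL^2_\loc$ convergence of $D^2\vphi_\eps$ together with the uniform $\mL^\infty_\loc$ bound gives $\mL^1_\loc$ convergence of the quadratic expressions $\operatorname{cof}(D^2\vphi_\eps)\nabla(\vphi_\eps)_x$ and $\det D^2\vphi_\eps$). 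What your route buys is that it exhibits the lemma as the classical null\nobreakdash-Lagrangian/divergence\nobreakdash-free\nobreakdash-cofactor identity, which is conceptually transparent; what it costs is the approximation step needed because the pointwise Piola--Jacobi computation uses third derivatives of $\vphi$, a step the paper's computation avoids entirely.
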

\begin{proof} This is a straightforward computation based upon the identity 
\[
\int_U h\Ll_\vphi g\, dxdy\ =\ 2\int_U \begin{pmatrix} h_w \!&\!h_{\bar{w}}\end{pmatrix}\begin{pmatrix} -\psi_{\bar{w}} \!&\! \overline{\psi}_{\bar w}\\ \psi_w \!&\! -\overline{\psi}_w\end{pmatrix}\begin{pmatrix} g_w \\ g_{\bar{w}}\end{pmatrix} dxdy
\]
applied to~$h=\psi$ and a smooth compactly supported test function~$g\in\mC^\infty_0(U)$.
\end{proof}

Consider now the surface $\Theta=\{(\frac12(w+\psi(w));\frac12(\overline{\psi(w)}-\overline{w}))\mid w\in U\}$  from Theorem~\ref{thma:when_harmonic}. In general, $\Theta$ is not smooth because $\psi$ is not. However $\Theta$ has tangent space almost everywhere by Rademacher's theorem since $\psi$ is Lipschitz. The inner product on $\CC^{1,1}$ induces inner products in these tangent spaces. By a slight abuse of the terminology we regard this family of inner products as a Riemannian metric on $\Theta$. %
It is easy to see that this metric is positively definite as the uniform convexity of the potential~$\vphi$ implies that $|\vpsi_{\bar{w}}|=2|\vphi_{ww}|<2\vphi_{w\bar{w}}=\vpsi_w$; see also~\eqref{eq:vpsi_ellip}.

As we endowed $\Theta$ with a Riemannian metric, we can consider its conformal parametrization. More precisely, given a domain $D\subset \CC$ we say that a map $D\ni \zeta\mapsto (z(\zeta), \theta(\zeta))\in \Theta$ is a conformal parametrization of~$\Theta$ if the mapping $\zeta\mapsto w(\zeta) = z(\zeta) + \theta(\zeta)$ is quasiconformal and we have
\begin{equation}
  \label{eq:conf_param}
  z_\zeta \bar z_\zeta = \theta_\zeta\bar\theta_\zeta
\end{equation}
almost everywhere. (This definition makes sense since quasiconformal mappings are almost everywhere differentiable and the mappings $w\mapsto z$ and $w\mapsto \theta$ are Lipschitz.) The condition~\eqref{eq:conf_param} is equivalent to requiring that the first fundamental form on $\Theta$ is (almost everywhere) diagonal in the coordinate $\zeta$. 

\begin{lemma}
  \label{lemma:metric_on_Theta}
  In the coordinate $w=x+iy$ the first fundamental form on $\Theta$ coincides with the Hessian of the potential $\vphi$. In other words, the first fundamental form on~$\Theta$ is given by
  \[
    \vphi_{xx}\,dx^2 + \vphi_{yy}\,dy^2 + 2\vphi_{xy}\,dxdy\ =\ \det A_\vphi\cdot \begin{pmatrix} dx \!&\! dy \end{pmatrix} A_\vphi^{-1} \begin{pmatrix} dx \\ dy \end{pmatrix}.
  \]
\end{lemma}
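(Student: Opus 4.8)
The plan is to compute directly the pullback of the Minkowski quadratic form \eqref{eq:intro-R2,2metric-def} under the parametrization $U\ni w\mapsto (z(w);\theta(w))\in\Theta$, where $z=\tfrac12(w+\psi)$ and $\theta=\tfrac12(\overline{\psi}-\overline w)$. Since $\psi$ is Lipschitz, it is differentiable almost everywhere by Rademacher's theorem, so at a.e.\ $w$ I would expand, using Wirtinger derivatives,
\[
dz=\tfrac12\big((1+\psi_w)\,dw+\psi_{\bar w}\,d\bar w\big),\qquad d\theta=\tfrac12\big(\overline{\psi_{\bar w}}\,dw+(\overline{\psi_w}-1)\,d\bar w\big),
\]
the second identity using $\partial_w\overline{\psi}=\overline{\psi_{\bar w}}$, $\partial_{\bar w}\overline{\psi}=\overline{\psi_w}$, $\partial_w\overline w=0$ and $\partial_{\bar w}\overline w=1$. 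Under the identification $\CC^{1,1}\cong\RR^{2,2}$ the induced first fundamental form on $\Theta$ is $|dz|^2-|d\theta|^2=dz\,\overline{dz}-d\theta\,\overline{d\theta}$, read as a symmetric quadratic form in the real differentials $dx,dy$ (equivalently, in $dw,d\bar w$).

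The next step is to multiply this out. Collecting the $dw\,d\bar w$, $(dw)^2$ and $(d\bar w)^2$ terms, and using the elementary identity $|1+a|^2-|1-a|^2=4\Re a$ for the cross term, I expect to get
\[
dz\,\overline{dz}-d\theta\,\overline{d\theta}\ =\ \Re\psi_w\cdot dw\,d\bar w\ +\ \tfrac12\overline{\psi_{\bar w}}\,(dw)^2\ +\ \tfrac12\psi_{\bar w}\,(d\bar w)^2\ =\ \Re\psi_w\cdot dw\,d\bar w\ +\ \Re\!\big[\overline{\psi_{\bar w}}\,(dw)^2\big].
\]
Then I would substitute $dw=dx+i\,dy$, $d\bar w=dx-i\,dy$ and $\psi=\vphi_x+i\vphi_y$, so that $\psi_w=\tfrac12(\vphi_{xx}+\vphi_{yy})$ is real and $\psi_{\bar w}=\tfrac12(\vphi_{xx}-\vphi_{yy})+i\vphi_{xy}$; the three terms then reorganize into $\vphi_{xx}\,dx^2+2\vphi_{xy}\,dx\,dy+\vphi_{yy}\,dy^2$, i.e.\ into the quadratic form $D^2\vphi$, which is the first claimed identity. (Consistency check: positivity of this form is equivalent to $|\psi_{\bar w}|<\psi_w$, i.e.\ to $|\vphi_{ww}|<\vphi_{w\bar w}$, which is the uniform convexity already noted before the lemma; see \eqref{eq:vpsi_ellip}.)

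The second equality is pure linear algebra: the adjugate of $A_\vphi=\left(\begin{smallmatrix}\vphi_{yy}&-\vphi_{xy}\\-\vphi_{xy}&\vphi_{xx}\end{smallmatrix}\right)$ is exactly $D^2\vphi$, hence $\det A_\vphi\cdot A_\vphi^{-1}=D^2\vphi$ at every point where $A_\vphi$ is invertible (which, by \eqref{eq:Conv}, is a.e.), and $\det A_\vphi=\det D^2\vphi$. I do not anticipate a genuine obstacle in this lemma; the only points that require care are the Wirtinger bookkeeping for the antiholomorphic quantity $\overline{\psi(w)}$ and the fact that every identity above is valid only almost everywhere — which is precisely why the "first fundamental form'' on the merely $\mC^{0,1}$ surface $\Theta$ is understood in the mildly abusive sense explained just before the statement.
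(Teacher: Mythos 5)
Your computation is correct and follows essentially the same route as the paper: a direct expansion of $|dz|^2-|d\theta|^2$ in Wirtinger coordinates, arriving at the same intermediate form $\Re[\psi_w\,|dw|^2+\overline{\psi_{\bar w}}\,(dw)^2]$ and the same substitutions $\psi_w=\tfrac12(\vphi_{xx}+\vphi_{yy})$, $\overline{\psi_{\bar w}}=\tfrac12(\vphi_{xx}-\vphi_{yy})-i\vphi_{xy}$. The adjugate observation for the second equality is the standard $2\times2$ fact the paper leaves implicit, so nothing is missing.
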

\begin{proof} A straightforward computation expresses the first fundamental form on~$\Theta$ as 
\[%
  \tfrac14\big(|\vpsi_w+1|^2 |dw|^2 +2\Re[(\vpsi_w+1)\overline{\vpsi}{}_w(dw)^2]\big)-\tfrac14\big(|\vpsi_w-1|^2|dw|^2+2\Re[(\vpsi_w-1)\overline{\vpsi}{}_w](dw)^2\big)
\]%
which equals~$\Re[\vpsi_w\,|dw|^2+\overline{\vpsi}{}_w(dw)^2]$. Note that~$\vpsi_w=\frac12(\vphi_{xx}+\vphi_{yy})$ and $\overline{\vpsi}{}_w=\frac12(\vphi_{xx}-\vphi_{yy})-i\vphi_{xy}$.
\end{proof}

\begin{cor}
  \label{cor:conf_param_diag}
  Denote~$\rho=(\det A_\vphi)^{1/2}=(\det D^2\vphi)^{1/2}$ and assume that $D\ni\zeta\mapsto (z(\zeta),\theta(\zeta))$ is a conformal parametrization of $\Theta$. Then,
  \begin{align*}
    \Jac_\zeta(w)\cdot \Ll_\vphi h\ &=\ -\div_\zeta(\rho\,\nabla_\zeta h),\\
    \Jac_\zeta(w)\cdot \Ll_\vphi^\times h\ &=\ -\div_\zeta(\rho^{-1}\nabla_\zeta h)
  \end{align*}
  for each $h$ with locally square summable gradient, where equalities are understood in the weak sense. (Note that $\rho$ is bounded away from zero and infinity since we assume that $\vphi$ is uniformly convex.)
\end{cor}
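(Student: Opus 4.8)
The plan is to reduce the statement to the elementary rule for how divergence-form elliptic operators in the plane transform under a change of coordinates, fed by the identification of the first fundamental form of $\Theta$ from Lemma~\ref{lemma:metric_on_Theta}. First I would fix notation: write $w=w(\zeta)$ for the point of $U$ that the parametrization attaches to $\zeta$, i.e.\ the composition $\zeta\mapsto(z(\zeta),\theta(\zeta))\mapsto w$ with $w=z-\overline\theta$ (cf.\ \eqref{eq:intro-Theta-def} and Lemma~\ref{lemma:T-barO_projection}); this map is quasiconformal, hence lies in $W^{1,2}_\loc$, is a.e.\ differentiable, absolutely continuous on a.e.\ line, maps null sets to null sets, and admits the chain rule against Lipschitz functions~\cite{AstalaBook}, and its Jacobian $\Jac_\zeta(w)$ is bounded away from $0$ and $\infty$ since $\vphi$ is uniformly convex. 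Substituting $w=w(\zeta)$ into the weak formulation $\int_U(A_\vphi\nabla_w h)\cdot\nabla_w\xi\,dxdy$ — using $\nabla_w h=(Dw)^{-\mathsf T}\nabla_\zeta h$, with $Dw$ the $2\times2$ real Jacobian matrix, and $dxdy=\Jac_\zeta(w)\,d\zeta$ — one obtains, in the weak sense,
\[
\Jac_\zeta(w)\cdot\Ll_\vphi h=-\div_\zeta\!\big(\widetilde A_\vphi\,\nabla_\zeta h\big),\qquad \widetilde A_\vphi:=\Jac_\zeta(w)\,(Dw)^{-1}A_\vphi\,(Dw)^{-\mathsf T},
\]
and likewise with $A_\vphi$ replaced by $(D^2\vphi)^{-1}$, which is the relevant matrix for $\Ll^\times_\vphi$ since a two-line $2\times2$ computation gives $\ast A_\vphi^{-1}\ast=-A_\vphi/\det A_\vphi=-(D^2\vphi)^{-1}$, whence $\Ll^\times_\vphi h=\div(\ast A_\vphi^{-1}\ast\nabla h)=-\div((D^2\vphi)^{-1}\nabla h)$. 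All integrals in sight are continuous in the relevant Lipschitz/Sobolev norms, so the passage from smooth to merely quasiconformal $w(\zeta)$ would be carried out by approximation exactly as in the proof of Lemma~\ref{lemma:Morera_condition}.

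It then remains to identify $\widetilde A_\vphi$, and the metric does the work. By Lemma~\ref{lemma:metric_on_Theta} the first fundamental form of $\Theta$ in the $w$-chart is $D^2\vphi$, and since $A_\vphi=\mathrm{adj}(D^2\vphi)$ (read off \eqref{eq:def_of_A}) one has $A_\vphi=\rho^2(D^2\vphi)^{-1}$ and $\det A_\vphi=\det D^2\vphi=\rho^2$. The parametrization being \emph{conformal} means, by \eqref{eq:conf_param}, that the first fundamental form in the $\zeta$-chart equals $\lambda^2\,\mathrm{Id}$ a.e.\ for some $\lambda>0$; tensoriality of the first fundamental form under the a.e.\ differentiable map $w(\zeta)$ turns this into $(Dw)^{\mathsf T}(D^2\vphi)(Dw)=\lambda^2\,\mathrm{Id}$, i.e.\ $(D^2\vphi)^{-1}=\lambda^{-2}(Dw)(Dw)^{\mathsf T}$, and comparing determinants gives $\lambda^2=\Jac_\zeta(w)\,\rho$. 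Substituting $A_\vphi=\rho^2\lambda^{-2}(Dw)(Dw)^{\mathsf T}$ into the formula for $\widetilde A_\vphi$ collapses it to $\widetilde A_\vphi=\Jac_\zeta(w)\,\rho^2\lambda^{-2}\,\mathrm{Id}=\rho\,\mathrm{Id}$, which is the first identity; the same substitution applied to $(D^2\vphi)^{-1}=\lambda^{-2}(Dw)(Dw)^{\mathsf T}$ gives $\Jac_\zeta(w)(Dw)^{-1}(D^2\vphi)^{-1}(Dw)^{-\mathsf T}=\Jac_\zeta(w)\,\lambda^{-2}\,\mathrm{Id}=\rho^{-1}\,\mathrm{Id}$, which is the second. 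Finally $\rho=(\det D^2\vphi)^{1/2}$ is bounded above and below by~\eqref{eq:vphi_uniformly_convex}.

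The only genuinely delicate point will be the low-regularity bookkeeping in the first step: justifying the change of variables in the weak formulation and the tensoriality of the first fundamental form when $w(\zeta)$ is merely quasiconformal rather than smooth. This is standard given the structure theory of planar quasiconformal maps together with the uniform bound $\lambda\le A_\vphi\le\lambda^{-1}$ (and hence the two-sided bound on $\Jac_\zeta(w)$), and is handled by the same approximation device already used for a Lipschitz $\vpsi$ in Lemma~\ref{lemma:Morera_condition}.
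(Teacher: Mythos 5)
Your argument is correct and follows essentially the same route as the paper: both proofs combine Lemma~\ref{lemma:metric_on_Theta} with the conformality condition to express $A_\vphi$ and $\ast A_\vphi^{-1}\ast$ as $\rho^{\pm1}$ times $(Dw)(Dw)^{\mathsf T}$ up to the Jacobian factor, and then transform the quadratic forms of $\Ll_\vphi$ and $\Ll^\times_\vphi$ under the coordinate change — you merely spell out the ``examining how the quadratic forms change'' step that the paper leaves implicit. One small inaccuracy: $\Jac_\zeta(w)$ need not be bounded away from $0$ and $\infty$ (the conformal parametrization is only determined up to post-composition with a conformal automorphism of $D$), but nothing in your derivation actually relies on that claim.
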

\begin{proof} 
  By definition of a conformal coordinate, the matrix~$A_\vphi^{-1}\det A_\vphi$ of the first fundamental form on~$\Theta$ becomes a scalar function in the coordinate~$\zeta$. Therefore, $A_\vphi^{-1}=\rho^{-1}(\det J)^{-1}J^\top J$, where $J\in\RR^{2\times 2}$ is the Jacobian matrix of the coordinate change~$w(\zeta)$; in particular,~$\Jac_\zeta(w)=\det J$. 
  Taking the inverse, we see that $A_\vphi=(\det J)\cdot\rho\,J^{-1}(J^{-1})^\top$ and we also have $\ast\,A_\vphi^{-1}\ast = -(\det J)\cdot\rho^{-1}J^{-1}(J^{-1})^\top$ since $\ast\,J^\top\!= (\det J)\cdot J^{-1}\,\ast$. %
  The claim now follows by examining how  quadratic forms of~$\Ll_\vphi$ and~$\Ll_\vphi^\times$ change under this coordinate change. 
\end{proof}

One says that a surface $\Theta\subset\CC^{1,1}\cong\RR^{2,2}$ is \emph{maximal} %
if it is smooth and its mean curvature vector vanishes; we emphasize that the mean curvature is calculated with respect to the Minkovski metric~\eqref{eq:intro-R2,2metric-def}. 
As in Euclidean spaces, vanishing of the mean curvature means that $\Theta$ is locally a stationary point of the area functional; however, for space-like surfaces in~$\RR^{2,2}$ such stationary points correspond to maxima rather than minima, hence the name. A direct computation similar to the classical Euclidean setup shows that $\Theta$ is maximal if and only if its conformal parametrization $(z(\zeta), \theta(\zeta))$ is harmonic, i.e.,
\begin{equation}
  \label{eq:maximal}
  z_{\zeta\bar\zeta} = \theta_{\zeta\bar\zeta} = 0.
\end{equation}
(Note that this is equivalent to saying that~$w$ and~$\vpsi$ are harmonic in~$\zeta$.)
In particular, maximal surfaces are real analytic.
We are now in the position to prove Theorem~\ref{thma:when_harmonic}.

\begin{proof}[Proof of Theorem~\ref{thma:when_harmonic}]
  We have already proved (i) in Lemma~\ref{lemma:metric_on_Theta}, let us prove (ii). 
  
  First, assume that assertion~(a) holds, i.e., that $\det D^2\vphi=\rho^2$ is constant. 
  Since~$\Ll_\vphi w=0$ and~$\Ll_\vphi^\times \psi=0$, Corollary~\ref{cor:conf_param_diag} implies that~$w$ and~$\vpsi$ are harmonic functions of~$\zeta$, that is,~\eqref{eq:maximal} holds and the surface~$\Theta$ is maximal as required in~(b).

  Second, let us prove that~(b) implies~(c). %
  Applying Corollary~\ref{cor:conf_param_diag} again and using the fact that $w$ is harmonic in $\zeta$ we can write
  \[
    0\ =\ \Jac_\zeta(w)\Ll_\vphi w\ =\ \div_\zeta(\rho\nabla_\zeta w) = \nabla_\zeta\rho\cdot\nabla_\zeta w\,,
  \]
  which is only possible if~$\nabla_\zeta\rho=0$ since~$\rho$ is a real-valued function and the map~$\zeta\mapsto w$ preserves orientation. Therefore, both $\Ll_\vphi$ and~$\Ll_\vphi^\times$ are proportional to the Laplacian in the coordinate $\zeta$.
  
  Third, let us prove that (c) implies (d). Let~$w(\zeta)$ be the coordinate change from (c); note that we do \emph{not} assume in advance that~$\zeta$ is a conformal parametrization of~$\Theta$. However, as we have $\Ll_\vphi[\zeta]=0$ and~$\Ll_\vphi[\zeta^2]=0$, the operator~$\Ll_\vphi$ has to be proportional to the Laplacian in the coordinate $\zeta$. Therefore, the Jacobi matrix of $w(\zeta)$ diagonalizes $A_\vphi$, which implies that $\zeta$ \emph{is} a conformal parametrization of~$\Theta$. Using Corollary~\ref{cor:conf_param_diag} and~$\Ll_\vphi[\zeta]=0$ again, we see that~$\nabla_\zeta\rho=0$. Hence, $\rho$ is constant and the operators~$\Ll_\vphi$ and $\Ll_\vphi^\times$ are multiples of each other.
  
  Finally, (d) trivially implies (e) since~$\Ll_\vphi\psi=0$, and it is also easy to see that (e) implies (a). Indeed, it follows from Lemma~\ref{lemma:Lpsi_is_gradJac} that $\partial_{\bar{w}}\Jac_w(\psi)=-\frac12\Ll_\vphi\vpsi=0$, which means that the real-valued function~$\Jac_w(\vpsi)=\det A_\vphi=\det D^2\vphi$ is constant.
  \end{proof}

\printbibliography

\end{document}